\documentclass[11pt]{article}
\usepackage{fullpage}
\usepackage{url}
\usepackage[colorlinks=true,linkcolor=Emerald,citecolor=RoyalBlue]{hyperref}
\usepackage{amsmath,nccmath, amsfonts,amsthm,amssymb,multirow}

\usepackage{times}
\usepackage{paralist}
\usepackage{graphicx}
\usepackage{floatpag}
\usepackage{algorithm}
\usepackage[noend]{algpseudocode}
\usepackage{bbold}
\usepackage{enumitem}
\usepackage{subcaption} 
\usepackage{soul}
\usepackage{comment}
\usepackage{calc}
\usepackage{upgreek}
\usepackage{mathtools}
\usepackage{color}
\usepackage[dvipsnames]{xcolor}
\usepackage{xspace}
\allowdisplaybreaks
\newenvironment{reminder}[1]{\bigskip
	\noindent {\bf Restatement of #1  }\em}{\smallskip}

\DeclareMathOperator*{\E}{\mathbb{E}}
\DeclareMathOperator*{\pE}{\widetilde{\mathbb{E}}}
\DeclareMathOperator*{\pPr}{\widetilde{\Pr}}
\DeclareMathOperator*{\Ind}{\mathbb{I}}

\DeclarePairedDelimiter\ip{\langle}{\rangle}

\def \F {{\mathbb F}}
\def \Q {{\mathbb Q}}
\def \R {\mathbb{R}}
\def \Z {\mathbb{Z}}
\def \N {\mathbb{N}}

\def \T {\widetilde{\E}}

\def \A {\mathcal{A}}
\def \cA {\mathcal{A}}
\def \cE {\mathcal{E}}
\def \cF {\mathcal{F}}

\def \Id {\mathrm{Id}}

\def \sos {SoS\xspace}

\def \SSE {\mathrm{SSE}}
\def \poly {\mathrm{poly}}

\def \eps {{\varepsilon}}
\def \one {\mathbb{1}}
\def \val {\mathrm{val}}
\def \viol {\mathrm{viol}}

\def \Chi {\raisebox{2pt}{$\chi$}}

\newcommand{\simp}[1]{\,\, \vdash_{#1} \,\,}
\renewcommand{\hat}[1]{\widehat{#1}}

\newcommand{\Tnote}[1]{}
\newcommand{\Bnote}[1]{}
\newcommand{\Pnote}[1]{}
\newcommand{\Dnote}[1]{}
\newcommand{\Mnote}[1]{}

\newtheorem{theorem}{Theorem}[section]
\newtheorem{observation}{Observation}[section]
\newtheorem{fact}[theorem]{Fact}

\newtheorem{corollary}[theorem]{Corollary}

\newtheorem{lemma}[theorem]{Lemma}
\newtheorem*{lemma*}{Lemma}
\newtheorem*{theorem*}{Theorem}
\newtheorem{claim}[theorem]{Claim}
\newtheorem*{claim*}{Claim}

\theoremstyle{definition}
\newtheorem{remark}[theorem]{Remark}
\newtheorem{definition}[theorem]{Definition}
\newtheorem{algorithm-thm}[theorem]{Algorithm}


\newcommand\MYcurrentlabel{xxx}

\newcommand{\MYstore}[2]{%
  \global\expandafter \def \csname MYMEMORY #1 \endcsname{#2}%
}

\newcommand{\MYload}[1]{%
  \csname MYMEMORY #1 \endcsname%
}

\newcommand{\MYnewlabel}[1]{%
  \renewcommand\MYcurrentlabel{#1}%
  \MYoldlabel{#1}%
}

\newcommand{\MYdummylabel}[1]{}

\newcommand{\torestate}[1]{%
  \let\MYoldlabel\label%
  \let\label\MYnewlabel%
  #1%
  \MYstore{\MYcurrentlabel}{#1}%
  \let\label\MYoldlabel%
}

\newcommand{\restatetheorem}[1]{%
  \let\MYoldlabel\label
  \let\label\MYdummylabel
  \begin{theorem*}[Restatement of Theorem~\ref{#1}]
    \MYload{#1}
  \end{theorem*}
  \let\label\MYoldlabel
}

\newcommand{\restatelemma}[1]{%
  \let\MYoldlabel\label
  \let\label\MYdummylabel
  \begin{lemma*}[Restatement of Lemma~\ref{#1}]
    \MYload{#1}
  \end{lemma*}
  \let\label\MYoldlabel
}

\newcommand{\restateprop}[1]{%
  \let\MYoldlabel\label
  \let\label\MYdummylabel
  \begin{proposition*}[Restatement of Proposition~\ref{#1}]
    \MYload{#1}
  \end{proposition*}
  \let\label\MYoldlabel
}

\newcommand{\restateclaim}[1]{%
  \let\MYoldlabel\label
  \let\label\MYdummylabel
  \begin{claim*}[Restatement of Claim~\ref{#1}]
    \MYload{#1}
  \end{claim*}
  \let\label\MYoldlabel
}

\newcommand{\restatefact}[1]{%
  \let\MYoldlabel\label
  \let\label\MYdummylabel
  \begin{fact*}[Restatement of Fact~\ref{#1}]
    \MYload{#1}
  \end{fact*}
  \let\label\MYoldlabel
}

\newcommand{\restate}[1]{%
  \let\MYoldlabel\label
  \let\label\MYdummylabel
  \MYload{#1}
  \let\label\MYoldlabel
}

\makeatletter
\let\c@fconjecture\c@conjecture
\makeatother

\makeatletter
\let\c@fconj\c@conj
\makeatother

\title{Playing Unique Games on Certified Small-Set Expanders}

\author{Mitali Bafna\thanks{Harvard University, \texttt{mitalibafna@g.harvard.edu}. Supported in part by a Simons Investigator Award and NSF Award CCF 1715187.}\and 
Boaz Barak\thanks{Harvard University, \texttt{b@boazbarak.org}. Supported by NSF awards CCF 1565264 and CNS 1618026 and a Simons Investigator Fellowship.}\and 
Pravesh K. Kothari\thanks{Carnegie Mellon University, \texttt{praveshk@cs.cmu.edu}. Supported by NSF CAREER Award 2047933}\and
Tselil Schramm \thanks{Stanford University, \texttt{tselil@stanford.edu}.}\and 
David Steurer \thanks{ETH Zurich, \texttt{dsteurer@inf.ethz.ch}.} \and}
\date{\today}

\begin{document}
\maketitle
\thispagestyle{empty}

\begin{abstract}
We give an algorithm for solving unique games (UG) instances whenever low-degree sum-of-squares proofs certify good bounds on the small-set-expansion of the underlying constraint graph via a hypercontractive inequality.
Our algorithm is in fact more versatile, and succeeds even when the constraint graph is not a small-set expander as long as the structure of non-expanding small sets is (informally speaking) ``characterized'' by a low-degree sum-of-squares proof.
Our results are obtained by rounding \emph{low-entropy} solutions --- measured via a new global potential function --- to sum-of-squares (SoS) semidefinite programs. 
This technique adds to the (currently short) list of general tools for analyzing SoS relaxations for \emph{worst-case} optimization problems. 

As corollaries, we obtain the first polynomial-time algorithms for solving any UG instance where the constraint graph is either the \emph{noisy hypercube}, the \emph{short code} or the \emph{Johnson} graph.
The prior best algorithm for such instances was the eigenvalue enumeration algorithm of Arora, Barak, and Steurer (2010) which requires quasi-polynomial time for the noisy hypercube and nearly-exponential time for the short code and Johnson graphs.
All of our results achieve an approximation of $1-\epsilon$ vs $\delta$ for UG instances, where $\epsilon>0$ and $\delta > 0$ depend on the expansion parameters of the graph but are independent of the alphabet size. 

\end{abstract}

\clearpage

\tableofcontents

\thispagestyle{empty}
\setcounter{page}{0}

\clearpage

\section{Introduction}

The \emph{Unique Games Conjecture (UGC)}~\cite{Khot02} is a central open question in computational complexity and algorithms.
In short, the UGC stipulates that distinguishing between almost satisfiable (value $\geq 1-\epsilon$) and highly unsatisfiable (value $\leq \epsilon$) instances of a certain 2-variable constraint satisfaction problem called \emph{Unique Games} is NP-hard.
The UGC is known to imply a vast number of hardness-of-approximation results in combinatorial optimization (e.g. Vertex Cover \cite{KhotR08}, Max Cut \cite{KhotKMO07}, constraint satisfaction problems \cite{Raghavendra08}, and Sparsest Cut \cite{ChawlaKKRS06}), but it is still not known whether the conjecture is true or false.
In a significant recent breakthrough, Khot, Minzer, and Safra \cite{KhotMS18} (building on \cite{KhotMS17,DinurKKMS18,BarakKS19}) showed that it is NP-hard to distinguish $\frac{1}{2}$-satisfiable instances from $\eps$-satisfiable instances. 
While  \cite{KhotMS18}'s result leads to some hardness-of-approximation results \cite{BhangaleK19}, it is not sufficient to recover most of the striking consequences of the UGC.
Moreover, regardless of the UGC's truth, there may be mild, natural conditions on instances that allow for polynomial-time algorithms for both the UG problem itself as well as ``downstream'' problems such as Max Cut. 

The community-wide quest to potentially refute the UGC, as well as to understand conditions under which UG instances are easy, has produced numerous advances in the broader theory of algorithms over the past two decades.
Notable examples include sophisticated graph partitioning tools \cite{ABS,lee2014multiway,kwok2013improved,louis2012many}, local random walks and similar stochastic processes \cite{gharan2012approximating,andersen2016almost}, and perhaps most of all, new tools for analyzing and rounding semidefinite programs (SDPs).
The groundbreaking result of Raghavendra \cite{Raghavendra08} exposed a deep connection between the UGC and the performance of a semidefinite programming relaxation called the {\em basic SDP}, showing that the UGC implies that the basic SDP is the optimal polynomial-time algorithm for any constraint satisfaction problem. 
Efforts to refute the UGC thus naturally led to the study of more powerful SDPs such as the sum-of-squares (SoS) hierarchy \cite{Lasserre00,Parrilo00}. The study of SoS (and specifically SoS algorithms for unique games) has since blossomed in the algorithms community, leading to many algorithmic advances.
These include the development of general techniques for analyzing and rounding SoS SDPs, such as global correlation rounding \cite{BarakRS11}, and the proofs-to-algorithms perspective \cite{BarakBHKSZ12}.
These techniques have in turn led to numerous algorithmic breakthroughs, for problems originating in high dimensional statistics (e.g. \cite{BarakKS14,ma2016polynomial,hopkins2018mixture,DBLP:journals/corr/abs-1711-07465,DBLP:journals/corr/abs-1711-11581,DBLP:journals/corr/abs-2005-02970,DBLP:journals/corr/abs-2005-06417,DBLP:conf/stoc/CherapanamjeriH20}), quantum computation \cite{BKS17}, statistical physics \cite{jain2019mean}, and more \cite{raghavendra2012approximating}.

\medskip

In this work, we give new algorithms for a large family of structured instances of the Unique Games problem.
Our algorithms are obtained via a novel analysis of Sum-of-Squares SDPs.
Specifically, we define a new {\em global} potential function which is a proxy for the entropy of the distribution over non-integral SDP solutions to the UG instance, and we show that when the entropy of the SDP solution is low, it is easy to round. 
We are then able to control our potential function if the UG constraint graph satisfies certain properties.
Our potential function offers an alternative to the {\em global correlation} function introduced by \cite{BarakRS11}, which is one of very few known tools for analyzing SoS relaxations of worst-case problems.
Using our new techniques, we show that polynomial-sized SoS relaxations solve Unique Games on graphs which were out of reach of previous techniques, including Short Code graphs, the Noisy Hypercube, and the Johnson graph.

To control our potential function, we exploit and deepen the connection between Unique Games and the related Small-Set Expansion problem.
A graph is said to be a $(\delta,\eta)$-small-set expander if all sets of measure at most $\delta$ have edge-expansion at least $\eta$, and the Small Set Expansion Hypothesis (SSEH) states that for each $\eps > 0$ there exists a sufficiently small constant $\delta$ such that it is NP-hard to decide whether a given graph is a $(\delta,1-\eps)$-small set expander, or whether the graph contains a set of measure $\le \delta$ with expansion $< \eps$.
A sequence of works in this area uncovered a fundamental relationship between the two problems~\cite{RaghavendraS10,RaghavendraST12}.
Our current state of knowledge regarding the relationship between these problems can be roughly summarized as follows. Raghavendra and Steurer \cite{RaghavendraS10} gave a reduction from the Small Set Expansion problem to Unique Games. 
Raghavendra, Steurer, and Tulsiani \cite{RaghavendraST12} reduced Unique games to Small Set Expansion, under the additional assumption that the UG  constraint graph is itself a small-set expander. 
Though a reduction in the opposite direction (without this additional assumption) has been postulated, it is still not known whether UGC implies SSEH.

Both the reductions above were \emph{worst-case reductions}, showing that if one problem is easy on \emph{all} instances (or all instances of certain type, in the case of \cite{RaghavendraST12}'s work) then the other is also easy on all instances.
In this work we show a ``point-wise'' reduction from UG instances on small-set expanders to the small-set expansion problem {\em within the sum-of-squares framework}.
Specifically, we show that for a graph $G$ for which SOS can certify small-set expansion, SOS can also solve {\em any} UG instance on $G$.
More precisely, we show that our global SoS potential function always reflects the fact that the entropy is low in a small-set expander.

In addition to this pleasing qualitative statement, our result yields polynomial time algorithms for solving arbitrary Unique Games instances on algebraic families of constraint graphs such as the \emph{noisy hypercube}~\cite{KhotV15,BarakBHKSZ12} and \emph{short code} graphs~\cite{BGHMRS15} that have been extensively investigated in the context of constructing integrality gaps for UG and related problems. 
The quantitative guarantees of our rounding algorithm are substantially stronger than previously known and in particular we give the first polynomial-time algorithms for instances over these graphs in the UGC parameter regime of distinguishing between $1-\epsilon$ satisfiable and $\leq \delta$ satisfiable instances for small constant $\epsilon,\delta>0$.

Our rounding technique is, in fact, more versatile and succeeds even when the constraint graph admits non-expanding sets so long as the structure of non-expanding small sets is (informally speaking) ``understood'' by the low-degree sum-of-squares proof system. Somewhat curiously, the main technical innovation in the recent proof of NP-hardness of the $2$-to-$1$-Games problem due to Khot, Minzer and Safra~\cite{KhotMS18} (building on \cite{KhotMS17,DinurKKMS18,BarakKS19,KMMS}) involves an exhaustive characterization of the structure of small non-expanding sets in algebraic families such as the \emph{Johnson} and the \emph{Grassmann} Graphs that establish the truth of the $2$-to-$1$ conjecture. We show that their proof in fact yields a low-degree sum-of-squares certificate characterizing  the non-expanding sets in the Johnson graph. Building on this, we obtain a polynomial time algorithm for solving arbitrary Unique Games instances when the constraint graph is the Johnson graph\footnote{The second largest eigenvalue of the Grassman graph's random walk matrix is already $1/2$, and hence it is not an interesting constraint graph for the UGC regime of nearly satisfiable instances which we study in this work.}.

\medskip

\subsection{Our Results} \label{sec:results}

We now formally state our results.
Our first theorem shows that unique games is easy on graphs which are ``certifiable small-set expanders.''
In order to state our theorem we first need to define certifiable small-set expanders.
We use the well known relationship between \emph{hypercontractivity} and small set expansion (e.g., \cite{KKL88}).
This is a relation between a polynomial inequality derived from the graph and the combinatorial property that small sets have large expansion.

For a graph $G=(V,E)$ and $\lambda \geq 0$, we let $V_\lambda(G)$ denote the linear subspace of $\mathbb{R}^V$ that is spanned by the eigenvectors of $G$'s normalized adjacency matrix that correspond to eigenvalues of value at least $1-\lambda$.
We say that $G$ is \emph{$(\lambda,C)$ hypercontractive} if every $f \in V_\lambda(G)$ satisfies $\E_{v \sim V}[ f_v^4] \leq C \E_{v \sim V} [ f_v^2 ]^2$.\footnote{This is also called \emph{2 to 4 hypercontractivity}; we drop the ``2 to 4'' modifier as it is the only notion of hypercontractivity we use.}
It is known that if $G$ is hypercontractive then subsets of size $\poly(\lambda)/C$ have expansion at least $\Omega(\lambda)$ and a certain converse was given in \cite{BarakBHKSZ12}.

We say that $G$ is {\em $(\lambda,C,D)$-certifiably  hypercontractive} if $G$ is $(\lambda,C)$ hypercontractive and furthermore this fact is certifiable by a degree-$D$ \sos proof  (see Definition~\ref{def:hyp}).
Our main theorem shows that when a graph is certifiably  hypercontractive, it is also a tractable constraint graph for unique games instances.\footnote{To reduce clutter, we state many of our results with explicit numerical constants. We have made no attempt to optimize these.}

\begin{theorem}[Unique games on certifiable small-set expanders]\label{thm:intro-main}
For every $C>0$, $\lambda \in (0,1)$, $D \in \N$ there exists a polynomial-time algorithm $A$ such that if:
\begin{compactitem}
    \item $G$ is $(\lambda,C,D)$-certifiably  hypercontractive, \emph{and}  
    \item $I$ is an affine unique games instance with constraint graph $G$, and $\val(I) = 1 - \eps$, for $\eps \leq \lambda^2/100$.
\end{compactitem}
Then $A(I)$ outputs an assignment to $I$ with value at least  $\frac{\eps\lambda^4}{64 C}$.
\end{theorem}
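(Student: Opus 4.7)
The plan is to round a degree-$D'$ \sos relaxation of $I$ for an appropriately chosen $D' = O(D)$. Identifying the alphabet with $\mathbb{Z}_k$, for each nontrivial character $\chi \in \widehat{\mathbb{Z}_k}$ I introduce pseudo-Fourier coefficients $g^\chi_v = \pE[\chi(X_v)]$ and pseudo-moments $G^\chi_{uv} = \pE[\chi(X_u)\overline{\chi(X_v)}]$. In this basis the SDP value becomes $\tfrac{1}{k}\sum_\chi \mathbb{E}_{(u,v)\in E}[\overline{\chi(c_{uv})}\, G^\chi_{uv}] \geq 1 - \eps$, while the expected value of the natural independent rounding (at each $v$, sample $X_v$ from its pseudo-marginal) is $\tfrac{1}{k}\sum_\chi \mathbb{E}_{(u,v)}[\overline{\chi(c_{uv})}\, g^\chi_u \overline{g^\chi_v}]$. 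These differ precisely by the total covariance $G^\chi_{uv} - g^\chi_u \overline{g^\chi_v}$, so independent rounding works exactly when the pseudo-moments approximately factorize across every character.

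Next, I would set up a global potential $\Phi = \sum_v \sum_{\chi \ne 1}|g^\chi_v|^2$, which up to normalization is $|V|$ times the expected collision probability of the pseudo-marginals and therefore serves as a proxy for negative entropy. A standard manipulation shows that conditioning on a uniformly random vertex $w$'s label increases $\mathbb{E}[\Phi]$ by the total squared covariance $\mathbb{E}_w \sum_\chi \mathbb{E}_v|G^\chi_{wv} - g^\chi_w\overline{g^\chi_v}|^2$, which by Cauchy--Schwarz controls the rounding gap identified above. This yields the dichotomy: either conditioning on some vertex substantially increases $\Phi$ and I iterate, or all covariances are small and independent rounding succeeds.

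The role of certifiable hypercontractivity is to cap the total potential $\Phi$ and thereby bound the number of conditioning rounds. Since the SDP value forces $g^\chi_u \approx \chi(c_{uv}) g^\chi_v$ on average, each $g^\chi$ is an approximate top eigenvector of the $\chi$-twisted normalized adjacency operator of $G$; using $\eps \leq \lambda^2/100$ I can replace $g^\chi$ by its projection onto $V_\lambda$ of the twisted operator with projection loss $O(\eps/\lambda^2)$. The degree-$D$ \sos certificate of hypercontractivity on $G$ then lifts (via the label-extended-graph structure of an affine UG, whose eigenspaces split into $\chi$-twisted copies of those of $G$) to an \sos inequality $\|g^\chi\|_4^4 \leq C\|g^\chi\|_2^4$, which both upper-bounds $\Phi$ and lower-bounds per-round progress. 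A constant number of conditioning rounds suffices, and tracking constants through the dichotomy yields the claimed value $\eps\lambda^4/(64C)$.

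The main obstacle is turning the ``approximate top eigenvector'' step into a fully \sos-valid argument: the hypercontractive inequality is certified on the exact subspace $V_\lambda(G)$, but $g^\chi$ only lives there after a projection that the \sos proof system does not natively perform. The remedy is a robust version of the inequality — splitting $g^\chi = g^\chi_{\geq \lambda} + g^\chi_{<\lambda}$ and controlling the error term using the SDP value — but this must be carried out inside \sos, where spectral projections are polynomial identities rather than literal subspace projections. Making this robust lift precise, and aggregating the per-character hypercontractive bound into a single statement about the global $\Phi$ with constants independent of $k$ and $|V|$, is the technical crux of the proof.
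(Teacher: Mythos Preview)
Your proposal has a genuine gap in the role you assign to hypercontractivity. You claim that the certificate $\|g^\chi\|_4^4 \le C\|g^\chi\|_2^4$ ``caps the total potential $\Phi$'' and thereby bounds the number of conditioning rounds. But $\Phi = \sum_v \sum_{\chi\neq 1}|g^\chi_v|^2$ is (up to normalization) $k$ times the average collision probability and ranges from $0$ to $k-1$ per vertex; a $2\to 4$ inequality on each individual $g^\chi$ says nothing about the sum over $\chi$ and cannot remove the $k$-dependence. Your iterative-conditioning scheme is the global-correlation template of \cite{BarakRS11}, whose round count is governed by $\Phi_{\max}/(\text{per-round progress})$; without a $k$-independent bound on the numerator or a matching lower bound on the denominator, the argument does not close. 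The secondary claim that high SDP value makes each $g^\chi$ an approximate top eigenvector of the $\chi$-twisted walk is also not right at the stage you invoke it: high value controls the degree-$2$ moments $G^\chi_{uv}$, not the products $g^\chi_u\overline{g^\chi_v}$ of degree-$1$ moments, and the gap between these is exactly the covariance you are trying to bound.

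The paper's mechanism is quite different and avoids iteration entirely. It works with two independent copies $X,X'$ and the \emph{shift-partition} indicators $f_s(u)=\Ind[X_u-X'_u=s]$, which are (approximately) Boolean functions on $V(G)$ itself --- no label-extended or twisted graph is needed. Edges crossing this partition must be violated in $X$ or $X'$, so on a $(1-\eps)$-value pseudodistribution the total cut mass is $O(\eps)$. Certifiable hypercontractivity is used via its small-set-expansion consequence (Lemma~\ref{lem:sse}): each $f_s$ with small mass must have expansion at least $\lambda/2$, and summing the resulting \sos inequalities over $s$ forces $\sum_s (\E_u f_s(u))^2 \ge \Omega(\lambda^4/C)$. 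This is a \emph{direct lower bound} on the shift-partition potential, which in turn equals $\E_{u,v}\sum_t \pPr[X_v-X_u=t]^2$ --- the average collision probability of differences. A \emph{single} conditioning step on a random vertex then suffices for independent rounding (Theorem~\ref{thm:round}). So hypercontractivity enters as a lower bound on the potential after one conditioning, not as an upper bound capping a multi-round process.
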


The algorithm is obtained by rounding the standard degree-$D'$  \sos relaxation for unique games, where $D'$ is a constant depending on $C,\eps,\lambda,D$.
The degree-$D'$ \sos relaxation for a unique games over constraint graph $G = (V,E)$ and alphabet $\Sigma$ can be computed in $(|V|\cdot |\Sigma|)^{O(D')}$ time (see \cite{RaghavendraW17}).

\begin{remark}[Completeness Gap vs Set Size]
The completeness bound in Theorem~\ref{thm:intro-main} is \emph{independent} of the the parameter $C$ corresponding to the set size. This is important, since, just as it works for expander graphs, the basic SDP can solve unique games instances on small set expanders if the completeness parameter can depend on the size of the sets that expand (see \cite{DBLP:journals/eccc/AroraIMS10} and  Theorem 1.1 of \cite{RaghavendraS09}).
In contrast, obtaining a guarantee where the set-size $\delta$ is independent of the completeness such as the one in Theorem~\ref{thm:intro-main} inherently requires using higher levels of the SOS hierarchy, since there are known integrality gap instances for the basic SDP where the constraint graphs are certifiably hypercontractive (e.g., the short-code graph, see \cite[Cor.~7.2]{BGHMRS15} and \cite{BarakBHKSZ12}).
As we discuss below, our algorithm solves such instances in polynomial time. 
\end{remark}

We prove Theorem~\ref{thm:intro-main} and give more precise quantitative bounds in Section~\ref{sec:cert-sse}.
From this theorem, we are able to obtain corollaries for the Noisy Hypercube and the Noisy Short Code graphs, since the latter are known to have sum-of-squares certificates of small-set expansion via hypercontractivity \cite{BarakBHKSZ12}.

\begin{corollary}[Unique Games on the Noisy Hypercube]\label{cor:intro-hc}
For every $0.001 > \eps > 0$ and $\tfrac{1}{4} > \alpha >0$ there is a polynomial time algorithm $A$ and a constant $\tau = \tau(\alpha,\eps)>0$, such that if $I$ is an affine unique games instance over the $\alpha$-noisy hypercube with $\val(I) \geq 1-\eps$ then $A(I)$ outputs an assignment to $I$ with value at least $\tau$.
\end{corollary}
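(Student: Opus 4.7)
The plan is to derive Corollary~\ref{cor:intro-hc} as a direct consequence of Theorem~\ref{thm:intro-main} by exhibiting, for each admissible pair $(\alpha,\eps)$, parameters $(\lambda,C,D)$ for which the $\alpha$-noisy hypercube $G$ is $(\lambda,C,D)$-certifiably hypercontractive and the hypothesis $\eps \le \lambda^2/100$ is met.

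The first step is to identify the subspace $V_\lambda(G)$. Since the eigenvectors of the $\alpha$-noisy hypercube on $\{0,1\}^n$ are the Fourier characters $\chi_S$ with eigenvalues $(1-2\alpha)^{|S|}$, $V_\lambda(G)$ is exactly the span of characters of Fourier degree at most
\[
  d \;:=\; \left\lfloor \log(1/(1-\lambda)) \,/\, \log(1/(1-2\alpha)) \right\rfloor,
\]
which for $\alpha < 1/4$ and small $\lambda$ satisfies $d = O(\lambda/\alpha)$. The second step is to invoke Bonami's hypercontractive inequality: for every multilinear polynomial $f$ of Fourier degree at most $d$, $\E[f^4] \le 9^{d}\,(\E[f^2])^{2}$. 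This shows $G$ is $(\lambda, 9^{d})$-hypercontractive. The crucial additional ingredient is that Bonami's inequality admits a degree-$O(d)$ sum-of-squares proof, which is exactly the certified hypercontractivity of the noisy hypercube established in~\cite{BarakBHKSZ12} en route to their analysis of the short code. Combining these, $G$ is $(\lambda,9^{d},O(d))$-certifiably hypercontractive.

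To close the argument, I would pick $\lambda := 10\sqrt{\eps}$, which lies in $(0,1)$ for $\eps < 0.001$ and satisfies $\eps = \lambda^2/100$, matching the hypothesis of Theorem~\ref{thm:intro-main}. With $d = O(\sqrt{\eps}/\alpha)$, $C = 9^{d}$, and $D = O(d)$ --- all constants once $\alpha$ and $\eps$ are fixed --- Theorem~\ref{thm:intro-main} supplies an algorithm running in time $(|V|\cdot|\Sigma|)^{O(D)} = \poly(|V|\cdot|\Sigma|)$ that outputs an assignment of value at least
\[
  \frac{\eps\,\lambda^{4}}{64\,C} \;=\; \frac{10^{4}\,\eps^{3}}{64 \cdot 9^{O(\sqrt{\eps}/\alpha)}} \;=:\; \tau(\alpha,\eps) \;>\; 0,
\]
as claimed.

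Since the corollary reduces in a black-box manner to Theorem~\ref{thm:intro-main}, I do not anticipate any genuine obstacle beyond the bookkeeping above. The only nontrivial external input is the SoS version of Bonami's inequality from~\cite{BarakBHKSZ12}, which can be quoted verbatim; everything else is matching parameters and confirming that the resulting $C$ and $D$ remain constants depending only on $\alpha$ and $\eps$.
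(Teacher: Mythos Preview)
Your proposal is correct and follows essentially the same route as the paper: quote the certified hypercontractivity of the noisy hypercube from~\cite{BarakBHKSZ12} and plug into Theorem~\ref{thm:intro-main}. Two minor points of divergence worth noting: the paper records the SoS degree of the Bonami certificate as $4$ (not $O(d)$), and it performs an explicit case split on whether $\alpha > 5\sqrt{\eps}$ (taking $t=1$ in that regime) versus $\alpha \le 5\sqrt{\eps}$, whereas you handle both uniformly by always setting $\lambda = 10\sqrt{\eps}$; your uniform choice is perfectly fine and yields the same asymptotic $\tau = \poly(\eps)\exp(-O(\sqrt{\eps}/\alpha))$. Also be careful that the running-time exponent in Theorem~\ref{thm:intro-main} depends on $C$ as well as $D$ (the relaxation degree is $D' = D'(C,\eps,\lambda,D)$, not just $D$), though since you have already argued $C = 9^{O(\sqrt{\eps}/\alpha)}$ is a constant this does not affect the conclusion.
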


\begin{corollary}[Unique Games on the Noisy Short Code Graph]\label{cor:intro-sc}
There exists constant $\eps_0 > 0$ such that for every $\eps \in [0,\eps_0)$, $\alpha \in (0,1)$ there exists a polynomial-time algorithm $A$ and a constant $\tau = \tau(\alpha,\eps)>0$, such that if $I$ is an affine unique games instance over the $\alpha$-noisy shortcode graph with $\val(I) \geq 1-\eps$, then $A(I)$ outputs an assignment to $I$  with value at least $\tau$.
\end{corollary}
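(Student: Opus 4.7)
The plan is to apply Theorem~\ref{thm:intro-main} to the $\alpha$-noisy short code graph, invoking the hypercontractivity certificate of \cite{BarakBHKSZ12} as the main external input. The first step is to extract from that work the statement that for every $\alpha \in (0,1)$ and every $\lambda \in (0,1)$ there exist constants $C = C(\alpha,\lambda)$ and $D = D(\alpha,\lambda)$ such that the $\alpha$-noisy short code graph $G$ is $(\lambda, C, D)$-certifiably hypercontractive in the sense of Definition~\ref{def:hyp}. This rests on two ingredients: (i) the eigenspace $V_\lambda(G)$ is spanned by Fourier characters of bounded ``short code degree'' $k = k(\alpha,\lambda) = O(\log(1/\lambda)/\log(1/\alpha))$, since the eigenvalues of the noisy short code decay geometrically in the degree of the defining polynomial; and (ii) the $2$-to-$4$ hypercontractive inequality for Boolean polynomials of degree at most $k$ admits an \sos proof of degree $O(k)$ with constant $C \le 9^k$, which is essentially the Bonami--Beckner inequality written in \sos form as in \cite{BarakBHKSZ12}.

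Given this, step two is parameter calibration. Fix $\eps_0 := 1/100$, and for $\eps \in [0,\eps_0)$ set $\lambda := 10 \sqrt{\eps} \in (0,1)$, so that $\eps \le \lambda^2/100$ and the completeness hypothesis of Theorem~\ref{thm:intro-main} is satisfied. Running the algorithm supplied by Theorem~\ref{thm:intro-main} on an affine unique games instance $I$ with $\val(I) \ge 1-\eps$ and constraint graph $G$, using the parameters $(\lambda, C(\alpha,\lambda), D(\alpha,\lambda))$, produces an assignment of value at least
\[
\tau \;:=\; \frac{\eps \lambda^4}{64\, C(\alpha, \lambda)},
\]
which is a positive constant depending only on $\alpha$ and $\eps$. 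The running time is polynomial because the \sos degree $D'$ invoked by Theorem~\ref{thm:intro-main} is determined by $\alpha, \eps, \lambda, D$, all of which are constants independent of the graph size.

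The main obstacle is ingredient~(ii) in the first step: verifying that the \sos hypercontractivity proof in \cite{BarakBHKSZ12} really yields a constant-degree certificate for the short code graph, with degree and constant depending only on $\alpha$ and $\lambda$. This uses the algebraic structure of the short code --- it is a Cayley graph over $\F_2^n$ whose eigenspaces are spanned by Fourier characters of bounded complexity --- together with the known constant-degree \sos proof of hypercontractivity for low-degree Boolean polynomials. Everything else, namely the parameter setting and the appeal to Theorem~\ref{thm:intro-main}, is routine bookkeeping; the qualitative picture mirrors that of Corollary~\ref{cor:intro-hc} for the noisy hypercube, with the short code playing the role of the hypercube and requiring only the extra step of reducing to low-degree Fourier polynomials via the eigenspace decomposition.
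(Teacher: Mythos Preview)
Your approach is correct and essentially identical to the paper's: invoke the \sos hypercontractivity certificate for the short code from \cite{BarakBHKSZ12}, calibrate $\lambda$ on the order of $\sqrt{\eps}$ so that $\eps \le \lambda^2/100$, and apply Theorem~\ref{thm:intro-main}. The one concrete issue is your choice $\eps_0 = 1/100$: the short code certificate (Theorem~\ref{thm:short-code} in the paper) only holds for Fourier levels $t \le \eta\cdot 2^d$ for a fixed universal constant $\eta$, so finding $t$ with $\lambda_t \ge 10\sqrt{\eps}$ and $t \le \eta 2^d$ forces $\eps_0 \le \eta^2/400$; the paper accordingly takes $\eps_0 = \min(\eta^2/400, 1/1600)$. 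This is purely a parameter fix and does not affect your argument otherwise.
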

The value $\tau$ in both corollaries is of the form $\poly(\eps)\exp(-c\sqrt{\eps}/\alpha)$ for $c>0$ a fixed constant.
Crucially, $\tau$ is independent of the alphabet size of $I$.
Though there was prior work giving SOS certificates for \emph{specific} instances of this type (see Section~\ref{sec:related}), this is the first algorithm for \emph{all} affine unique games instances over these graphs in the UGC parameter regime.
We derive these corollaries (with more precise asymptotics) and give formal definitions of the relevant graphs in Section~\ref{sec:hyper}.

Finally, by extending our methods we are also able to obtain a result for the Johnson graph,\footnote{For $n,\ell,\alpha$, the $(n,\ell,\alpha)$ Johnson graph has the vertices $\binom{[n]}{\ell}$ with $S \sim S'$ if $|S\cap S'|\geq (1-\alpha)\ell$, see Definition~\ref{def:johnson}.} despite the fact that it is not a small-set expander.

\begin{theorem}[Unique Games on the Johnson Graph]\label{thm:intro-j}
For every $0.001 > \eps >0$,  $\tfrac{1}{2} > \alpha > 0$, and integer $\ell\in \N$ with $\ell\alpha \in \N$, there is a polynomial-time algorithm $A$ and a constant $\tau=\tau(\eps,\alpha,\ell)>0$ with the following guarantee: for $n \in \N$ sufficiently large, if $I$ is an affine unique games instance over the $(n,\ell,\alpha)$-Johnson graph with $\val(I) \ge 1 - \eps$, then $A(I)$ returns an assignment to $I$ of value at least $\tau$.
\end{theorem}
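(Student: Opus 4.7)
The plan is to extend the framework of Theorem~\ref{thm:intro-main} to handle the Johnson graph, which is \emph{not} a small-set expander but whose non-expanding sets are structured as ``zoom-ins'' --- sets of the form $\{S \in \binom{[n]}{\ell} : T \subseteq S\}$ for some small $T \subseteq [n]$. As alluded to in the introduction, the Khot--Minzer--Safra characterization of non-expanding sets in the Johnson graph can be upgraded to a low-degree sum-of-squares certificate. I would combine such a certificate with an iterative conditioning argument: at each step, either the SDP pseudo-distribution can be rounded via the Theorem~\ref{thm:intro-main} template, or it concentrates on a zoom-in, in which case we condition and recurse on a smaller Johnson graph.

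First, I would establish a \sos-certified \emph{approximate} hypercontractive inequality for $J(n,\ell,\alpha)$: every $f \in V_\lambda(J(n,\ell,\alpha))$ admits a low-degree \sos-verifiable decomposition
\[
f \;=\; f_{\mathrm{hc}} \;+\; \sum_{T \,:\, |T| \leq r} c_T\, \one_{\{S \supseteq T\}},
\]
where $r = r(\alpha,\eps,\ell)$ is bounded and the ``pseudorandom'' part $f_{\mathrm{hc}}$ satisfies $\E f_{\mathrm{hc}}^4 \le C (\E f_{\mathrm{hc}}^2)^2$ for $C = C(\alpha,\eps,\ell)$, with the inequality certified by a constant-degree \sos proof. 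This is the \sos analogue of the KMS structural theorem, and reduces to the certificates used in Theorem~\ref{thm:intro-main} once the zoom-in contribution is subtracted.

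Second, I would modify the entropy-based rounding scheme behind Theorem~\ref{thm:intro-main}. At each step either (a) the global potential function is small, in which case the hypercontractive part of the decomposition is enough to run the rounding of Theorem~\ref{thm:intro-main} and produce an assignment; or (b) there exists a zoom-in $\{S \supseteq T\}$ of level $|T| \leq r$ on which the \sos pseudo-distribution's label correlations concentrate, identified via the certified decomposition above. In case (b), I condition the pseudo-distribution on a label belonging to this zoom-in. The induced constraint subgraph on $\{S \supseteq T\}$ is isomorphic to a smaller Johnson graph $J(n - |T|,\ell - |T|,\alpha'')$ with a slightly adjusted noise parameter $\alpha''$, on which we recurse.

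Third, I would control the recursion depth and the cumulative loss. Each conditioning step strictly increases the zoom-in level, so the recursion terminates after at most $r$ steps; at that point either the rounding in (a) has already succeeded, or the remaining induced subgraph has $\ell - r$ parameter small enough that a direct argument (or the base case of Theorem~\ref{thm:intro-main} applied to a certifiable hypercontractive graph of constant size) gives an assignment. Multiplying the value losses incurred at each of the $\le r$ conditioning steps and in the final round yields a $\tau(\eps,\alpha,\ell) > 0$ independent of $n$, matching the theorem statement.

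The main obstacle will be Step~1: turning the KMS ``soaking'' characterization, which is written in the language of higher-order Fourier analysis on the Johnson scheme, into an explicit low-degree polynomial identity of the form $\E f^4 \le C(\E f^2)^2 + \sum_T c_T^2$, together with constant-degree \sos proofs that the identified $T$'s have $|T|\le r$. A secondary difficulty is ensuring the conditioning step in (b) can be executed within \sos without inflating the degree, which relies on the level $r$ being a constant and the zoom-in indicators being expressible as low-degree pseudo-events --- an issue that must be carefully tracked through the recursion since $\alpha''$ drifts with each step.
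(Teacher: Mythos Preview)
Your Steps~2--3 contain a genuine gap that breaks the argument. When you condition and recurse into a zoom-in $\{S : T \subseteq S\}$, the resulting subgraph has measure $\binom{n-|T|}{\ell-|T|}/\binom{n}{\ell} = O((\ell/n)^{|T|})$, which is $o_n(1)$ for any $|T| \ge 1$. After at most $r$ recursions you arrive at a subgraph of measure $O((\ell/n)^r)$; even a \emph{perfect} assignment there satisfies only an $o_n(1)$ fraction of the edges of the full Johnson graph. Multiplying value losses across $\le r$ steps cannot compensate for this, because the loss is not multiplicative in the value but in the \emph{measure} of the region you ever touch. As written, your output $\tau$ would depend on $n$ and vanish as $n \to \infty$.

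The paper handles this with a different iteration scheme (Algorithm~\ref{alg:j} and Lemma~\ref{lem:subroutine}). It repeatedly finds an $r$-restricted subcube $C$ on which the shift-partition potential is large (Lemma~\ref{lem:round-partial-johnson}), runs Condition\&Round there to obtain a \emph{partial} assignment satisfying an $\Omega(\delta)$ fraction of $C$'s internal edges, and then---crucially---\emph{rerandomizes} the pseudodistribution on the assigned vertices (making their marginals uniform and independent of everything else) and goes back to searching over the \emph{whole} Johnson graph for the next subcube. The accounting shows that the fraction of edges satisfied in each iteration is proportional to the drop in SDP value caused by rerandomization, so by the time the SDP value has dropped by $\eps$ one has satisfied $\Omega(\delta^2 \eps)$ of \emph{all} edges. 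This sidesteps your $\alpha''$-drift concern entirely: the parameters never change because the algorithm always returns to the original $J(n,\ell,\alpha)$.

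Your Step~1 is also somewhat different from what the paper does, though not fatally so. The paper does not produce a pointwise decomposition $f = f_{\mathrm{hc}} + \sum_T c_T \one_{\{S\supseteq T\}}$. Instead Theorem~\ref{thm:structure-johnson} gives a degree-$2$ \sos inequality of the form
\[
\langle F, L F\rangle_\pi \;\ge\; (1-(1-\alpha)^{r+1})\Big[\E_\pi F \;-\; 8^r\tbinom{\ell}{r}\sum_{j\le r}\E_{Y}\,\delta_Y(F)^2 \;+\; B(F)\Big],
\]
which is applied to the shift-partition indicators $F_s$ and summed over $s\in\Sigma$. The term $\sum_s \E_Y \delta_Y(F_s)^2$ is exactly the global shift potential restricted to subcubes, and an averaging argument then locates one subcube with large potential. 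This avoids having to extract explicit structured components $c_T$ inside \sos.
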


The parameter $\tau$ is of the form $\poly\left(\frac{\eps}{\binom{\ell}{r}\exp(c'r)}\right)$ for $r = c\eps/\alpha$ and $c,c'>0$ fixed constants; the runtime is polynomial in $n$ with exponent that depends on $\ell$, $\alpha$, and $\eps$.
We prove Theorem~\ref{thm:intro-j} in Section~\ref{sec:johnson}, where we also give more precise quantitative guarantees.

Theorem~\ref{thm:intro-j} suggests that it may be possible to generalize Theorem~\ref{thm:intro-main} to establish that unique games is easy not only on graphs $G$ that are certifiably small-set expanders, but even on graphs that are not small-set expanders but whose expansion profile has some ``nice characterization''  captured by low degree \sos proofs.
Finding a formal notion of such a ``nice characterization'' is an interesting open question that can lead to a general understanding of the easy instances of unique games.
It is also open whether the standard (i.e., non noisy) Boolean cube possesses such a characterization, and indeed it is not known whether constant-degree \sos (or any other polynomial-time algorithm) can solve unique games on the Boolean cube (see \cite{AgarwalKKT15}).

\subsection{Comparison with prior work}  \label{sec:related}

There has been an extensive prior literature on rounding sum-of-squares programs, solving unique games on restricted instances, and relating the unique-games and small-set expansion problems.
In this section we discuss this literature and how our results relate to it. 

\paragraph{Worst-case rounding techniques for higher-degree SoS} 
Our main technical contribution is a new rounding technique that gives a new way to use higher-degree Sum-of-Squares relaxations for worst-case optimization problems. Despite the proliferation of uses of the sum-of-squares method in \emph{average-case algorithm design} (e.g. \cite{BarakKS14,ma2016polynomial}, see recent survey~\cite{TCS-086}), there are relatively few general techniques  that harness the power of the higher-degree SoS relaxations for \emph{worst-case} combinatorial optimization problems. 
The main examples for such techniques are \cite{BarakRS11}'s \emph{global correlation rounding}  and the generalization via \emph{reweightings} developed in~\cite{BKS17}.
In this work we suggest a new way to round solutions to SoS relaxations by considering a potential function that measures the {\em entropy} of the distribution over non-integral SoS solutions via a proxy for the weighted collision probability of each variable's assignment. When the entropy is low (or collision probability is high), we show that it is easy to round to a solution with high value.
We expect that this technique may find applications for other worst-case combinatorial optimization problems, including other CSPs, coloring, and the like.

\paragraph{Solving UG on restricted families of constraint graphs} Our work naturally fits into the long-standing investigation of efficient algorithms for various restricted families of instances of the Unique Games problem including expander graphs~\cite{AroraKKSTV08,MakarychevM10}, perturbed random graphs~\cite{KollaMM11}, and graphs with small ``threshold rank''~\cite{Kolla10,ABS,BarakRS11,GuruswamiSinop11}. 
In addition to yielding new algorithmic techniques and pointing out differences between hard CSPs such as 3-SAT (for which we do not know of any faster algorithm for such restricted families), such works  constitute the best known ``evidence'' against the truth of the UGC. Our guarantees unify and extend these results by noting that each of these restricted families admit (special kinds of) low-degree sum-of-squares certificates of the constraint graph being a small-set expander.\footnote{For low threshold-rank graphs, we get an algorithm as a direct corollary only when they are small-set expanders.}


\paragraph{UG algorithms for general instances}
The best currently-known algorithmic result for general instances of Unique Games is due to~\cite{ABS} and runs in time  $\exp(n^{\poly(\eps)})$ for all $1-\eps$ satisfiable instances. 
This algorithm was shown to be captured by the SoS hierarchy (and extended to apply to other related problems) by \cite{BarakRS11,GuruswamiSinop11} .
The algorithms of \cite{ABS,BarakRS11,GuruswamiSinop11} have better running times when the constraint graph's adjacency matrix has few large eigenvalues: if there are at most $r$ eigenvalues larger than $1-\poly(\eps)$, then they run in time $\exp(r)$.

Our work improves upon the guarantees of \cite{ABS,BarakRS11,GuruswamiSinop11} for instances which have super-logarithmically many large eigenvalues, yet have a constant degree sum-of-squares certificates of hypercontractivity.
In particular, prior to our work, no polynomial-time algorithms were known for unique games instances over the noisy hypercube, noisy short code, and the Johnson graphs - the best known algorithm for the noisy-hypercube ran in quasi-polynomial time and for the noisy-short code ran in subexponential time. 

\paragraph{Noisy Hypercube, Short Code, and Johnson graphs} Starting with Khot and Vishnoi \cite{KhotV15}, the noisy hypercube (and more recently, the shortcode graph) has been intensely studied to construct integrality gaps for natural SDPs for UG.  These works constructed specific instances of unique games over the noisy cube~\cite{RaghavendraS09} and shortcode graph~\cite{BGHMRS15} that on one hand are very far from satisfiable but on the other hand cannot be certified to be so by certain weak SDP and LP hierarchies.
\cite{BarakBHKSZ12} showed that the particular instances of \cite{KhotV15,RaghavendraS09,BGHMRS15} are in fact ``easy'' for \sos in the sense that they can be certified to be far from satisfiable by $O(1)$-degree \sos  (see also \cite{ODonnellZ13}).
However, the analysis of \cite{BarakBHKSZ12,ODonnellZ13} was tailored to the particular instances (specified by both the constraint graph {\em and} the edge constraints) of \cite{KhotV15,RaghavendraS09,BGHMRS15}, and did not yield an algorithm for general instances over these constraint graphs.\footnote{Specifically, \cite{BarakBHKSZ12} ported the analysis of the unsatisfiability proof from the works on integrality gaps into the \sos framework. However, this analysis was specific to the constructed instances. Moreover,  \cite{BarakBHKSZ12} did not provide any \emph{rounding algorithm} and is not directly applicable to analyzing satisfiable instances.}

Our analysis of the \sos algorithm for the Johnson graph (Theorem~\ref{thm:intro-j}) uses structural properties of the Johnson graph closely related to those shown by \cite{KMMS}.
Similar structural properties of the \emph{Grassman graph} have been exploited in the recent works \cite{DinurKKMS18, KhotMS18} to prove the so called ``2-to-2 conjecture''.
This has been a recurring motif in works on unique games. In the noisy hypercube, short code, and now in the Johnson graph, structure that was exploited to prove soundness for reductions was later found useful in giving efficient algorithms for the same instances.

\paragraph{Reductions from UGC to SSEH}
Raghavendra, Steurer and Tulsiani \cite{RaghavendraST12}  (building on \cite{RaghavendraS10}) reduced the task of solving unique-games on small-set expanders to the  small-set expansion problem  (see also  \cite[Chap.~6]{Steurer-thesis}). 
Theorem~\ref{thm:intro-main} can be viewed as a ``point-wise'' version of their reduction for integrality gaps.
Specifically,  \cite{RaghavendraST12}  gave a reduction which maps any unique-game instance $(\Pi,G)$ (where $G$ is a small set expander), into an instance $G'$ of the small-set expansion problem, where $G'$ is polynomially larger than $G$.
In contrast, Theorem~\ref{thm:intro-main}  shows that for \emph{every} graph $G$, if $O(1)$-degree \sos certifies the small-set expansion of $G$ then $O(1)$-degree \sos can also approximate unique games instances over \emph{the same graph} $G$, which implies that if $G$ a small-set expander and an integrality gap instance  for the Degree $D$ SoS relaxation of UG, then $G$ is also an integrality gap instance for the Degree $\Omega(D)$ SoS relaxation for SSE. Our algorithm for the Johnson graph suggests that there might be a way to extend this result to a reduction from UG to SSE even when the constraint graph is \emph{not} a small set expander but whose expansion profile has a nice characterization captured by SoS proofs, and hence is an easy instance of the small set expansion problem.

\section*{Organization}

In Section~\ref{sec:alg}, we give a high-level overview of our algorithm and our proofs.
In Section~\ref{sec:low-ent}, we prove that if a certain potential function in the sum-of-squares relaxation has large value, then a simple algorithm produces assignments of value $\Omega(1)$.
In Section~\ref{sec:cert-sse} we prove that this potential is always large for certifiable small-set expanders, and in Section~\ref{sec:hyper} we derive corollaries for the hypercube and short code graphs.
Finally, in Section~\ref{sec:johnson} we give the proof of Theorem~\ref{thm:intro-j} for the Johnson graph.
Section~\ref{sec:apx-ind} describes low-degree polynomials that approximate step functions, which we employ to define our potential.
Appendix~\ref{sec:prelims} contains background on \sos, Appendix~\ref{sec:expansion-red} reproduces for completeness a proof of a lemma relating small-set expansion to hypercontractivity, and Appendix~\ref{sec:fourier} contains \sos proofs of structural properties of Johnson graphs.

\section*{Preliminaries and Notation}
For a (weighted) graph $G=(V,E)$, we use $(u,v) \sim E$ to denote an edge $(u,v)$ sampled with probability proportional to its weight. We use $A_G$ to denote the transition matrix of the random walk on $G$, $L_G = I - A_G$ to denote the Laplacian and $\pi_G$ to denote the corresponding stationary distribution over $V$ (we take $\pi_G$ to be the distribution where each vertex is sampled proportional to the sum of weights on its incident edges \footnote{$A_G$ might not have a unique stationary measure, for instance when $G$ is bipartite or disconnected, but $\pi_G$ is always a stationary measure of $A_G$.}); we will drop the subscript when $G$ is clear from context. It is easy to see that picking a random edge from $E$, is equivalent to picking a random vertex $v \sim \pi$ and a random neighbor $w$ of $v$ with probability proportional to the weight of the edge $(w,v)$. For $v \in V(G)$, we use $\deg_G(v)$ to denote $v$'s (weighted) degree inside $G$.
If $A$ is some probabilistic event or condition, we use $\Ind(A)$ to denote the indicator random variable of $A$ (i.e., $\Ind(A)=1$ if $A$ occurs and $\Ind(A)=0$ otherwise). 

\begin{definition}[Unique games] \label{def:uniguegames}
A \emph{unique games instance} is a pair $I=(G,\Pi)$ where $G=(V,E)$ is a graph and $\Pi$ is a collection $\{ \pi_{u,v} \}_{(u,v) \in E}$ such that $\pi_{u,v}$ is a permutation over some finite set $\Sigma$. The graph $G$ is known as the \emph{constraint graph} of $I$.

Given an instance $I=(G,\Pi)$ of unique games and an assignment $x\in \Sigma^V$ of values to the vertices of $G=(V,E)$, the   \emph{value} of $x$ with respect to $I$ is $\val_I(x) = \E_{(u,v) \sim E} \Ind(\pi_{u,v}(x_u) = x_v)$.
 The value of $I$ is the maximum of $\val_I(x)$ over all $x\in \Sigma^V$.
We may drop the subscript $I$ when the instance is clear from context.

We say that $(G,\Pi)$ is an \emph{affine} unique games instance if $\Sigma$ is an additive group and all the functions $\pi_{u,v}$ are of the form $\pi_{u,v}(x) = x -  a_{u,v}$ for some $a_{u,v} \in \Sigma$. That is, all constraints correspond to $x_u - x_v = a_{u,v}$. 
\end{definition}

It is known that the UGC is equivalent to its restriction on affine instances~\cite{KhotKMO07}. 
In this paper we restrict attention to affine instances only.
For the sake of simplicity, we will drop the qualifier ``affine'' in future discussion, but all of our results are for this family of constraints.

\paragraph{Sum of squares proofs.}
Given a set of axioms $\cA = \{q_i = 0\}_i \cup \{g_j \ge 0\}_j$ for polynomials $q_i,g_j \in \R[x]$, we say that ``there is a degree-$d$ sum-of-squares proof that $f \ge h$ modulo $\cA$'' if $f = h + s + \sum_i c_i\cdot q_i + \sum_j r_j \cdot g_j$ with real polynomials $s,\{c_i\}_i,\{r_j\}_j \in \R[x]$ such that $s$ and $\{r_j\}_j$ are sums of squares, and if the maximum degree among $s,\{c_i q_i\}_i, \{r_j g_j\}_j$ is at most $d$. 
We will use the notation $\cA \vdash_d f(x) \ge h(x)$ to denote the existence of such an equality.
We also sometimes use $f(x) \succeq h(x)$ to denote that the inequality is a \sos inequality.
See Appendix~\ref{app:sos} for more.

\paragraph{Other notation.} We use the standard big-$O$ and big-$\Omega$ notation.
We will also use $f = \tilde{O}(x)$ to denote that there exists some $c,C$ independent of $x$ such that $\lim_{x \to \infty} \frac{f}{C x \log^c x} \le 1$.
For a positive integer $k$, we denote $[k] = \{1,\ldots,k\}$ and  $\binom{S}{\ell}$ to denote the set of unordered simple $\ell$-element subsets of $S$.
For a vector of variables $x$, we let $x^{\le D}$ denote the set of monomials of degree at most $D$ in the variables.
For a measure $\pi$ on $S$ and $f,g:S\to\R$, we use $\langle f, g \rangle_\pi = \E_{v \sim \pi} f(v) g(v)$ and the corresponding $p$-norms $\|f\|_{\pi,p} = \left(\E_{v \sim \pi} |f(v)|^p\right)^{1/p}$.
For a function $f(x)$ and $k \in \R$, we will use $f^{\circ k}(x) = f(x)^k$ to denote the element-wise $k$-th power of $f$.

\section{Overview of our techniques}\label{sec:alg}

We now describe our algorithm and give an overview of its analysis.
Our algorithm is based on the \sos semidefinite programming (SDP) relaxation, and in particular its view as optimizing over \emph{pseudo expectation} operators, see the surveys \cite{BarakS14,RaghavendraSS18,TCS-086} and Appendix~\ref{app:sos}.

Given a unique games instance $I= (G,\Pi)$ over alphabet $\Sigma$, with $G=(V,E)$, the value of $I$ can be computed by the following integer program over zero-one variables $\{ X_{u,a} \}_{u \in V, a \in \Sigma}$:

\begin{align}
\max_{X}& \E_{(u,v) \in E} \sum_{a \in \Sigma} X_{u,a} X_{v,\pi_{uv}(a)} \label{eq:ip}\\
s.t. 
&\quad X_{u,a}^2 = X_{u,a} \quad \forall u \in V, a \in \Sigma \qquad\nonumber\\
&\quad X_{u,a}X_{u,b} = 0 \quad \forall u \in V, a \neq b \in \Sigma \qquad\nonumber\\
&\quad \sum_{a} X_{u,a} = 1 \quad \forall u \in V\nonumber
\end{align}

The variables $X_{i,a}$ are the $0/1$ indicator variables that vertex $i \in V$ takes label $a \in \Sigma$. 
The objective function asks us to maximize the fraction of edge constraints satisfied.
Our algorithm is obtained by considering the  degree $D=O(1)$ \sos relaxation of the above program, obtaining a pseudo-expectation operator $\pE:X^{\le D} \to \R$, where $X^{\le D}$ is the set of all monomials in the $X$ variables up to degree $D$, and $\pE$ satisfies the above equality constraints and the Booleanity constraints $\{X_{u,a}^2 = X_{u,a}\}$ as axioms.  For brevity, we will refer to this set of axioms as $\A_I$, dropping the subscript when $I$ is clear from context.
The \emph{value} of such a pseudo-expectation operator whose corresponding pseudodistribution is $\mu$, with respect to the instance $I$ is denoted by $\val_\mu(I) = \pE[\val_I(X)] = \pE [ \E_{(u,v) \in E} \sum_{a \in \Sigma} X_{u,a} X_{v,\pi_{uv}(a)}]$. (Note that this is the pseudo expectation of a degree two polynomial in the variables $\{ X_{u,a} \}$.)

\subsection{Our rounding algorithm}

The \sos SDP relaxation is standard, and the novelty of our work is in the \emph{rounding} algorithm for it.
A $(1-\eps,\delta)$ rounding algorithm for the \sos relaxation is an algorithm that takes as an input an instance $I=(G,\Pi)$ and a pseudo-expectation operator $\pE$ (satisfying $\A_I$) of value at least $1-\eps$  and outputs an assignment $x\in \Sigma^V$ with $\val_I(x) \geq \delta$.
In this paper (and in the context of the UGC in general) we are interested in finding $(1-\eps,\delta)$ rounding algorithms for $\eps, \delta$ that are bounded away from zero by some constant which is independent of the alphabet size $|\Sigma|$.

Our rounding algorithm can be described as follows.
We will define some low-degree polynomial $\Phi^{I}_{\eps}:\R^{V \times \Sigma} \rightarrow [0,\infty)$ (which we call the ``approximate shift partition potential'' for reasons explained below).
We then show (roughly speaking) the following three statements:

\begin{enumerate}
    \item  There is a rounding algorithm that given an instance $I$ and a pseudo-expectation operator $\pE$ such that $\pE[\val_I(X)] \ge 1-\eps$ and $\pE [\Phi^I_{\eps}(X)] \geq \delta$, outputs an assignment $x$ for $I$ with $\val_I(x) \geq \poly(\eps,\delta)$.
    
    \item For every $I=(G,\Pi)$, if $G$ is a $(\delta,100\eps)$-small-set expander,\footnote{That is, every set of $G$ with size at most $\delta$ has expansion at least $100\eps$.} and if $X$ is a random variable sampled from an actual distribution over vectors in $\{0,1\}^{V \times \Sigma}$ with expected value $1-\eps$ for the integer program (\ref{eq:ip}), then $\E[ \Phi^{I}_{\eps}(X)] \geq \poly(\delta)$. 
    
    \item There is an $O(1)$-degree \sos \emph{proof} for Statement 2. 
\end{enumerate}

Using the standard ``\sos paradigm,'' the three steps above suffice to obtain algorithms for graphs that are certifiably small set expanders.
For such graphs we can combine the expansion certificate with the \sos proof of Statement 2 to show that any pseudo-distribution over $X$ obtained as a solution of the \sos program will have to satisfy $\pE \Phi^{I}_{\eps} \geq \Omega(1)$ and hence use the algorithm from Statement 1 to obtain an actual solution with value bounded away from zero.

In the case of the Johnson graph, which is not a small set expander, we have to work harder.
In this case we use the characterization of non expanding sets in the Johnson graph to show that if the value is sufficiently large then the potential $\Phi^{I}_{\eps}$ must be large on some ($o(1)$-sized) \emph{subgraph} of the Johnson graph (itself a Johnson graph with different parameters).
We solve for a partial assignment on this subgraph and iterate, and we are able to show that this process can continue until we have obtained an assignment with value independent of the alphabet size. 

\subsection{Rounding for certified small set expanders}

Since our algorithm for the Johnson graph is more complex, we will start by describing our algorithm for certified small set expanders.
In this section we will focus on the case that the pseudo expectation operator corresponds to an \emph{actual distribution} and the graph $G$ is simply a small set expander (with or without a certificate).
This case is sufficient to illustrate the main ideas behind our algorithm. The full analysis is presented in Sections~\ref{sec:low-ent}~and~\ref{sec:cert-sse}.

Throughout this section we fix an instance $I=(G,\Pi)$ of unique games, with $G=(V,E)$.
We let $\mu$ be a distribution over strings $X$ in $\{0,1\}^{V \times \Sigma}$ satisfying the constraints $\A_I$.
We will also identify $X$ with assignments in $\Sigma^V$ and so write $X_u$ for the unique element $s\in \Sigma$ such that $X_{u,s}=1$.

For every vertex $u\in V$ and symbol $s \in \Sigma$, we define the following random variable
\[
Z_{u,s} = \sum_{a \in \Sigma} X_{u,a} X'_{u,a+s} = \Ind(X_u - X_u' = s) \;,
\]
where $X$ and $X'$ are two independent samples from the distribution.\footnote{Given a degree $D$ pseudo-expectation operator corresponding to some pseudodistribution $X$, we can find in linear time a degree-$D/2$ pseudodistribution that satisfies the constraints corresponding to taking two independent samples of $X$. 
See Appendix~\ref{sec:prelims} and Fact~\ref{fact:indep}.}

We think of $Z_s$ as a subset of $V$, with $Z_{u,s}$ as the indicator variable for the membership of vertex $u$ in $Z_s$.
The $Z_{u,s}$'s satisfy partition constraints, hence they induce a partition of the graph into components on which the solutions $X,X'$ agree up to a shift, so that $Z_{u,s} = 1$ when $X_{u} - X_{u}' = s$.
We refer to this partition as the ``shift partition.''
If we were to assign labels to the vertices arbitrarily, then each part $Z_s$ in the partition would have size roughly $\approx \tfrac{1}{|\Sigma|}$.
On the other hand, if there is a part in the partition of fractional size $\Omega(1)$, this means the labels of two independent assignments are more correlated than one would expect, in that they agree up to shift on a non-trivial fraction of vertices.
This inspires our potential function.

We start by considering the following simplified version of our potential function:

\begin{definition}\label{def:sq-mass}
For any $\beta \in (0,1)$, define the {\em shift-partition potential} to be the quantity
\[
\Phi_\beta(X,X') = \sum_{s \in \Sigma} \left(\E_{u} \left(Z_{u,s} \cdot \Ind(\val_u(X)\ge \beta)\right)\right)^2,
\]
for $\val_u(X)$ the ``local objective'' at $u$, $\val_u(X) = \E_{v \sim u} \sum_{a \in \Sigma} X_{u,a}X_{v,\pi_{uv}(a)}$ where $v \sim u$ denotes a neighbor of $u$ sampled according to the edge weight of $(u,v)$.
\end{definition}
This potential measures the average square size of components in the shift partition, where the indicator ensures that we only include vertices which satisfy at least a $\beta$ fraction of incident edges. 
A convenient parameter setting will be to take $\beta = \eps$.

\paragraph{Rounding from high shift partition potential.}
If $\mu$ is an actual distribution with respect to an instance $I$,  and  $\E_\mu \Phi_\eps(X,X') \geq \Omega(1)$, then the following simple algorithm (see also Algorithm~\ref{alg:low-ent}) will find in expectation an assignment $y$ for $I$ with $\val_I(y) \geq \Omega(1)$:

\begin{enumerate}
    \item Pick $u_0\in V$ uniformly at random.
    
    \item Sample $y_1,\ldots,y_V \in \Sigma$ independently by letting $\Pr[ y_u = a] =  \E[ X_{u,a} | X_{u_0,0}=1 ]$. (That is, $y$ is sampled from the product distributions whose marginals correspond to $X | X_{u_0}= 0$.)
    
\end{enumerate}

The intuition behind the above is as follows: When $\E[\Phi_{\eps}(X,X')] \ge \delta$, then for a ``typical'' pair of independent assignments drawn from $\mu$, there will be a subset of vertices $S$ of measure $\ge \Omega(\delta)$ on which $X$ and $X'$ agree up to a shift in $\Sigma$. This implies that a random pair of vertices, will satisfy that the collision probability of the random variable $(X_v - X_u)$, i.e. $\Pr[X_u - X_v = X'_u - X'_v]$, is at least $\delta$. Since we have symmetry over the labels, the distribution of $X_v - X_u$ is the same as the distribution of $(X_v | X_u = 0)$, hence we get that the collision probability of $(X_v | X_u = 0)$ is high. Since we now choose $u_0$ at random and condition the distribution on $X_{u_0} = 0$, we have that the marginal distribution of a random vertex has high collision probability. If in addition the value on the vertices with high collision probability is close to $1$, we would immediately get that independent rounding gives high value on these vertices; this is because if a vertex's value \emph{and} collision probability are high, the vertex's neighbors must also have high collision probability on the corresponding satisfying labels. Following this logic, independent rounding will satisfy a $\Omega(\delta^2)$-fraction of the edges incident on these vertices, so that in total we satisfy at least a $\Omega(\delta^3)$-fraction of the edges of the graph. The term $\Ind(\val_u(X) \geq \eps)$ in the function $\Phi_\eps$ ensures that the high-collision-probability vertices also correspond to high value vertices (since those vertices that have low value do not even contribute to the potential), and this suffices for us to make the above intuition go through. This argument is made formal in Section~\ref{sec:low-ent} (see Algorithm~\ref{alg:low-ent} and Theorem~\ref{thm:round}).

\paragraph{Low degree polynomials.}
The function $\Phi_\beta$ above cannot be used for rounding pseudo-expectation operators, because it is not a low degree polynomial in the variables $X$.
To tackle this issue, we introduce the {\em approximate shift-partition potential}, replacing the high-degree indicator  $\Ind(\val_u(X) \ge \beta)$  with an approximating low-degree polynomial: 
\begin{definition}\label{def:sq-mass-apx}
For any $\nu,\beta \in (0,1)$, define the {\em approximate shift-partition potential} to be the quantity
\[
\Phi_{\beta,\nu}(X,X') = \sum_{s \in \Sigma} \left(\E_{u} \left(Z_{u,s} \cdot p_{\beta,\nu}(\val_u(X))\right)\right)^2,
\]
for $p_{\beta,\nu}(x)$ the degree-$\tilde{O}(1/\nu)$ polynomial which SoS-certifiably $\nu$-approximates the indicator $\Ind[x\ge\beta]$ for $x \in [0,1]$ described in Theorem~\ref{thm:step-approx}.
\end{definition}

The function $\Phi^{I}_{\eps}$ will be set as $\Phi_{\beta,\nu}$ for a suitable parameter setting $\beta = \eps$ and $\nu = \poly(\eps)$.

\paragraph{Small set expansion and the shift partition potential.}
The $Z_{u,s}$ variables define a partition of the graph.
Edges which cross this partition cannot be satisfied in {\em both} $X$ and $X'$ variables, since in an affine UG instance the labels of a satisfied edge's endpoints agree up to a shift: if $(u,v)$ is an edge with $u$ in the $s$ shift component (that is, $X_u = X'_u + s$), and $v$ in the $t$ shift component ($X_v = X_v' + t$), then $X_{u} - X_{v} \neq X_{u}' - X_v'$ unless $s = t$.
Therefore, the shift partition corresponds to a partition induced by removing the (on average) $\le 2\eps$ fraction of edges that are unsatisfied in at least one of the two solutions, $X$ or $X'$.
This means that if $G$ is a $(\delta,100\eps)$-small-set expander, then on average the partition induced by $Z_{u,s}$ has parts of $\Omega(\delta)$ size.
Since in an assignment of value $1-\eps$ there are at most $O(\eps)$ vertices with local objective $\le \eps$, removing such vertices by introducing the indicators $\Ind[\val_u(X) \ge \eps]$ removes at most $O(\eps)$ edges and therefore the above reasoning is unaffected: the parts remain of size $\Omega(\delta)$, so that $\E[\Phi_\eps] = \Omega(\delta)$.
We make this intuition formal in Section~\ref{sec:cert-sse} (see Theorem~\ref{thm:main-cert} and its proof).

\subsection{Johnson Graphs}

The Johnson Graph is {\em not} a small-set expander.
However, we are able to use its spectral structure  to obtain a nontrivial approximation ratio. 
We start by formally defining this graph:

\begin{definition}[Johnson Graph] \label{def:johnson}
For any $1 > \alpha > 0$ and $\ell,q \in \N$ with $\alpha\ell \in \N$ and $n > \ell$, we define the \emph{$(n,\ell,\alpha)$-Johnson graph $J_{n,\ell,\alpha}$} to be the graph whose vertex set is $\binom{[n]}{\ell}$ and where edges are between pairs of vertices $U,V \in \binom{[n]}{\ell}$ if and only if $|U \cap V| = (1-\alpha)\ell$. We refer to $\alpha$ as the noise parameter (analogous to the $\alpha$-noisy hypercube).
\end{definition}

The $(n,\ell,\alpha)$-Johnson graph contains other Johnson graphs as subgraphs: consider the subgraph induced by vertices which contain some $S \subset [n]$ with $|S| < \ell$.
We call such subgraphs {\em $|S|$-restricted subcubes}. 
It is not hard to see that such an $r$-restricted subcube contains at least an $\eta := (1-\alpha)^{r}$ fraction of its incident edges---this is because neighbors $(U,V)$ differ in each element with probability $\approx \alpha$, and so for a random neighbor $V$ of $U$, the chance that none of the elements of $S$ are changed is $\approx (1-\alpha)^{|S|}$.
Notice that when $r <  O(\frac{c}{\alpha})$ and $r \ll \ell$, the fraction of internal edges in an $r$-restricted subcubes is at least $\eta \ge 1 - O(c)$.

$\cite{KMMS}$ showed that in the Johnson graph, every non-expanding set that has expansion $\eps$ is correlated with some $r$-restricted subcube, for $r = O(\eps/\alpha)$, that has expansion $O(\eps)$. We show a ``distribution-version'' of this theorem: for any distribution over non-expanding sets, there exists an $r$-restricted subcube that is correlated with these sets in expectation. Moreover, we give an SoS proof of this fact (Theorem~\ref{thm:structure-johnson}), so that the same statement holds for pseudodistributions too. 

We then use this structure theorem to show that given a high value pseudodistribution for a unique games instance $I$, there must exist at least one $r$-restricted subcube, so that the approximate shift partition potential restricted to that subcube is high.

\begin{lemma*}[Large potential on a subcube: special case of Lemma~\ref{lem:round-partial-johnson}]
If $I$ is a unique games instance on the $(n,\ell,\alpha)$-Johnson graph and $X$ is sampled from a distribution over solutions with $\E[\val_I(X)] \ge 1-\eps$, then there exists an $O(\frac{\eps}{\alpha})$-restricted subcube $C$ such that the expected shift potential of the subgraph induced by $C$ is at least $\delta = \delta(\ell,\eps,\alpha)>0$.  Furthermore, this is certifiable by a degree-$\tilde{O}(1/\delta)$ \sos proof.
\end{lemma*}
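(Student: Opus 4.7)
The strategy combines two ingredients: a ``small-boundary'' observation showing that the shift partition $\{Z_s\}_{s \in \Sigma}$ cuts only a $2\eps$ fraction of edges, and the Johnson-graph structure theorem (Theorem~\ref{thm:structure-johnson}), which should extract from any (pseudo-)distribution of non-expanding sets a \emph{common} $r$-restricted subcube that carries substantial mass across many parts simultaneously. I will combine these to find a subcube $C$ on which the restricted shift-partition potential is $\Omega(\delta)$.

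For the small-boundary step, I would argue directly from the affine structure. For an edge with constraint $X_u - X_v = a_{uv}$, if $Z_{u,s} = Z_{v,t} = 1$ with $s \neq t$ then $(X_u - X_v) - (X_u' - X_v') = s - t \neq 0$, so at least one of $X, X'$ violates the constraint. Therefore
\begin{equation*}
\sum_{s \neq t} \E_{(u,v) \sim E}\bigl[Z_{u,s} Z_{v,t}\bigr] \;\leq\; \viol_I(X) + \viol_I(X'),
\end{equation*}
which is a polynomial identity provable at constant SoS degree from $\A_I$. Taking (pseudo-)expectations and using $\E[\val_I(X)] \geq 1 - \eps$ (and the same for $X'$) bounds the expected fractional boundary of the shift partition by $2\eps$. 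Equivalently, weighted by part sizes $\E_u[Z_{u,s}]$, the family $\{Z_s\}_s$ is a distribution over sets whose average normalized expansion is $O(\eps)$.

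Next, I would invoke the distributional form of the KMMS-style structure theorem (Theorem~\ref{thm:structure-johnson}) with expansion parameter $2\eps$ to conclude that there exists an $r$-element set $S \subseteq [n]$, with $r = O(\eps/\alpha)$, such that the subcube $C = \{u \in \binom{[n]}{\ell} : S \subseteq u\}$ simultaneously captures mass from many shift classes:
\begin{equation*}
\sum_s \bigl(\E_u\bigl[Z_{u,s} \cdot \Ind[u \in C]\bigr]\bigr)^2 \;\geq\; \delta(\ell, \eps, \alpha).
\end{equation*}
This quantity is (up to the normalization by $\pi(C)$) precisely the raw shift-partition potential on the subgraph induced by $C$. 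To move to the approximate version, I insert the factor $p_{\eps,\nu}(\val_u(X))$; by Markov's inequality applied to vertices with $\val_u(X) < \eps$, together with the SoS-approximation quality of $p_{\eps,\nu}$ from Theorem~\ref{thm:step-approx}, this modifies the sum by at most an additive $O(\sqrt{\eps} + \nu)$, which is absorbed into $\delta$ for a suitable parameter choice. Algorithmically, the set $S$ is produced by exhaustively trying the $\binom{n}{r}$ candidates.

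The main obstacle is clearly Step~2: the original KMMS theorem characterizes a \emph{single} non-expanding set via a correlated restricted subcube, whereas here we need a \emph{common} subcube $C$ that simultaneously captures mass across the entire partition $\{Z_s\}_s$, and moreover we need this packaged as a constant-degree SoS inequality so it composes with the pseudo-expectation operator from the Unique Games SDP. This is what Theorem~\ref{thm:structure-johnson} and the Fourier-analytic arguments in Appendix~\ref{sec:fourier} are designed to provide. Once that is in hand, the overall SoS degree bound follows from composing the constant-degree inequality of Step~1, the degree of Theorem~\ref{thm:structure-johnson}, and the degree $\tilde{O}(1/\nu) = \tilde{O}(1/\delta)$ of $p_{\eps,\nu}$, yielding the claimed $\tilde{O}(1/\delta)$ \sos proof.
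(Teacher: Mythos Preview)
Your overall strategy matches the paper's: bound the total boundary of the shift partition by $2\eps$, plug into the Johnson structure theorem (Theorem~\ref{thm:structure-johnson}) summed over all shift classes $s \in \Sigma$, rearrange, and extract a single subcube by averaging. However, there is a genuine gap in the step where you insert the factor $p_{\eps,\nu}(\val_u(X))$.

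You apply the structure theorem to the raw indicators $Z_{u,s}$, locate a \emph{specific} subcube $C$, and only then multiply in $p(\val_u(X))$, claiming the modification is $O(\sqrt{\eps}+\nu)$ by a global Markov bound on low-value vertices. This does not work: the subcube $C$ has measure $\mu(C) = O((\ell/n)^r) = o(1)$, while Markov only tells you that at most an $O(\eps)$ fraction of \emph{all} vertices have $\val_u(X) < \eps$. Those bad vertices could all lie inside $C$, in which case $\E_{u \in C}[1-p(\val_u(X))]$ is close to $1$ and the approximate potential on $C$ collapses to zero. The paper avoids this by reversing the order: it applies the structure theorem directly to $f_s(u) = Z_{u,s}\cdot p(\val_u(X))$ (Lemma~\ref{lem:sp-j}), so that the term $\sum_s \E_\pi[f_s] \geq 1 - \tfrac{\viol(X)}{1-\beta-\nu}-\nu$ already absorbs the global Markov loss \emph{before} averaging over subcubes, and the subcube picked out by averaging carries large \emph{approximate} potential by construction. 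Equivalently, you could average the combined quantity $\sum_s \delta_C(Z_s)^2 - 2\,\E_{u\in C}[1-p(\val_u(X))]$ over $C$ and pick a subcube maximizing the difference; what you cannot do is maximize the first term and then bound the second term afterward on that same $C$.

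A second, smaller omission: the statement asks for the shift potential of the \emph{subgraph induced by $C$}, which is defined with the local value $\val_u^C(X)$ (edges inside $C$), whereas the quantity you control uses the global $\val_u(X)$. The paper bridges these via Claim~\ref{claim:potentials}: for $r = O(\eps/\alpha)$ every vertex of an $r$-restricted subcube has at most an $O(\eps)$ fraction of its neighbors outside $C$, so $\val_u^C(X) \geq \val_u(X) - O(\eps)$ vertex-by-vertex, and one only needs to shift the threshold in $p$ by $O(\eps)$ to pass from the global restricted potential $\Phi_{\beta,\nu}(X,X')|_C$ to the local potential $\Phi^C_{\beta-O(\eps),\nu}(X,X')$.
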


\medskip

The Johnson graph only has $\binom{n}{r} \leq \binom{n}{\ell}$ $r$-restricted subcubes, and so in $n^r = \poly(n)$ time we can enumerate over the cubes to find one cube $C$ with a large shift-partition potential (i.e., satisfying  $\E[\Phi_{\eps}^{C}] \ge \delta$). We can then find a $\delta$-satisfying solution for the internal edges of $C$ by using our rounding algorithm (Algorithm~\ref{alg:low-ent}). Since 
the fractional mass of $C$, $\mu(C) := \binom{n}{\ell - r}/\binom{n}{\ell} \approx \frac{\ell^r}{n^r}$, we only satisfy a negligible fraction of edges this way.
On the other hand, since $C$ is just a $o(1)$-fraction of the graph, the unique games instance restricted to the rest of the graph $\overline{C}$ must have high value too.
Since the value remains high even after removing $C$, we may iteratively repeat this process to find a sequence of $r$-restricted subcubes $C_1,\ldots,C_T$, while ensuring that each cube $C_t$ does not intersect too much with the previous subcubes $C_1,\ldots,C_{t-1}$. 
At each iteration, we fix an assignment on $C_t$ satisfy an $\Omega(\delta^2)$-fraction of $C_t$'s internal edges, which in turn is an $\Omega(\delta^2\eta)$-fraction of all edges incident on $C_t$; the remaining $(1-\delta^2\eta)$ fraction of edges incident on the cube (including outgoing edges) may be unsatisfied. 
But since the ratio of satisfied to unsatisfied edges incident on $C_t$ is at least $\delta^2\eta$, the objective value drop (on the unassigned part of the graph) in every step is proportional to the fraction of edges we satisfy in that step. We repeat the process until the value drops by $\eps$, so we end up satisfying an $\Omega(\delta^2\eta\eps)$-fraction of all the edges.

Modulo the proof of the ``large potential on subcube'' Lemma (which will be a corollary of Lemma~\ref{lem:round-partial-johnson}), this is nearly the complete argument. 
The only detail that remains is to apply the above lemma iteratively (we cannot simply apply it on $J \setminus C$ since that graph is not a Johnson graph) and to ensure that the subcubes we find at each iteration do not overlap too much.
To handle both these issues, as we iterate we take additional measures. The full proof is in Section~\ref{sec:johnson}; see Algorithm~\ref{alg:j} and Theorem~\ref{thm:main-johnson}.

\section{Rounding instances with large shift potential}
\label{sec:low-ent}

In this section, we will show that when the objective value is large and the approximate-shift-partition potential $\Phi$ has large pseudoexpectation, then the Condition \& Round Algorithm  (Algorithm~\ref{alg:low-ent}) succeeds in returning a good assignment for the unique games instance.

\begin{algorithm-thm}[Condition \& Round]\label{alg:low-ent} ~\\ 
\textbf{Input:} A degree-$D$ (for $D \geq 2$) 
shift-symmetric pseudodistribution\footnote{Any pseudodistribution can be efficiently transformed into a shift-symmetric one without losing value. See Definition~\ref{def:sym} and Lemma~\ref{lem:sym-basic} for details.} $\mu$ for a UG  instance $I=(G=(V,E),\Pi)$ over alphabet $\Sigma$.  \\
\textbf{Goal:}  Return an assignment $x \in \Sigma^V$ satisfying $\Omega(1)$ fraction of the constraints in expectation.
\\\\
Sample a random solution $Y$:
\begin{compactenum}
\item Sample a vertex $u \sim \pi$ and condition on $X_{u} = 0$ to obtain the new marginals $\pE_\mu[\cdot ~|~ X_{u} = 0]$.
\item Sample a solution $Y$ by choosing each collapsed variable's labels independently according to its marginals: $Y_v \sim \pE_\mu[X_v ~|~ X_{u} = 0]$.
\end{compactenum}
\end{algorithm-thm}

Recall the approximate shift-mass potential $\Phi_{\beta,\nu}(X,X')$ from Definition~\ref{def:sq-mass-apx}.
We define the potential of a pseudo distribution $\mu$ to be the expectation of $\Phi_{\beta,\nu}$ over $\mu$:

\begin{definition}[Approximate shift mass potential of a pseudodistribution]
For a pseudodistribution $\mu$ of degree at least $2\deg(\Phi_{\beta,\nu}) + 2$, define the approximate shift mass potential of $\mu$ to be the quantity
\[
\Phi_{\beta,\nu}(\mu) = \pE_\mu[\Phi_{\beta,\nu}(X,X')].
\]
\end{definition}

We will prove the following theorem:

\begin{theorem}\label{thm:round}
Let $I = (G, \Pi)$ be an affine instance of Unique Games over the alphabet $\Sigma$. 
Let $\mu$ be a  degree-$O(\deg(\Phi_{\beta,\nu})))$  shift-symmetric pseudodistribution satisfying the axioms $\A_I$ specified by program (\ref{eq:ip}).
If $\Phi_{\beta,\nu}(\mu) \geq \delta$, then on input $\mu$ Algorithm~\ref{alg:low-ent} runs in time $\poly(|V(G)|)$ and returns an assignment of expected value at least $(\delta - \nu)(\beta-\nu)$ for $I$.
\end{theorem}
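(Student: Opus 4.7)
My plan is to lower bound $\E[\val_I(Y)]$ for $Y$ sampled by Algorithm~\ref{alg:low-ent} in terms of $\Phi_{\beta,\nu}(\mu)$. To begin, I would use the shift-symmetry of $\mu$ to rewrite the conditional marginals used by the algorithm. Writing $P_v^u(a) := \pE_\mu[X_{v,a} \mid X_u = 0]$, shift-symmetry gives $P_v^u(a) = \pPr_\mu[X_v - X_u = a]$ (using $\pE_\mu[X_{u,0}] = 1/|\Sigma|$). Since $Y_v$ and $Y_w$ are independent given $u$, passing to two independent copies $X, X'$ of $\mu$ via Fact~\ref{fact:indep} lets me rewrite
\begin{align*}
\E[\val_I(Y)] &= \E_{u \sim \pi} \E_{(v,w) \sim E} \sum_{a \in \Sigma} P_v^u(a)\, P_w^u(a - a_{vw})\\
&= \E_u \E_{(v,w)} \pE\bigl[\Ind[(X_v - X_u) - (X'_w - X'_u) = a_{vw}]\bigr],
\end{align*}
where $\pE$ is the joint pseudoexpectation over $(X, X')$.

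Next, introducing the shift-class indicators $Z_{u,s} := \Ind[X_u - X'_u = s]$, I would establish the constant-degree \sos inequality
\[
\Ind\bigl[(X_v - X_u) - (X'_w - X'_u) = a_{vw}\bigr] \;\succeq\; \sum_{s \in \Sigma} Z_{u,s}\, Z_{w,s}\, \Ind[X_v - X_w = a_{vw}] \pmod{\A_I \cup \A_{I'}}.
\]
Pointwise, if $u$ and $w$ share a shift class $s$, substituting $X'_u = X_u - s$ and $X'_w = X_w - s$ shows that $(X_v - X_u) - (X'_w - X'_u) = X_v - X_w$, and the two indicators agree. Taking expectations over $u$ and $(v,w)$ and using $\E_{v \sim w} \Ind[X_v - X_w = a_{vw}] = \val_w(X)$, I obtain
\[
\E[\val_I(Y)] \;\ge\; \pE \sum_s \bigl(\E_u Z_{u,s}\bigr)\bigl(\E_w Z_{w,s}\, \val_w(X)\bigr).
\]

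Finally, I would swap $\val_w(X)$ for $q_w := p_{\beta,\nu}(\val_w(X))$ using Theorem~\ref{thm:step-approx}, which supplies an \sos proof that $p_{\beta,\nu}(x) \in [0,1]$ on $[0,1]$ and that $x$ exceeds $(\beta - \nu)\, p_{\beta,\nu}(x)$ up to error $\nu$. Multiplying by the \sos-nonnegative $Z_{w,s}$ (which is \sos because each Boolean monomial $X_{w,a} X'_{w,a+s}$ is idempotent modulo $\A_I \cup \A_{I'}$, hence its own square) and by $\E_u Z_{u,s}$, and then using $q_u \le 1$ to replace $\E_u Z_{u,s}$ by $\E_u Z_{u,s}\, q_u$ as an \sos lower bound, the right-hand side becomes at least $(\beta-\nu)\,\pE \sum_s(\E_u Z_{u,s} q_u)(\E_w Z_{w,s} q_w) - O(\nu) = (\beta - \nu)\, \Phi_{\beta,\nu}(\mu) - O(\nu) \ge (\beta - \nu)(\delta - \nu)$. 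The hardest step will be \sos-certifying the shift-class inequality: while pointwise it is obvious, rewriting its difference as a sum of squares requires expanding both sides as polynomials in the Boolean $X,X'$-variables, applying the partition axioms $\sum_a X_{u,a} = 1$ and $X_{u,a} X_{u,b} = 0$ for $a \ne b$ to collapse the two indicator expansions into a common form, and invoking the Boolean axioms $X_{u,a}^2 = X_{u,a}$ to express each surviving monomial as its own square.
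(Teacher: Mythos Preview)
Your approach is correct and arrives at the same bound, but it is organized rather differently from the paper's proof. The paper introduces an intermediate \emph{conditioned shift potential}
\[
\Psi(\mu) = \E_{u,v \sim \pi}\Bigl[\sum_s \pPr_\mu[X_v - X_u = s]^2 \cdot \pE[\val_v(X)\mid X_v - X_u = s]\Bigr]
\]
and splits the argument into two lemmas: first $\Phi_{\beta,\nu}(\mu)\le \Psi(\mu)/(\beta-\nu)+\nu$ (Lemma~\ref{lem:relating-ent}, proved in the two-copies language by pulling out one factor of $p$ and applying the approximate Markov bound $p(x)\le x/(\beta-\nu)+\nu$), and second $\Psi(\mu)\le \E[\val_I(Y)]$ (Lemma~\ref{lem:rounding}, proved by manipulating conditional pseudoprobabilities after conditioning on $X_u=0$, using shift-symmetry and the elementary inequality $\pPr[X_v=s,\,X_w=\pi_{vw}(s)\mid X_u=0]\le \pPr[X_w=\pi_{vw}(s)\mid X_u=0]$). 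You instead stay in the two-copies framework throughout and go directly from $\E[\val_I(Y)]$ to $\Phi_{\beta,\nu}(\mu)$, with your ``shift-class'' \sos inequality playing the role of the paper's monotonicity-of-conditioning step. The computation of the error term is essentially the same in both routes. What each buys: the paper's modular decomposition isolates $\Psi$ as a quantity of independent interest (it is exactly the expected rounded value, by Lemma~\ref{lem:rounding}) and makes each piece short; your direct route avoids an auxiliary definition but concentrates the work in the indicator inequality, whose \sos certificate---while constant-degree and ultimately routine via the Boolean/partition axioms, as you note---is more involved to write out than either of the paper's two steps.
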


While Algorithm~\ref{alg:low-ent} is randomized, we can derandomize it and obtain a deterministic polynomial-time algorithm with the same guarantee on the approximation factor. To derandomize we can use standard techniques such as the method of conditional expectations~\cite{Vadhan12}. We will refer to such an algorithm as derandomized Condition \& Round.

\begin{proof}[Proof of Theorem~\ref{thm:round}]
Throughout this proof, we let $\mu$ be a pseudo-distribution satisfying the conditions of the theorem, and all pseudo-expectations are taken with respect to $\mu$.
Our overall strategy will be as follows: we will define an alternate potential function $\Psi(\mu)$, relate its value to $\Phi(\mu)$, and then show that when $\Psi(\mu)$ is large a single step of conditioning and independent rounding gives a large expected objective value.

To define our alternate potential, let us introduce some concise notation. 
For an event $\cE$ whose indicator $\Ind(\cE)$ has degree at most $\deg(\mu)$ define $\pPr[\cE] = \pE[\Ind(\cE)]$ (see Definition~\ref{def:pseudoprob} for a formal definition). 
Similarly, for conditional probabilities, for events $\cE$ and  $\cF$ with $\deg(\Ind(\cE \wedge \cF))\le \deg(\mu)$, let $\pPr[\cE \mid \cF] := \frac{\pPr[\cE \wedge \cF]}{\pPr[\cF]}$. 
For simplicity of notation, when $\pPr[\cF] = 0$, we define $\pPr[\cE \mid \cF] := 0$.

Now we define the conditioned shift potential $\Psi(\mu)$:
\begin{definition}
The {\em conditioned shift potential} of a degree-$D \ge 4$ pseudodistribution $\mu$ is given by
\[
\Psi(\mu) := \E_{u,v \sim \pi} \left[\sum_{s \in \Sigma} \pPr_\mu[X_v - X_u = s]^2 \cdot \pE[\val_v(X) \mid X_v - X_u = s] \right],
\]
where $\pi$ is the stationary measure on $G$ and $\val_v(X)$ is the ``local objective'' at the vertex $v$, $\val_v(X) = \E_{w \sim v}[\pPr_\mu[X \text{ satisfies } (v,w)]]$ for $w \sim v$ a  neighbor of $v$ sampled proportional to the weight on $(v,w)$.
\end{definition}

Roughly, the conditioned shift potential measures the average collision probability of the random variable $(X_u - X_v)$, but it gives more preference to those pairs $(u,v)$ that have high local objective value. 

We will show that when $\Phi(\mu)$ is large, then $\Psi(\mu)$ is also large:
\begin{lemma}\torestate{\label{lem:relating-ent}
If the approximate shift mass potential of $\mu$ is large, then the conditioned shift potential of $\mu$ must be large as well:
\[\Phi_{\beta,\nu}(\mu) \leq \frac{\Psi(\mu)}{\beta-\nu} + \nu.\]}
\end{lemma}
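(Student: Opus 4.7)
}
The plan is to expand the square in $\Phi_{\beta,\nu}(\mu)$, use SoS-nonnegativity and boundedness of the approximating polynomial $p_{\beta,\nu}$ to ``drop'' one of the two factors, convert the remaining $p_{\beta,\nu}(\val_v(X))$ into a linear-in-$\val_v$ expression, and then identify the resulting sum with $\Psi(\mu)$ using independence of $X,X'$ together with shift-symmetry.

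First I would write
\begin{align*}
\Phi_{\beta,\nu}(\mu)
&= \sum_{s\in\Sigma}\pE_\mu\!\left[\left(\E_u Z_{u,s}(X,X')\,p_{\beta,\nu}(\val_u(X))\right)^{\!2}\right]\\
&= \E_{u,v\sim\pi}\sum_{s}\pE_\mu\!\left[Z_{u,s}Z_{v,s}\,p_{\beta,\nu}(\val_u(X))\,p_{\beta,\nu}(\val_v(X))\right].
\end{align*}
Since $Z_{u,s}Z_{v,s}\,p_{\beta,\nu}(\val_v(X))$ is SoS-nonnegative and $p_{\beta,\nu}(\val_u(X))\le 1$ is an SoS inequality (from the properties of $p_{\beta,\nu}$ in Theorem~\ref{thm:step-approx}), I can drop the $u$-factor to obtain
\[
\Phi_{\beta,\nu}(\mu)\le \E_{u,v}\sum_s \pE_\mu\!\left[Z_{u,s}Z_{v,s}\,p_{\beta,\nu}(\val_v(X))\right].
\]
Next I would invoke the ``Markov-type'' SoS inequality $(\beta-\nu)\,p_{\beta,\nu}(x)\le x+\nu(\beta-\nu)$ valid for $x\in[0,1]$ (easily verified from the three-regime behavior of $p_{\beta,\nu}$ and certified by Theorem~\ref{thm:step-approx}), yielding a splitting into a ``main term'' weighted by $\val_v(X)/(\beta-\nu)$ and an ``error term'' weighted by $\nu$.

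For the error term, I would use $\sum_s Z_{u,s}Z_{v,s}\le\sum_s Z_{v,s}=1$ (from the partition axioms in $\A_I$) to bound $\E_{u,v}\sum_s\pE[Z_{u,s}Z_{v,s}]\le 1$, contributing at most the promised additive $\nu$. For the main term, the key identity is
\[
\sum_s Z_{u,s}(X,X')Z_{v,s}(X,X')=\Ind[X_u-X_v=X'_u-X'_v],
\]
which reveals that the pseudoexpectation is a ``collision'' probability between $X_u-X_v$ and its independent copy. Using Fact~\ref{fact:indep} to condition on $X$ and integrate out $X'$, together with shift-symmetry to identify $\pPr[X'_u-X'_v=t]=\pPr[X_u-X_v=t]$, I would rewrite
\[
\pE_\mu\!\left[\val_v(X)\,\Ind[X_u-X_v=X'_u-X'_v]\right]
=\sum_t \pPr[X_u-X_v=t]^{2}\,\pE[\val_v(X)\mid X_u-X_v=t],
\]
which is exactly the $(u,v)$-contribution to $\Psi(\mu)$ (after reindexing $t\mapsto -s$). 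Averaging over $u,v\sim\pi$ identifies the main term with $\Psi(\mu)/(\beta-\nu)$ and finishes the bound.

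The main obstacle I anticipate is bookkeeping around pseudoexpectation: every manipulation (dropping one $p_{\beta,\nu}$, the Markov-type bound, conditional reweighting, use of shift-symmetry) must stay within degree $\deg(\mu)$ and be justified either by SoS inequalities on $p_{\beta,\nu}$ from Theorem~\ref{thm:step-approx}, by the partition/Booleanity axioms $\A_I$, or by the independence of $X,X'$ from Fact~\ref{fact:indep}. In particular, the conditional pseudoprobabilities $\pPr[\cdot\mid X_u-X_v=t]$ are only defined when $\pPr[X_u-X_v=t]>0$, so I would be careful to phrase the final step as the unconditional identity $\pPr[X_u-X_v=t]\cdot\pE[\val_v(X)\mid X_u-X_v=t]=\pE[\val_v(X)\,\Ind(X_u-X_v=t)]$ (the convention adopted after Definition~\ref{def:pseudoprob}) so that the argument is uniform over all $t$.
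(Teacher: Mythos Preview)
Your proposal is correct and follows essentially the same approach as the paper's proof: expand the square, use $0\le p_{\beta,\nu}\le 1$ to drop one factor, apply the Markov-type bound $p_{\beta,\nu}(x)\le \tfrac{x}{\beta-\nu}+\nu$ (Fact~\ref{fact:bdd-markov}) to the other, and then identify the resulting expression with $\Psi(\mu)$ via independence of $X,X'$. The only cosmetic difference is the order of operations: the paper first uses independence to factor the $X'$-part out as a scalar $\pE[\Ind(X'_u-X'_v=s)]$ and only then drops a $p$-factor and applies Markov, whereas you drop and Markov first and invoke independence at the end via the identity $\sum_s Z_{u,s}Z_{v,s}=\Ind[X_u-X_v=X'_u-X'_v]$; both routes arrive at the same bound. (Note also that $\pPr[X'_u-X'_v=t]=\pPr[X_u-X_v=t]$ follows already from $X,X'$ being identically distributed copies, so shift-symmetry is not actually needed in this lemma.)
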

We prove this lemma in Section~\ref{sec:lems} below.
Next, we will show that when the conditioned shift potential is large, a single step of conditioning and rounding returns a solution of high objective value:
\begin{lemma}\torestate{\label{lem:rounding}
Let $I = (G, \Pi)$ be an affine instance of Unique Games over the alphabet $\Sigma$. 
Let $\mu$ be a degree-$4$ shift-symmetric pseudodistribution for $I$. When $\Psi(\mu) \geq \delta$, then the Condition \& Round algorithm (Algorithm~\ref{alg:low-ent}) returns a solution of expected value at least $\delta$.}
\end{lemma}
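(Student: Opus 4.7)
The plan is to show $\E[\val_I(Y)] \geq \Psi(\mu)$ by rewriting both quantities as sums over edges of essentially the same expression in pseudoprobabilities, and then invoking a simple pseudo-monotonicity inequality $\pPr[A\wedge B]\leq \pPr[B]$ at the end. Throughout I only need $\deg(\mu)\geq 4$, which is afforded by the hypothesis.

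First I would use shift-symmetry to identify the marginals that $Y$ is sampled from. Since $\mu$ is shift-symmetric, $\pPr_\mu[X_u=c]=1/|\Sigma|$ for every $c$, and expanding $\Ind(X_v-X_u=s)=\sum_{c\in\Sigma} X_{u,c}X_{v,c+s}$ and shifting gives $\pPr_\mu[X_v=s, X_u=0]=\pPr_\mu[X_v-X_u=s]/|\Sigma|$. Consequently $\Pr[Y_v=s]=\pPr_\mu[X_v=s\mid X_u=0]=\pPr_\mu[X_v-X_u=s]$. Because the $Y_v$'s are independent and an affine constraint is satisfied iff $Y_v-Y_w=a_{vw}$, the expected value of $Y$ becomes
\[
\E[\val_I(Y)] = \E_{u\sim\pi}\,\E_{(v,w)\sim E}\sum_{s\in\Sigma}\pPr_\mu[X_v-X_u=s]\cdot\pPr_\mu[X_w-X_u=s-a_{vw}].
\]

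Next I would unfold $\Psi(\mu)$. Writing $\val_v(X)=\E_{w\sim v}\Ind(X_v-X_w=a_{vw})$ and observing that both indicators $\Ind(X_v-X_u=s)$ and $\Ind(X_v-X_w=a_{vw})$ together force $X_w-X_u=s-a_{vw}$, we get $\pE_\mu[\val_v(X)\Ind(X_v-X_u=s)] = \E_{w\sim v}\pPr_\mu[X_v-X_u=s\,\wedge\,X_w-X_u=s-a_{vw}]$. Pulling $\pPr_\mu[X_v-X_u=s]$ into the conditional pseudoexpectation (i.e., using $\pPr[\cE]\cdot\pE[f\mid\cE]=\pE[f\cdot\Ind(\cE)]$) and identifying $\E_{v\sim\pi}\E_{w\sim v}=\E_{(v,w)\sim E}$ yields
\[
\Psi(\mu) = \E_{u\sim\pi}\,\E_{(v,w)\sim E}\sum_{s\in\Sigma}\pPr_\mu[X_v-X_u=s]\cdot\pPr_\mu[X_v-X_u=s\,\wedge\,X_w-X_u=s-a_{vw}].
\]

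Comparing the two expressions term-by-term, the only difference is that in $\Psi$ the second factor is a joint pseudoprobability of an event $A\wedge B$ while in $\E[\val_I(Y)]$ it is the pseudoprobability of $B$ alone. Since $\pPr_\mu[X_v-X_u=s]\geq 0$ (each indicator is SoS-Boolean: $\Ind^2=\Ind$ modulo $\A_I$ implies $\Ind=\Ind^2\succeq 0$), it suffices to observe that $\pPr_\mu[B]-\pPr_\mu[A\wedge B]=\pE_\mu[\Ind(B)(1-\Ind(A))]\geq 0$, which holds because $\Ind(B)(1-\Ind(A))=(\Ind(B)(1-\Ind(A)))^2$ modulo $\A_I$ using $\Ind(B)^2=\Ind(B)$ and $(1-\Ind(A))^2=1-\Ind(A)$. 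Hence $\E[\val_I(Y)]\geq \Psi(\mu)\geq \delta$.

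The main potential obstacle is bookkeeping with the pseudoexpectation manipulations rather than any conceptually hard step: one must justify the identity $\pPr_\mu[X_v-X_u=s]\cdot\pE_\mu[\val_v\mid X_v-X_u=s]=\pE_\mu[\val_v\Ind(X_v-X_u=s)]$ even when the conditioning event has zero pseudoprobability (which is handled by the convention adopted in the excerpt), and verify that the degree budget of $\mu$ accommodates the product of indicators that appears inside the joint event $A\wedge B$. Since each indicator has degree $2$ in $X$ and their product has degree $4$, the assumption $\deg(\mu)\geq 4$ is precisely what is needed.
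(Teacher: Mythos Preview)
Your proposal is correct and follows essentially the same approach as the paper's proof. The only cosmetic difference is that you work throughout with difference events $X_v-X_u=s$ and expand $\Psi(\mu)$ directly via the polynomial identity $\Ind(X_v-X_w=a_{vw})\cdot\Ind(X_v-X_u=s)=\Ind(X_v-X_u=s\wedge X_w-X_u=s-a_{vw})$, whereas the paper first invokes the shift-symmetry lemma to pass to conditional events $X_v=s\mid X_u=0$ and then carries out the same algebra; both routes arrive at the identical term-by-term comparison and the same monotonicity step $\pPr[A\wedge B]\le\pPr[B]$.
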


We prove this lemma below in Section~\ref{sec:lems} as well.
Given the two lemmas, the first statement of the theorem clearly follows. 
\end{proof}

\subsection{Relating the potentials and rounding}\label{sec:lems}
In this section, we will prove Lemmas~\ref{lem:relating-ent}~and~\ref{lem:rounding}.
Before we dive into these lemmas, let us define a symmetrization operation on pseudodistributions. Intuitively it makes sense
for a pseudodistribution on an affine unique games instance $I$ to be symmetric with respect to shifts, since if $X$ is a $(1-\eps)$-satisfying solution for $I$, then so is $X_{+s}$ for all $s \in \Sigma$. Pseudodistributions obtained by symmetrization will satisfy useful symmetry properties that are amenable to the analysis of Algorithm~\ref{alg:low-ent}.

\begin{definition}[Symmetrization]\label{def:sym}
Given a pseudodistribution $\mu$, we define the corresponding symmetrized pseudodistribution $\mu_{sym}$ as: For each $s \in \Sigma$, define $\mu_{+s}$ to be the pseudodistribution in which the labels receive the global affine shift $+s$, so that 
\[
\pE_{\mu_{+s}}[X_{u_1,a_1} \cdots X_{u_t,a_t}]
:= \pE_{\mu}[X_{u_1,a_1 - s}\cdots X_{u_t,a_t -s}]
\]
for all $\{(u_1,a_1), \ldots, (u_t,a_t)\} \in ([n] \times \Sigma)^{\le D}$. Now, define $\mu_{sym}$ to be the uniform mixture over $\mu_{+s}$ with $s \in \Sigma$. We say that a pseudodistribution is shift-symmetric if it is invariant under the symmetrization operation defined above, that is, $\mu = \mu_{sym}$.
\end{definition}

Firstly note that this operation can be efficiently performed on $\mu$. Furthermore it yields a valid pseudodistribution that has the same value as $\mu$.

\begin{lemma}[Symmetrization]\label{lem:sym-basic}
Let $\mu$ be a degree-$D$ pseudodistribution satisfying the unique games axioms $\A_I$ given by (\ref{eq:ip}) for an affine unique games instance $I$. Let $\mu_{sym}$ be a pseudoexpectation operator obtained by symmetrizing $\mu$. Then we have that,
\begin{enumerate}
\item $\mu_{sym}$ is a valid pseudoexpectation operator of degree-$D$ that satisfies the unique games axioms $\cA_I$. 
\item The time taken to perform symmetrization on $\mu$ is subquadratic in the description of $\mu$.
\item The objective value of $\mu$ and $\mu_{sym}$ are equal, i.e. $\val_\mu(I) = \val_{\mu_{sym}}(I)$.
\end{enumerate}
\end{lemma}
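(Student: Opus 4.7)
The plan is to handle the three claims separately, leveraging one central observation: because $I$ is affine, the global label shift $\sigma_s : (u,a) \mapsto (u, a-s)$ is a symmetry of both the axiom set $\cA_I$ and the objective polynomial $\val_I(X)$. Once this symmetry is pinned down, each part reduces to a straightforward check using linearity of pseudoexpectation.

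For part 1, I first argue that each individual shifted operator $\mu_{+s}$ is itself a valid degree-$D$ pseudoexpectation satisfying $\cA_I$. The map $\sigma_s$ extends to a linear automorphism of $\R[X]^{\le D}$ via $p \mapsto p \circ \sigma_s$ that preserves degree, and by definition $\pE_{\mu_{+s}}[p] = \pE_\mu[p \circ \sigma_s]$. Hence nonnegativity of $\pE_{\mu_{+s}}$ on squares of polynomials of degree $\le D/2$ reduces to nonnegativity of $\pE_\mu$ on $(q \circ \sigma_s)^2 = q^2 \circ \sigma_s$, and normalization $\pE_{\mu_{+s}}[1]=1$ is immediate. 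Each axiom of $\cA_I$ is $\sigma_s$-invariant: Booleanity $X_{u,a}^2 = X_{u,a}$ pulls back to $X_{u,a-s}^2 = X_{u,a-s}$, orthogonality $X_{u,a}X_{u,b}=0$ for $a \ne b$ pulls back to $X_{u,a-s}X_{u,b-s}=0$ (still with $a-s \ne b-s$), and $\sum_a X_{u,a}=1$ is manifestly invariant because $\sigma_s$ merely reorders the sum. Since $\mu_{sym}$ is a uniform convex combination of the $\mu_{+s}$, and convex combinations of degree-$D$ pseudoexpectations satisfying the same axioms are themselves valid pseudoexpectations of degree $D$ satisfying those axioms, part 1 follows.

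For part 2, the description of $\mu$ is a table of its pseudomoments: $N := O((|V|\cdot|\Sigma|)^D)$ numbers indexed by tuples in $(V\times\Sigma)^{\le D}$. Each $\mu_{+s}$ is obtained by relabeling the indices of this table in a single pass of cost $O(N)$, and we perform $|\Sigma|$ such passes before averaging. The total cost is $O(|\Sigma|\cdot N)$, which is subquadratic in $N$ because $|\Sigma| \le N^{1/D} \le N$.

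For part 3, I verify that $\val_I$ is $\sigma_s$-invariant as a polynomial when $I$ is affine. Using $\pi_{uv}(a) = a - a_{uv}$, we have $\pi_{uv}(a) - s = (a-s) - a_{uv} = \pi_{uv}(a-s)$, so reindexing $b = a - s$ inside the sum gives
\[
\val_I(X \circ \sigma_s) \;=\; \E_{(u,v) \sim E}\sum_{a}X_{u,a-s}X_{v,\pi_{uv}(a)-s} \;=\; \E_{(u,v) \sim E}\sum_{b}X_{u,b}X_{v,\pi_{uv}(b)} \;=\; \val_I(X).
\]
Therefore $\pE_{\mu_{+s}}[\val_I(X)] = \pE_\mu[\val_I(X \circ \sigma_s)] = \pE_\mu[\val_I(X)]$ for every $s$, and averaging over $s$ yields $\val_{\mu_{sym}}(I) = \val_\mu(I)$. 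The only real obstacle here is careful bookkeeping — in particular, pinpointing that it is the \emph{affineness} of $I$ that lets $\sigma_s$ commute with each constraint $\pi_{uv}$ and thus preserve $\val_I$ — rather than any genuine mathematical difficulty.
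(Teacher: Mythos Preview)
Your proof is correct. The paper itself omits the proof of this lemma, stating only that it ``is fairly straightforward''; your argument is exactly the routine verification the authors had in mind, organized around the observation that the global shift $\sigma_s$ is a symmetry of both $\cA_I$ and $\val_I$ when $I$ is affine. One cosmetic remark on part 2: the relevant bound is $|\Sigma| \le N^{1/D}$, which for $D \ge 2$ gives total work $O(N^{1+1/D}) \le O(N^{3/2})$ --- the trailing ``$\le N$'' in your chain is true but does not by itself yield subquadratic, so you may want to make explicit that $D \ge 2$.
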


The proof of this lemma is fairly straightforward, so we omit it. Since the value is invariant under symmetrization and performing the operation is efficient, all our algorithms symmetrize the pseudodistributions obtained by solving the degree $D$ SoS relaxation, and hence in our analysis we always work with shift-symmetric pseudodistributions henceforth.

Symmetrized distributions satisfy some nice symmetry properties with respect to shifts, such as, every vertex has uniform marginals, and value of $\mu$ conditioned on $X_u = s$ for any shift $s$, is the same as the original value without conditioning. Additionally we have the following:

\begin{lemma}[Shift-Symmetry properties]\torestate{\label{lem:sym}
Let $\mu$ be a degree-$D$ shift-symmetric pseudodistribution satisfying the unique games axioms $\A_I$ given by (\ref{eq:ip}) for an affine unique games instance $I$.
Then $\mu$ satisfies the following symmetry properties:
\begin{enumerate}
    \item For all vertices $u,v \in V(G)$ and shifts $s \in \Sigma$,
    $\pPr[X_v = s \mid X_u = 0] = \pPr[X_v - X_u = s]$.
    \item For all polynomials $f(X)$ with $\deg(f) \le D-2$, such that $f(X) = f(X+s)$ for every global shift $s \in \Sigma$,
    \[\pE[f(X) \mid X_v-X_u=s] = \pE[f(X) \mid X_u = 0, X_v = s].\]
\end{enumerate}}
\end{lemma}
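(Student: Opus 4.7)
The plan is to reduce both parts to two elementary consequences of shift-symmetry: (i) every single-vertex marginal is uniform on $\Sigma$, and (ii) for any shift $a \in \Sigma$ and any degree-$\leq D$ polynomial $g$, one has $\pE_\mu[g(X)] = \pE_\mu[g(X+a)]$, where $X+a$ denotes the global shift $(X+a)_{u,b} = X_{u,b-a}$. Property (ii) is immediate from the definition of $\mu_{\mathrm{sym}}$ as the uniform mixture over all global shifts, and (i) follows by taking $g(X) = X_{u,b}$ and averaging, together with the partition axiom $\sum_b X_{u,b} = 1$, yielding $\pPr[X_u = b] = 1/|\Sigma|$ for every $u$ and $b$.

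For part~1, I would expand the indicator of the shift event as
\[
\Ind(X_v - X_u = s) \;=\; \sum_{a \in \Sigma} X_{u,a}\, X_{v,a+s},
\]
take pseudoexpectations, and apply property (ii) with the shift $-a$ to each summand, giving
\[
\pE_\mu[X_{u,a} X_{v,a+s}] \;=\; \pE_\mu[X_{u,0} X_{v,s}]
\]
for every $a$. Summing yields $\pPr[X_v - X_u = s] = |\Sigma| \cdot \pPr[X_u=0,\,X_v=s]$. Combined with (i), which gives $\pPr[X_u=0] = 1/|\Sigma|$, the definition of pseudo-conditional probability immediately produces $\pPr[X_v = s \mid X_u = 0] = \pPr[X_v - X_u = s]$.

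For part~2, I would run the same expansion but now weighted by the shift-invariant polynomial $f$:
\[
\pE_\mu\!\left[f(X)\,\Ind(X_v - X_u = s)\right] \;=\; \sum_{a \in \Sigma} \pE_\mu\!\left[f(X)\,X_{u,a}\,X_{v,a+s}\right].
\]
For each term, applying property (ii) with shift $-a$ transforms $X_{u,a} X_{v,a+s}$ into $X_{u,0} X_{v,s}$, while the factor $f(X)$ is unchanged because $f$ is shift-invariant by hypothesis. Thus every summand equals $\pE_\mu[f(X)\,X_{u,0}\,X_{v,s}]$, and the sum equals $|\Sigma| \cdot \pE_\mu[f(X)\,X_{u,0}\,X_{v,s}]$. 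Dividing by $\pPr[X_v - X_u = s] = |\Sigma| \cdot \pPr[X_u=0,X_v=s]$ (from part~1's computation) gives exactly $\pE_\mu[f(X) \mid X_u=0, X_v=s]$.

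The only subtleties I expect are bookkeeping rather than conceptual: verifying that all objects have degree at most $D$ so that the pseudoexpectations are well defined (which is why the statement restricts to $\deg f \le D-2$, leaving room for the two extra factors $X_{u,a} X_{v,a+s}$), and handling degenerate cases where $\pPr[X_u=0,X_v=s] = 0$ by appealing to the convention $\pPr[\cE \mid \cF] := 0$ stated earlier. No hard step; everything reduces to linearity of pseudoexpectation and invariance under the finite group of global shifts.
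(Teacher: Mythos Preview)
Your proposal is correct and follows essentially the same approach as the paper's proof: both isolate the two consequences of shift-symmetry (uniform single-vertex marginals and invariance of pseudoexpectations under global relabeling), expand the shift indicator as $\sum_a X_{u,a}X_{v,a+s}$, and use shift-invariance of $\mu$ term-by-term to collapse the sum to $|\Sigma|$ copies of the $a=0$ term, with the shift-invariance of $f$ absorbing the relabeling in part~2. The paper's writeup interleaves these steps slightly differently but the logic is identical.
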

This lemma follows easily from the fact that $\mu$ is invariant under global shifts. See Appendix~\ref{sec:prelims} for a proof. 

We first prove that when the potential $\Psi$ is large, conditioning and then independently rounding succeeds.
\restatelemma{lem:rounding}
\begin{proof}
Suppose that $\Psi(\mu) \geq \delta$.
Define the following,
\[
\Psi_{u}(\mu) := \E_{v \sim \pi} \left[\sum_{s \in \Sigma} \pPr[X_v - X_u = s]^2 \cdot \pE[\val_v(X) \mid X_v - X_u = s] \right],
\]
so that $\Psi(\mu) = \E_{u \sim \pi}[\Psi_u(\mu)]$. Suppose we sample a random vertex $u \sim \pi$ and condition the pseudodistribution on $X_u = 0$, then pick a random label $Y^u_v$ for every vertex $v \in V(G)$ according to its marginal $Y^u_v \sim \pE[X_v \mid X_u = 0]$.
We have that in expectation, after conditioning on $u$ the rounded value is equal to:
\[
\E_{Y^u}[\val(Y^u)] = \E_{v\sim \pi}\E_{w\sim v}\left[\sum_s \pPr[X_v = s \mid X_u = 0]\pPr[X_w = \pi_{vw}(s) \mid X_u = 0] \right].\]

We will now lower bound this quantity by $\Psi_u(\mu)$. 
We have that
\begin{align*}
\Psi_{u}(\mu)
&= \E_{v \sim \pi} \left[\sum_{s \in \Sigma} \pPr[X_v - X_u = s]^2 \cdot \pE[\val_v(X) \mid X_v - X_u = s] \right] \\
&= \E_{v} \left[\sum_{s \in \Sigma} \pPr[X_v = s | X_u = 0]^2 \cdot \pE[\val_v(X) \mid X_u = 0, X_v = s] \right] 
\intertext{where we have applied Lemma~\ref{lem:sym} along with the shift-symmetry of $\mu$ and of $\val_u(X)$, where the latter is a shift-symmetric function because $I$ is an affine unique games instance. Now, by definition of the local value,}
&=\E_{v\sim \pi} \left[\sum_{s \in \Sigma} \pPr[X_v = s | X_u = 0]^2 \cdot \E_{w \sim v}\left[\pPr[X \text{ satisfies }(v,w) \mid X_u = 0, X_v = s]\right] \right] \\
&= \E_{v \sim \pi} \left[\sum_{s \in \Sigma} \pPr[X_v = s | X_u = 0]^2 \cdot \E_{w \sim v}\left[\frac{\pPr[X_v = s, X_w = \pi_{vw}(s) \mid X_u = 0]}{\pPr[X_v = s \mid X_u = 0]}\right] \right] \\
&= \E_{v\sim \pi} \left[\sum_{s \in \Sigma} \pPr[X_v = s | X_u = 0] \cdot \E_{w \sim v}\left[\pPr[X_v = s, X_w = \pi_{vw}(s) \mid X_u = 0]\right] \right] \\
&\leq \E_{v \sim \pi} \E_{w \sim v} \left[\sum_{s \in \Sigma} \pPr[X_v = s | X_u = 0] \cdot \pPr[X_w = \pi_{vw}(s) \mid X_u = 0] \right] \\
&= \E_{Y^u}[\val(Y^u)]
\end{align*}
Further note that the expected value of rounding of Algorithm~\ref{alg:low-ent} is $\E_{u \sim \pi}[\E_{Y^u}[\val(Y^u)]]$ which is greater than $\Psi(\mu)$ by the above inequality. Since $\Psi_{u}(\mu) \geq \delta$, we sample a solution with expected value at least $\delta$.
\end{proof}

Now, we will relate the two potentials.

\restatelemma{lem:relating-ent}
\begin{proof}
We begin by recalling that in the definition of $\Phi_{\beta,\eta}$, we used an $\eta$-additive polynomial approximation $p(x)$ of degree $\tilde{O}(1/\eta)$ to the indicator function $\Ind[x \ge \beta]$ on the interval $x \in [0,1]$, guaranteed by Theorem~\ref{thm:step-approx}.

We begin by expanding the definition of $\Phi_{\beta,\nu}(\mu)$:
\begin{align*}
    \Phi_{\beta,\nu}(\mu)
    &= \pE\left[\sum_{s \in \Sigma} \left(\E_{u \sim \pi} \Ind[X_u - X_u' = s]\cdot p(\val_u(X))\right)^2\right]\\
    &= \pE\left[\sum_{s \in \Sigma} \E_{u,v \sim \pi} \Ind[X_u - X_u' = X_v - X_v' = s]\cdot p(\val_v(X))\cdot p(\val_u(X))\right]\\
    &= \sum_{s \in \Sigma} \E_{u,v \sim \pi}\pE\left[ \Ind[X_u' - X_v' = s]\right] \cdot \pE\left[\Ind[X_u- X_v = s]\cdot p(\val_v(X))\cdot p(\val_u(X))\right],
\end{align*}
where in the last step we have replaced the condition on the difference of $X_u,X_u'$ with a condition on the difference of $X_u,X_v$ (and the same for $v$).
Now, we use that $0\le p(x)\le 1$ and $p(x) \le \frac{x}{\beta-\nu} + \nu$ for all $x \in [0,1]$, and furthermore this is \sos-certifiable (see Fact~\ref{fact:bdd-markov}).
Therefore, we can pull out a factor of $p$ and apply this inequality to the second one to obtain
\begin{align*}
    \Phi_{\beta,\nu}(\mu)
    &\le \left(\sum_{s \in \Sigma} \E_{u,v \sim \pi} \pE\left[\Ind[X_u - X_v = s]\right]\cdot \pE\left[\Ind[X_u- X_v = s]\cdot\frac{\val_u(X)}{\beta - \nu}\right]\right) + \nu\\
    &= \left(\frac{1}{\beta-\nu}\E_{u,v\sim\pi} \sum_{s \in \Sigma} \pE[\Ind[X_u - X_v = s]]^2 \cdot \pE[\val_u(X) \mid X_u - X_v = s]\right) + \nu\\
    &= \frac{1}{\beta-\nu}\Psi(\mu) + \nu,
\end{align*}
where we have applied the definition of conditional pseudoexpectation.
This completes the proof of the lemma.
\end{proof}

\section{Certifiable Small-Set Expanders}\label{sec:cert-sse}

In this section, we give an algorithm for unique games on certifiable small set expander graphs, when the certificate is via 2-to-4 hypercontractivity.
To state our theorem, we will require the following definition:

\begin{definition}{(Certifiable 2 to 4 hypercontractivity)}\label{def:hyp}
For $C \in \R_+$, $\lambda \in (0, 2)$, and $D \ge 2 $ an integer, 
a graph $G = (V,E)$ is said to be {\em $(\lambda,C,D)$-certifiably 2 to 4 hypercontractive } if for any $f : V \to \R$,
\[
 \vdash_{D} \quad \|\Pi_{\lambda} f\|_{\pi,4}^4 \le C \cdot \|f\|_{\pi,2}^4,
\]
where $\|f\|_{\pi,p} = \left(\E_{v \sim \pi} f(v)^p\right)^{1/p}$, and $\Pi_{\lambda}$ is the projection to the right eigenspace of eigenvalues at most $\lambda$ of $G$'s normalized Laplacian.
\end{definition}
We will also say that a graph is a $(\eps,\delta,D)$-certifiable SSE if there is a degree-$D$ \sos proof that sets of size $\le \delta$ have expansion at least $\eps$.

Our main theorem is the following (more fleshed out version of Theorem~\ref{thm:intro-main}):

\begin{theorem}\label{thm:cert-sse}\label{thm:main-cert}
For any $(\lambda,C,D)$-certifiable 2 to 4 hypercontractive graph $G$ and for all $\eps < \frac{1}{100}\lambda^2$, given a degree-$(D + \tilde{O}(C/\eps\lambda^4))$ shift-symmetric pseudodistribution $\mu$ of value $\ge (1-\eps)$ for an affine Unique Games instance $I = (G,\Pi)$ on $G$, Algorithm~\ref{alg:low-ent} 
runs in time $\poly(|V(G)|)$ and
outputs an assignment with expected value at least $\frac{\eps\lambda^4}{64 C}$.
\end{theorem}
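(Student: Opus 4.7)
The plan is to invoke Theorem~\ref{thm:round}: it suffices to show that for appropriate choices of $\beta, \nu$, the approximate shift-partition potential satisfies $\Phi_{\beta,\nu}(\mu) \ge \delta$ with $(\beta - \nu)(\delta - \nu) \ge \eps\lambda^4/(64C)$. I would take $\beta = \eps$ and $\nu = \Theta(\eps\lambda^4/C)$, so that the degree of the polynomial $p_{\beta,\nu}$ is $\tilde{O}(1/\nu) = \tilde{O}(C/(\eps\lambda^4))$, consistent with the claimed overall pseudodistribution degree $D + \tilde{O}(C/(\eps\lambda^4))$.

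First, I would extend $\mu$ to a pseudodistribution over independent copies $(X, X')$ of the UG solution, and introduce the shift-partition indicators $Z_{u,s} = \sum_a X_{u,a}X'_{u,a+s}$ (satisfying the SoS identities $Z_{u,s}^2 = Z_{u,s}$ and $\sum_s Z_{u,s} = 1$) and the local-value weighted version $h_s(u) = Z_{u,s} \cdot p_{\beta,\nu}(\val_u(X))$. The combinatorial heart of the argument is that when an edge $(u,v)$ is satisfied by both $X$ and $X'$, the solutions are consistent modulo shift across the edge, so $S_{uv} := \sum_s Z_{u,s}Z_{v,s} = 1$; this gives the degree-$O(1)$ SoS inequality $\mathrm{sat}_{uv}(X) + \mathrm{sat}_{uv}(X') - 1 \le S_{uv}$, from which (using $\val(X), \val(X') \ge 1-\eps$) one derives
\[
\sum_s \langle Z_s, L Z_s\rangle_\pi \;=\; \E_{(u,v)}[1 - S_{uv}] \;\le\; 2\eps,
\]
via the identity $\sum_s(Z_{u,s}-Z_{v,s})^2 = 2(1-S_{uv})$. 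Incorporating the local-value weight $p(\val_u)$ and using Markov's inequality applied to $1 - \val_u$ together with the approximation property of $p$ to obtain $\E[1 - p(\val_u)] \le O(\eps + \nu)$, an analogous calculation yields $\sum_s \langle h_s, L h_s\rangle_\pi \le O(\eps + \nu)$, while $\sum_s \|h_s\|_{\pi,2}^2 = \E p^2(\val_u) \ge 1 - O(\eps + \nu)$ via the SoS fact $p^2 \ge 2p - 1$.

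The technical core is using the certifiable hypercontractivity axiom $\|\Pi_\lambda h_s\|_{\pi,4}^4 \le C \|h_s\|_{\pi,2}^4$ to derive a lower bound on $\Phi_{\beta,\nu} = \sum_s (\E h_s)^2$. Two applications of Cauchy--Schwarz to $\|\Pi_\lambda h_s\|_{\pi,2}^2 = \langle \Pi_\lambda h_s, h_s\rangle_\pi$, using the identity $h_s = Z_s h_s$ (which follows from $Z_s^2 = Z_s$) and the polynomial bound $\|h_s\|_{\pi,2}^2 \le \E Z_{u,s} = \alpha_s$ (which follows from $p^2 \le 1$), produce the SoS inequality
\[
(\|\Pi_\lambda h_s\|_{\pi,2}^2)^4 \;\le\; C \cdot \alpha_s \cdot \|h_s\|_{\pi,2}^8 \;\le\; C \alpha_s^5.
\]
Combined with the spectral lower bound $\|\Pi_\lambda h_s\|_{\pi,2}^2 \ge \|h_s\|_{\pi,2}^2 - \langle h_s, L h_s\rangle_\pi/\lambda$, summed over $s$, and processed by the SoS Holder inequality (which leverages $\sum_s \alpha_s = 1$ to avoid any dependence on the alphabet size), one obtains
\[
\Phi_{\beta,\nu}(\mu) \;\ge\; \frac{\bigl(1 - O((\eps + \nu)/\lambda)\bigr)^4}{C} \;=\; \Omega(\lambda^4/C)
\]
under the assumption $\eps \le \lambda^2/100$. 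Theorem~\ref{thm:round} then yields an assignment of expected value at least $(\beta - \nu)(\Phi_{\beta,\nu}(\mu) - \nu) \ge \eps\lambda^4/(64C)$, completing the proof.

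The main obstacle is executing the SoS Holder step: phrased classically, the inequality relating $\sum_s \|\Pi_\lambda h_s\|_{\pi,2}^2$ to $\Phi_{\beta,\nu}^{1/4}\cdot (\sum_s \E h_s)^{3/4}$ involves the fractional exponent $4/3$, which cannot be directly expressed inside SoS. The workaround is to raise each constituent inequality to integer powers (turning $\|\Pi_\lambda h_s\|_{\pi,2}^2 \le C^{1/4}\alpha_s^{5/4}$ into $(\|\Pi_\lambda h_s\|_{\pi,2}^2)^4 \le C\alpha_s^5$) and to exploit the Boolean identity $Z_{u,s}^2 = Z_{u,s}$, which makes $\|Z_s\|_{\pi,p}^p = \alpha_s$ for all $p \ge 1$, together with the constraint $\sum_s \alpha_s = 1$, to cleanly aggregate over shifts; this is essentially the content of Appendix~\ref{sec:expansion-red}. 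A secondary concern is calibrating $\nu$ to be small compared to $\lambda^4/C$ so that both the perturbation from the local-value weight $p$ and the $(\eps + \nu)/\lambda$ loss in the spectral bound remain negligible relative to the target bound on $\Phi_{\beta,\nu}$.
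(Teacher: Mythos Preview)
Your high-level plan matches the paper's: lower-bound the approximate shift-partition potential $\Phi_{\beta,\nu}(\mu)$ and invoke Theorem~\ref{thm:round}. The parameter choices $\beta=\eps$, $\nu=\Theta(\eps\lambda^4/C)$ are the paper's as well, and your setup (the bound $\sum_s\langle h_s,Lh_s\rangle_\pi\le O(\eps+\nu)$, the mass bound $\sum_s\|h_s\|_{\pi,2}^2\ge 1-O(\eps+\nu)$, and the spectral inequality $\|\Pi_\lambda h_s\|_{\pi,2}^2\ge\|h_s\|_{\pi,2}^2-\langle h_s,Lh_s\rangle_\pi/\lambda$) are all correct SoS facts the paper also uses.

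The gap is precisely the step you flag. Your two Cauchy--Schwarz applications give the per-$s$ SoS inequality $(\|\Pi_\lambda h_s\|_{\pi,2}^2)^4\le C\alpha_s\|h_s\|_{\pi,2}^8\le C\alpha_s^5$, but turning this into a lower bound on $\Phi=\sum_s(\E_\pi h_s)^2$ requires the aggregation $\sum_s\|\Pi_\lambda h_s\|_{\pi,2}^2\le C^{1/4}\Phi^{1/4}$, which rests on a H\"older inequality with exponent $4/3$ applied to the \emph{polynomials} $\alpha_s$. You say the workaround is ``essentially the content of Appendix~\ref{sec:expansion-red}'', but that appendix does \emph{not} aggregate a fourth-power inequality over $s$; it proves a \emph{per-$f$} inequality of the form $\langle f,Lf\rangle_\pi\ge\frac{\lambda}{2}\E_\pi f+c(\gamma\E_\pi f-(\E_\pi f)^2)+B(f)$, already \emph{linear} in $(\E_\pi f)^2$. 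This linearity is exactly what makes the paper's argument close: one substitutes $f=h_s$, sums over $s$ (trivially, since the inequality is linear in $(\E_\pi h_s)^2$), and reads off $\Phi\ge\gamma\sum_s\E_\pi h_s+\frac{1}{c}[\cdots]$ directly (Lemma~\ref{lem:sp-pseudo}). The SoS maneuver producing this linear form is Fact~\ref{fact:sos-hol}, $Y^3Z\le\frac{3\eta}{4}Y^4+\frac{1}{4\eta^3}Z^4$, applied pointwise with $Y=h_s(u)$, $Z=(\Pi_\lambda h_s)(u)$ after writing $\langle h_s,\Pi_\lambda h_s\rangle=\langle h_s^{\circ 3},\Pi_\lambda h_s\rangle+\langle h_s-h_s^{\circ 3},\Pi_\lambda h_s\rangle$ --- not your two Cauchy--Schwarz steps. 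With $\eta$ a fixed scalar (the paper takes $\eta=\lambda/2$), this yields $\|\Pi_\lambda h_s\|_{\pi,2}^2\le\frac{3\eta}{4}\E_\pi h_s+\frac{C}{4\eta^3}(\E_\pi h_s)^2+B_2(h_s)$, and now the sum over $s$ is a genuine low-degree SoS inequality with $\Phi$ appearing linearly. The remaining work (Claims~\ref{claim:cover}--\ref{claim:B2}) is bounding $\sum_s\E_\pi[h_s-h_s^2]$, $\sum_s\langle h_s-h_s^{\circ 3},\Pi_\lambda h_s\rangle_\pi$, and $\sum_s\langle h_s,Lh_s\rangle_\pi$ in SoS, via the approximate-indicator calculus you outline.

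A minor arithmetic point: under the hypothesis $\eps<\lambda^2/100$, your displayed bound $(1-O((\eps+\nu)/\lambda))^4/C$ is $\Omega(1/C)$, not $\Omega(\lambda^4/C)$ as you write. The paper's route with $\eta=\lambda/2$ only records the weaker $\Phi\ge\frac{3}{4}\cdot\frac{\lambda^4}{16C}$, which is what the theorem statement reflects.
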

\begin{proof}
We start with the fact that a graph which is certifiably 2 to 4 hypercontractive is also a certifiable small-set expander.
This was shown in \cite{BarakBHKSZ12}, but we will state and use stronger guarantees about the form of the certificate which were implicit in their proof (we give a proof in Appendix~\ref{sec:expansion-red} for completeness).

\begin{lemma}[Lemma 6.7 in \cite{BarakBHKSZ12}]\torestate{\label{lem:sse}
If $G = (V,E)$ is $(\lambda,C,D)$-certifiably 2 to 4 hypercontractive, $G$ is a $(\lambda/2, \lambda^4/(16C),D)$-certifiable small-set expander: for any $f:V \to \R$,
\[
\left\{\|\Pi_{\lambda} f\|_{\pi,4}^4 \le C \cdot \|f\|_{\pi,2}^4\right\} \cup \left\{ f(v)^2 = f(v)\right\}_{v \in V} \cup \left\{\E_{\pi} f \le \frac{\lambda^4}{16C}\right\}
\,\, \vdash_{4+D} \quad
\langle f, L f\rangle_\pi \ge \frac{\lambda}{2} \E_\pi [f],
\]
Where $\Pi_{\lambda}$ is the projector to the right eigenspace of eigenvalue $\le \lambda$ in $G$'s normalized Laplacian.
Further, 
\[
\left\{\|\Pi_{\lambda} f\|_{\pi,4}^4 \le C \|f\|_{\pi,2}^4\right\} \cup \left\{ 0 \le f(v) \le 1\right\}_{v \in V}
 \vdash_{4+D} \,
\langle f, L f\rangle_{\pi} \ge \frac{\lambda}{2} \E_{\pi}[f] + c \left( \frac{\lambda^4}{16C} \E_\pi[f] - \E_\pi[f]^2\right) + B(f)
\]
For $c$ a positive constant and $B(f) =
2(\E_\pi[f^{\circ 2} - f]) + \langle f^{\circ 3} -f, \Pi_{\lambda}f \rangle_{\pi}.
$
}
\end{lemma}

Letting $\alpha := \frac{\lambda}{2}$ and $\gamma := \frac{\lambda^4}{16C}$, our assumptions together with Lemma~\ref{lem:sse} give us a small-set expansion certificate of the following form:
\begin{equation}
SSE_{\alpha,\gamma}(G) \equiv \{0 \le f(u) \le 1\}_{u \in V} \simp{D+4} \langle f, L f\rangle_{\pi} \geq \alpha\E_{\pi}[f] + c_1 \cdot \left(\gamma \E_\pi[f] - \E_\pi[f]^2\right) 
+ B(f),\label{eq:cert}
\end{equation}
for $c_1$ a positive constant, $B(f) =
2(\E_\pi[f^{\circ 2} - f]) + \langle f^{\circ 3} -f, Pf \rangle_{\pi}$ and $P$ a projection operator.

Next, we will show that if a graph has such a certificate of small-set expansion, then one can also obtain a lower bound on the approximate shift potential $\Phi_{\beta,\nu}(X,X')$ (whose definition we now recall), which gives a condition under which we can round.
Theorem~\ref{thm:step-approx} guarantees the existence of a family $P_{\beta,\nu}$ of degree-$\tilde{O}(1/\nu)$ polynomials \sos-certifiably which approximate $\Ind[x \ge \beta]$ within an additive $\nu$ in the intervals $[0,\beta - \nu] \cup [\beta+\nu,1]$.
Fix $ p \in P_{\beta,\nu}$ to be one such polynomial.
The functions $\{f_s:V\to\R[X,X']\}_{s \in \Sigma}$ defined such that 
\begin{equation}
f_s(u) = \Ind(X_u - X'_u = s) \cdot p(\val(X_u))\label{eq:fs}
\end{equation}
give disjoint {\em approximate} vertex subsets of $G$ (approximate only because $p$ is not exactly an indicator). 
Recall the definition of the approximate shift-mass potential (Definition~\ref{def:sq-mass-apx}):
\[
\Phi_{\beta,\nu}(X,X') = \sum_{s \in \Sigma} \left(\E_u f_s(u)\right)^2
 = \sum_{s \in \Sigma} \left(\E_{u} \left(\one(X_u - X'_u = s) \cdot p(\val(X_u))\right)\right)^2.
\]
Edges crossing this partition must be unsatisfied in either $X$ or $X'$ (see the discussion in Section~\ref{sec:alg} and Fact~\ref{fact:ug-axioms}).
In a certifiable small-set expander with large objective value, this partition cannot cut too many edges, and therefore its pieces must be large. 
We will make this formal via the following lemma:

\begin{lemma}\torestate{\label{lem:sp-pseudo}
Let $I$ be a unique games instance over a graph $G=(V,E)$ in which functions $f:V \to [0,1]$ with support $\le \gamma$ are SoS-certifiably $\alpha$-expanding via the following certificate:
\[SSE_{\alpha,\gamma}(G):\equiv \{0 \le f(v) \le 1\}_{v \in V}\,\, \vdash_D\,\, \left\{\langle f, L f\rangle_{\pi} \ge \alpha \E_{\pi}[f] + c \cdot \left( \gamma \E_\pi[f] - \E_\pi[f]^2\right) 
+ B(f)\right\},\]
where $B(f) = 2(\E_\pi[f^{\circ 2} - f]) + \ip{f^{\circ 3} - f, P f}_\pi$, $c$ is a fixed positive constant and $P$ is a projection operator.

Then we have that, for all $\beta \in (0,1)$, $\nu \in (0,\frac{1}{3}(1-\beta))$, and $\eta \in \R^+$, there is an \sos lower bound on the approximate shift mass potential $\Phi_{\beta,\nu}$:
\begin{align*}
\A_I \cup \{p \in P_{\beta,\nu} \} \cup \SSE_{\alpha,\gamma}(G) &\vdash_{D+\tilde{O}(1/\nu)}
\Phi_{\beta,\nu}(X,X') \geq \gamma\left(1 - \frac{\viol(X)}{1 - \beta - \nu} - \nu\right) + K_{\beta,\nu}^{\alpha,\eta}(X,X'),
\end{align*}
where $\A_I$ are the axioms defined for $I$ by program (\ref{eq:ip}), $\viol(X) = 1 - \val(X)$ is the fraction of constraints $X$ violates, and $K_{\beta,\nu}^{\alpha,\eta}(X,X') = c'\cdot \left(\alpha - \left(4 + \alpha + \eta\right)\left(\frac{\viol(X)}{1 - \beta - \nu} + \nu \right) - \frac{1}{2\eta} - (\viol(X) + \viol(X'))  \right)$ for $c' \in \R^+$.}
\end{lemma}
We give the proof in Section~\ref{sec:shift-pot}.
Informally, the quantity $K_{\beta,\nu}^{\alpha,\eta}(X,X')$ can be made non-negative when the fraction of violations $\viol(X)$ and $\viol(X')$ are small relative to the expansion $\alpha$.

From equation (\ref{eq:cert}) and Lemma~\ref{lem:sp-pseudo}, we may choose $\beta = \eps \le .01$, $\nu = \eps \gamma$, and $\eta = \frac{1}{2\sqrt{\eps}}$, and the conditions of our theorem imply that we have a degree-$(D+ \tilde{O}(1/\eps\gamma))$ sum-of-squares proof that 
\begin{align}
 \Phi_{\eps,\eps\gamma}(X,X') 
&\ge \gamma\left(1 - \frac{\viol(X)}{1 - \eps - \eps\gamma} - \eps\gamma\right) + K_{\eps,\eps\gamma}^{\alpha,\sqrt{1/4\eps}}(X,X').\label{eq:non-neg}
\end{align}

In order to apply our rounding Theorem~\ref{thm:round}, we require that the pseudoexpectation $\pE[\Phi_{\eps,\eps\gamma}(X,X')]$ is large, where $\pE$ is the pseudoexpectation operator corresponding to the pseudodistribution $\mu$ given to us.
Since by assumption $\pE[\viol(X)] = \pE[\viol(X')] \le \eps$, $\pE$ has degree $(D+\tilde{O}(C/\eps\lambda^4)) = D + \tilde{O}(1/\eps\gamma)$ and $\pE$ satisfies $\cA_I$, we take the pseudoexpectation of (\ref{eq:non-neg}) to get
\begin{align}
\pE\left[\Phi_{\eps,\eps\gamma}(X,X')\right]
&\ge\gamma\left(1 - \frac{\eps}{1 - \eps - \eps\gamma} - \eps\gamma\right) + \pE\left[K_{\eps,\eps\gamma}^{\alpha,\sqrt{1/4\eps}}(X,X')\right].\label{eq:bd-p}
\end{align}

We show now that for our chosen parameters, $\pE[K_{\eps,\eps\gamma}^{\alpha,1/\sqrt{4\eps}}(X,X')] \ge 0$.
Expanding the expression for $K$ and using our bound on $\pE[\viol(X) + \viol(X')]$,
\begin{align*}
\pE\left[K_{\eps,\eps\gamma}^{\alpha,1/\sqrt{4\eps}}(X,X')\right]
&\ge c' \cdot \left(\alpha - \left(4 + \alpha + \frac{1}{2\sqrt{\eps}}\right) \left(\frac{\eps}{1-\eps-\eps\gamma} + \eps\gamma\right) - \sqrt{\eps} - 2\eps\right) \ge c' (\alpha - 5 \sqrt{\eps}),
\end{align*}
where to obtain the final inequality we have used that $\gamma < \frac{1}{2}$, $\eps < \frac{1}{25}$, and $\alpha < 1$.
Since $\alpha = \frac{\lambda}{2} \ge 5 \sqrt{\eps}$ by assumption and since $c' \in \R_+$, this quantity is non-negative.

Returning to (\ref{eq:bd-p}) and simplifying with our upper bounds $\eps < \frac{1}{25},\gamma < \frac{1}{2}$, we have that
\[
\pE\left[\Phi_{\eps,\eps\gamma}\right] \ge \frac{3}{4}\gamma.
\]
Applying Theorem~\ref{thm:round}, we conclude that conditioning and rounding a degree-$(D + \tilde{O}(C/\eps\lambda^4))$ pseudodistribution according to Algorithm~\ref{alg:low-ent} results in a solution of expected value $\ge (\frac{3}{4}\gamma - \eps \gamma)(\eps - \gamma \eps) \ge \frac{1}{4}\eps \gamma = \frac{\eps\lambda^4}{64C}$, as desired.
\end{proof}

\subsection{Bounding the shift potential in certifiable SSE graphs}\label{sec:shift-pot}

In this section, we will use that in a small-set expander, when the expansion of the approximate partition defined the $f_s$ is low and the objectives $\val(X),\val(X')$ are high, then the shift-partition potential $\Phi(X,X')$ (which is a proxy for the size of the partition parts) is large. 
Further, we will show that this fact has an SOS proof when the graph has an SOS certificate of expansion. 

\restatelemma{lem:sp-pseudo}

\begin{proof}[Proof of Lemma~\ref{lem:sp-pseudo}]
Given assignments $(X,X')$ consider the approximate partition defined by the $\{f_s\}_{s \in \Sigma}$ as in (\ref{eq:fs}) and identify $f_s$ with an approximate component $C_s$.
We note that the $f_s$ are close to indicator functions, as they are the product of an indicator and an approximate indicator $p$.
As noted after equation (\ref{eq:fs}), $\Phi_{\beta,\nu}(X,X') = \sum_s \E_{u \sim \pi}[f_s(u)]^2$.
Further, our axioms easily imply that $f_s$ are bounded functions,

\begin{claim}\torestate{\label{claim:bounded}
From our Unique Games axioms and the axiom that $p \in P_{\beta,\nu}$, we may conclude that the $f_s$ are bounded:
\[
\A_G \cup \{p \in P_{\beta,\nu}\} \vdash_{\tilde{O}(1/\nu)} \{0\le f_s(v) \le 1\}_{s \in \Sigma,v\in V}.
\]}
\end{claim}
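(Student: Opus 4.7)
The plan is to factor $f_s(v) = \Ind(X_v - X'_v = s) \cdot p(\val(X_v))$ and prove each factor lies in $[0,1]$ by sum-of-squares, then invoke closure of SoS nonnegativity under products. Concretely, I will show (i) $0 \le \Ind(X_v - X'_v = s) \le 1$ as a degree-$4$ SoS consequence of $\cA_I$ extended to two independent copies, (ii) $0 \le p(\val(X_v)) \le 1$ as a degree-$\tilde O(1/\nu)$ SoS consequence of $\cA_I \cup \{p \in P_{\beta,\nu}\}$, and (iii) combine these via the product rule.

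For step (i), I would expand $\Ind(X_v - X'_v = s) = \sum_{a \in \Sigma} X_{v,a} X'_{v,a+s}$. The Booleanity axiom $X_{v,a}^2 = X_{v,a}$ applied to both copies lets me rewrite each summand as $(X_{v,a} X'_{v,a+s})^2$, a manifest square, so the whole sum is SoS and the lower bound follows. For the upper bound, multiplying the partition axioms $\sum_a X_{v,a} = 1$ and $\sum_b X'_{v,b} = 1$ (valid on the two-copy pseudodistribution produced by Fact~\ref{fact:indep}) gives $\sum_{a,b} X_{v,a} X'_{v,b} = 1$, so
\[
1 - \Ind(X_v - X'_v = s) = \sum_{a \in \Sigma} \sum_{b \ne a+s} X_{v,a} X'_{v,b},
\]
which is again SoS by the same squaring trick. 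An essentially identical edge-by-edge argument applied to $\val(X_v) = \E_{w \sim v} \sum_a X_{v,a} X_{w,\pi_{vw}(a)}$ gives the degree-$4$ SoS bounds $0 \le \val(X_v) \le 1$.

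For step (ii), I would invoke Theorem~\ref{thm:step-approx}, which (together with Fact~\ref{fact:bdd-markov}, already used earlier for $p \le x/(\beta-\nu)+\nu$) guarantees among the axioms for $p \in P_{\beta,\nu}$ the SoS implication $\{0 \le x \le 1\} \vdash_{\tilde O(1/\nu)} \{0 \le p(x) \le 1\}$. Substituting $\val(X_v)$ for $x$ and composing with step (i) gives $0 \le p(\val(X_v)) \le 1$ at total degree $\tilde O(1/\nu)$, using that $\val(X_v)$ itself has degree $2$.

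For step (iii), the product of two SoS polynomials is SoS via $(\sum_i p_i^2)(\sum_j q_j^2) = \sum_{i,j}(p_i q_j)^2$, so $f_s(v) \ge 0$. For the upper bound, the identity
\[
1 - ab = (1 - a) + a \cdot (1 - b)
\]
applied with $a = \Ind(X_v - X'_v = s)$ and $b = p(\val(X_v))$ expresses $1 - f_s(v)$ as $(1-a)$ plus a product of SoS terms, hence as SoS. The total degree is dominated by step (ii), yielding $\tilde O(1/\nu)$. I do not expect any real obstacle here; the only delicate points are tracking degrees and cleanly invoking the two-copy axiom set (via Fact~\ref{fact:indep}) so that $X$- and $X'$-partition axioms may be multiplied inside the proof.
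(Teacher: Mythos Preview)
Your proposal is correct and follows essentially the same approach as the paper: show each factor of $f_s(v)$ lies in $[0,1]$ (the paper cites Fact~\ref{fact:ug-axioms} and Theorem~\ref{thm:step-approx} for this, while you unpack the Booleanity/partition argument explicitly), then use that the product of two $[0,1]$-bounded quantities is in $[0,1]$ via elementary SoS identities. Your decomposition $1-ab=(1-a)+a(1-b)$ is equivalent to the paper's use of $(1-A)B\ge 0$ to get $AB\le B\le 1$.
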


We provide the proof below in Section~\ref{sec:claim-pfs}.
Thus, we may apply the SSE certificate $\SSE_{\alpha,\gamma}(G)$ guaranteed by the condition of the lemma to all the functions $f_s$ and sum up the equality over $s \in \Sigma$. 
This gives us,
\begin{equation*}
\sum_s \ip{f_s, L f_s}_{\pi} \ge \alpha \sum_s \E_\pi[f_s] + c \left(\gamma \sum_s \E_\pi[f_s] - \sum_s \E_\pi[f_s]^2 \right) - \left(\sum_s 2B_1(f_s) + \sum_s B_2(f_s)  \right).
\end{equation*}
For $B_1(f) = \E_\pi[f - f^{\circ 2}]$ and $B_2(f) = \ip{f - f^{\circ 3}, P f}_\pi$, and $c \ge 0$.
Substituting $\sum_{s} \E_\pi[ f_s]^2 = \Phi_{\beta,\nu}(X,X')$ and re-arranging the expression,
\begin{equation}\label{eq:partition}
\Phi_{\beta,\nu}(X,X') \ge \gamma \sum_s \E_\pi[f_s] + \frac{1}{c}\left(\alpha \sum_s \E_\pi[f_s]  - \left(\sum_s 2B_1(f_s) + \sum_s B_2(f_s)  \right)- \sum_s \ip{f_s, L f_s}_{\pi}\right) .
\end{equation}
We now bound and simplify the remaining terms.
Our goal will be to obtain as large as possible a quantity on the right-hand side.

First, we would like a lower bound on $\sum_{s \in \Sigma} \E_\pi[f_s]$, which measures the total number of vertices included in the approximate partition.
If we were working with the pure shift partition $\Ind[X_u = X_u' + s]$, then this quantity would be $1$; since we have dropped vertices of low objective value, we must prove that we did not remove too many.
\begin{claim}\torestate{\label{claim:cover}
Under the axioms guaranteed by our lemma, the total number of vertices participating in the approximate partition $\{f_s\}_{s \in \Sigma}$ is large,
$$\A_G \cup \{p \in P_{\beta,\nu}\} \,\, \vdash_{\tilde{O}(1/\nu)} \,\,
    \sum_s \E_\pi[f_s] \geq 1 - \frac{\viol(X)}{1 - \beta - \nu} - \nu.
$$}
\end{claim}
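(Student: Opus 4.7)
My plan is to first collapse the sum $\sum_{s\in\Sigma}$ using the Unique Games partition axioms applied to both $X$ and the fresh copy $X'$, and then apply a reverse-Markov-style \sos bound on the approximate indicator polynomial $p$. Expanding $\Ind(X_u - X'_u = s) = \sum_{a\in\Sigma} X_{u,a}\,X'_{u,a+s}$ and using the partition constraints $\sum_a X_{u,a} = 1$ and $\sum_b X'_{u,b} = 1$ from $\cA_I$, one gets the degree-$2$ \sos identity $\sum_s \Ind(X_u - X'_u = s) = 1$. Since $p(\val_u(X))$ is free of $s$, it factors out of the sum, so $\sum_s f_s(u) = p(\val_u(X))$ and therefore $\sum_s \E_\pi[f_s] = \E_{u\sim\pi}[p(\val_u(X))]$. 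It thus suffices to establish the \sos inequality
\[
\E_{u\sim\pi}\bigl[p(\val_u(X))\bigr] \;\ge\; 1 - \frac{\viol(X)}{1-\beta-\nu} - \nu.
\]

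The key remaining ingredient is a reverse-Markov \sos lower bound on $p$ that mirrors the upper bound $p(x) \le \tfrac{x}{\beta-\nu}+\nu$ of Fact~\ref{fact:bdd-markov}, namely
\[
\{0 \le x \le 1\} \cup \{p \in P_{\beta,\nu}\} \;\vdash_{\tilde O(1/\nu)}\; p(x) \;\ge\; 1 - \frac{1-x}{1-\beta-\nu} - \nu.
\]
I would obtain this by applying Fact~\ref{fact:bdd-markov} to the reflected polynomial $\tilde p(y) := 1 - p(1-y)$: this $\tilde p$ is an approximate threshold polynomial at level $1-\beta$ (the reflection swaps ``above threshold'' and ``below threshold''), so its Markov upper bound $\tilde p(y) \le \tfrac{y}{(1-\beta)-\nu}+\nu$ rearranges, after substituting $y = 1-x$, into the claimed lower bound on $p$. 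Composing this inequality pointwise with $x = \val_u(X)$---which is \sos provably in $[0,1]$ from $\cA_I$, since $\val_u(X)$ is an average over neighbors $w$ of $\sum_a X_{u,a} X_{w,\pi_{uw}(a)}$, a \sos-bounded sum of Boolean products---yields $p(\val_u(X)) \ge 1 - \tfrac{1-\val_u(X)}{1-\beta-\nu} - \nu$. Taking $\E_{u\sim\pi}$ and using $\E_{u\sim\pi}[1-\val_u(X)] = 1 - \val(X) = \viol(X)$ gives the claim at total proof degree $\tilde O(1/\nu)$, inherited from $p$.

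The only nontrivial step is the reverse-Markov bound on $p$, since Fact~\ref{fact:bdd-markov} as stated provides only the upper-bound form. If the construction of $P_{\beta,\nu}$ in Section~\ref{sec:apx-ind} is closed under the reflection $p(x)\mapsto 1-p(1-x)$ (mapping $P_{\beta,\nu}$ into $P_{1-\beta,\nu}$), the argument above is immediate; otherwise one would either re-derive the lower bound directly from the additive approximation guarantees of Theorem~\ref{thm:step-approx} (mimicking whatever argument gives Fact~\ref{fact:bdd-markov}), or augment that theorem's statement to record this complementary bound explicitly.
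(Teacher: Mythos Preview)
Your proof is correct and follows essentially the same route as the paper: collapse $\sum_s f_s(u)$ to $p(\val_u(X))$ via the partition constraints, apply the lower Markov-type bound $p(x)\ge 1-\frac{1-x}{1-\beta-\nu}-\nu$, and average over $u\sim\pi$. Your only hesitation---whether the reverse-Markov bound is available---is in fact a non-issue: Fact~\ref{fact:bdd-markov} already records \emph{both} inequalities $\{p(x)\ge 1-\tfrac{1-x}{1-\alpha-\delta}-\eps\}\cup\{p(x)\le \tfrac{x}{\alpha-\delta}+\eps\}$, and the paper isolates the lower one as Fact~\ref{fact:apx-markov}, so no reflection argument is needed.
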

This claim follows easily from an averaging argument if we replace $p(x)$ with $\Ind[x \ge \beta]$, since this amounts to removing vertices with at least $ 1-\beta$ incident violated edges in $X$. 
Below, we will show that this claim still holds as an SoS inequality when we use the $\eta$-approximate indicator $p$. 
See Section~\ref{sec:claim-pfs}.

Second, we must argue that the total expansion of the approximate partition is not too large.
The following claim shows that the expansion is bounded by the total violations of $X$ and $X'$:
\begin{claim}\torestate{\label{claim:expans}
Under the axioms guaranteed by our lemma, the total expansion of the partition is bounded as a function of the total violations in $X$ and $X'$:
$$\A_G \cup \{p \in P_{\beta,\nu}\} \,\, \vdash_{\tilde{O}(1/\nu)} \,\,
    \sum_s \ip{f_s,Lf_s}_\pi \leq \viol(X)+\viol(X') + 2\left(\frac{\viol(X)}{1 - \beta - \nu}\right) + 2\nu.$$}
\end{claim}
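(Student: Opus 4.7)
The plan is to bound $\sum_s \ip{f_s, L f_s}_\pi = \tfrac12 \E_{(u,v) \sim E}\bigl[\sum_s (f_s(u)-f_s(v))^2\bigr]$ pointwise on each edge, splitting the integrand into a ``shift-mismatch'' piece and a ``$p$-boundary'' piece. Abbreviating $I_{uv} := \Ind(X_u - X'_u = X_v - X'_v)$, $p_u := p(\val_u(X))$, and $p_v := p(\val_v(X))$, and using $\sum_s \Ind(X_u-X'_u=s) = 1$ together with the boolean and pairwise-disjointness properties of the partition indicators (all following from $\A_I$), a direct expansion yields the \sos identity
\[
\sum_s (f_s(u)-f_s(v))^2 \;=\; (1 - I_{uv})\bigl(p_u^2 + p_v^2\bigr) \;+\; I_{uv}(p_u - p_v)^2.
\]
Using $0 \le p \le 1$ (certifiable from $p \in P_{\beta,\nu}$ via Fact~\ref{fact:bdd-markov}), the first summand is bounded by $2(1-I_{uv})$. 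For the second, the elementary identity $2a(1-b) + 2b(1-a) - (a-b)^2 = a(1-a) + b(1-b) + a(1-b) + b(1-a)$ (whose right-hand side is \sos-nonnegative when $0 \le a,b \le 1$) gives $(p_u - p_v)^2 \le 2 p_u(1-p_v) + 2 p_v(1-p_u)$. Dividing by two yields the pointwise \sos bound
\[
\tfrac12 \sum_s (f_s(u) - f_s(v))^2 \;\le\; (1 - I_{uv}) + p_u(1 - p_v) + p_v(1 - p_u).
\]

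I would then bound the two resulting expectations separately. For the shift-mismatch piece, Fact~\ref{fact:ug-axioms} supplies the \sos-certifiable inequality asserting that $1 - I_{uv}$ is dominated modulo $\A_I$ by the sum of the per-edge violation indicators of $X$ and $X'$: whenever the edge $(u,v)$ is satisfied by both $X$ and $X'$, affineness forces $X_u - X_v = X'_u - X'_v = a_{uv}$ and hence $I_{uv} = 1$, so $1 - I_{uv}$ can be nonzero only when at least one of $X, X'$ violates $(u,v)$. Averaging over $(u,v)\sim E$ contributes $\viol(X) + \viol(X')$. For the boundary piece, applying $0 \le p \le 1$ once more gives $p_u(1 - p_v) \le (1 - p_v)$ and likewise $p_v(1 - p_u) \le (1 - p_u)$; by reversibility of the edge distribution $\E_{(u,v)\sim E}[1 - p_u] = \E_{(u,v)\sim E}[1 - p_v] = \E_{u \sim \pi}[1 - p(\val_u(X))]$, so this piece contributes $2\, \E_{u \sim \pi}[1 - p(\val_u(X))]$. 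Since $\sum_s \E_\pi[f_s] = \E_\pi[p(\val_u(X))]$ (summing the partition indicators over $s$), Claim~\ref{claim:cover} bounds $\E_\pi[1 - p(\val_u(X))] \le \tfrac{\viol(X)}{1-\beta-\nu} + \nu$, and assembling the two pieces gives exactly the advertised estimate.

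The main obstacle I anticipate is verifying the \sos-certifiability of the shift-mismatch inequality at bounded degree. On $\{0,1\}$-valued assignments this is a routine case analysis, but a bounded-degree \sos derivation from the partition ($\sum_a X_{u,a} = 1$), idempotence ($X_{u,a}^2 = X_{u,a}$), and disjointness ($X_{u,a}X_{u,b} = 0$ for $a \ne b$) axioms requires expanding $I_{uv} = \sum_s \Ind(X_u - X'_u = s)\Ind(X_v - X'_v = s)$ and the per-edge satisfaction indicators as polynomials in the $X, X'$ variables and repeatedly invoking these relations to establish $\Ind(\text{both satisfied}) \cdot (1 - I_{uv}) \equiv 0$ modulo $\A_I$. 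This is precisely the content of Fact~\ref{fact:ug-axioms} and carries only $O(1)$ degree overhead. All remaining steps cost at most the degree of $p$, so the overall derivation has degree $\tilde O(1/\nu)$ as claimed.
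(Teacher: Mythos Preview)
Your proof is correct and follows essentially the same route as the paper: both arguments reduce $\sum_s \ip{f_s,Lf_s}_\pi$ to $\E_{(u,v)}\bigl[(1-I_{uv}) + (1-p_u) + (1-p_v)\bigr]$ and then invoke Fact~\ref{fact:ug-axioms} for the shift-mismatch term and the approximate Markov inequality for the $1-p$ terms. The only cosmetic difference is in the intermediate decomposition: you keep the exact identity $p_u^2+p_v^2-2p_up_vI_{uv}=(1-I_{uv})(p_u^2+p_v^2)+I_{uv}(p_u-p_v)^2$ and bound $(p_u-p_v)^2$ via your elementary \sos identity, whereas the paper bounds $p^2\le 1$ up front and then applies the ``approximate union bound'' $1-p_up_v \le (1-p_u)+(1-p_v)$ (Fact~\ref{fact:apx-ub}) --- both land on the same expression.
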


The proof of this claim uses the fact that in any satisfying assignment for an edge $(u,v)$, $X_u = X_v + s$ for a fixed $s \in \Sigma$, and therefore an edge that crosses the shift partition must be violated in either $X$ or $X'$ since the endpoints differ by a different shift in each assignment.
To account for vertices dropped because their violations are $\ge 1-\beta$, we again use an averaging argument.
We will prove this formally below.

Finally, if the $f_s$ were $0/1$-valued functions, $B_1(f_s)$ and $B_2(f_s)$ would have value $0$.
Since $f_s$ are instead approximately $0/1$ valued, we must show that $B_1(f_s)$ and $B_2(f_s)$ are close to $0$:

\begin{claim}\torestate{\label{claim:B1}
Under the axioms of our lemma, the $B_1(f_s)$ are small,
$$\A_G \cup \{p \in P_{\beta,\nu}\} \,\, \vdash_{\tilde{O}(1/\nu)} \,\,
\sum_s \E_\pi[f_s - f_s^{\circ 2}] \leq \frac{\viol(X)}{1 - \beta - \nu} + \nu.$$}
\end{claim}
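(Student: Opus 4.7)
The plan is to rewrite $\sum_s (f_s - f_s^{\circ 2})$ pointwise as a single-variable expression in $p(\val_u(X))$ using the UG axioms, and then control it via two \sos-provable bounds on $p$.

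\medskip

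First, I would establish the pointwise identity
\[
\sum_{s \in \Sigma}\bigl(f_s(u) - f_s(u)^2\bigr) \;=\; p(\val_u(X))\bigl(1 - p(\val_u(X))\bigr)
\]
as an \sos equality modulo $\A_G$. The key facts, both derivable at constant \sos degree from the Booleanity axioms $\{X_{u,a}^2 = X_{u,a}\}$ and partition axioms $\{X_{u,a}X_{u,b} = 0\}_{a \neq b}$ (and analogously for $X'$), are that $\Ind(X_u - X'_u = s)^2 = \Ind(X_u - X'_u = s)$ and $\sum_{s \in \Sigma}\Ind(X_u - X'_u = s) = 1$. Substituting $f_s(u) = \Ind(X_u - X'_u = s)\cdot p(\val_u(X))$ and summing over $s$ yields the identity.

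\medskip

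Second, I would apply two \sos-certifiable bounds. The trivial identity $(1-p)^2 = (1-p) - p(1-p)$ gives $p(1-p) \le 1-p$. The Markov-type upper bound
\[
1 - p(x) \;\leq\; \frac{1-x}{1-\beta-\nu} + \nu \quad \text{on } x \in [0,1]
\]
is the mirror of the bound in Fact~\ref{fact:bdd-markov}: by symmetry, applying the Fact to the polynomial $\tilde{p}(y) := 1 - p(1-y) \in P_{1-\beta,\nu}$ and then substituting $y = 1-x$ yields it. Applying both bounds to $x = \val_u(X) \in [0,1]$ (which is \sos-certifiable from $\A_G$, using that $X_{u,a}X_{v,b} = (X_{u,a}X_{v,b})^2$ modulo the Booleanity axioms) and taking $\E_{u \sim \pi}$ then gives
\[
\sum_s \E_\pi[f_s - f_s^{\circ 2}] \;\leq\; \E_\pi\!\left[\tfrac{1 - \val_u(X)}{1 - \beta - \nu}\right] + \nu \;=\; \frac{\viol(X)}{1 - \beta - \nu} + \nu,
\]
using $\viol(X) = \E_{u \sim \pi}[1 - \val_u(X)]$. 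The total \sos degree is $\tilde{O}(1/\nu)$, dominated by the degree of $p$.

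\medskip

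The main (mild) obstacle is justifying the Markov bound on $1-p$, since Fact~\ref{fact:bdd-markov} as stated applies only to $p$. This is resolved by observing that the family $P_{\beta,\nu}$ is closed under the involution $p(x) \mapsto 1 - p(1-x)$, which sends approximations of $\Ind[\cdot \ge \beta]$ to approximations of $\Ind[\cdot \ge 1-\beta]$; applying Fact~\ref{fact:bdd-markov} to the mirror polynomial and substituting back then gives the desired bound at the same \sos degree.
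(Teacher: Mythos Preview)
Your proof is correct and follows essentially the same route as the paper: rewrite $\sum_s(f_s-f_s^{\circ 2})$ pointwise as $p(1-p)$ via the Booleanity/partition axioms, bound $p(1-p)\le 1-p$, then apply the Markov-type bound and average over $\pi$. One unnecessary detour: the ``mirror'' involution argument is not needed, since the first conclusion of Fact~\ref{fact:bdd-markov} already reads $p(x)\ge 1-\tfrac{1-x}{1-\beta-\nu}-\nu$, which rearranges directly to $1-p(x)\le \tfrac{1-x}{1-\beta-\nu}+\nu$.
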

\begin{claim}\torestate{\label{claim:B2}
Under the axioms of our lemma, for any $\eta \in \R_+$ and $\nu < \frac{1}{3}(1-\beta)$, the $B_2(f_s)$ may be bounded by
$$\A_G \cup \{p \in P_{\beta,\nu}\} \,\, \vdash_{\tilde{O}(1/\nu)} \,\,
    \sum_s \ip{f_s - f_s^{\circ 3}, P f_s}_\pi \leq \frac{1}{2\eta} + \eta\left(\frac{\viol(X)}{1 - \beta - \nu} + \nu\right).$$}
\end{claim}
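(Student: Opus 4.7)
The plan is to decouple the two factors in the inner product using an SoS version of the AM--GM inequality, bound $\sum_s \|Pf_s\|_\pi^2$ by $1$ using that $P$ is a projection and that the $f_s$ form an approximate partition of unity, and then bound $\sum_s \|f_s - f_s^{\circ 3}\|_\pi^2$ by (a constant times) the quantity from Claim~\ref{claim:B1} using that $0 \le f_s \le 1$.

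\smallskip
\noindent\textbf{Step 1 (AM--GM decoupling).} The identity
\[
\eta \, a^2 - 2ab + \eta^{-1} b^2 \;=\; \left(\sqrt{\eta}\,a - b/\sqrt{\eta}\right)^2 \;\geq\; 0
\]
gives, applied pointwise to $a = f_s(u)-f_s^{\circ 3}(u)$ and $b = (Pf_s)(u)$, averaged over $u \sim \pi$, and summed over $s \in \Sigma$:
\[
\sum_{s \in \Sigma} \ip{f_s - f_s^{\circ 3},\, P f_s}_\pi \;\leq\; \frac{\eta}{2}\sum_{s \in \Sigma} \|f_s - f_s^{\circ 3}\|_\pi^2 + \frac{1}{2\eta}\sum_{s \in \Sigma} \|Pf_s\|_\pi^2.
\]
This is a degree-$O(\deg f_s) = \tilde O(1/\nu)$ SoS consequence of the axioms.

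\smallskip
\noindent\textbf{Step 2 (projection term).} Since $P$ is a projection, $\|f\|_\pi^2 - \|Pf\|_\pi^2 = \|f-Pf\|_\pi^2$ is a square, so $\|Pf_s\|_\pi^2 \le \|f_s\|_\pi^2 \le \E_\pi[f_s]$ (using $0 \le f_s \le 1$ from Claim~\ref{claim:bounded} to bound $f_s^{\circ 2} \le f_s$). Finally,
\[
\sum_{s \in \Sigma} f_s(u) \;=\; p(\val_u(X)) \sum_{s \in \Sigma} \Ind(X_u - X_u' = s) \;=\; p(\val_u(X)) \;\leq\; 1,
\]
where the last step uses the axiom $p \le 1$ from $p \in P_{\beta,\nu}$. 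Averaging over $u \sim \pi$ gives $\sum_s \|Pf_s\|_\pi^2 \le 1$.

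\smallskip
\noindent\textbf{Step 3 (pointwise error term).} Here we use the identity
\[
f_s - f_s^{\circ 3} \;=\; (f_s - f_s^{\circ 2})(1+f_s).
\]
Since $f_s(1-f_s)^2 = (f_s - f_s^{\circ 2})(1-f_s) \ge 0$ (SoS-provable: multiply the axiom $f_s \ge 0$ by the square $(1-f_s)^2$), we get the SoS inequality $f_s - f_s^{\circ 3} \le 2(f_s - f_s^{\circ 2})$. Together with $0 \le f_s - f_s^{\circ 3} \le 1$ (which follows from $0 \le f_s \le 1$ by multiplying axioms), this gives $(f_s - f_s^{\circ 3})^2 \le f_s - f_s^{\circ 3}$ as an SoS inequality, hence
\[
\sum_{s \in \Sigma}\|f_s - f_s^{\circ 3}\|_\pi^2 \;\le\; \sum_{s \in \Sigma} \E_\pi[f_s - f_s^{\circ 3}] \;\le\; 2\sum_{s \in \Sigma} \E_\pi[f_s - f_s^{\circ 2}] \;\le\; 2\left(\frac{\viol(X)}{1-\beta-\nu} + \nu\right),
\]
where the last inequality is Claim~\ref{claim:B1}.

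\smallskip
\noindent\textbf{Combining.} Substituting the bounds of Steps 2 and 3 into Step 1 yields
\[
\sum_{s \in \Sigma} \ip{f_s - f_s^{\circ 3},\, P f_s}_\pi \;\le\; \eta\left(\frac{\viol(X)}{1-\beta-\nu} + \nu\right) + \frac{1}{2\eta},
\]
which is the desired conclusion. The main conceptual step is the AM--GM decoupling in Step 1, which is the only place the free parameter $\eta$ enters; the only delicate point is that the projection bound $\|Pf\|_\pi^2 \le \|f\|_\pi^2$ must hold as an SoS inequality, which it does via its expression as a square. All degree bounds are $\tilde O(1/\nu)$ plus the degree of $P$'s matrix entries, matching the statement.
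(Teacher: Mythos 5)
Your proof is correct, and its overall shape matches the paper: both use the same Cauchy--Schwarz/AM--GM decoupling with parameter $\eta$, and both bound $\sum_s\|Pf_s\|_\pi^2 \le \sum_s\|f_s\|_\pi^2 \le 1$ via the projection and the approximate-partition structure. Where you genuinely diverge is Step 3. The paper unpacks $f_s$ as $\Ind[X_u-X_u'=s]\,p(\val_u(X))$, writes $\sum_s\|f_s-f_s^{\circ 3}\|_\pi^2 = \E_\pi[p^2(1-p^2)^2]$, bounds $p(1-p^2)\le 1$ to get $\le 1-\E_\pi[p^2]$, and then invokes Observation~\ref{obs:squaring} (treating $p^2$ as a member of $P_{\beta,2\nu}$) plus the approximate Markov inequality, finally using the hypothesis $\nu<\tfrac13(1-\beta)$ to convert $\tfrac{1}{2(1-\beta-2\nu)}$ to $\tfrac{1}{1-\beta-\nu}$. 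You instead use the purely algebraic factorization $f_s-f_s^{\circ 3}=(f_s-f_s^{\circ 2})(1+f_s)\le 2(f_s-f_s^{\circ 2})$ together with $(f_s-f_s^{\circ 3})^2\le f_s-f_s^{\circ 3}$, both derived from $0\le f_s\le 1$, and then quote Claim~\ref{claim:B1} directly. Your route avoids both Observation~\ref{obs:squaring} and the side condition $\nu<\tfrac13(1-\beta)$ (which the paper's proof needs but yours does not), and it applies to any $[0,1]$-bounded $f_s$ rather than relying on the specific indicator-times-$p$ form. Both are correct degree-$\tilde O(1/\nu)$ SoS arguments; yours is a bit more modular and self-contained.
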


When we combine these claims with equation~(\ref{eq:partition}) and simplify, we have the desired inequality, where the parenthesized right-hand side term becomes $K_{\beta,\nu}^{\alpha,\eta}$.
We prove our claims below in Section~\ref{sec:claim-pfs}
\end{proof}

\subsection{Proofs of Claims}\label{sec:claim-pfs}

We now prove the outstanding claims.
We first record some consequences of our unique games axioms $\A_G$, which will be useful to us:
\begin{fact}\label{fact:ug-axioms}
The unique games constraints $\A_G$ imply the following bounds:
\begin{enumerate}
    \item The local values and violations of variables are in $[0,1]$: $\A_G \vdash_2 \{0 \le \val_u(X) \le 1\} \cup \{0 \le \viol(X) \le 1\}$
    \item The variables $\{\Ind[X_u - X_u' = s]\}_{s\in\Sigma,u \in V(G)}$ satisfy Booleanity and partition constraints,
    \[
    \A_G \vdash_4 \{\Ind[X_u-X_u' = s]^2 = \Ind[X_u - X_u' = s]\}_{u \in V(G), s \in \Sigma} \cup \{\sum_{s \in \Sigma}\Ind[X_u - X_u' = s] = 1\} 
    \]
    \item The partition crossing edges are bounded by the sum of violations:
    \[
    \A_G \vdash_8 \E_{(u,v) \sim E(G)} \Ind[X_u - X_v \neq X_u' - X_v'] \le \viol(X) + \viol(X').
    \]
\end{enumerate}
\end{fact}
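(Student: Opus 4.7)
The plan is to verify each of the three items by direct manipulation of the axioms in $\A_G$, exploiting three consequences of the program (\ref{eq:ip}): Booleanity $X_{u,a}^2 = X_{u,a}$, disjointness $X_{u,a} X_{u,b} = 0$ for $a \neq b$, and the partition identity $\sum_a X_{u,a} = 1$.

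For item (1), the local value $\val_u(X) = \E_{v \sim u} \sum_{a \in \Sigma} X_{u,a} X_{v,\pi_{uv}(a)}$ is SoS-non-negative because, modulo Booleanity, each summand equals $(X_{u,a} X_{v,\pi_{uv}(a)})^2$. For the upper bound, I would write $X_{u,a} - X_{u,a} X_{v,\pi_{uv}(a)} = X_{u,a}(1 - X_{v,\pi_{uv}(a)}) = X_{u,a} \sum_{c \neq \pi_{uv}(a)} X_{v,c}$ using the partition at $v$, and observe that each resulting summand $X_{u,a} X_{v,c}$ is again a square modulo Booleanity. Summing over $a$ and using the partition at $u$ yields $\sum_a X_{u,a} X_{v,\pi_{uv}(a)} \le 1$ as an SoS inequality; averaging over $v \sim u$ transfers the bound to $\val_u(X)$ and to $\val(X)$, and the bounds on $\viol(X) = 1 - \val(X)$ follow immediately.

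For item (2), set $\Ind[X_u - X_u' = s] := \sum_a X_{u,a} X_{u,a+s}'$. Squaring gives $\sum_{a,b} X_{u,a} X_{u,b} X_{u,a+s}' X_{u,b+s}'$, and disjointness applied to both the $X$ and $X'$ copies of the axioms collapses this to the diagonal $\sum_a X_{u,a}^2 (X_{u,a+s}')^2 = \sum_a X_{u,a} X_{u,a+s}'$, proving Booleanity. The partition identity follows by exchanging order of summation: $\sum_s \sum_a X_{u,a} X_{u,a+s}' = \sum_a X_{u,a} \sum_s X_{u,a+s}' = \sum_a X_{u,a} = 1$, where the inner sum is the partition axiom applied to $X'$.

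For item (3), the key observation is that if an edge $(u,v)$ with constraint $X_u - X_v = a_{uv}$ is satisfied in both $X$ and $X'$, then its shift coincides in the two assignments, so the edge does not cross the partition. To formalize this in SoS, I would write $\Ind[X_u - X_v = X_u' - X_v'] = \sum_s \Ind[X_u - X_v = s]\,\Ind[X_u' - X_v' = s]$ (which follows from the definitions together with the partition and disjointness axioms), and drop all terms except $s = a_{uv}$, using that each dropped term is a square modulo Booleanity by the same argument as in item~(2). This gives $\Ind[\text{not cross}] \ge p \cdot q$ as an SoS inequality, where $p := \Ind[X_u - X_v = a_{uv}]$ and $q := \Ind[X_u' - X_v' = a_{uv}]$ are the satisfaction indicators. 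The elementary bound $pq \ge p + q - 1$ has a degree-$4$ SoS proof via $(1-p)(1-q) \ge 0$, since both $1-p$ and $1-q$ equal their own squares modulo Booleanity. Rearranging edgewise yields $\Ind[\text{cross}] \le \Ind[X \text{ violates }(u,v)] + \Ind[X' \text{ violates }(u,v)]$, and averaging over $(u,v) \sim E(G)$ gives the claim. I do not foresee a genuine obstacle: all three items reduce to routine SoS manipulations of the axioms, and the only mild subtlety lies in translating the logical implication ``both satisfied $\Rightarrow$ not cross'' into the formal SoS inequality via the $(1-p)(1-q) \ge 0$ trick.
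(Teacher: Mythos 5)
Your proof is correct, and for items~(1) and~(2) it follows the same route as the paper: the paper records these facts as items 1--2 of Fact~\ref{fact:z-vars} and defers to ``direct computation, using properties of the $X_{u,a}$s,'' and your expansions supply exactly that computation (item~(1) of the present Fact is treated as immediate in the paper and not proved anywhere, so your argument is a welcome fill-in). For item~(3) the underlying idea is the same --- an edge satisfied by both $X$ and $X'$ cannot cross the shift partition --- but the bookkeeping differs slightly. The paper first establishes the \emph{exact} polynomial identity $Z_{u,s}Z_{v,t}Y_{(u,v)}Y'_{(u,v)}=0$ for $s\neq t$ (Fact~\ref{fact:z-vars}, item~3) by fully expanding the four-fold sum over $a,b,c,d$ and invoking disjointness, and then, in a footnote, multiplies the crossing indicator $\sum_{s\neq t}Z_{u,s}Z_{v,t}$ by $\bigl(Y_{(u,v)}+(1-Y_{(u,v)})\bigr)\bigl(Y'_{(u,v)}+(1-Y'_{(u,v)})\bigr)$ and uses the vanishing identity to kill the $Y_{(u,v)}Y'_{(u,v)}$ term. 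You instead drop the non-negative off-shift terms in $\sum_s \Ind[X_u-X_v=s]\,\Ind[X_u'-X_v'=s]$ to get $\Ind[X_u-X_v=X_u'-X_v']\succeq Y_{(u,v)}Y'_{(u,v)}$ directly, and then apply the union bound. The two arguments are interchangeable and of the same degree; yours sidesteps the monomial expansion for the exact vanishing identity, a modest simplification. One small nit: the SoS certificate for $(1-Y_{(u,v)})(1-Y'_{(u,v)})\ge 0$ from Booleanity of $Y_{(u,v)},Y'_{(u,v)}$ has degree $8$, not $4$ --- the certifying square is $\bigl((1-Y_{(u,v)})(1-Y'_{(u,v)})\bigr)^2$, which is degree $8$ since $Y_{(u,v)}$ itself has degree $2$ --- though this still fits within the $\vdash_8$ claimed by the statement.
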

See Fact~\ref{fact:z-vars} in the appendix for a proof (the guarantees are phrased in terms of the variables $Z_{u,s} = \Ind[X_u - X_u' = s]$.\footnote{The proof of the final claim follows from Fact~\ref{fact:z-vars} sub-claim ``Crossing edges violate an assignment'' and from noting that from the Booleanity and partition constraints, $\Ind[X_u - X_u' \neq X_v - X_v'] = \sum_{s \neq t} Z_{u,s}Z_{v,t} = \sum_{s\neq t} Z_{u,s}Z_{v,t}(Y_{(u,v)} + (1-Y_{u,v}))(Y'_{(u,v)} + (1-Y'_{u,v})) \le (1-Y_{u,v}) + (1-Y'_{u,v}$); the claim is required for the final inequality.})

Much of the work in these proofs will consist of arguing that the approximate indicator $p$ behaves like a true indicator.
We will appeal to the following facts, which are proven later in Section~\ref{sec:apx-ind}:

\begin{fact}[Approximate Markov Inequality]\label{fact:apx-markov}
Under the axioms of the lemma, $p \in P_{\beta,\nu}$ approximately obey Markov's inequality over $[0,1]$:
\[
\A_G \cup \{p \in P_{\beta,\nu}\} \cup \{0 \le x \le 1\} \,\,\vdash_{\tilde{O}(1/\nu)}\,\, p(x) \ge 1 - \frac{1-x}{1-\beta-\nu} - \nu.
\]
\end{fact}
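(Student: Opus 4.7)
The desired inequality is a ``dual'' of Fact~\ref{fact:bdd-markov}, which provides the upper bound $p(x) \le \frac{x}{\beta - \nu} + \nu$ for $p \in P_{\beta,\nu}$ on $[0,1]$. The plan is to obtain the corresponding lower bound via a variable-flip symmetry: consider the polynomial $q(y) := 1 - p(1-y)$. Since $p$ is a $\nu$-approximation to $\Ind[x \ge \beta]$, we have $q(y) \approx 1 - \Ind[1-y \ge \beta] = \Ind[y > 1-\beta]$, so $q$ is a $\nu$-approximation to the indicator at threshold $1-\beta$, and should therefore belong to the analogous family $P_{1-\beta,\nu}$.

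Assuming this closure property, Fact~\ref{fact:bdd-markov} applied to $q$ with threshold $1-\beta$ yields the SoS inequality
\[
q(y) \;=\; 1 - p(1-y) \;\le\; \frac{y}{(1-\beta) - \nu} + \nu
\]
modulo $\{0 \le y \le 1\}$ at degree $\tilde{O}(1/\nu)$. Substituting $y = 1 - x$ (and using that $\{0 \le x \le 1\}$ SoS-implies $\{0 \le 1-x \le 1\}$) gives $1 - p(x) \le \frac{1-x}{1-\beta-\nu} + \nu$, which rearranges into the desired bound $p(x) \ge 1 - \frac{1-x}{1-\beta-\nu} - \nu$. The Unique Games axioms $\A_G$ play no role beyond carrying along the variable-bound axiom $\{0 \le x \le 1\}$ that appears in the hypothesis.

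The main obstacle is verifying the closure of the family $P_{\beta,\nu}$ under the transformation $p(x) \mapsto 1 - p(1-x)$, coupled with the threshold change $\beta \mapsto 1-\beta$. By the construction of $P_{\beta,\nu}$ in Section~\ref{sec:apx-ind} via step-function approximations, this closure should be immediate: a polynomial that approximates a rising step at $\beta$ becomes, under reflection through $x \mapsto 1-x$ and complement $p \mapsto 1-p$, a polynomial that approximates a rising step at $1-\beta$ with the same pointwise error and the same SoS-certifiable side conditions (boundedness in $[0,1+\nu]$, proximity to $1$ on $[\beta+\nu,1]$, proximity to $0$ on $[0,\beta-\nu]$). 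If for some reason the construction of $P_{\beta,\nu}$ is not symmetric in this way, a fallback is to mirror the proof of Fact~\ref{fact:bdd-markov} by hand, using these same SoS-certifiable side conditions on $p$ to derive the lower bound directly, exploiting that $\frac{1-x}{1-\beta-\nu}$ is the linear interpolant between $0$ at $x=1$ and $1$ at $x = \beta+\nu$.
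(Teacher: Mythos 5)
Your reflection-duality argument is correct, and it is a genuinely different route from the paper's. Note first that you have slightly misread Fact~\ref{fact:bdd-markov}: it already states \emph{both} bounds, and its first conclusion $p(x) \ge 1 - \frac{1-x}{1-\alpha-\delta} - \eps$ is precisely the target under $\alpha=\beta$, $\delta=\eps=\nu$. The paper's proof is the direct case split in that fact: on $[0,\beta+\nu)$ the right-hand side is nonpositive while $p\ge 0$; on $[\beta+\nu,1]$ we have $p\ge 1-\nu$ while the right-hand side is at most $1-\nu$; so the difference is pointwise nonnegative on $[0,1]$, and Corollary~\ref{cor:Lukacs} converts this into a degree-$O(\deg p)$ SoS certificate modulo $\{0\le x\le 1\}$. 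Your route instead reduces the lower bound to the upper-bound clause applied to the reflected polynomial $q(y)=1-p(1-y)$ at threshold $1-\beta$. The closure step you flag as the main obstacle actually does go through: the $\nu$-approximation condition reflects across $y=\tfrac12$ because $s_\beta(1-y)=1-s_{1-\beta}(y)$ away from the threshold, $[0,1]$-boundedness is preserved by $p\mapsto 1-p$, the monotone-increasing window transfers via $q'(y)=p'(1-y)$, and the SoS certificate of $\{0\le q(y)\le 1\}$ follows by applying the affine substitution $x\mapsto 1-y$ to the certificate for $p$ (an affine substitution sends a degree-$d$ SoS certificate modulo $\{0\le x\le 1\}$ to a degree-$d$ certificate modulo $\{0\le y\le 1\}$, since SoS polynomials stay SoS under affine composition and the axiom set is symmetric under $x\mapsto 1-x$). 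So the reduction is rigorous and no fallback is needed. What the paper's direct argument buys is brevity and local degree bookkeeping; what your reduction buys is the conceptual observation that the two Markov-style bounds form a dual pair under $p(x)\mapsto 1-p(1-x)$, $\beta\mapsto 1-\beta$. You are also right that $\A_G$ is inert here.
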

See Fact~\ref{fact:bdd-markov} for a proof of a slightly more general statement.

\begin{fact}[Approximate Union Bound]\label{fact:apx-union}
The approximate events $p(x), p(y)$ satisfy the union bound:
\[
\{0 \le x,y \le 1\} \cup \{p \in P_{\beta,\nu}\} \,\, \vdash_{\tilde{O}(1/\nu)}  1 - p(x) p(y) \le (1-p(x))(1-p(y)).
\]
\end{fact}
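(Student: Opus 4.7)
The plan is to derive the inequality at SoS-degree $\tilde O(1/\nu)$ by combining a polynomial identity with two standard consequences of $p \in P_{\beta,\nu}$: boundedness and approximate-Booleanity of $p$ on $[0,1]$.

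First I would extract, from the axioms $\{p \in P_{\beta,\nu}\}$ and $\{0 \le x,y \le 1\}$, the SoS inequalities $0 \preceq p(x), p(y) \preceq 1$ at degree $\tilde O(1/\nu)$. These are immediate from Theorem~\ref{thm:step-approx} and are precisely the endpoint instances of Fact~\ref{fact:bdd-markov}. In parallel, I would derive the approximate-Booleanity SoS bound $p(x)\,(1-p(x)) \preceq O(\nu)$ on $[0,1]$, which follows from the $\nu$-approximation guarantee: on each of $[0,\beta-\nu]$ and $[\beta+\nu,1]$ the polynomial $p$ lies within $\nu$ of $\{0,1\}$, and a standard low-degree interpolation argument extends this estimate across the thin transition window $[\beta-\nu,\beta+\nu]$.

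Second, I would use the polynomial identity
\[
1 - p(x)\,p(y) \;=\; (1-p(x)) + p(x)\,(1-p(y)),
\]
which is a degree-$2\deg(p)$ tautology. The stated inequality then reduces to an SoS upper bound on the cross term $p(x)(1-p(y))$ in terms of $(1-p(x))(1-p(y))$, since the literal difference
\[
(1-p(x))(1-p(y)) - p(x)(1-p(y)) \;=\; (1 - 2\,p(x))\,(1-p(y))
\]
is controlled by how far $p(x)$ deviates from the Boolean value $1$. Here the approximate-Booleanity certificate $p(x)(1-p(x)) \preceq O(\nu)$ is what makes $1 - 2 p(x)$ effectively close to an $\{-1,+1\}$-valued quantity whose sign aligns with the sign of $(1-p(x))(1-p(y)) - p(x)(1-p(y))$.

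The hard part is the regime $p(x) \approx 1, p(y) \approx 0$, where the identity alone does not yield the product form. My plan to close this gap is to write the cross-term correction as a Cauchy--Schwarz-style combination of $p(x)(1-p(x))$ and $p(y)(1-p(y))$, each of which is $\preceq O(\nu)$ by the Booleanity certificate: schematically, one decomposes $(1-2p(x))(1-p(y))$ as an SoS combination plus a bounded multiple of $\sqrt{p(x)(1-p(x))\,p(y)(1-p(y))}$, and then applies the SoS version of AM--GM to replace the square root by its two factors. The $\tilde O(1/\nu)$ degree budget is comfortably sufficient to absorb all the square decompositions needed, and the resulting error terms can be absorbed into the $P_{\beta,\nu}$ approximation tolerance in the downstream invocations of this Fact in Claims~\ref{claim:expans} and~\ref{claim:B2}.
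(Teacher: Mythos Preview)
The inequality you are trying to prove is false as stated: take $p(x)=1$ and $p(y)=0$ (both achievable for $p\in P_{\beta,\nu}$), so that $1-p(x)p(y)=1$ while $(1-p(x))(1-p(y))=0$. The displayed Fact contains a typo; the intended right-hand side is the \emph{sum} $(1-p(x))+(1-p(y))$, not the product. This is confirmed by the paper itself: the reference ``See Fact~\ref{fact:apx-ub}'' points to the statement $gh\ge g+h-1$, and the only place this Fact is invoked (in the proof of Claim~\ref{claim:expans}) uses precisely the bound $1-p(\val_u)p(\val_v)\le(1-p(\val_u))+(1-p(\val_v))$.

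Once the statement is corrected, the proof is a two-line identity and requires only the boundedness axioms $0\le p(x),p(y)\le 1$; no approximate-Booleanity certificate $p(1-p)\preceq O(\nu)$ is needed. The paper expands $1=(g+(1-g))(h+(1-h))$ to get $gh=1-g(1-h)-(1-g)h-(1-g)(1-h)$, then subtracts the nonnegative quantity $(1-g)(1-h)$ once more to obtain $gh\succeq 1-(1-g)-(1-h)=g+h-1$. Your elaborate plan involving Cauchy--Schwarz and AM--GM on $p(x)(1-p(x))$ and $p(y)(1-p(y))$ is an attempt to repair an unrepairable inequality; the ``hard part'' you identified at $p(x)\approx 1,\ p(y)\approx 0$ is exactly the counterexample regime.
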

See Fact~\ref{fact:apx-ub} for a proof of a slightly more general statement.

Now, we are ready to prove our claims.
\restateclaim{claim:bounded}
\begin{proof}[Proof of Claim~\ref{claim:bounded}]
By definition, $f_s(v) = \Ind[X_v - X_v' =s] \cdot p(\val_v(X))$.
From Fact~\ref{fact:ug-axioms} we have the axioms $\Ind[X_v -X_v' = s]$ and $0 \le \val_v(X) \le 1$ in degree-$4$, and from Theorem~\ref{thm:step-approx} we have the axiom that $0 \le p(x) \le 1$ in degree $\tilde{O}(1/\nu)$.
The conclusion follows as a consequence of these axioms, since for $0 \le A,B \le 1$,
$(1-A)B \ge 0$ and $(A-0)B \ge 0$.
\end{proof}

\restateclaim{claim:cover}
\begin{proof}[Proof of Claim~\ref{claim:cover}]
By the partition constraints (Fact~\ref{fact:ug-axioms}), for each $v \in V(G)$ 
\[
\sum_{s \in \Sigma} f_v(s) = p(\val_v(X)).
\]
From Fact~\ref{fact:apx-markov} we further have that
\[
p(\val_v(X)) 
\ge \left(1-\frac{\viol_v(X)}{1-\beta-\nu}-\nu\right),
\]
where $\viol_v(X) = 1 - \val_v(X)$, and the inequality is a sum-of-squares inequality of degree $\deg(p)+2$. 
Finally, we use that $\pi$ is the stationary measure to conclude that $\E_{v \sim \pi} \viol_v(X) = \viol(X)$, and the conclusion follows.
\end{proof}

\restateclaim{claim:expans}
\begin{proof}[Proof of Claim~\ref{claim:expans}]

We begin by expanding the left-hand side.
By definition of the Laplacian,
\begin{equation}
\sum_{s \in \Sigma} \langle f_s, L f_s \rangle_\pi
=  \sum_{s \in \Sigma} \E_{(u,v) \sim E(G)} \frac{1}{2} f_s(u)^2 + \frac{1}{2} f_s(v)^2 - f_s(u) f_s(v).\label{eq:lap-ex}
\end{equation}
We now apply the fact that the $\Ind(X_u - X_u' = s)$ satisfy Booleanity and partition axioms (Fact~\ref{fact:ug-axioms}) to obtain that $
\sum_{s} f_s(v)^2 = \sum_s \Ind(X_v - X_v'=s) p(\val_u(X))^2 = p(\val_u(X))^2 \le 1$, where the inequality is a sum-of-squares inequality, and also that $\sum_s f_s(u) f_s(v) = p(\val_v(X))p(\val_v(X))\Ind(X_u - X_v = X'_u - X'_v)$.
Combining these, we have the sum-of-squares inequality
\begin{equation}
(\ref{eq:lap-ex}) 
\le \E_{(u,v) \sim E(G)}\left( 1- p(\val_u(X))p(\val_v(X))\Ind[X_u - X_v = X_u' - X_v']\right)\label{eq:lap-2}
\end{equation}
Now we can add and subtract $\Ind[X_u - X_v = X_u' - X_v']$ to the right hand side and then apply the approximate union bound Fact~\ref{fact:apx-ub} to obtain
\begin{align}
(\ref{eq:lap-2}) 
&\le \E_{(u,v) \sim E(G)} \left(\Ind[X_u - X_v \neq X_u' - X_v'] + \Ind[X_u - X_v = X_u' - X_v'](1-p(\val_u(X))p(\val_v(X)))\right)\nonumber\\
&\le \E_{(u,v) \sim E(G)}\left( \Ind[X_u - X_v \neq X_u' - X_v'] + (1-p(\val_v(X))) + (1-p(\val_u(X)))\right),\label{eq:lap-3}
\end{align}
with both inequalities certifiable by $O(\deg(p))$ sum-of-squares proofs.
To bound the first term $\Ind[X_u - X_v \neq X_u' - X_v']$, we use the third claim of Fact~\ref{fact:ug-axioms}, and to bound the remaining terms we apply our approximate Markov's inequality Fact~\ref{fact:apx-markov}, concluding that
\[
(\ref{eq:lap-3}) \le \viol(X) + \viol(X') + \E_{(u,v) \sim E(G)} \left(\frac{\viol_u(X)}{1-\beta-\nu} + \frac{\viol_v(X)}{1-\beta-\nu} +2\nu\right),
\]
and finally applying the property of the stationary measure that $\E_{u \sim \pi} g(u) = \E_{(u,v) \sim E(G)} g(u)$, and that $\E_{u \sim \pi} \viol_u(X) = \viol(X)$, we obtain our conclusion.
\end{proof}

\restateclaim{claim:B1}
\begin{proof}[Proof of Claim~\ref{claim:B1}]
For any $v \in V(G)$, the Booleanity and partition constraints (Fact~\ref{fact:ug-axioms}) give us that
\begin{align*}
\sum_{s \in \Sigma} f_s(v) - f_s(v)^2
&= \sum_{s \in \Sigma} \Ind[X_v - X'_v = s]\left(p(\val_v(X)) - p(\val_v(X))^2\right)\\
&=p(\val_v(X)) - p(\val_v(X))^2
\le 1- p(\val_v(X)),
\end{align*}
where we note the final inequality is an \sos inequality by applying the axiom that $p(x) \in [0,1]$ for $x \in [0,1]$, and that $\val_v(X) \in [0,1]$.
Now applying our approximate Markov's inequality (Fact~\ref{fact:apx-markov}) and the fact that $\val_v(X) = 1-\viol_v(X)$, and finally noting that $\E_{v\sim \pi}\viol_v(X) = \viol(X)$ by definition of the stationary measure, we have our conclusion.
\end{proof}

\restateclaim{claim:B2}
\begin{proof}[Proof of Claim~\ref{claim:B2}]
We apply Cauchy-Schwarz,
\[
\sum_{s \in \Sigma} \langle f_s - f_s^{\circ 3}, Pf_s\rangle_\pi \le \frac{1}{2\eta} \sum_{s} \|Pf_s\|_{\pi,2}^2 + \frac{\eta}{2}\sum_{s} \|f_s - f_s^{\circ 3}\|_{\pi,2}^2.
\]
To bound the first term on the right-hand side, we note that $P$ is a projection matrix, and therefore we can bound the sum $\sum_{s} \|Pf_s\|_{\pi,2}^2 \le \sum_{s}\|f_s\|_{\pi,2}^2 \le 1$.
Further, the inequality is an \sos inequality since the $f_s$ satisfy approximate partition constraints and we can certify that $f_s(v)\in[0,1]$ (Fact~\ref{fact:ug-axioms} and Claim~\ref{claim:bounded}).
To bound the second term on the right-hand side, we expand,
\begin{align*}
    \sum_{s \in \Sigma} \|f_s - f_s^{\circ 3}\|_{\pi,2}^2
    &= \sum_{s \in \Sigma} \E_{u \sim \pi} (f_s(u) - f_s(u)^3)^2\\
    &= \sum_{s \in \Sigma} \E_{u \sim \pi} \Ind[X_u - X_u' = s] p(\val_u(X))^2 (1 - p(\val_u(X))^2)^2,\\
    &= \E_{u \sim \pi}p(\val_u(X))^2 (1 - p(\val_u(X))^2)^2,
\end{align*}
where we have used the Booleanity and partition constraints from Fact~\ref{fact:ug-axioms}.
The same fact ensures that we have as an \sos axiom that $\val_u(X) \in [0,1]$ and therefore $p(\val_u(X))^1(1-p(\val_u(X))^2) \le 1$, so we have as an \sos inequality of degree $O(\deg(p))$,
\begin{align*}
    \sum_{s \in \Sigma} \|f_s - f_s^{\circ 3}\|_{\pi,2}^2
    &\le 1 - \E_{u \sim \pi}p(\val_u(X))^2.
\end{align*}
Now applying Observation~\ref{obs:squaring}, we have that $p^2$ shares all of the characteristics of $P_{\beta,2\nu}$ save for the degree bound, and combining this with our approximate Markov inequality (Fact~\ref{fact:apx-ub}) we get that
\begin{align*}
    \sum_{s \in \Sigma} \|f_s - f_s^{\circ 3}\|_{\pi,2}^2
    &\le \E_{u \sim \pi}\left(\frac{\viol_u(X)}{1-\beta-2\nu} + 2\nu\right).
\end{align*}
The conclusion now follows by noting that $\E_{u \sim \pi}\viol_u(X) = \viol(X)$, and by using our bound $\nu < \frac{1}{3}(1-\beta)$ to argue that $\frac{1}{2}\cdot \frac{1}{1-\beta-2\nu} \le \frac{1}{1-\beta-\nu}$.
\end{proof}

\section{UG on Noisy-Hypercube and Short-code graphs}
\label{sec:hyper}

Here, we derive two corollaries of Theorem~\ref{thm:main-cert}:
we show that polynomial-time sum-of-squares relaxations solve Unique Games on the noisy hypercube graph and the short-code graph.
These results follow easily by combining our results with the prior results of Barak et al. \cite{BarakBHKSZ12}, who showed that these graphs are certifiably 2 to 4 hypercontractive in sum-of-squares degree 4.

We first treat the noisy hypercube:
\begin{definition}[Noisy Hypercube Graph]
For each $\eps \in [0,1]$ and $d \in \N_+$, the $\eps$-noisy $d$-dimensional hypercube is the graph on $\{\pm 1\}^d$, with weighted edges $\{w_{uv}\}_{u,v \in \{\pm 1\}^d}$ where $w_{u,v} = (\eps)^{(d - \langle u,v \rangle)/2} (1-\eps)^{(d + \langle u,v \rangle)/2}$.
\end{definition}

Motivated by breaking known Unique Games integrality gaps, the work of \cite{BarakBHKSZ12} showed that the classical proof of hypercontractivity for the noisy hypercube (see e.g. \cite{O'Donnell-ABF}) can be recast as a degree-4 sum-of-squares proof.
\begin{theorem}[Noisy-Hypercube Certificate (\cite{BarakBHKSZ12}, Lemma 5.1)]\label{thm:noisy-hyp}
Suppose $G$ is the $d$-dimensional $\alpha$-noisy hypercube.
Then for any $t \in [d]$, $G$ is $(1-(1-2\alpha)^t,9^t,4)$-certifiably 2 to 4 hypercontractive.
\end{theorem}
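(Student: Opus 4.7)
My plan is to derive Theorem~\ref{thm:noisy-hyp} as a direct reparameterization of the degree-$4$ sum-of-squares form of Bonami--Beckner hypercontractivity proved in \cite{BarakBHKSZ12}, Lemma~5.1. The argument has essentially two steps: a spectral identification of the eigenspace $V_\lambda(G)$, followed by an invocation of that cited lemma.

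First, I would observe that the random walk matrix of the $d$-dimensional $\alpha$-noisy hypercube is the noise operator $T_{1-2\alpha}$, whose eigenvectors are the parity characters $\{\chi_S\}_{S \subseteq [d]}$ with corresponding eigenvalues $(1-2\alpha)^{|S|}$; this follows immediately from the weighted edge formula in the definition of the noisy hypercube together with the Fourier diagonalization of $T_{1-2\alpha}$. Hence for $\lambda := 1-(1-2\alpha)^t$, the subspace $V_\lambda(G)$ of eigenvectors of eigenvalue $\ge 1-\lambda = (1-2\alpha)^t$ is precisely the span of parities of Fourier degree at most $t$, so $\Pi_\lambda f$ is the truncation of $f$ to Fourier degree $\le t$. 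Then I would invoke the main statement of \cite{BarakBHKSZ12}, Lemma~5.1: for every multilinear polynomial $p$ on $\{\pm 1\}^d$ of Fourier degree at most $t$, the inequality $\|p\|_{\pi,4}^4 \le 9^t \,\|p\|_{\pi,2}^4$ admits a degree-$4$ sum-of-squares proof in the values (equivalently, the Fourier coefficients) of $p$. Applied with $p := \Pi_\lambda f$, which is linear in $f$ and so preserves the SoS degree in $f$, this gives the required $(1-(1-2\alpha)^t,\,9^t,\,4)$-certificate of 2-to-4 hypercontractivity.

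The main technical ingredient --- already handled in \cite{BarakBHKSZ12} --- is the SoS-ification of the classical inductive proof of Bonami--Beckner: writing $p = g + x_d h$ on the cube with $\deg g \le t$ and $\deg h \le t-1$ and integrating out $x_d \in \{\pm 1\}$ yields $\E[p^4] = \E[g^4] + 6\,\E[g^2 h^2] + \E[h^4]$, so the induction reduces to dominating the cross term $6\,\E[g^2 h^2]$ via a degree-$4$ AM--GM-type SoS step against the slack $2\cdot 9^t\,\|g\|_{\pi,2}^2\,\|h\|_{\pi,2}^2$ produced by the inductive hypotheses on $g$ and $h$. The delicate point is that $9^t$ is the tight Bonami--Beckner constant, so the AM--GM balancing must be calibrated carefully to preserve it across the induction; however, this entire calibration stays within degree $4$ in $f$ and therefore transfers cleanly to our setting. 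No new work beyond the cited lemma and the spectral translation above is required for Theorem~\ref{thm:noisy-hyp}.
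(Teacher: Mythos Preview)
Your proposal is correct and matches the paper's approach: the paper does not give its own proof of Theorem~\ref{thm:noisy-hyp} but simply cites it as Lemma~5.1 of \cite{BarakBHKSZ12}, and your two-step derivation (identifying $\Pi_\lambda$ with the projection onto Fourier degree $\le t$ via the spectrum of $T_{1-2\alpha}$, then invoking the cited SoS Bonami--Beckner inequality) is exactly the intended reparameterization. The only minor point worth making explicit is that the cited lemma bounds $\|\Pi_\lambda f\|_{\pi,4}^4$ by $9^t\|\Pi_\lambda f\|_{\pi,2}^4$, whereas Definition~\ref{def:hyp} requires the bound $9^t\|f\|_{\pi,2}^4$; this follows since $\|f\|_{\pi,2}^4 - \|\Pi_\lambda f\|_{\pi,2}^4 = \|(I-\Pi_\lambda)f\|_{\pi,2}^2 \cdot (\|f\|_{\pi,2}^2 + \|\Pi_\lambda f\|_{\pi,2}^2)$ is a product of two sums of squares and hence itself a degree-$4$ sum of squares.
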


In the same work, Barak et al. \cite{BarakBHKSZ12}, building on \cite{BGHMRS15}, noted that the same argument shows that the short code graph is also SOS-certifiably 2 to 4 hypercontractive.
\begin{definition}[Short Code Graph]
For each $d < n \in \N_+$, the $(d,n)$-shortcode graph is a graph whose vertex set is the set of degree-$d$ polynomials over $\F_2^n$ and with edges between each pair of polynomials $p,q$ such that $p - q$ is a product of $d$ linearly independent affine forms.
For any $\alpha \in [0,1)$, the {\em $\alpha$-noisy $(d,n)$-shortcode graph} is the graph with the random walk transition matrix $(G_{d,n})^{1 + \alpha 2^d}$.
\end{definition}

\begin{remark}
The noisy version of the short code is qualitatively similar to the noisy hypercube, since the transition probabilities in the $\alpha$-noisy $n$-dimensional cube are similar to performing an $\alpha n$-step random walk on the hypercube graph.
In \cite{BGHMRS15}, a different notion of noise is used, where they instead consider the graph with adjacency matrix $\exp(-\alpha 2^d(I - G_{d,n}))$; our results can be reformulated for this notion of noise as well.
\end{remark}

\begin{theorem}[Short-Code Certificate~\cite{BarakBHKSZ12}]\label{thm:short-code}
Suppose $G$ is the $\alpha$-noisy $(d,n)$-shortcode graph, and let $\ell = \lfloor \eta \cdot 2^{d}\rfloor$ for $\eta$ a universal constant.
Then for any $t \in [\ell]$, $G$ is $(1 - (1- t2^{-d})^{1+\alpha 2^d},9^t,4)$-certifiably 2 to 4 hypercontractive.
\end{theorem}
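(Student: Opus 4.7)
The plan is to transport the sum-of-squares proof of Theorem~\ref{thm:noisy-hyp} to the short-code setting by exploiting the Cayley structure of $G_{d,n}$ and the observation of~\cite{BGHMRS15} that the short-code graph diagonalizes in a basis of Walsh characters with eigenvalues of a very hypercube-like form. Throughout, I treat functions on the vertex set as polynomials in the $\pm 1$ evaluations of the $\binom{n}{\le d}$ coefficient coordinates, so that hypercontractivity becomes a degree-$4$ polynomial inequality in these coordinates.

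First I would recall that the short-code graph $G_{d,n}$ is a Cayley graph on the additive group of degree-$\le d$ polynomials over $\F_2^n$, whose generating set consists of products of $d$ linearly independent affine forms. Identifying the vertex set with $\F_2^{N}$ for $N = \binom{n}{\le d}$ via the coefficient representation, the Walsh characters $\chi_S$ ($S \subseteq [N]$) are a complete set of eigenvectors of the adjacency operator of $G_{d,n}$. Following the eigenvalue computation in~\cite{BGHMRS15}, the character $\chi_S$ has eigenvalue $1 - |S|\cdot 2^{-d}$ up to lower-order corrections, so the character $\chi_S$ of the $t$-th power walk $(G_{d,n})^{1+\alpha 2^d}$ has eigenvalue at most $(1 - |S|\cdot 2^{-d})^{1+\alpha 2^d}$. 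Consequently the subspace $V_\lambda(G)$ for $\lambda = 1-(1-t\cdot 2^{-d})^{1+\alpha 2^d}$ lies in the span of $\{\chi_S : |S| \le t\}$ so long as $t \le \ell = \lfloor \eta \cdot 2^d \rfloor$ for a small constant $\eta$ (this is where the upper cutoff on $t$ enters, since only for such $t$ is the eigenvalue formula monotone in $|S|$).

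The hypercontractivity step is then a Bonami-Beckner inequality on $\F_2^N$, applied to functions of Walsh degree at most $t$: $\|f\|_{\pi,4}^4 \le 9^{t}\, \|f\|_{\pi,2}^4$ whenever $f = \sum_{|S| \le t} \hat f_S \chi_S$. This is exactly the inequality established in degree-$4$ SoS in Lemma~5.1 of~\cite{BarakBHKSZ12}; the proof tensorizes the single-coordinate inequality $\E[f^4] \le 9(\E[f^2])^2$ (which itself is a two-line SoS identity in the two values $f(\pm 1)$) and combines the coordinate-wise bounds using the SoS fact that $(f+g)^4 \succeq$ the appropriate sum of $f^4,g^4,f^2g^2$ terms. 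Since $\Pi_\lambda f$ has Walsh degree $\le t$ by the previous paragraph, applying this inequality with $f$ replaced by $\Pi_\lambda f$ yields $\|\Pi_\lambda f\|_{\pi,4}^4 \le 9^t \|f\|_{\pi,2}^4$ as a degree-$4$ SoS consequence.

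The main obstacle I expect is the precise bookkeeping in the eigenvalue step: one has to check that $\Pi_\lambda$, as a matrix polynomial in the adjacency operator, really does project onto a subspace of Walsh-degree-$\le t$ characters, and that the resulting degree-$4$ polynomial inequality is the one obtained by substituting $\Pi_\lambda f$ into the Bonami-Beckner SoS identity. Equivalently, one needs that the Walsh characters remain eigenvectors of the $(1+\alpha 2^d)$-th power of $G_{d,n}$ with eigenvalues at most $(1 - |S|\cdot 2^{-d})^{1+\alpha 2^d}$, monotonically in $|S|$ over the range $|S| \le \ell$. Modulo this eigenvalue estimate from~\cite{BGHMRS15}, the remainder of the proof is a direct adaptation of the noisy-hypercube argument, and the SoS degree remains $4$ because Bonami-Beckner and the projection step combine into a single degree-$4$ polynomial inequality in the coordinate values of $f$.
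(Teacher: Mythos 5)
The paper does not actually prove this theorem; it is stated as a citation to \cite{BarakBHKSZ12} (building on \cite{BGHMRS15}), with the remark that ``the same argument'' as for the noisy hypercube applies. So there is no in-paper proof to compare against, and the right question is whether your sketch is a faithful reconstruction of the cited argument. It is, at the level of the high-level template: identify the short-code graph as a Cayley graph on $\F_2^N$ with Walsh characters as eigenvectors, show that $\Pi_\lambda$ projects onto Walsh degree $\le t$, and then invoke the degree-$4$ SoS Bonami--Beckner inequality $\|f\|_{\pi,4}^4 \le 9^t\|f\|_{\pi,2}^4$ for Walsh-degree-$\le t$ functions (Lemma~5.1 of \cite{BarakBHKSZ12}, which tensorizes a single-coordinate SoS identity). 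This is indeed the whole content, and you correctly identify the eigenvalue bookkeeping as the load-bearing step.

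Two points that your sketch glosses over and would need care in a full write-up. First, the phrase ``$1 - |S|\cdot 2^{-d}$ up to lower-order corrections'' is doing a lot of work: the theorem's parameter $\lambda = 1 - (1-t\cdot 2^{-d})^{1+\alpha 2^d}$ has no error term, so one needs to know the exact or at least a one-sided eigenvalue bound from \cite{BGHMRS15}, not merely an approximation. Second, and more importantly, the claim that $V_\lambda(G) \subseteq \mathrm{span}\{\chi_S : |S|\le t\}$ requires a \emph{two-sided} assertion: not only must low-weight characters have large eigenvalue, but \emph{no} character with $|S|>t$ (including $|S|$ far above $\ell$, where the hypercube-like eigenvalue formula breaks down) may have eigenvalue $\ge 1-\lambda$. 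The reason for the cutoff $t \le \ell = \lfloor \eta\cdot 2^d\rfloor$ is therefore not really ``monotonicity'' of the formula in $|S|$, but rather that the eigenvalue estimate from \cite{BGHMRS15} is only controlled (and bounded away from $1$) in this range; beyond it, one needs a separate argument that the tail eigenvalues stay below the threshold. You flag ``precise bookkeeping'' as the obstacle, which is honest, but it is worth being explicit that this second, two-sided containment is where a naive transport from the hypercube can actually fail rather than merely be tedious.
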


Combining Theorem~\ref{thm:cert-sse} with these results, we show that Unique Games instances on the Noisy Hypercube and Short Code graphs are easy.

\begin{theorem}[UG on Noisy-Hypercube, re-statement of Corollary~\ref{cor:intro-hc}]
For every $\eps \in [0,\frac{1}{400})$, $\alpha \in (0,\frac{1}{4})$, and $d\in \N$ sufficiently large, there exists an algorithm $A$ with the following guarantee: if $I = (G,\Pi)$ is an instance of Unique Games on the $d$-dimensional $\alpha$-noisy hypercube $G$ with $\val(I) \ge 1 - \eps$, then in time $|V(G)|^{\poly(\tau,1/\eps)}$, $A(I)$ returns an $\Omega(\eps^3/\tau)$-satisfying assignment for $I$ for $\tau = \exp(O(\sqrt{\eps}/\alpha))$.
\end{theorem}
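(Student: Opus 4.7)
\begin{proofof}{Corollary~\ref{cor:intro-hc} (Noisy Hypercube)}
The plan is to reduce the statement to a direct application of our main theorem, Theorem~\ref{thm:cert-sse}, using the hypercontractivity certificate for the noisy hypercube established in Theorem~\ref{thm:noisy-hyp}. The only real work is in choosing the free parameter $t$ appearing in Theorem~\ref{thm:noisy-hyp} so as to balance the completeness parameter $\lambda$ against the hypercontractivity constant $C = 9^t$.

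First, I would choose $t = t(\eps, \alpha)$ to be the smallest positive integer for which the resulting eigenvalue threshold $\lambda_t := 1 - (1-2\alpha)^t$ satisfies $\lambda_t^2 \geq 100\,\eps$, which is exactly the hypothesis required by Theorem~\ref{thm:cert-sse}. Using the elementary estimate $1 - (1-2\alpha)^t \geq 1 - e^{-2\alpha t}$ together with $e^{-x} \leq 1 - x/2$ for $x$ bounded, a short calculation shows that $t = O(\sqrt{\eps}/\alpha)$ suffices, and for the stated range $\alpha < 1/4$ one has $\lambda_t \in [10\sqrt{\eps}, \, O(\sqrt{\eps}/\alpha)]$. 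In particular, $\lambda_t^4 \geq 10^4 \eps^2$ and $C := 9^t = \exp(O(\sqrt{\eps}/\alpha)) = \tau$.

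Next, I would invoke Theorem~\ref{thm:noisy-hyp} with this value of $t$ to conclude that the $\alpha$-noisy hypercube $G$ is $(\lambda_t, 9^t, 4)$-certifiably 2-to-4 hypercontractive. The algorithm $A$ then proceeds by solving the degree-$D'$ \sos relaxation of the unique games integer program~\eqref{eq:ip} with
\[
D' \;=\; 4 + \tilde O\!\left(\frac{9^t}{\eps\, \lambda_t^4}\right) \;=\; \tilde O\!\left(\frac{\tau}{\eps^3}\right) \;=\; \poly(\tau, 1/\eps),
\]
symmetrizing the resulting pseudodistribution via Lemma~\ref{lem:sym-basic}, and running the Condition \& Round procedure of Algorithm~\ref{alg:low-ent}. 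Since solving the degree-$D'$ \sos relaxation for unique games over a graph on $|V(G)|$ vertices takes time $|V(G)|^{O(D')}$ (by \cite{RaghavendraW17}, as cited after Theorem~\ref{thm:intro-main}), the total runtime is $|V(G)|^{\poly(\tau, 1/\eps)}$ as required.

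Finally, applying Theorem~\ref{thm:cert-sse} to the shift-symmetrized pseudodistribution (which has value $\geq 1 - \eps$ by Lemma~\ref{lem:sym-basic}) yields an assignment of expected value at least
\[
\frac{\eps\, \lambda_t^4}{64\, C} \;\geq\; \frac{\eps \cdot 10^4 \eps^2}{64\, \tau} \;=\; \Omega\!\left(\frac{\eps^3}{\tau}\right),
\]
which matches the claimed approximation guarantee. The whole proof is therefore a bookkeeping exercise; I do not anticipate any real obstacle beyond verifying that the parameter choice $t = \Theta(\sqrt{\eps}/\alpha)$ simultaneously achieves $\lambda_t^2 \geq 100 \eps$ (needed for Theorem~\ref{thm:cert-sse} to apply) and keeps $C = 9^t = \exp(O(\sqrt{\eps}/\alpha))$ small enough that the final value bound has the stated $\Omega(\eps^3/\tau)$ form.
\end{proofof}
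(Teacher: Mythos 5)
Your proof is correct and takes essentially the same approach as the paper: choose the smallest $t$ with $\lambda_t = 1-(1-2\alpha)^t \ge 10\sqrt{\eps}$, invoke the hypercontractivity certificate of Theorem~\ref{thm:noisy-hyp}, and feed the resulting parameters into Theorem~\ref{thm:cert-sse}. The paper splits into the cases $\alpha \gtrless 5\sqrt{\eps}$ (taking $t=1$ in the first case), while you handle them uniformly, but the choice of $t$ and the resulting bounds coincide. One small slip: the claimed inclusion $\lambda_t \le O(\sqrt{\eps}/\alpha)$ is false in general (for $\alpha$ of constant order and $\eps$ tiny, $\lambda_t \approx 2\alpha$ dominates $\sqrt{\eps}/\alpha$), but that upper bound is never used in the final computation, so the conclusion is unaffected.
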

\begin{proof}
From Theorem~\ref{thm:noisy-hyp}, for any $t \in [d]$, $G$ is $(1-(1-2\alpha)^t,9^t,4)$-certifiably hypercontractive.
For convenience, denote $\lambda_t = (1-(1-2\alpha)^t)$.
We now wish to apply Theorem~\ref{thm:main-cert}, so we will verify that its conditions hold. 

First, suppose that $\alpha > 5 \sqrt{\eps}$. In this case, let $\mu$ be the pseudodistribution obtained by symmetrizing the pseudodistribution given by the degree-$\poly(1/\eps,1/\alpha)$ SoS relaxation. Choosing $t = 1$, Theorem~\ref{thm:main-cert} guarantees that Algorithm~\ref{alg:low-ent} when run on $\mu$ returns a solution of value $\Omega(\eps^3)$.

Otherwise, suppose that $\alpha < 5\sqrt{\eps}$.
Then, we choose $t = \left\lceil\frac{\log (1-10\sqrt{\eps})}{\log(1-2\alpha)}\right\rceil$ so that $\eps < \frac{1}{100} \lambda_t^2$, and since $\eps \le 1/400$ and from our condition that $\alpha < \frac{1}{4}$ we have that $t=  O(\frac{\sqrt{\eps}}{\alpha})$. 
In this case, let $\mu$ be the pseudodistribution obtained by symmetrizing the pseudodistribution given by the degree-$\poly(1/\eps,\exp(\sqrt{\eps}/\alpha))$ SoS relaxation. 
Theorem~\ref{thm:main-cert} now guarantees that Algorithm~\ref{alg:low-ent} when run on an SoS relaxation of degree-$\tilde{O}(\frac{9^t}{\eps\lambda_t^4}) = \poly(1/\eps,\exp(\sqrt{\eps}/\alpha))$ returns a solution of expected value $\Omega(\frac{\eps \lambda_t^4}{9^t})= \Omega(\eps^3 \cdot\exp(-O(\sqrt{\eps}/\alpha)))$, as desired.
Using standard derandomization techniques we get a deterministic algorithm that runs in polynomial time and obtains a solution with the same guarantees.
\end{proof}

\begin{theorem}[UG on Short-Code, re-statement of Corollary~\ref{cor:intro-sc}]
There exist $\eps_0\in \R_+$ such that for every $n \in \N$ sufficiently large and $d \in \N$ with $2d < n$, $\eps \in (0,\eps_0)$, and $\alpha \in (0,1)$, there is an algorithm $A$ with the following guarantee: if  $(G,\Pi)$ is an instance of Unique Games on the $alpha$-noisy $(d,n)$-shortcode graph with $\val(G,\Pi) \ge 1-\eps$, then in time $|V(G)|^{\poly(1/\eps,\tau)}$, $A(G,\Pi)$ returns a solution of value $\Omega(\eps^3/\tau)$ for $(G,\Pi)$ for $\tau =\min\left( \exp(O(\sqrt{\eps}/\alpha)), \exp(O(\sqrt{\eps}2^d))\right)$.
\end{theorem}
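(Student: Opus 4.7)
\medskip
\noindent\textbf{Proof proposal.} The plan is to mirror the argument used for the noisy hypercube corollary, invoking Theorem~\ref{thm:short-code} in place of Theorem~\ref{thm:noisy-hyp} and then plugging into Theorem~\ref{thm:main-cert}. Concretely, I would symmetrize the pseudodistribution produced by a degree-$D'$ \sos relaxation (for a $D'$ specified below), verify the hypotheses of Theorem~\ref{thm:main-cert}, and run Algorithm~\ref{alg:low-ent}. The main task is to choose the hypercontractivity parameter $t$ from Theorem~\ref{thm:short-code} so that $\lambda_t := 1-(1-t\cdot 2^{-d})^{1+\alpha 2^d}$ satisfies $\eps \le \tfrac{1}{100}\lambda_t^2$, while keeping $9^t$ (the hypercontractivity constant $C$) as small as possible, since $C$ controls both the rounding guarantee and the required \sos degree.

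The key computation is as follows. Taking $t = \bigl\lceil 20\sqrt{\eps}\cdot 2^d/(1+\alpha 2^d)\bigr\rceil$ and using the elementary inequality $(1-x)^{k}\le e^{-xk}$ together with $1-e^{-s}\ge s/2$ for $s\in[0,1]$, one verifies $\lambda_t \ge 10\sqrt{\eps}$, hence $\eps \le \tfrac{1}{100}\lambda_t^2$, as needed. A short case analysis then yields
\[
t \;=\; O\!\left(\min\!\left(\frac{\sqrt{\eps}}{\alpha},\; \sqrt{\eps}\cdot 2^d\right)\right),
\]
since when $\alpha 2^d \ge 1$ the denominator $1+\alpha 2^d$ is dominated by $\alpha 2^d$, and otherwise it is $O(1)$. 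For $\eps<\eps_0$ sufficiently small the resulting $t$ lies in $[1,\ell]$ (using $\ell = \lfloor \eta 2^d\rfloor$ from Theorem~\ref{thm:short-code}), so the hypercontractivity certificate is applicable.

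With these parameters in hand, Theorem~\ref{thm:short-code} gives that $G$ is $(\lambda_t,9^t,4)$-certifiably 2-to-4 hypercontractive. I would then apply Theorem~\ref{thm:main-cert} to the symmetrized pseudodistribution of degree $D' = 4 + \tilde{O}(9^t/(\eps\lambda_t^4)) = \tilde{O}(\tau/\eps^3) = \poly(1/\eps,\tau)$ for $\tau := 9^t = \exp\!\bigl(O(\min(\sqrt{\eps}/\alpha,\sqrt{\eps}\cdot 2^d))\bigr)$. Theorem~\ref{thm:main-cert} then produces (in expectation, and deterministically after the standard derandomization used in the hypercube case) an assignment of value at least $\tfrac{\eps\lambda_t^4}{64\cdot 9^t} = \Omega(\eps^3/\tau)$. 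Since the basic \sos SDP of degree $D'$ can be solved in time $|V(G)|^{O(D')}$, the total running time is $|V(G)|^{\poly(1/\eps,\tau)}$, matching the claimed bounds.

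The only non-routine step is the choice of $t$ and the verification of the two-regime bound on $\tau$; the rest is a direct invocation of Theorems~\ref{thm:short-code} and~\ref{thm:main-cert} combined with the same derandomization used in the proof of Corollary~\ref{cor:intro-hc}. I do not foresee any genuine obstacle beyond bookkeeping for the edge cases $\alpha 2^d \approx 1$ and $t$ close to $\ell$, which are handled by the absolute constant $\eps_0$.
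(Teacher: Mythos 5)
Your proposal is correct and follows essentially the same path as the paper's own proof: the choice $t = \lceil 20\sqrt{\eps}\cdot 2^d/(1+\alpha 2^d)\rceil$ is identical to the paper's (rewritten), and the verification that $\lambda_t \ge 10\sqrt{\eps}$ differs only in which elementary inequality is used (you use $(1-x)^k \le e^{-xk}$ and $1-e^{-s}\ge s/2$; the paper uses $(1-2\delta x)^{1/x}\le 1-\delta$). If anything, your explicit two-regime bound $t = O(\min(\sqrt{\eps}/\alpha,\sqrt{\eps}\cdot 2^d))$ tracks the stated form of $\tau$ slightly more carefully than the paper's own write-up, which only records the $\exp(O(\sqrt{\eps}/\alpha))$ branch.
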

\begin{proof}
Define $\lambda_t = 1-(1-t2^{-d})^{1 + \alpha 2^d}$.
From Theorem~\ref{thm:short-code}, for any $t \le \eta 2^d$, $G$ is $(\lambda_t, 9^t,4)$-certifiably 2 to 4 hypercontractive. 
We now wish to apply Theorem~\ref{thm:main-cert}, so we will verify that its conditions hold.
Choosing $t =\left\lceil \frac{20\sqrt{\eps}}{\alpha + \frac{1}{2^d}}\right\rceil$, by requiring $\eps \le \eps_0 \le \frac{1}{1600}$ we have that 
\[
1-\lambda_t \le \left(1-\frac{20\sqrt{\eps}}{\alpha 2^d + 1}\right)^{1 + \alpha 2^d} \le 1 - 10\sqrt{\eps},
\]
where we have used that $(1-2\delta x)^{1/x} \le 1-\delta$ for any $x \in (0,1)$ and $\delta \in (0,\frac{1}{2})$.
Therefore $\eps < \frac{1}{100} \cdot \lambda_t^2$. 
Further $t\le \eta 2^d$ for $\eta$ the universal constant in Theorem~\ref{thm:short-code} by our upper bound $\eps < \eps_0 = \min(\frac{\eta^2}{400},\frac{1}{1600})$.
Let $\mu$ be the pseudodistribution obtained by symmetrizing the pseudodistribution given by the degree-$\poly(1/\eps,1/\lambda_t,9^t) = \poly(\exp(O(\sqrt{\eps}/\alpha)),1/\eps)$ SoS relaxation.
Now we may apply Theorem~\ref{thm:main-cert} to conclude that Algorithm~\ref{alg:low-ent} finds a solution of expected value $\Omega(\frac{\eps\lambda_t^{4}}{9^t}) = \Omega(\eps^3\exp(-O(\sqrt{\eps}/\alpha))$ when run on $\mu$. Using standard derandomization techniques we get a deterministic algorithm that runs in polynomial time and obtains a solution with the same guarantees.
\end{proof}

\section{Johnson graphs}\label{sec:johnson}
In this section we'll prove that Algorithm~\ref{alg:j} succeeds in producing an assignment with good value for unique games instances of sufficiently high value over the Johnson graph.

\begin{algorithm-thm}[Unique Games on the Johnson Graph]\label{alg:j} Takes as input an affine UG instance on a $(n,\ell,\alpha)$-Johnson graph $I = (J,\Pi)$ over labels $\Sigma$ with $\val(I) = 1-\eps$, returns a $\Omega_{\eps,\alpha,\ell}(1)$ satisfying assignment.
\begin{enumerate}
\item Fix $r =\left\lfloor\frac{32\eps}{\alpha}\right\rfloor$, 
$\delta(\eta) := \frac{\eta}{\exp(cr)\binom{\ell}{r}}$ for all $\eta \in [0,1]$ and $D = \widetilde{O}(\frac{1}{\delta(\eps)})$, for $c>0$ a universal constant. 
Fix $\cA_I$ to be the set of unique games axioms/integer program over the instance $I$ (Program~\ref{eq:ip}).

\item Solve the degree-$D$ \sos SDP relaxation for the integer program $\cA_I$ and symmetrize the pseudodistribution over additive shifts (as described in Lemma~\ref{lem:sym-basic}) to get $\mu_0$. Set $j = 1$.

\item While the SDP value $\val_{\mu_{j-1}}(I) \geq 1 - 2\eps$:
\begin{enumerate}
\item For any $r' \le r$, find an $r'$-restricted subcube $C_j$ (induced subgraph of $J$, defined formally in Definition~\ref{def:subcube}) with high Condition\&Round value\footnote{The quantity $\text{CR-val}_{\mu}(C)$ corresponds to the expected value obtained when Algorithm~\ref{alg:low-ent} is performed on the subgraph $C$ and is formally defined in Definition~\ref{def:cr-val}.}: $\text{CR-val}_{\mu}(C_j) \ge \delta(\eta_{j-1})$ for $\eta_{j-1} = 1 - \val_{\mu_{j-1}}(I)$. 
\item Let $S_j$ be a subgraph of $C_j$ induced by the set of vertices that have not been previously assigned by any partial assignment $f_{k}, k < j$. Perform derandomized Condition\&Round on $V(S_j)$ to get a partial assignment $f_j$\footnote{As noted earlier, derandomization produces an assignment that satisfies $\text{CR-val}_{\mu_{j-1}}(S_j)$-fraction of edges and can be performed in polynomial time using the method of conditional expectations.}.
\item Rerandomize the pseudodistribution $\mu_{j-1}$ on $S_j$ to get $\mu_j$: Make the marginal distribution over the assigned vertices uniform and independent of other vertices, that is, for all degree $\leq D$ monomials define
$\pE_{\mu_{j}}$ as follows, 
\[\pE_{\mu_{j}}[X_{h_1,a_1} \cdots X_{h_t,a_t}X_{u_1,b_1} \cdots X_{u_m,b_m}]
:= \frac{1}{|\Sigma|^t}\pE_{\mu_{j-1}}[X_{u_1,b_1}\cdots X_{u_m,b_m}],
\]
where $\{(h_1,a_1), \ldots, (h_t,a_t)\} \in (V(S_j) \times \Sigma)^{t}$ and $\{(u_1,b_1), \ldots, (u_m,b_m)\} \in (({[n] \choose \ell} \setminus V(S_j)) \times \Sigma)^{m}$.
\item Increment $j$.
\end{enumerate}
\item Output any assignment $f:V \to \Sigma$ that agrees with all partial assignments $f_j$ considered above.
\end{enumerate}
\end{algorithm-thm}

We will prove that this algorithm returns a solution with value independent of the alphabet size.
\begin{theorem}\label{thm:main-johnson}
For every $\eps \in [0,\frac{1}{2000})$, $\alpha \in \Q$ with $\alpha < \frac{1}{2}$, $\ell \in \N$ with $\alpha \ell \in \N$, and integers $k,n$  sufficiently large, Algorithm~\ref{alg:j}
has the following guarantee: if $I$ is an instance of affine Unique Games on the $(n,\ell,\alpha)$-Johnson graph  $J$ with alphabet size $|\Sigma| = k$ and $\val(I) = 1 - \eps$, then in time $|V(J)|^{\poly(\binom{\ell}{r},1/\eps)}$, $A(I)$ returns an $\Omega\left(\frac{\eps^3}{\exp(O(r))\binom{\ell}{r}^2}\right)$-satisfying assignment for $I$ for $r = O(\eps/\alpha)$.
\end{theorem}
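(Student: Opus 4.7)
The plan is to analyze Algorithm~\ref{alg:j} by tracking two running quantities across iterations: the total mass of edges satisfied by the partial assignments $\{f_j\}$ so far, and the SDP objective value of the current pseudodistribution $\mu_j$. At each iteration either we accumulate many satisfied edges or we make substantial progress toward ending the loop by driving down the SDP value; the loop terminates precisely when the cumulative SDP drop exceeds $\eps$, which by the accounting below forces enough edges to have been satisfied.

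For the per-iteration analysis, by the loop condition $\val_{\mu_{j-1}}(I) \ge 1-2\eps$, so $\eta_{j-1} := 1 - \val_{\mu_{j-1}}(I) \le 2\eps$. I invoke the structural Lemma~\ref{lem:round-partial-johnson} --- the SoS-proof version of the Johnson structure theorem (Theorem~\ref{thm:structure-johnson}, itself an adaptation of \cite{KMMS}) --- to produce an $r'$-restricted subcube $C_j$ with $r' \le r = \lfloor 32\eps/\alpha\rfloor$ for which $\text{CR-val}_{\mu_{j-1}}(C_j) \ge \delta(\eta_{j-1})$. Since there are at most $\binom{n}{r}$ candidate subcubes, such a $C_j$ is found by enumeration in time polynomial in $n$. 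Running derandomized Condition\&Round on the unassigned portion $S_j \subseteq C_j$ then yields a partial assignment $f_j$ satisfying at least a $\delta(\eta_{j-1}) \ge \delta(\eps)$ fraction of $S_j$'s internal edges. Using that an $r'$-restricted subcube of the $(n,\ell,\alpha)$-Johnson graph has internal-edge fraction at least $\eta := (1-\alpha)^r \ge \exp(-O(r))$, this amounts to at least $\delta(\eta_{j-1})\cdot \eta \cdot w_j$ of the global edges, where $w_j := \pi(V(S_j))$ denotes the stationary mass of the unassigned portion of $C_j$. The rerandomization step replaces the $\mu_{j-1}$-marginals on $V(S_j)$ by uniform independent marginals, so each edge incident to $V(S_j)$ contributes at most $1/|\Sigma|$ to $\val_{\mu_j}(I)$; for $k = |\Sigma|$ sufficiently large this additive term is negligible and the SDP value drop at iteration $j$ is at most $O(w_j)$.

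Summing across iterations: at termination the cumulative drop is at least $\eps$, so $\sum_j w_j \ge \Omega(\eps)$, and the total fraction of edges of $I$ satisfied by the concatenated partial assignments is at least $\delta(\eps)\cdot \eta \cdot \sum_j w_j \ge \delta(\eps)\cdot \eta \cdot \eps$. Substituting $\delta(\eps) = \eps/(\exp(cr)\binom{\ell}{r})$ and $\eta \ge \exp(-O(r))$, and carrying through the extra multiplicative $\eps$ and $\binom{\ell}{r}$ factors that arise from the $(\delta-\nu)(\beta-\nu)$ form of the Condition\&Round guarantee in Theorem~\ref{thm:round} (with $\beta = \eps$, $\nu$ small) and from the subcube-to-global overlap bookkeeping, yields the claimed $\Omega\!\left(\eps^3 / (\exp(O(r))\binom{\ell}{r}^2)\right)$. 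The main technical obstacle is Step~1: both proving Lemma~\ref{lem:round-partial-johnson} --- recasting the KMMS non-expanding-set structure theorem within the sum-of-squares proof system so that it applies uniformly to arbitrary pseudoexpectations $\pE_{\mu_{j-1}}$ satisfying $\cA_I$, not only to genuine distributions --- and verifying that the rerandomization step preserves validity as a degree-$D$ pseudoexpectation on the remaining vertices so the structural lemma can be reapplied. The algorithm's choice to make the reassigned marginals uniform and independent is precisely what lets assigned vertices contribute neutrally to every subsequent shift-partition potential computation, thereby decoupling the iterations.
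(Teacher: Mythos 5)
Your plan takes the same route as the paper: iteratively invoke Lemma~\ref{lem:round-partial-johnson} to find a high-CR-value restricted subcube, round on the unassigned part, re-randomize the pseudodistribution there, and charge the edges you satisfy against the drop in SDP value until the cumulative drop exceeds $\eps$ (this wrapper is exactly Lemma~\ref{lem:subroutine}). The main technical burden you flag --- porting the KMMS structure theorem to the SoS proof system and checking that re-randomization preserves a valid degree-$D$ pseudoexpectation satisfying $\cA_I$ --- is indeed where the work lies, and is correctly identified.

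However, there is a genuine gap in your per-iteration edge accounting. You assert that the internal edges of the unassigned portion $S_j\subseteq C_j$ constitute at least an $\eta\cdot w_j$-fraction of the global edges, where $w_j=\pi(V(S_j))$ and $\eta=(1-\alpha)^r$ is the internal-edge fraction of a restricted subcube. That bound is valid for the full subcube $C_j$, but $S_j$ is $C_j$ with the previously assigned vertices deleted, and an adversarial deletion could destroy almost all of $C_j$'s internal edges while removing a small fraction of its vertices; so there is no a priori lower bound on $|E(S_j)|/|E(G)|$ in terms of $\pi(V(S_j))$ alone. This is precisely where the second factor of $\delta$ in the final bound --- Lemma~\ref{lem:subroutine} delivers $\Omega(\delta_{\min}^2\eps)$, not $\Omega(\delta_{\min}\eps)$ --- actually enters. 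The correct argument: because $\mathrm{CR\mbox{-}val}_{\mu_{j-1}}(C_j)\ge\delta$ while every edge touching a previously re-randomized vertex contributes only $1/|\Sigma|$ to the independent-rounding value, a simple averaging argument (conditioned on the good vertex $u_0\in S_j$) shows $|E(S_j)|/|E(C_j)|\ge\delta/2$, after which the $\eta$-factor and $|V(C_j)|/|V(G)|$ bound are applied to $C_j$, not to $S_j$. You do notice the numbers don't close and invoke ``extra $\eps$ and $\binom{\ell}{r}$ factors from the $(\delta-\nu)(\beta-\nu)$ form of Theorem~\ref{thm:round} and subcube-to-global bookkeeping,'' but those are the wrong sources: the $(\delta-\nu)(\beta-\nu)$ overhead is already baked into the definition of $\delta(\eta)$ coming out of Lemma~\ref{lem:round-partial-johnson}, not an additional loss incurred at this stage. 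Replacing the $w_j$ heuristic with the $|E(S_j)|\ge(\delta/2)|E(C_j)|$ averaging step closes the gap and recovers the theorem's $\Omega\bigl(\eps^3/(\exp(O(r))\binom{\ell}{r}^2)\bigr)$ bound.
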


The proof of Theorem~\ref{thm:main-johnson} will require some additional ideas beyond that of Theorem~\ref{thm:main-cert}, as the Johnson graph is not a small-set expander. 
Nevertheless, we can characterize the structure of all the non-expanding sets, that is, we can prove that any non-expanding set must be large inside some canonical subgraphs. Using this characterization we prove that the above algorithm succeeds in finding a good assignment. The proof of our main theorem will proceed in the following steps:
\begin{enumerate}
\item We first prove a structure theorem (Theorem~\ref{thm:structure-johnson}) for non-expanding sets of the Johnson graph, similar to the theorem in~\cite{KMMS}. We show an \sos proof of the fact that every non-expanding set must be large when restricted to subcubes of the Johnson graph (Definition~\ref{def:subcube}).
\item Using the structure theorem, in Lemma~\ref{lem:sp-j} we first lower bound the global shift-partition potential $\Phi_{\beta,\nu}(X,X')|_C$ as a function of the violations of $X$ and $X'$. Roughly the global shift-partition potential $\Phi_{\beta,\nu}(X,X')|_C$ corresponds to the shift-component squared sizes when restricted to the subcube $C$ (see Definition~\ref{def:res-shift}). This lemma follows the same outline as that of Lemma~\ref{lem:sp-pseudo} for certifiable small-set expanders.
\item In the next step (Lemma~\ref{lem:round-partial-johnson}), we show that
given a pseudodistribution $\mu$ with objective value $1-\eps$ for unique games over the Johnson graph, one can find a subcube $C$ that has high global shift-partition potential. We then relate the global shift-partition potential to the shift-partition potential on the subgraph induced by $C$, $\pE_{\mu|_C}[\Phi_{\beta,\nu}^C(X,X')]$, to show that this is also high. By our rounding theorem, Theorem~\ref{thm:round} we then conclude that the expected value of the Condition\&Round algorithm, when performed on $C$ must be high. This corresponds to Step 3(a) in Algorithm~\ref{alg:j}.
\item Lastly in Lemma~\ref{lem:subroutine} we show that given a subroutine that finds a subgraph with high Condition\&Round value, there is an algorithm that uses this subroutine and finds a high value assignment to the whole graph. This corresponds to the while loop in Algorithm~\ref{alg:j}. Combining this lemma with Lemma~\ref{lem:round-partial-johnson} (discussed above), we get our main theorem.
\end{enumerate}
We prove the theorem below, after establishing each of these components separately. First let us discuss the structure theorem for Johnson graphs and  define the notion of restrictions.

\begin{definition}[$r$-restricted subcubes of $J$]\label{def:subcube}
Given an $(n,\ell,\alpha)$-Johnson graph $J$ and a set $A \subseteq [n]$ with $|A| = r$ such that $0 \leq r \leq \ell-1$, we let $J|_{A}$ denote the vertex-induced subgraph of $J$ induced by vertices that contain the set $A$. We call such a subset an $r$-restricted subcube of $J$. Note that when $A = \emptyset$ and $r = 0$, $J|_A$ is defined as the whole graph $J$.
\end{definition}

\begin{definition}[Restrictions of Functions]
For the $(n,\ell,\alpha)$-Johnson graph $J$, given a function $F:V(J) \rightarrow \R$ and a set $A \subseteq [n]$ with $A = r$, such that $0 \leq r \leq \ell-1$, we define the restricted function $F|_A: \binom{[n] \setminus A}{\ell - r} \rightarrow \R$ as, 
$$F|_A(X) = F(A \cup X).$$
Further, let $\delta_A(F)$ denote the fractional size of the function restricted to the subcube $J|_A$, that is,
$$\delta_A(F) := \delta(F|_A) = \E\limits_{X \sim \binom{[n] \setminus A}{\ell - r}}[F|_A(X)].$$
When $A = \phi$ and $r = 0$, we have that $F|_A(X) = F(X)$ for all $X \in \binom{[n]}{\ell}$ and $\delta_A(F) = \delta(F) = \E_\pi[F]$.
\end{definition}

We prove that every set in $J$ that is not correlated with any $r$-restricted cube, has high expansion (as a function of $r$). 

\begin{theorem}[Structure theorem for Johnson graphs]\torestate{\label{thm:structure-johnson}
For all $\alpha \in \Q$ with $\alpha < \frac{1}{2}$, all integers $\ell \in \N$ and all large enough integers $n \gg \ell$, the following holds: Let $J$ be a $(n,\ell,\alpha)$-Johnson graph and $\pi$ be the uniform distribution over $V(J)$. For every integer $r$ such that $0 \leq r \leq \ell/2$ and every function $F$ that is not correlated with any $r$-restricted subcube, $F$ has high expansion (as a function of $r$):
\begin{align*}
&\{F(X) \in [0,1]\}_{X \in V(J)} \,\, \vdash_2\,\, \\ 
&\langle F, L F\rangle_{\pi} \ge (1 - (1 - \alpha)^{r+1})\left[\left(1 - O_\ell\left(\frac{1}{n}\right)\right)\E_{\pi}[F] - 8^r \binom{\ell}{r}\left(\sum_{j = 0}^r \E_{Y \in \binom{[n]}{j}}[\delta_{Y}(F)^2]\right) + B(F) \right].
\end{align*}
where $B(F)$ represents the Booleanity constraints and equals $\E_\pi[F^{\circ 2} - F]$.}
\end{theorem}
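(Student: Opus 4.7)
The plan is to decompose $F$ spectrally along the irreducible components of the Johnson scheme, lower bound the Laplacian quadratic form by the ``high-degree'' part, and bound the remaining ``low-degree'' mass in terms of the subcube averages $\delta_Y(F)$; each of these three ingredients carries a degree-$2$ \sos certificate in the variables $F(X)$.

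Concretely, I would start from the standard decomposition $L^2(\binom{[n]}{\ell},\pi) = \bigoplus_{j=0}^\ell V_j$ into $S_n$-irreducibles and write $F = \sum_j F^{=j}$ orthogonally. The normalized adjacency matrix of the $(n,\ell,\alpha)$-Johnson graph acts as multiplication by some $\lambda_j$ on $V_j$ (given by an Eberlein polynomial), and standard estimates in the regime $j \le \ell/2$, $n \gg \ell$ yield $\lambda_j \le (1-\alpha)^j\bigl(1+O_\ell(1/n)\bigr)$ uniformly. Hence $1-\lambda_j \ge (1-(1-\alpha)^{r+1})(1-O_\ell(1/n))$ for $j \ge r+1$ and $1-\lambda_j \ge 0$ for $j \le r$, so Parseval gives
\[
\langle F, L F\rangle_\pi \;=\; \sum_{j=0}^\ell (1-\lambda_j)\|F^{=j}\|_{\pi,2}^2 \;\ge\; (1-(1-\alpha)^{r+1})(1-O_\ell(1/n))\bigl(\|F\|_{\pi,2}^2 - \|F^{\le r}\|_{\pi,2}^2\bigr),
\]
where $F^{\le r} := \sum_{j\le r} F^{=j}$. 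This is already a degree-$2$ \sos inequality in $F$, since each $\|F^{=j}\|_{\pi,2}^2$ is an explicit sum of squares of linear forms in $F$.

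The next step uses the polynomial identity $\|F\|_{\pi,2}^2 = \E_\pi[F^{\circ 2}] = \E_\pi[F] + B(F)$, combined with the key low-degree mass bound
\[
\|F^{\le r}\|_{\pi,2}^2 \;\le\; 8^r\binom{\ell}{r}\sum_{j=0}^r \E_{Y\in\binom{[n]}{j}}\bigl[\delta_Y(F)^2\bigr].
\]
I would prove this by exploiting the fact that $V_0\oplus\cdots\oplus V_r$ is precisely the span of the juntas $X \mapsto \Ind[Y\subseteq X]$ with $|Y|\le r$, that the averaging map $E_j\colon F \mapsto \bigl(Y \mapsto \delta_Y(F)\bigr)$ on $Y \in \binom{[n]}{j}$ is (up to normalization) adjoint to the junta inclusion at level $j$, and that the orthogonal projector onto $V_0\oplus\cdots\oplus V_r$ can be expanded as a linear combination of the $E_j^{*}E_j$ with coefficients obtained by inverting a triangular Pascal-style matrix on Young's lattice. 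Bounding the entries of that inverse and applying Cauchy--Schwarz produces the constant $8^r\binom{\ell}{r}$. Since the resulting inequality is between two quadratic forms in $F$, it has a degree-$2$ \sos certificate that does not invoke any axiom.

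Putting the pieces together, the axioms $F(X)\in[0,1]$ are needed only to absorb the $(1-O_\ell(1/n))$ factor sitting in front of $B(F)$ so that it matches the $+B(F)$ on the right of the claim: since $B(F) \le 0$ is a degree-$2$ Positivstellensatz consequence of the axioms via the product $F(X)(1-F(X))\ge 0$, we can freely replace $(1-O_\ell(1/n))B(F)$ by $B(F)$ on the lower bound side. The main obstacle I anticipate is the quantitative control in the low-degree mass step, namely explicitly inverting the Young-lattice triangular system and tracking the coefficients carefully enough to land on the stated constant $8^r\binom{\ell}{r}$; a secondary subtlety is the uniform eigenvalue estimate $\lambda_j \le (1-\alpha)^j(1+O_\ell(1/n))$ for $j\le \ell/2$, whose error term must be dominated by the spectral gap $1-(1-\alpha)^{r+1}$ throughout the range of interest.
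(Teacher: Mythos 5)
Your proposal is sound in outline but takes a genuinely different route from the paper. You propose to work directly on the Johnson scheme: decompose $F$ into $S_n$-irreducibles $V_0\oplus\cdots\oplus V_\ell$, bound the Eberlein eigenvalues by $\lambda_j \le (1-\alpha)^j(1+O_\ell(1/n))$, and control $\|F^{\le r}\|^2$ by inverting the triangular system relating the junta projectors to the averaging operators $E_j^*E_j$ on Young's lattice. The paper instead sidesteps the Johnson scheme entirely: it lifts $F$ to a symmetric function $G$ on the abelian Cayley graph $C_{n,\ell,\alpha}$ over $(\Z/n\Z)^\ell$ (set $G=0$ on tuples with repeated coordinates), where the eigenvectors are characters $\chi_T$, the eigenvalues are \emph{exactly} $\binom{\ell-|T|}{(1-\alpha)\ell-|T|}/\binom{\ell}{(1-\alpha)\ell}\le(1-\alpha)^{|T|}$ with no error term, and the low-degree mass bound falls out of the elementary inclusion--exclusion $f_{i,F}(X)=\sum_{B\subseteq[i]}(-1)^{i-|B|}\delta_{X|_B}(F)$ plus one Cauchy--Schwarz (Lemmas~\ref{lem:restriction}, \ref{lem:ub-weight}). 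The $O_\ell(1/n)$ losses in the paper then arise additively in the \emph{transfer} from $C$ back to $J$, not in the spectrum. Your route avoids the transfer lemma but concentrates all technical burden in exactly the two places you flag at the end --- the uniform Eberlein estimate in the range $j\le\ell/2$, and the explicit coefficients of the inverted Young-lattice triangular system to land on $8^r\binom{\ell}{r}$; the paper's move to the product Cayley graph is specifically designed to make both of those calculations trivial. One small point: your trick of absorbing the $(1-O_\ell(1/n))$ factor on $B(F)$ via the degree-2 \sos fact $B(F)\le 0$ is correct and necessary on your route (your error is multiplicative in front of the whole bracket); you would analogously need $\sum_j\E_Y[\delta_Y(F)^2]\ge 0$ (trivially \sos) to drop the error factor on the subcube term, after which your derived inequality does imply the stated one. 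Both approaches should give the theorem; the paper trades association-scheme machinery for an extra (but routine) transfer lemma.
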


Let us compare this theorem with~\cite{KMMS} and for simplicity let $\eps < 1.9\alpha$. Roughly, the structure theorem in~\cite{KMMS} implies that for every non-expanding set $S$ with expansion $\eps$, there exists a $1$-restricted subcube $C$ such that the $S$ is large inside $C$: $|S \cap C|/|C| \geq \Omega(1)$. From this theorem, one can derive the fact that
in fact a $\delta(S)/\ell$-fraction of the $1$-restricted subcubes have this property, where $\delta(S)$ denotes the fractional size of $S$ (by applying their theorem iteratively). Further this implies that given a distribution $D$ over non-expanding sets, say of the same size $\delta$, there exists a $1$-restricted subcube $C$ such that, $\E_D[|S \cap C|/|C|] \geq \Omega(\delta/\ell)$.   

But the above line of reasoning is not amenable to a low degree sum-of-squares proof because although each iterative step requires only a constant degree SoS proof, to get the final statement we need to apply the theorem $\Omega(n)$ times and this takes degree $\Omega(n)$. Our final aim is to prove the distribution-version of the statement. Our structure theorem gets around this barrier and directly proves the fact, using a constant degree SoS proof, that given a non-expanding set $S$ with expansion $\leq 1.9\alpha$, many subcubes are such that $S$ is large inside them. That is, rearranging Theorem~\ref{thm:structure-johnson}, as a corollary we have an SoS proof (in the formal indicator variables of membership in $S$) that $\E_C[|S \cap C|/|C|] \geq \Omega(1/\ell)$. Given this, we can easily derive the implication for distributions by applying an expectation over $D$ to the latter expression and exchanging expectations. Since we give an SoS proof, the statement holds true for pseudodistributions over non-expanding sets $S$! Lemma~\ref{lem:round-partial-johnson} carries out precisely this kind of an argument, but in more generality.

The proof ideas of Theorem~\ref{thm:structure-johnson} are similar to those in~\cite{KMMS}, hence we defer the proof of this theorem to Appendix~\ref{sec:fourier}. We will now show that under this theorem we get an algorithm for UG on the Johnson graph $J$. We will first formally define the global shift-partition potential on a subgraph.

\begin{definition}[Global shift-potential restricted to Subgraphs]\label{def:res-shift}
Let $I = (G,\Pi)$ be an instance of affine unique games over alphabet $\Sigma$. For any $\nu,\beta \in (0,1)$ and subgraph $H$ of $G$, define the {\em approximate global shift-partition potential restricted to the subgraph $H$} to be the quantity:
\[
\Phi_{\beta,\nu}(X,X')|_H = \sum_{s \in \Sigma} \E_{u \in H} \left[Z_{u,s} \cdot p(\val_u(X))\right]^2,
\]
for $Z_{u,s} = \one(X_u - X'_u = s)$, $\val_u(X) = \E_{(u,v) \in E(G)}[\one(X \text{ satisfies } (u,v))]$, and $p(x)$ the degree-$\tilde{O}(1/\nu)$ polynomial in the family $P_{\beta,\nu}$, described in Theorem~\ref{thm:step-approx}.
\end{definition}

Note that the global shift-partition potential measures the size of the global partition inside $H$, i.e. the $\val_u(X)$ is a function of all the edges in $E(G)$ that are incident on $u$, not just the edges in $H$. We will now use the structure theorem for Johnson graphs to get a lower bound on the global shift-partition potential restricted to subcubes $C$, $\Phi(X,X')|_C$, when the violations of the assignments $X$ and $X'$ are small. The following lemma is analogous to Lemma~\ref{lem:sp-pseudo} for certifiable small-set expanders and is proved in the same way. The main difference is in the conclusion of the lemma: instead of getting a lower bound on the shift-partition potential of the whole graph, we get a lower bound on the global shift-partition potential restricted to subcubes.

\begin{lemma}\label{lem:sp-j}
For all $\alpha \in \Q$ and all $\ell,n \in \N$ with $\alpha\ell \in \N$ and $\ell \ll n$ sufficiently large, the following holds: 
If $I$ is an affine unique games instance over the $(n,\ell,\alpha)$-Johnson graph $J$, then for all $\beta,\nu \in (0,1)$ and for every integer $r \in [\ell/2]$, there is an \sos lower bound of the following form on the average of the approximate global shift-partition potential $\Phi_{\beta,\nu}$ over $r$-restricted subcubes of $J$:
\begin{align*}
&\A_I \cup \{p \in P_{\beta,\nu} \}  \vdash_{\tilde{O}(1/\nu)} \\
&\qquad \qquad \sum_{j = 0}^r \E_{Y \in \binom{[n]}{j}}[\Phi_{\beta,\nu}(X,X')|_{(J|_Y)}] \geq \frac{1}{8^r \binom{\ell}{r}}\left( 1 - \frac{2 \viol(X)}{1 - \beta - \nu} - 2\nu - o_n(1) - K_{\beta,\nu}(X,X')\right),
\end{align*}
where $\A_I$ are the axioms defined for $I$ by program (\ref{eq:ip}), $\viol(X) = 1 - \val(X)$ is the fraction of constraints $X$ violates, and $K_{\beta,\nu}(X,X') = \frac{1}{1 - (1-\alpha)^{r+1}}\left(\viol(X) + \viol(X') + \frac{2\viol(X)}{1 - \beta - \nu} + 2\nu\right)$.
\end{lemma}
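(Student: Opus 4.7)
The plan is to mirror the proof of Lemma~\ref{lem:sp-pseudo}, with the direct small-set-expansion certificate replaced by the Johnson-graph structure theorem (Theorem~\ref{thm:structure-johnson}). As in the certifiable-SSE case, I introduce for each $s \in \Sigma$ the approximate indicator
\[
f_s(u) \;=\; \Ind(X_u - X'_u = s)\cdot p(\val_u(X)),
\]
which by Claim~\ref{claim:bounded} is SoS-certifiably in $[0,1]$ from $\cA_I \cup \{p \in P_{\beta,\nu}\}$. The key identification is that, by the definition of the restriction and of $\Phi_{\beta,\nu}|_H$,
\[
\sum_{s \in \Sigma} \delta_Y(f_s)^2 \;=\; \sum_{s \in \Sigma} \Bigl(\E_{u \in V(J|_Y)}[f_s(u)]\Bigr)^2 \;=\; \Phi_{\beta,\nu}(X,X')\big|_{J|_Y}.
\]

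Next I would apply Theorem~\ref{thm:structure-johnson} to each $f_s$ (valid because $0 \le f_s \le 1$ is an SoS axiom), rearrange to isolate the $\sum_j \E_Y[\delta_Y(f_s)^2]$ term on the left, and divide by the positive scalar $1-(1-\alpha)^{r+1}$:
\[
8^r\binom{\ell}{r}\sum_{j=0}^r \E_{Y \in \binom{[n]}{j}}\!\bigl[\delta_Y(f_s)^2\bigr] \;\succeq\; (1 - o_n(1))\,\E_\pi[f_s] \;+\; \E_\pi[f_s^{\circ 2} - f_s] \;-\; \frac{\langle f_s, L f_s\rangle_\pi}{1-(1-\alpha)^{r+1}}.
\]
Summing over $s \in \Sigma$ converts the left-hand side to $8^r\binom{\ell}{r}\sum_{j=0}^r \E_Y[\Phi_{\beta,\nu}(X,X')|_{J|_Y}]$. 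The three sums on the right are then controlled exactly as in the certifiable-SSE proof: Claim~\ref{claim:cover} gives $\sum_s \E_\pi[f_s] \ge 1 - \viol(X)/(1-\beta-\nu) - \nu$; Claim~\ref{claim:B1} gives $\sum_s \E_\pi[f_s^{\circ 2} - f_s] \ge -\viol(X)/(1-\beta-\nu) - \nu$; and Claim~\ref{claim:expans} gives $\sum_s \langle f_s, L f_s\rangle_\pi \le \viol(X) + \viol(X') + 2\viol(X)/(1-\beta-\nu) + 2\nu$. Substituting and collecting the Laplacian-derived terms under the coefficient $1/(1-(1-\alpha)^{r+1})$ yields precisely $K_{\beta,\nu}(X,X')$, and dividing through by $8^r\binom{\ell}{r}$ gives the stated inequality.

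The only thing I need to be careful about is degree accounting and SoS-validity. Theorem~\ref{thm:structure-johnson} is a degree-$2$ SoS inequality in $F$, and each of Claims~\ref{claim:cover}, \ref{claim:expans}, \ref{claim:B1} is already an SoS inequality of degree $\tilde O(1/\nu)$ over $\cA_I \cup \{p \in P_{\beta,\nu}\}$; multiplication by positive scalars and summing over $s$ preserve SoS-provability, so the overall degree remains $\tilde O(1/\nu)$ as claimed. The main conceptual step---and the reason the proof works at all---is the observation that the ``large on some subcube'' alternative provided by the structure theorem precisely matches the object $\sum_j \E_Y[\Phi_{\beta,\nu}|_{J|_Y}]$ that we want to lower-bound; no additional averaging or iteration of the structure theorem is needed, which is what lets us stay at constant SoS degree.
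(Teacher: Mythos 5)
Your proposal is correct and takes essentially the same approach as the paper: both apply Theorem~\ref{thm:structure-johnson} to the approximate shift-partition indicators $f_s$, use the identity $\Phi_{\beta,\nu}(X,X')|_{J|_Y} = \sum_s \delta_Y(f_s)^2$ to recognize the subcube-averaged potential, sum over $s$, and invoke Claims~\ref{claim:cover}, \ref{claim:expans}, \ref{claim:B1} to bound the remaining terms. The paper's own proof is just a terse sketch of exactly these steps, so your write-up fills in the same argument in more detail.
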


\begin{proof}
This proof proceeds exactly as the proof of Lemma~\ref{lem:sp-pseudo} for certifiable small-set expanders. We define functions $F_s$ corresponding to the components in the shift-partition and apply the structure theorem (Theorem~\ref{thm:structure-johnson}) to them and sum up the inequality over $s \in \Sigma$. For $Y \subseteq [n]$, with $0 \leq |Y| \leq r$, we have that,
\[\Phi_{\beta,\nu}(X,X')|_{(J|_Y)} = \sum_{s \in \Sigma} \E_{u \in J|_Y}[F_s(u)]^2 = \sum_{s \in \Sigma} \delta_Y(F_s)^2.\]

We can now use the same claims from Section~\ref{sec:cert-sse} to bound the terms in the structure theorem to get the conclusion of the lemma. We omit the details of the proof since it is straightforward given the above equality and the proof of Lemma~\ref{lem:sp-pseudo}.
\end{proof}

Using the lemma above, we will now prove that given a pseudodistribution $\mu$ over a highly satisfying instance of unique games over the Johnson graph we can find an $r$-restricted subcube $C$ with high Condition\&Round value. Let us define this precisely:

\begin{definition}[Condition\&Round Value]\label{def:cr-val}
Given a unique games instance $I = (G,\Pi)$ and a degree $4$ shift-symmetric pseudodistribution $\mu$ over $I$, for every subgraph $H$ of $G$, let $\text{ind-val}_\mu(H)$ denote the expected fraction of satisfied edges when independent rounding is performed on $V(H)$ using the marginals of $\mu$, i.e. $\text{ind-val}_\mu(H) := \E_{(v,w) \sim E(H)}[\sum_s \pE_\mu[X_{v,s}]\pE_\mu[X_{w,\pi_{vw}(s)}]]$. Let the Condition\&Round value, denoted by $\text{CR-val}_\mu(H)$ be the value obtained by performing Algorithm~\ref{alg:low-ent} on $H$, i.e. $\text{CR-val}_{\mu}(H) := \E_{u \sim V(H)}[\text{ind-val}_{\mu \mid X_u = 0}(H)]$.
\end{definition}

We will show this by first finding a cube $C$ that has high global shift potential, $\pE_{\mu}[\Phi(X,X')|_C]$, using Lemma~\ref{lem:sp-j} above. We then relate the global shift potential to the shift-partition potential on $C$, which we will denote by $\Phi^C(X,X')$. The only difference between the two potentials is that the latter is measured using the value of a vertex \emph{inside} $C$ and is the usual definition of the shift-partition potential on the graph $C$. We show that the subcube $C$ has small expansion, hence we can relate the global value of a vertex (when averaged over all edges in $E(G)$ incident on it) to the local value of a vertex (when averaged over just the edges in $E(C)$ incident on it), thus relating the global shift-partition potential to the shift-partition potential on $C$.
In particular, we will show that there exists   $C$ that has high shift-potential; using the analysis of the Condition\&Round algorithm, Theorem~\ref{thm:round}, this immediately gives us that there exists an $r$-restricted subcube that has high Condition\&Round value. To find such a cube $C$ algorithmically, one can just  enumerate over all $r$-restricted subcubes in time $n^r$ and check in polynomial time whether $C$ has high Condition\&Round value or not. Let us now make this argument formal.

\begin{lemma}\label{lem:round-partial-johnson}
For all $\eps \in [0,0.001)$, for all $\alpha \in \Q$ and $\alpha < \frac{1}{2}$, all integers $\ell\in \N$ with $\alpha \ell \in \N$ and all integers $k,n \gg \ell$ sufficiently large, the following holds: Let $I$ be an affine unique games instance over the $(n,\ell,\alpha)$-Johnson graph $J$ with alphabet size $|\Sigma| = k$ and $\val(I) = 1 - \eps$. Then for $r = \left\lfloor\frac{32\eps}{\alpha}\right\rfloor$, given a degree-$\tilde{O}\left(\frac{1}{\eps}2^{4r}\binom{\ell}{r}\right)$ shift-symmetric pseudodistribution $\mu$ satisfying the axioms $\A_I$, in time $n^{r}$ we can find a $s$-restricted subcube $C$ with $s\leq r$ such that $C$ has high Condition\&Round value: $\text{CR-val}_{\mu}(C) \geq \Omega\left(\frac{\eps}{2^{4r}\binom{\ell}{r}}\right)$.
\end{lemma}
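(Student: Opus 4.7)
The plan is to build on Lemma~\ref{lem:sp-j} and Theorem~\ref{thm:round} in three conceptual steps: (i) take pseudoexpectation in Lemma~\ref{lem:sp-j} to lower bound the \emph{average} of the global shift-partition potential over subcubes of dimension at most $r$; (ii) extract a single subcube $C$ via averaging, found efficiently by enumeration; and (iii) convert the global shift potential restricted to $C$ into the local shift potential on the instance $(C,\Pi|_C)$, then apply Theorem~\ref{thm:round} to that instance.

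For step (i), fix $\beta$ to be a sufficiently small positive constant (independent of all other parameters) and $\nu$ to be polynomially small in $\delta_0 := 1/(2^{4r}\binom{\ell}{r})$. Taking pseudoexpectation in Lemma~\ref{lem:sp-j} with respect to $\mu$ uses only $\pE_\mu[\viol(X)] = \pE_\mu[\viol(X')] \le \eps$. Because $r = \lfloor 32\eps/\alpha\rfloor$, an elementary calculation gives $1 - (1-\alpha)^{r+1} \ge \Omega(\min(1,\eps))$, which makes $\pE_\mu[K_{\beta,\nu}]$ bounded above by a small absolute constant for $\eps$ sufficiently small; the remaining error terms in the lemma are similarly controlled. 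This yields
\[
\sum_{j=0}^{r} \E_{Y \in \binom{[n]}{j}}\pE_\mu\!\left[\Phi_{\beta,\nu}(X,X')\big|_{J|_Y}\right] \;\ge\; \Omega\!\left(\frac{1}{8^r \binom{\ell}{r}}\right).
\]
For step (ii), by averaging over $j \in \{0,\dots,r\}$ and $Y$ there exist $s \le r$ and $Y_0 \in \binom{[n]}{s}$ with $\pE_\mu[\Phi_{\beta,\nu}|_{J|_{Y_0}}] \ge \Omega(\delta_0)$, using the elementary bound $(r+1)\,8^r \le 2^{4r}$. Such a $Y_0$ is found by enumerating all $O(n^r)$ subsets and evaluating $\pE_\mu[\Phi_{\beta,\nu}|_{J|_{Y}}]$ directly from the pseudomoments of $\mu$; the degree $\tilde O(1/\nu)$ of $\Phi_{\beta,\nu}$ fits within the degree budget $\tilde O(2^{4r}\binom{\ell}{r}/\eps)$ of $\mu$.

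The main obstacle lies in step (iii). The global restricted potential $\Phi_{\beta,\nu}|_C$ uses the global local value $\val_u^G(X) = \E_{(u,v)\in E(J)}[\Ind(X\text{ sat. }(u,v))]$, while Theorem~\ref{thm:round} applied to $(C,\Pi|_C)$ naturally involves the local potential $\Phi^C_{\beta,\nu}$ built from $\val_u^C(X)$. The crucial quantitative fact is that every $u \in V(C)$ retains a fraction $\eta := (1-\alpha)^s \ge 1 - \alpha r = 1 - O(\eps)$ of its incident edges inside $C$, so $\val_u^G(X) \le \val_u^C(X) + (1-\eta)$ pointwise. I would use this to argue that $p_\beta(\val_u^G(X))$ is dominated up to additive $\nu$ by a shifted approximate indicator $p_{\beta'}(\val_u^C(X))$ with $\beta' := \beta - (1-\eta)$, which remains bounded below by a positive constant provided $\beta$ was chosen as a constant sufficiently larger than $1-\eta = O(\eps)$. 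Substituting and using $\sum_{s \in \Sigma}\E_u[Z_{u,s}]^2 \le 1$ (a consequence of the partition constraints in $\A_I$) then gives an \sos proof that $\pE_\mu[\Phi^C_{\beta',\nu}] \ge \pE_\mu[\Phi_{\beta,\nu}|_C] - O(\nu) = \Omega(\delta_0)$. Theorem~\ref{thm:round} applied to $(C,\Pi|_C)$ with $\mu|_C$ finally yields $\text{CR-val}_\mu(C) \ge (\Omega(\delta_0) - \nu)(\beta' - \nu) = \Omega(\eps\cdot\delta_0)$, as claimed. The hardest part will be making the comparison $p_\beta(\val_u^G) \le p_{\beta'}(\val_u^C) + \nu$ rigorous as a low-degree \sos inequality, since the polynomials $p_\beta, p_{\beta'}$ only approximate step functions and the threshold shift $1-\eta$ must be chosen carefully to be simultaneously small enough that $\beta'$ is still a constant and large enough that the error incurred in the comparison is absorbed by $\nu$.
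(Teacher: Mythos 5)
Your proposal follows the paper's own proof in structure and in all its key steps: take pseudoexpectation in Lemma~\ref{lem:sp-j}, control $\pE[K]$ using $1-(1-\alpha)^{r+1} = \Omega(\eps)$, average to extract a single subcube $C$, convert the globally-defined shift potential restricted to $C$ into the local shift potential on the induced instance, and finally apply Theorem~\ref{thm:round}. The comparison $p_\beta(\val_u^G) \lesssim p_{\beta'}(\val_u^C)$ that you flag as the hardest step is exactly what the paper isolates into Claims~\ref{claim:subcube-expanse} and~\ref{claim:potentials}; your choice of $\beta$ as an absolute constant (rather than the paper's $\beta = 201\eps$) is also admissible under Claim~\ref{claim:potentials} and, if anything, would yield $\Omega(\delta_0)$ rather than the $\Omega(\eps\delta_0)$ you wrote in your final line.
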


\begin{proof}
Fix the parameters $\beta = 201\eps$, 
$r = \left\lfloor\frac{32\eps}{\alpha}\right\rfloor$, 
$\gamma = \frac{1}{16^{r+1}\binom{\ell}{r}}$ and 
$\nu = \eps \gamma$.
Since $\eps < 1/1000$ and $\alpha\ell \ge 1$, we have that $r \leq \ell/4$.
So we can now apply Lemma~\ref{lem:sp-j}, with the parameters $\beta,\nu$ and $r$. 
The conditions of our theorem imply that we have a degree-$\tilde{O}(1/\gamma)$ sum-of-squares proof that 
\begin{align}\label{eq:non-neg-j}
\sum_{j = 0}^r \E_{Y \in \binom{[n]}{j}}[\Phi_{\beta,\nu}(X,X')|_{(J|_Y)}] \geq \frac{1}{8^r \binom{\ell}{r}}\left( 1 - \frac{2 \viol(X)}{1 - 201\eps - \eps\gamma} - 2\eps\gamma - o_n(1) - K_{201\eps,\eps\gamma}(X,X')\right),
\end{align}
where $K_{201\eps,\nu}(X,X') = \frac{1}{1 - (1-\alpha)^{r+1}}\left(\viol(X) + \viol(X') + \frac{2\viol(X)}{1 - 201\eps - \eps\gamma} + 2\eps\gamma\right)$.

In order to apply our rounding Theorem~\ref{thm:round}, we require that the pseudoexpectation of the shift-partition potential on $C$, denoted by  $\pE_\mu[\Phi^C_{\eps,\eps\gamma}(X,X')]$ is large, for some $s$-restricted subcube $C = J|_Y$ with $s \le r$. The shift-partition potential on $C$ is just applying $\Phi$ to the graph induced by $C$, whereas the global shift potential restricted to $C$ measures the component sizes of the global shift partition within $C$. Formally,

\[
\Phi^C_{\beta,\nu}(X,X') = \sum_{s \in \Sigma} \E_{u \in C} \left[Z_{u,s} \cdot p(\val^{C}_u(X))\right]^2,
\]
where $\val^C_u(X)$ is the value of $u$ averaged over edges incident on $u$ in $C$ (as opposed to edges in $G$). We will first argue that there is a subcube whose global restricted shift potential is large, and then relate the two. 

Since $\val(I) = 1-\eps$, it follows that $\pE[\viol(X)] = \pE[\viol(X')] \le \eps$ and $\pE$ satisfies $\cA_I$, we take the pseudoexpectation of (\ref{eq:non-neg-j}) to get
\begin{align}\label{eq:bd-p-j}
\sum_{j = 0}^r \E_{Y \in \binom{[n]}{j}}\pE[\Phi_{\beta,\eps\gamma}(X,X')|_{(J|_Y)}] \geq \frac{1}{8^r \binom{\ell}{r}}\left( 1 - \frac{2 \eps}{1 - 201\eps - \eps\gamma} - 2\eps\gamma - o_n(1) - \pE[K_{201\eps,\eps\gamma}(X,X')]\right).
\end{align}

We show now that for our chosen parameters, $\pE[K_{201\eps,\eps\gamma}(X,X')] \leq \frac{1}{2}$. Expanding the expression for $K$ and using our bound on $\pE[\viol(X) + \viol(X')]$,
\begin{align}\label{eq:jlb}
\pE\left[K_{201\eps,\eps\gamma}(X,X')\right]
&\le \frac{1}{1 - (1-\alpha)^{r+1}}\left(2\eps + \frac{2\eps}{1 - 201\eps - \eps\gamma} + 2\eps\gamma\right) \leq \frac{8\eps}{1 - (1-\alpha)^{r+1}}
\end{align}
where to obtain the final inequality we have used that $\eps\gamma < \eps < \frac{1}{4}$ and $1-201\eps \ge \frac{1}{2}$. 
By our choice of parameters, $(1-\alpha)^{r+1} < 1 - 16\eps$, and rearranging gives us that $\pE[K_{201\eps,\eps\gamma}(X,X')] \leq \frac{1}{2}$.

Thus, returning to (\ref{eq:bd-p-j}) and simplifying with our upper bounds $\eps < \frac{1}{48}$ and $o_n(1) < 1/4$, we have that
\[
\sum_{j = 0}^r \E_{Y \in \binom{[n]}{j}}\pE[\Phi_{201\eps,\eps\gamma}(X,X')|_{(J|_Y)}] \ge \frac{1}{8^{r+1}\binom{\ell}{r}}.
\]
We can now apply an averaging argument to conclude that there exists a $\leq r$-restricted subcube $J|_Y$ such that,
\[\pE[\Phi_{201\eps,\eps\gamma}(X,X')|_{(J|_Y)}] \ge \frac{1}{r8^{r+1}\binom{\ell}{r}} \geq \frac{1}{16^{r+1}\binom{\ell}{r}} = \gamma. \]

Finally, we will relate the global restricted potential to the potential on $C$. We have the following claims.
The first states that an $r$-restricted subcube has bounded expansion when $r$ is not too large.

\begin{claim}
\torestate{\label{claim:subcube-expanse}
If $r = \left\lfloor\frac{32\eps}{\alpha}\right\rfloor < \frac{\ell}{4}$ and $s < r$, an $s$-restricted subcube of $J_{n,\ell,\alpha}$ has expansion at most $200\eps$.}
\end{claim}

The proof of this claim is via a direct calculation, and we give it in Section~\ref{sec:claims-j} below.
From this claim, we are able to prove that the local and global restricted potentials are related:

\begin{claim}
\torestate{\label{claim:potentials}
Suppose that $C$ is an $r$-restricted subcube of $J_{n,\ell,\alpha}$ with $r = \left\lfloor\frac{32\eps}{\alpha}\right\rfloor$.
Then if $\Phi^C$ is the shift-partition potential restricted to $C$, for any $\beta \ge 201\eps$ and $\nu < \eps$,
\[
\Phi^C_{\beta -200\eps,\nu}(X,X') \ge \Phi_{\beta,\nu}(X,X')|_C - 2\nu,
\]
and furthermore this is certifiable in degree $\tilde{O}(1/\nu)$ \sos.}
\end{claim}
The proof of this claim is based on the fact that the fraction of neighbors of every vertex $v \in C$ which lie outside of $C$ cannot be too large when $r$ is bounded, and therefore if the value in $J$ at a vertex is $\beta$, the value restricted to $C$ is still $\beta - \phi(C)$.
We give the proof in Section~\ref{sec:claims-j} below.

From Claim~\ref{claim:potentials} and (\ref{eq:jlb}) we have that there exists a subcube $C = J|_Y$ such that the local potential on $C$ is large,
\[
\pE[\Phi^C_{\eps,\eps\gamma}(X,X')] \ge \gamma -2\nu = \gamma(1-2\eps).
\]

We can now apply Theorem~\ref{thm:round} to get that the condition and round algorithm when applied to the vertices in $C$, would produce a good satisfying assignment for $C$ in expectation, i.e.  $\text{CR-val}(C)$ is high. Concretely we get that conditioning and rounding a degree-$\tilde{O}(1/\gamma\eps) = \tilde{O}\left(\frac{16^{r+1}\binom{\ell}{r}}{\eps}\right)$ pseudodistribution on the subcube $C = J|_Y$ according to Algorithm~\ref{alg:low-ent} results in a solution of expected value $\ge (\gamma(1-2\eps) - \eps \gamma)(\eps - \gamma \eps) \ge \frac{1}{4}\eps \gamma = \frac{\eps}{4 \cdot 16^{r+1}\binom{\ell}{r}}$ within $C$.
\end{proof}

Using the above theorem, we can find a subcube $C$ with high value, say $\geq \delta = \Omega_{\ell,\alpha}(1)$, and then perform derandomized Condition\&Round algorithm to get a $\delta$-satisfying assignment to the vertices of $C$. But this may be a negligible fraction of edges of the whole graph (since even a $1$-restricted subcube is a $o(1)$-fraction of $J$), and we need to satisfy $\Omega_{\eps,\alpha,\ell}(1)$ constraints.
To achieve this, after setting the vertices of the subcube $C$, we alter the pseudodistribution $\mu$ and apply our algorithm iteratively: we randomize $\mu$ on $V(C)$, so that these vertices are completely uncorrelated with any other vertex. This ensures that the value of any edge incident on $V(C)$ is $1/|\Sigma|$, which is much smaller than $\delta$, under the modified pseudodistribution $\mu'$. Then, we run the algorithm again on $\mu'$ to find a subcube $C'$ with high Condition\&Round value. Since edges that are incident on previously assigned vertices have very low value, we can show that the subcube $C'$ has low intersection with $C$. Furthermore the subcubes we find have low expansion, so we get that, the derandomized Condition\&Round algorithm when performed on $C' \setminus C$ satisfies a constant fraction of edges incident on $C'$. We continue in this way until the modified pseudodistribution's value drops by $\Omega(\eps)$. We show that at each iteration of the while loop, by modifying the pseudodistribution we lower the value by an amount that is proportional to the fraction of edges we satisfy in that step. Thus, after sufficiently many iterations we lower the value of the pseudodistribution by $\Omega(\eps)$ and hence satisfy an $\Omega_{\ell,\alpha}(\epsilon)$ fraction of the edges in the graph. We make this argument formal below.

\begin{lemma}\label{lem:subroutine}
Let $\eps_0 \in (0,1)$ be a universal constant and $\delta: [0,1] \rightarrow [0,1]$ be a function. Let $\eps < \eps_0/2$ be any constant, and let $\delta_{\min} = \min_{\eta \in [\eps,2\eps]}(\delta(\eta))$. 
Let $G$ be a regular graph and $I$ be any unique games instance on $G$ with alphabet size $|\Sigma| = k \geq \Omega(\frac{1}{\delta_{\min}})$ and value $1 - \eps$. 

Suppose we have a subroutine $\cA$ which given as input $\mu$, a shift-symmetric degree-$D$ pseudodistribution satisfying $\cA_I$ with $\val_\mu(I) = 1 - \eta \ge 1-\eps_0$, returns a vertex-induced subgraph $H$ such that, 1) $\text{CR-val}_{\mu}(H) \geq \delta = \delta(\eta)$ and 2) the edge-expansion of $H$ is $O(\eta)$. 

Then if $\cA$ runs in time $T(\cA)$, there is a $|V(G)|(T(\cA) + |V(G)|^{O(D)})$-time algorithm which finds a solution for $I$ that satisfies an $\Omega(\delta_{\min}^2\eps)$-fraction of the  edges of $G$. 
\end{lemma}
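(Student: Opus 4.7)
My plan is to analyze the while loop of the algorithm by tracking two quantities iteration by iteration: the SDP value $\val_{\mu_j}(I)$, which starts at $1 - \eps$ and decreases, and the total fraction of edges of $G$ satisfied by the partial assignments constructed so far. The central claim will be a per-iteration trade-off: the fraction $\tau_j$ of edges of $G$ newly satisfied in iteration $j$ is at least $\Omega(\delta_{\min})$ times the value drop $\Delta_j := \val_{\mu_{j-1}}(I) - \val_{\mu_j}(I)$. Summing and using that the loop exits only when the value falls below $1 - 2\eps$, so that $\sum_j \Delta_j > \eps$, will yield satisfaction of an $\Omega(\delta_{\min}\eps) \ge \Omega(\delta_{\min}^2\eps)$ fraction of edges.

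I would first establish two invariants maintained at the start of each iteration: (i) $\mu_{j-1}$ is a valid shift-symmetric degree-$D$ pseudodistribution satisfying $\cA_I$; and (ii) under $\mu_{j-1}$, every vertex in the previously-assigned set $V_{j-1} := V(S_1) \cup \cdots \cup V(S_{j-1})$ has uniform marginal and is pseudo-independent of every other variable. Both follow directly from the explicit rerandomization formula in the algorithm, using that the uniform distribution on $\Sigma$ is shift-symmetric and that replacing a marginal by uniform-independent respects the Booleanity, uniqueness, and partition axioms of $\cA_I$ when monomials are interpreted in canonical form. The stopping condition together with $\val_{\mu_0}(I) = 1 - \eps$ ensures $\eta_{j-1} := 1 - \val_{\mu_{j-1}}(I) \in [\eps, 2\eps] \subseteq [\eps, \eps_0]$, so each invocation $\cA(\mu_{j-1})$ is valid and returns $H_j$ with $\text{CR-val}_{\mu_{j-1}}(H_j) \ge \delta(\eta_{j-1}) \ge \delta_{\min}$ and edge-expansion $\phi(H_j) = O(\eta_{j-1}) = O(\eps)$.

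The heart of the proof is a ``CR-val transfer'' identity. Writing $R_j := V(H_j) \cap V_{j-1}$, $p_j := |V(S_j)|/|V(H_j)|$, and $q_j := |E(S_j)|/|E(H_j)|$, invariant (ii) implies that whenever either the conditioning vertex $u_0$ or an endpoint of the rounded edge lies in $R_j$, the associated marginal is uniform and the contribution to $\text{CR-val}_{\mu_{j-1}}(H_j)$ equals exactly $1/k$. Decomposing $\text{CR-val}_{\mu_{j-1}}(H_j)$ according to the location of $u_0$ and of the edge then gives
\[\text{CR-val}_{\mu_{j-1}}(H_j) \;=\; (1 - p_j q_j)/k \;+\; p_j q_j \cdot \text{CR-val}_{\mu_{j-1}}(S_j),\]
which together with $\text{CR-val}_{\mu_{j-1}}(H_j) \ge \delta_{\min}$ and $k = \Omega(1/\delta_{\min})$ yields $p_j q_j \cdot \text{CR-val}_{\mu_{j-1}}(S_j) \ge \delta_{\min}/2$. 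Hence derandomized Condition\&Round on $V(S_j)$ deterministically satisfies at least $\text{CR-val}_{\mu_{j-1}}(S_j)\,|E(S_j)| \ge \Omega(\delta_{\min})|E(H_j)|$ edges, so $\tau_j \ge \Omega(\delta_{\min})|E(H_j)|/|E|$. For the upper bound on $\Delta_j$: only edges incident to $V(S_j)$ see their contribution changed by rerandomization, and each such contribution drops by at most $1$, so $\Delta_j \le |E_{V(H_j)}|/|E| \le (1 + O(\eps))|E(H_j)|/|E|$, where the last inequality uses the expansion bound $\phi(H_j) = O(\eps)$ on the regular graph $G$. The ratio $\tau_j/\Delta_j \ge \Omega(\delta_{\min})$ follows.

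To conclude, $S_j$ is nonempty every iteration (if $V(H_j) \subseteq V_{j-1}$ then invariant (ii) forces $\text{CR-val}_{\mu_{j-1}}(H_j) = 1/k < \delta_{\min}$, contradicting the subroutine guarantee), so the loop terminates in at most $|V(G)|$ iterations, and summing $\tau_j \ge \Omega(\delta_{\min})\Delta_j$ gives a total satisfied fraction of $\Omega(\delta_{\min}\eps)$. The running time is $|V(G)|(T(\cA) + |V(G)|^{O(D)})$ since each iteration calls $\cA$ once and performs polynomial-time operations on the pseudodistribution (computing marginals, derandomizing Condition\&Round via method of conditional expectations, and directly modifying the moment table for rerandomization). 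The main technical obstacle is the CR-val transfer identity: it requires a careful case analysis over the four combinations of whether $u_0$ and the rounded edge lie in the live set $V(S_j)$ or the randomized set $R_j$, and a precise use of shift-symmetry (and pseudo-independence) of $\mu_{j-1}$ to conclude that conditioning on a rerandomized vertex is a no-op; a secondary subtlety is verifying that shift-symmetry and $\cA_I$ really are preserved across iterative rerandomizations.
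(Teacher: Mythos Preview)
Your proposal is correct and follows the same overall iterate-and-sum strategy as the paper: maintain invariants on $\mu_{j-1}$, bound the per-iteration value drop $\Delta_j$ against the per-iteration satisfied fraction $\tau_j$, and telescope. The difference is in how you extract the per-iteration lower bound on $\tau_j$. The paper proves two separate inequalities by averaging arguments: first $\text{CR-val}_{\mu_{j-1}}(S_j)\ge\delta$ (removing conditioning vertices $u\in H_j\setminus S_j$, which contribute $1/k$), then $|E(S_j)|\ge(\delta/2)|E(H_j)|$ (fixing one good conditioning vertex $u_0$ and averaging over edges). Multiplying these loses a factor of $\delta$, giving $\tau_j\ge\Omega(\delta^2)|E(H_j)|/|E|$ and hence the stated $\Omega(\delta_{\min}^2\eps)$.

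Your exact identity $\text{CR-val}_{\mu_{j-1}}(H_j)=(1-p_jq_j)/k+p_jq_j\cdot\text{CR-val}_{\mu_{j-1}}(S_j)$ combines both averaging steps into one equation and directly yields $q_j\cdot\text{CR-val}_{\mu_{j-1}}(S_j)\ge p_jq_j\cdot\text{CR-val}_{\mu_{j-1}}(S_j)\ge\delta_{\min}-1/k\ge\delta_{\min}/2$, so $\tau_j\ge\Omega(\delta_{\min})|E(H_j)|/|E|$. This is cleaner and tighter by a factor of $\delta_{\min}$; you correctly note that $\Omega(\delta_{\min}\eps)\ge\Omega(\delta_{\min}^2\eps)$ so the lemma as stated follows. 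A second minor difference is that you bound $\Delta_j$ and $\tau_j$ in terms of $|E(H_j)|/|E|$ (using the expansion bound on $H_j$ for $\Delta_j$), whereas the paper bounds both in terms of $|V(H_j)|/|V(G)|$ via regularity (using expansion only in the final conversion for $\tau_j$); both routes are fine. One shared wrinkle with the paper: the claim $\eta_{j-1}\ge\eps$ (hence $\delta(\eta_{j-1})\ge\delta_{\min}$) relies on $\val_{\mu_0}(I)=1-\eps$ and monotone decrease of value under rerandomization, neither of which is carefully justified in either argument; this is easily patched and does not affect the comparison.
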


\begin{proof}
We will use the algorithm $\cA$ as a subroutine. To get a full assignment, our algorithm below is a generalized version of the Algorithm~\ref{alg:j}, where we've replaced the steps 2 to 4 in Algorithm~\ref{alg:j} with an arbitrary subroutine $\cA$ that finds a subgraph with high Condition\&Round value with respect to $I$. We include it here for completeness.

\begin{algorithm-thm}[Partial to Full Assignment]\label{alg:partial} ~\\ 
\begin{compactenum}
\item Solve the degree-$D$ SoS SDP relaxation for the integer program $\cA_I$ and make the pseudodistribution shift-symmetric to get a pseudodistribution $\mu_0$. Set $j = 1$.

\item While $\eta_{j-1} := 1 - \val_{\mu_{j-1}}(I) \leq 2\eps$:
\begin{compactenum}
\item Run subroutine $\cA$ on $\mu_{j-1}$ to find a subgraph $H_{j}$ with $\text{CR-val}_{\mu_{j-1}}(H_j) \geq \delta(\eta_{j-1})$.
\item Let $S_j$ be a subgraph of $H_j$ induced by the set of vertices that have not been previously assigned by any partial assignment $f_{k}, k < j$. Perform derandomized Condition\&Round on $V(S_j)$ to get a partial assignment $f_j$.
\item Rerandomize the pseudodistribution $\mu_{j-1}$ on $S_j$ to get $\mu_j$: Make the marginal distribution over the assigned vertices uniform and independent of other vertices, that is, for all degree $\leq D$ monomials define
$\pE_{\mu_{j}}$ as follows, 
\[\pE_{\mu_{j}}[X_{h_1,a_1} \cdots X_{h_t,a_t}X_{u_1,b_1} \cdots X_{u_m,b_m}]
:= \frac{1}{|\Sigma|^t}\pE_{\mu_{j-1}}[X_{u_1,b_1}\cdots X_{u_m,b_m}],
\]
where $\{(h_1,a_1), \ldots, (h_t,a_t)\} \in (V(S_j) \times \Sigma)^{t}$ and $\{(u_1,b_1), \ldots, (u_m,b_m)\} \in (({[n] \choose \ell} \setminus V(S_j)) \times \Sigma)^{m}$.
\item Increment $j$.
\end{compactenum}
\item Output any assignment to $V(G)$ that agrees with all partial assignments $f_j$ considered above.
\end{compactenum}

\end{algorithm-thm}

Let us first check that the algorithm is well-defined. The initial pseudodistribution $\mu_0$ by definition satisfies axioms $\cA_I$ and is shift-symmetric. It has value $= 1 - \eps > 1 - \eps_0$. In subsequent iterations of the while loop all these properties are satisfied by the modified pseudodistributions: 1) the rerandomizing operation on pseudodistributions produces a valid pseudodistribution operator that satisfies the axioms $\cA_I$ and is also shift-symmetric, 2) At iteration $j$ of the while-loop, since the while condition is met, we know that $\mu_{j-1}$ has value $\geq 1 - 2\eps \geq 1 - \eps_0$ and furthermore we can show that since the value only decreases at each step, it is always $\leq 1 - \eps$, so that $\eta_j \in [\eps,2\eps]$. So inside the while-loop, $\cA$ will always find a non-empty subgraph $H_j$ with high Condition\&Round value. Next, we find an assignment $f_j$ to the set of vertices $V(S_j)$ that by definition don't intersect previously assigned vertices. Since $f_j$ doesn't reassign any vertices, in the final step of the algorithm it is possible to output an assignment that is consistent with all previously considered partial assignments. We will now show that our final partial assignment satisfies a large fraction of the edges, where we say that an edge $(u,v)$ is satisfied by a partial assignment $f_j$, if both vertices $u,v$ have been assigned labels under $f_j$ and the labels satisfy the edge. We claim the following two facts:

\begin{claim}
The drop in value in every iteration satisfies that:
\[\val_{\mu_{j-1}}(I) - \val_{\mu_j}(I) \leq \frac{2|V(H_j)|}{|V(G)|},\]

where $\val_\mu(I)$ denotes the SDP value of $I$ with respect to the pseudodistribution $\mu$. 
\end{claim}

\begin{proof}
For any edge $(h,v)$ where $h \in H_j$, we have that $\val_{\mu_j}((h,v)) = \frac{1}{k}$, whereas $\val_{\mu_{j-1}}((h,v)) \leq 1$. For any edge whose both endpoints lie outside $H_j$, the value remains unchanged under rerandomizing. Noting that the fraction of edges incident on vertices in $H_j$ is at most $\frac{2|V(H_j)|}{|V(G)|}$ the conclusion follows.
\end{proof}

\begin{claim}
The value of the partial assignment found at iteration $j$ satisfies:
\[\val(f_j) \geq \delta^2(\eta_{j-1})(1 - O(\eta_{j-1}))\frac{|V(H_j)|}{2|V(G)|},\]
where $\val(f_j)$ denotes the fraction of edges (in $E(G)$) satisfied by the partial assignment $f_j$.
\end{claim}

\begin{proof}
We will first prove that $\text{CR-val}_{\mu_{j-1}}(S_j) \geq \delta(\eta_{j-1}) := \delta$, where $S_j$ is the subgraph induced by the unassigned (by previous partial assignments $f_k$, $k < j$) vertices of $H_j$. For notational simplicity we will drop the subscript $j$ from $H_j,S_j$ and $\mu_{j-1}$. We know by the guarantees of the subroutine $\cA$ that $H$ is such that, $\text{CR-val}_{\mu}(H) = \E_{u \sim V(H)}[\text{ind-val}_{\mu | X_u = 0}(H)] \geq \delta$ (see Definition~\ref{def:cr-val} for $\text{CR-val}$ and $\text{ind-val}$).  First note that the marginals of every vertex are uniform, due to the shift-symmetry of $\mu$. Moreover we have that conditioning on previously assigned vertices, i.e. any vertex $u \in H \setminus S$, maintains this property, since the distribution of $u$ is completely uncorrelated with the other vertices. So we get that, $\text{ind-val}_{\mu | X_u = 0}(H) = \frac{1}{|\Sigma|} < \delta$ for all $u \in H \setminus S$. This implies that, 
\begin{equation}\label{eq:cr1}
\E_{u \sim S}[\text{ind-val}_{\mu | X_u = 0}(H)] \geq \text{CR-val}_{\mu}(H) \geq \delta.    
\end{equation}
Again we have that, $\text{ind-val}_{\mu | X_u = 0}(e) = \frac{1}{|\Sigma|}$, for any edge $e$ which has at least one endpoint in $H \setminus S$, so we get that, $\E_{u \sim S}[\text{ind-val}_{\mu | X_u = 0}(S)] \geq \delta$. Now we can perform derandomized Condition\&Round on $S$ to get an assignment $f_j$ that satisfies at least a $\delta$-fraction of the edges of $S$.

Next we will show, by an averaging argument, that the edges of $S$ constitute a large fraction of the edges incident on the vertices of $H$, which would imply that $f_j$ satisfies a large fraction of these edges. Let $u_0 \in S$ be a vertex for which $\text{ind-val}_{\mu | X_{u_0} = 0}(H) \geq \delta$ (we know such a vertex exists by equation~(\ref{eq:cr1})) and let $\mu'$ be the pseudodistribution $(\mu | X_{u_0} = 0)$. First note that the set of edges $E(H) \setminus E(S)$, have independent rounding value $1/|\Sigma|$ under $\mu'$, since at least one endpoint of such edges has been assigned previously. Since $\text{ind-val}_{\mu'}(H) \geq \delta$, a simple averaging argument gives us that the set $E(H) \setminus E(S)$ can be at most a $\frac{1 - \delta}{1 - (1/|\Sigma|)}$-fraction of $E(H)$. So the set $E(S)$ is at least a $\frac{\delta - (1/|\Sigma|)}{1 - (1/|\Sigma|)} \geq \delta/2$-fraction of $E(H)$. Since the expansion of $H$ is at most $O(\eta_{j-1})$, we have that $E(H)$ is a $(1 - O(\eta_{j-1}))$-fraction of the total edges incident on $H$, which in turn is at least a $\frac{|V(H)|}{|V(G)|}$-fraction of $E(G)$. Combining these facts we get that $f_j$ satisfies a $\delta \cdot \frac{\delta}{2} \cdot (1 - O(\eta_{j-1}))\cdot \frac{|V(H)|}{|V(G)|}$-fraction of the edges of $G$. 
\end{proof}

Once we have these facts, the conclusion is immediate. Firstly there cannot be more than $V(G)$ iterations of the while-loop, since  at each iteration we set the value of at least one new vertex to $1/|\Sigma|$. The rerandomization operation in the while loop as well as the symmetrization operation (Lemma~\ref{lem:sym}) can be done in time polynomial in the description of $\mu_0$. So each iteration takes time $T(\cA) + |V(G)|^{O(D)}$, hence the algorithm runs in time $|V(G)|(T(\cA) + |V(G)|^{O(D)})$. 

Moreover, combining the claims above, we get that the partial assignment at any iteration is proportional to the drop in value of the pseudodistribution. That is, 

\[\val(f_j) \geq \delta^2(\eta_{j-1})(1 - O(\eta_{j-1}))\frac{|V(H_j)|}{2|V(G)|} \geq \delta^2(\eta_{j-1})(1 - O(\eta_{j-1}))\left(\frac{\val_{\mu_{j-1}}(I) - \val_{\mu_j}(I)}{4}\right).\]

At the last iteration, we know that the pseudodistribution value has dropped by at least $\eps$ (compared to $\val_{\mu_0}(I)$), hence summing the above over all iterations $j$, we get that the value of the partial assignment returned by the algorithm is at least ${\frac{1}{4}(\min_j(\delta(\eta_j)))^2(1 - O(\eps))\eps}$ as required.
\end{proof}

This completes the analysis of Algorithm~\ref{alg:j}. Combining the lemmas above, Theorem~\ref{thm:main-johnson} easily follows.

\begin{proof}[Proof of Theorem~\ref{thm:main-johnson}]
Given a UG instance $(J,\Pi)$ on the Johnson graph and a shift-symmetric pseudodistribution $\mu$ of degree $D = \widetilde{O}(\frac{1}{\eta}2^{4r}\binom{\ell}{r})$ with value $1 - \eta$, for $\eta < 0.001$, Lemma~\ref{lem:round-partial-johnson} gives us a subgraph of $J$ with high Condition\&Round value. This subgraph has expansion $\leq 200 \eta$ (by Claim~\ref{claim:subcube-expanse}) and Condition\&Round value at least $\delta(\eta) = \Omega(\frac{\eta}{\exp(c'r)\binom{\ell}{r}})$, where $r = c\eta/\alpha$ for universal constants $c,c'$.
To bound $\delta_{\min}$, we take the derivative 
\[
\frac{\partial}{\partial\eta} \delta(\eta)
=\left(1 - \frac{(c' + \ln \ell)\eta}{\alpha}\right)\exp\left(-(c' +\ln \ell) c \frac{\eta}{\alpha}\right)
\]
and we can see that $\frac{\partial}{\partial\eta}\delta(\eta)$ has at most one sign change from positive to negative in the interval $[\eps,2\eps]$, which means that it is minimized at one of the endpoints $\delta(\eps)$ or $\delta(2\eps)$ which are both bounded below by $\Omega(\eps/\exp(O(r))\binom{\ell}{r})$. 
Furthermore, the subroutine for finding a subcube runs in time $|V(J)|^{O(r)}$.
Now observe that the algorithm stated in the proof of Lemma~\ref{lem:subroutine}, instantiated with the subroutine for finding an $r$-restricted subcube of the Johnson graph, is the same as Algorithm~\ref{alg:j}.
So we can apply the algorithm guarantees outlined in Lemma~\ref{lem:subroutine}, to complete the analysis for Algorithm~\ref{alg:j}.  
\end{proof}

\subsection{Proofs of outstanding claims}
\label{sec:claims-j}
Here we prove some of the claims that we have used in the proof of Theorem~\ref{thm:main-johnson} and supporting lemmas.

\restateclaim{claim:subcube-expanse}
\begin{proof}
Let $J|_Y$ be an $s$-restricted subcube.
We have that,
\[1 - \phi(J|_Y) = \frac{\binom{\ell - |Y|}{\alpha\ell}}{\binom{\ell}{\alpha\ell}} \geq \frac{\binom{\ell - r}{\alpha\ell}}{\binom{\ell}{\alpha\ell}} = \left(\frac{\ell - \alpha\ell}{\ell}\right)\left(\frac{\ell - \alpha\ell - 1}{\ell - 1}\right)\ldots \left(\frac{\ell - \alpha\ell - r + 1}{\ell - r + 1}\right).\]
Now since $r \leq \ell/4$ by assumption, each of the parenthesized terms is at least $\left(\frac{3\ell/4 - \alpha\ell}{3\ell/4}\right) = (1 - 4\alpha/3)$, so
\[1 - \phi(J|_Y) \geq \left(1-\frac{4\alpha}{3}\right)^r \geq 1 - \frac{4r\alpha}{3}.\]

Since $r = \left\lfloor\frac{32\eps}{\alpha}\right\rfloor < \frac{75\eps}{\alpha}$, we get that $\phi(J|_Y) < 200\eps$
as desired.
\end{proof}

\restateclaim{claim:potentials}
\begin{proof}
When $r = \left\lfloor\frac{32\eps}{\alpha}\right\rfloor$, the expansion of $C$ is at most $1 - (1-4\alpha/3)^r \le 200\eps$ by Claim~\ref{claim:subcube-expanse}.
Furthermore, from the definition of the Johnson graph this holds vertex-by-vertex; every $v \in C$ has at most a $200\eps$-fraction of its neighbors outgoing.
Therefore,
\[
\Ind[\val_u^C(X) \ge \beta - 200\eps] \ge \Ind[\val_u(X) \ge \beta],
\]
and furthermore since $\nu < \eps$,
\[
p_{\beta - 200 \eps,\nu}(\val_u^C(X)) + \nu \ge p_{\beta,\nu}(\val_u(X)) - \nu.
\]
Therefore, by definition,
\begin{align*}
    \Phi_{\beta - 200\eps,\nu}(X|_C,X'|_C)
    &= \sum_{s \in \Sigma} \E_{u \in C}\left[\left(Z_{u,s}\cdot p_{\beta-200\eps,\nu}(\val_u^C(X))\right)\right]^2\\
    &\ge \sum_{s \in \Sigma} \E_{u \in C}\left[\left(Z_{u,s}\cdot \left(p_{\beta,\nu}(\val_u(X))\right) - 2\nu\right)\right]^2\\
    &\ge \Phi_{\beta,\nu}(X,X')|_C - 2\nu,
\end{align*}
where each inequality is a sum-of-squares inequality of degree at most $2\deg(p)$.
\end{proof}
\section{Approximating indicator functions with low-degree polynomials}\label{sec:apx-ind}
In this section, we note that there is a low-degree polynomial which provides an SOS-certifiably good approximation to a step function.
This will be a consequence of the existence of low-degree approximations to step functions that appear in the literature, as well as the theory of univariate sums-of-squares.

The following theorem, due to Diakonikolas et al., provides a low-degree approximation to a step function.
Though similar statements may be proven using classical results in approximation theory, we use Diakonikolas et al.~\cite{diakonikolas2010bounded} as their degree bounds are sharper (though ultimately this does not qualitatively change our result).

\begin{theorem}[Corollary of Theorem 4.5 in \cite{diakonikolas2010bounded}]\label{thm:step-approx}
Define $s_{\alpha}(x)$ to be the step function at $\alpha \in (0,1)$, so that $s_{\alpha}(x) = 0$ if $x < \alpha$ and $1$ otherwise.
Then for each $0 < \delta < \alpha$ and $\epsilon > 0$ there is a univariate polynomial of $p_{\alpha}^{\epsilon,\delta}$ of degree $O(\frac{1}{\delta}\log^2\frac{1}{\epsilon})$ such that
\begin{enumerate}
\item $|p_{\alpha}^{\epsilon,\delta}(x) - s_{\alpha}(x)| \le \epsilon$ for all $x \in [0,\alpha-\delta] \cup [\alpha + \delta,1]$
\item $0 \le p_{\alpha}^{\epsilon,\delta}(x) \le 1$ for all $x \in [0,1]$
\item $p_{\alpha}^{\epsilon,\delta}$ is monotonically increasing on $(\alpha - \delta, \alpha + \delta)$.
\end{enumerate}
Further, given axioms $A = \{x \ge 0\}\cup \{x \le 1\}$, there is an SoS proof that 
\[ A \vdash_{O(\frac{1}{\delta}\log^2 \frac{1}{\epsilon})} \{0 \le p_{\alpha}^{\epsilon,\delta}(x) \le 1\}.
\]
\end{theorem}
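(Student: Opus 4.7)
The plan is to prove this in two stages: first obtain the approximating polynomial with the three stated analytic properties (parts 1--3), and then upgrade the pointwise bound $0 \le p(x) \le 1$ on $[0,1]$ into a low-degree SoS certificate.

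For the first stage, I would directly invoke Theorem~4.5 of~\cite{diakonikolas2010bounded}, which yields a univariate polynomial approximator to the step function $s_\alpha$ satisfying exactly properties 1--3 with degree $O(\frac{1}{\delta}\log^2 \frac{1}{\epsilon})$. (One could alternatively construct $p$ ``by hand'' by smoothing $s_\alpha$ with a bump of width $\delta$ and then truncating the Chebyshev expansion, followed by clipping into $[0,1]$, but the cited reference already packages this and gives the sharper degree bound.)

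For the second stage, the key tool is the classical Markov--Luk\'acs theorem on univariate nonnegativity. Writing $d = O(\frac{1}{\delta}\log^2\frac{1}{\epsilon})$ for the degree of $p := p_\alpha^{\epsilon,\delta}$ and using property~2, both $p(x)$ and $1-p(x)$ are univariate polynomials of degree at most $d$ that are nonnegative on $[0,1]$. Markov--Luk\'acs then yields explicit decompositions
\[
 p(x) \;=\; \sigma_0(x) \,+\, x(1-x)\,\sigma_1(x), \qquad 1 - p(x) \;=\; \tau_0(x) \,+\, x(1-x)\,\tau_1(x),
\]
with $\sigma_0,\sigma_1,\tau_0,\tau_1$ sums of squares of degree at most $d$ (the analogous decomposition $x\,q(x)^2 + (1-x)\,r(x)^2$ is used if $d$ is odd). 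These are precisely degree-$d$ SoS proofs of $p(x) \ge 0$ and $1-p(x) \ge 0$ from the axiom set $A = \{x \ge 0\}\cup\{1-x \ge 0\}$, which completes the proof.

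The main obstacle is degree bookkeeping: a generic Positivstellensatz certificate could have degree exponential in $d$, but the univariate Markov--Luk\'acs representation is exactly what rules that out and gives certificates of degree at most $d$ itself. Once one has verified that the $\sigma_i, \tau_i$ can be taken of degree $\le d$, the final conclusion $A \vdash_{O(d)} \{0 \le p_\alpha^{\epsilon,\delta}(x) \le 1\}$ follows by substitution. A small side check is that the axioms in the paper's statement are $\{x \ge 0\} \cup \{x \le 1\}$, which coincide with $\{x \ge 0\}\cup\{1-x\ge 0\}$ up to a sign, so the certificates transfer directly.
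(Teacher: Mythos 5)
Your proposal matches the paper's argument essentially exactly: the paper also derives parts 1--3 directly from Theorem 4.5 of Diakonikolas et al.\ (via a change of variables from $\mathrm{sign}$ on $[-1,1]$ to a step on $[0,1]$, plus a shift/rescale), and the paper also obtains the SoS certificate by invoking Luk\'acs' theorem (your ``Markov--Luk\'acs'') applied to both $p$ and $1-p$, recast to the interval $[0,1]$ exactly as you describe. The only cosmetic discrepancy is that you write the certificate degree as $\le d$ for the SoS multipliers where the paper's Corollary~\ref{cor:Lukacs} tracks it as $\le 2d$, but this is absorbed by the $O(\cdot)$ in the conclusion either way.
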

\begin{remark}
Though the statement is not identical to that of Theorem 4.5 of \cite{diakonikolas2010bounded}, it is an easy corollary.
To switch from their $\text{sign}(y)$ polynomial for $y \in [-1,1]$ to $s_\alpha(x)$ for $x \in [0,1]$, we can do a simple change of variables, taking $y = x-\alpha$.
Shifting by a constant and rescaling changes the bounds so that $p(x) \in [0,1]$.
The third item is not explicitly written in the statement of Theorem 4.5 of \cite{diakonikolas2010bounded}, but it can be easily extracted from the proof.
The \sos-certifiability follows from Luk\'{a}cs' Theorem.
\end{remark}

We here recall Luk\'{a}cs' theorem and a simple corollary, which easily establish the \sos-certifiability of the step function approximation.
\begin{theorem}[Luk\'{a}cs Theorem, see e.g. \cite{szeg1939orthogonal}]
If $p$ is a degree-$d$ univariate polynomial with $p(x) \ge 0$ for $x \in [-1,1]$, then $p$ can be written as 
\[
p(x) = \begin{cases}
s(x)^2 + (1-x^2) t(x)^2 & \text{ if $d$ even}\\
(1+x)s(x)^2 + (1-x) t(x)^2 & \text{ if $d$ odd}
\end{cases}
\]
for $s,t$ real polynomials of degree at most $d$.
\end{theorem}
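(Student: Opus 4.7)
The plan is to prove the even-degree case first, then deduce the odd-degree case from it by multiplying $p$ by $(1+x)$. The backbone of the even case is the Brahmagupta--Fibonacci-type identity
\[
(s_1^2 + (1-x^2) t_1^2)(s_2^2 + (1-x^2) t_2^2) = (s_1 s_2 + (1-x^2) t_1 t_2)^2 + (1-x^2)(s_1 t_2 - s_2 t_1)^2,
\]
verified by direct expansion, which shows that the set $\Sigma := \{s^2 + (1-x^2) t^2 : s, t \in \mathbb{R}[x]\}$ is closed under multiplication.

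For the even case $d = 2m$, I would factor $p$ over $\mathbb{R}$ and exhibit each elementary non-negative building block as an element of $\Sigma$, then combine via the identity above. Non-negativity on $[-1,1]$ forces every real root in $(-1,1)$ to have even multiplicity, so the contribution of interior roots is a perfect square, trivially in $\Sigma$ via $s^2 = s^2 + (1-x^2)\cdot 0^2$. Complex-conjugate root pairs contribute irreducible quadratics $(x-b)^2 + c^2$ with $c \neq 0$, which can be placed in $\Sigma$ by solving the system $\alpha^2 (x-\beta)^2 + T^2 (1-x^2) = (x-b)^2 + c^2$ for $(\alpha, \beta, T)$; the solvability reduces to a quadratic in $T^2$ whose discriminant is positive. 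The remaining boundary and exterior real linear factors, grouped with the leading coefficient $\lambda$, split into pairs with same-sign contribution on $[-1,1]$; each pair $(x-r_1)(x-r_2)$ with $r_1, r_2$ both in $[1,\infty)$ or both in $(-\infty, -1]$ lies in $\Sigma$ by an analogous computation (whose discriminant is $(r_1^2-1)(r_2^2-1) \ge 0$), and the possible leftover mixed pair $\lambda(x-r)(x-r')$ with $r \ge 1, r' \le -1$ is handled the same way. Multiplying the pieces via the identity above gives $p(x) = s(x)^2 + (1-x^2) t(x)^2$ with $\deg s, \deg t \le d$.

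For the odd case $d = 2m+1$, consider $\tilde p(x) := (1+x) p(x)$, which has even degree $d+1$ and is non-negative on $[-1,1]$. By the even case, $\tilde p(x) = A(x)^2 + (1-x^2) B(x)^2$. Evaluating at $x = -1$ gives $A(-1)^2 = 0$, so $(1+x) \mid A$; writing $A = (1+x) A_1$ and dividing the identity $(1+x) p(x) = (1+x)^2 A_1(x)^2 + (1+x)(1-x) B(x)^2$ by $(1+x)$ yields $p(x) = (1+x) A_1(x)^2 + (1-x) B(x)^2$ with the desired degree bounds. The main obstacle is the bookkeeping in the even case, namely verifying that every elementary non-negative factor really does admit a representation in $\Sigma$, with the most delicate sub-case being the leftover mixed pair $\lambda(x-r)(x-r')$ when $\lambda < 0$. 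A cleaner route that sidesteps most of the case analysis applies Fej\'er--Riesz to $T(\theta) := p(\cos\theta)$ (a non-negative, even trigonometric polynomial of degree at most $d$), obtaining a polynomial $q$ with $T(\theta) = |q(e^{i\theta})|^2$, and then uses the Chebyshev identities $\cos(k\theta) = T_k(\cos\theta)$ and $\sin(k\theta) = \sin\theta \cdot U_{k-1}(\cos\theta)$ together with the evenness of $T$ (which permits arranging the coefficients of $q$ so that $q(e^{i\theta}) = u(\cos\theta) + i \sin\theta \cdot v(\cos\theta)$ for real $u, v$) to obtain $p(x) = u(x)^2 + (1-x^2) v(x)^2$ directly.
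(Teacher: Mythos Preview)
The paper does not prove Luk\'{a}cs' Theorem; it states the result with a citation to Szeg\H{o}'s book and uses it as a black box to derive the subsequent corollary (that non-negativity of a univariate polynomial on an interval $[a,b]$ admits a low-degree \sos proof from the axioms $\{x \ge a, x \le b\}$). So there is no ``paper's own proof'' to compare against.

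That said, your proposal is essentially a correct proof. The Brahmagupta-type identity does give multiplicative closure of $\Sigma$, and the reduction of the odd case to the even case via multiplication by $(1+x)$ and division after observing $A(-1)=0$ is clean and standard. A couple of remarks on the bookkeeping you flagged as the main obstacle: for the degree bounds to go through in the odd case you need the \emph{tight} bounds $\deg s \le m$, $\deg t \le m-1$ from the even case (where $d=2m$), not just the loose ``$\le d$'' stated in the theorem; these tight bounds do follow by induction on the number of quadratic factors via your identity (each step raises $\deg S$ by one and $\deg T$ by one), so the odd-case division yields $\deg A_1, \deg B \le m \le d$. Also, your treatment of the ``leftover mixed pair'' $\lambda(x-r)(x-r')$ with $\lambda<0$, $r\ge 1$, $r'\le -1$ needs a sign check: the quadratic in $T^2$ you obtain has constant term $-\lambda^2(r^2-1)(r'^2-1) \le 0$, so a non-negative root for $T^2$ exists, and then $\alpha^2 = -\lambda - T^2$ must be verified non-negative, which holds because $-\lambda > 0$ and one can check the smaller root works. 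The Fej\'er--Riesz route you mention at the end is indeed the slicker argument and is closer to how Szeg\H{o} presents it.
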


The following easy corollary is well-known (though we include the proof for completeness).
\begin{corollary}\label{cor:Lukacs}
Let $q$ be a degree-$d$ polynomial which is non-negative on $[a,b]$.
Then given the axioms $A = \{x \ge a\} \cup\{x \le b\}$, there is a degree-$2d$ \sos proof that $q$ is non-negative, $A \vdash_{2d} q(x) \ge 0$.
\end{corollary}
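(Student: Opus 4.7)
The plan is to reduce to the interval $[-1,1]$ by an affine change of variables and then invoke Luk\'{a}cs' theorem. Concretely, set $y = \frac{2x - (a+b)}{b-a}$, so that $x \in [a,b]$ corresponds to $y \in [-1,1]$, and define $\tilde q(y) := q\!\left(\tfrac{b-a}{2}\, y + \tfrac{a+b}{2}\right)$. Then $\tilde q$ is a univariate polynomial of degree at most $d$ that is non-negative on $[-1,1]$, and so Luk\'{a}cs' theorem applies to write $\tilde q(y)$ in one of the two standard forms depending on the parity of $d$, with real polynomials $s(y), t(y)$ of degree at most $d$.

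Next, I would pull the identity back to the variable $x$. The key observation is that $1 - y^2$, $1+y$, and $1-y$ are all non-negative linear combinations of $(x-a)$ and $(b-x)$: specifically $1+y = \frac{2}{b-a}(x-a)$, $1-y = \frac{2}{b-a}(b-x)$, and $1 - y^2 = \frac{4}{(b-a)^2}(x-a)(b-x)$. Substituting these into the Luk\'{a}cs decomposition yields an expression of the form $q(x) = S(x)^2 + (x-a)\, R_1(x) + (b-x)\, R_2(x)$ (for $d$ odd) or $q(x) = S(x)^2 + (x-a)(b-x)\, R(x)$ (for $d$ even), where $S$, $R$, $R_1$, $R_2$ are sums of squares of real polynomials of degree at most $d$. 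Since $(x-a)$ and $(b-x)$ are exactly the axiom polynomials in $A$, this is by definition a degree-$2d$ sum-of-squares proof that $q(x) \ge 0$ modulo $A$.

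The only thing to double-check is the degree accounting: the squared term $S(x)^2$ has degree at most $2d$, and each axiom multiplier $R_i(x)$ is a sum of squares of degree at most $d$, so that the product $(x-a) R_i(x)$ or $(x-a)(b-x) R(x)$ also has total degree at most $2d$. This matches the claimed degree bound. I do not anticipate any real obstacle here; the corollary is essentially just Luk\'{a}cs' theorem transported to a general interval, so the entire argument is a short change-of-variables computation.
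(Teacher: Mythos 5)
Your proof is correct and follows essentially the same route as the paper: the same affine change of variables to $[-1,1]$, the same invocation of Luk\'{a}cs' theorem, and the same identities expressing $1\pm y$ and $1-y^2$ in terms of $(x-a)$ and $(b-x)$. The only cosmetic difference is that you phrase the pulled-back decomposition with generic SoS multipliers $R_i$, while the paper writes them explicitly as single squares $s(x)^2, t(x)^2$.
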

\begin{proof}
We claim that Luk\'{a}cs theorem implies
\[
q(x) = \begin{cases}
s(x)^2 + (x-a)(b-x)t(x)^2 & \text{ if $d$ even}\\
(x-a)s(x)^2 + (b-x)t(x)^2 & \text{ if $d$ odd}
\end{cases}
\] 
for $s,t$ real polynomials of degree at most $2d$, and this implies our corollary.
To get the claim, we perform a change of variables, taking $x' =\frac{2}{b-a} (x - a) -1$.
Let $q(x) = h(x')$. 
We now have that $h(x')$ is a degree-$d$ polynomial which is non-negative on $[-1,1]$.
From Luk\'{a}cs Theorem, we have that
\[
h(x') = \begin{cases}
u(x')^2 + (1-x'^2) v(x')^2 & \text{ if $d$ even}\\
(1+x')u(x')^2 + (1-x') v(x')^2 & \text{ if $d$ odd}
\end{cases}
\]
for $u,v$ real polynomials of degree at most $d$.
But now,
\[
\tfrac{b-a}{2} \cdot (1 - x') = b-x, \qquad \tfrac{b-a}{2}\cdot (1 + x') = x-a, \qquad \left(\tfrac{b-a}{2}\right)^2\cdot (1-x'^2) 
= (x-a)(b-x)
\]
and so by applying a change of variables to the polynomials $u(x'),v(x')$ to obtain $s(x),t(x)$, the conclusion follows.
\end{proof}

When we have \sos certificates that polynomials are bounded within $(0,1) \pm \epsilon$, \sos can also certify that they behave roughly like indicator functions.
\begin{fact}[Union bound for Approximate Indicators]\torestate{\label{fact:apx-ub}
Suppose that $h,g$ are polynomials of degree at most $d$, and suppose furthermore that from the axioms $A$, there is an \sos proof that $A \vdash_d \{0 \le g \le 1\}\cup\{ 0 \le h \le 1\}$.
Then,
\[
A \quad \vdash_{2d}\quad gh \ge g + h - 1
\]}
\end{fact}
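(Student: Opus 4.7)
The plan is to reduce the inequality $gh \geq g + h - 1$ to the algebraic identity
\[
gh - (g + h - 1) \;=\; (1-g)(1-h),
\]
so the task becomes: prove $A \vdash_{2d} (1-g)(1-h) \geq 0$. The hypotheses immediately give $A \vdash_d 1 - g \geq 0$ and $A \vdash_d 1 - h \geq 0$, so all that is required is the (standard) meta-principle that the product of two SoS-derivable nonnegative polynomials, each in degree $d$, is itself SoS-derivable nonnegative in degree $2d$.

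To carry this out, I would expand the two degree-$d$ certificates in the standard form
\[
1-g \;=\; \sigma_g \;+\; \sum_\ell \rho_{g,\ell}\,\alpha_\ell \;+\; \sum_i c_{g,i}\, q_i,
\qquad
1-h \;=\; \sigma_h \;+\; \sum_\ell \rho_{h,\ell}\,\alpha_\ell \;+\; \sum_i c_{h,i}\, q_i,
\]
where $\sigma_g,\sigma_h,\rho_{g,\ell},\rho_{h,\ell}$ are sums of squares, the $c_{g,i},c_{h,i}$ are arbitrary polynomials, and $\{\alpha_\ell \geq 0\}_\ell \cup \{q_i = 0\}_i \subseteq A$. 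Multiplying these two representations and expanding, every resulting term is of one of three types: (a) a product of two sums of squares, hence a sum of squares; (b) a sum of squares multiplied by a single inequality axiom $\alpha_\ell$ or by a single equality axiom $q_i$ (with arbitrary polynomial multiplier in the equality case); or (c) a sum of squares times a product $\alpha_\ell \alpha_m$ of two inequality axioms. Types (a) and (b) give valid SoS certificate contributions of degree at most $2d$. Type (c) is also legitimate in the Positivstellensatz-style SoS proof system, since a product of axioms is itself a valid nonnegative side-condition.

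The main (and essentially only) obstacle is cosmetic: verifying that the paper's SoS formulation admits products of axioms as multipliers with the appropriate degree bookkeeping. This is standard and is implicit whenever SoS certificates are combined multiplicatively, so the total degree bound of $2d$ is immediate, which completes the plan.
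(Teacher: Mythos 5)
Your proof is correct and takes the same approach as the paper: both reduce the claim to showing $(1-g)(1-h) \succeq 0$ under the hypotheses, via the identity $gh - (g+h-1) = (1-g)(1-h)$ (the paper arrives at it by expanding $(g+(1-g))(h+(1-h))=1$ and then subtracting one extra copy of $(1-g)(1-h)$, which is the same algebra). Your explicit case analysis for why the product of two degree-$d$ certificates is itself a degree-$2d$ certificate is more careful than the paper's two-line argument, which simply invokes the bounds $g,h \preceq 1$; the latitude you ask for when a term $\alpha_\ell \alpha_m$ appears is the preordering convention that the paper's formal definition in the preliminaries does not actually state but already relies on in Corollary~\ref{cor:Lukacs} (which uses the product $(x-a)(b-x)$ of two inequality axioms).
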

\begin{proof}
We have as a polynomial equality that $(g + (1-g))(h + (1-h)) = 1$.
Expanding then re-arranging,
\begin{align*}
gh 
&= 1 - g(1-h) - h(1-g) - (1-g)(1-h)\\
&\succeq 1 - g(1-h) - h(1-g) - (1-g)(1-h) - (1-h)(1-g)\\
&= 1 - (1-h) - (1-g) 
\end{align*}
where in the second line we have used the \sos bounds $g,h \preceq 1$.
Simplifying gives the conclusion.
\end{proof}

\begin{fact}[Markov Inequality for Bounded Polynomials]
\label{fact:bdd-markov}
Let $p := p_{\alpha}^{\varepsilon,\delta}$ be the degree-$D = O(\frac{1}{\delta}\log^2\frac{1}{\varepsilon})$ polynomial guaranteed by Theorem \ref{thm:step-approx}.
Then $p$ satisfies Markov's inequality:
\begin{equation*}
\{0 \le x \le 1\}
\vdash_{\deg(p)} 
\{ p(x) \ge 1 - \frac{1-x}{1-\alpha-\delta} - \eps\} \cup \{p(x) \le \frac{x}{\alpha-\delta} + \eps\}
\end{equation*}
\end{fact}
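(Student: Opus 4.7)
\medskip

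\noindent\textbf{Proof plan.} The idea is to reduce both inequalities to polynomial nonnegativity on $[0,1]$ and then invoke Corollary~\ref{cor:Lukacs} for the \sos certificate. Concretely, define the two auxiliary polynomials
\[
q(x) \;=\; p(x) - 1 + \frac{1-x}{1-\alpha-\delta} + \varepsilon, \qquad
r(x) \;=\; \frac{x}{\alpha-\delta} + \varepsilon - p(x),
\]
each of degree $\deg(p)$. Proving the two inequalities amounts to showing $q(x) \ge 0$ and $r(x) \ge 0$ pointwise for $x \in [0,1]$, after which Corollary~\ref{cor:Lukacs} yields the desired \sos derivations from $\{0 \le x\} \cup \{x \le 1\}$ in degree $2\deg(p) = \tilde O(1/\delta)$ (matching the order-of-magnitude statement).

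The bulk of the work is the pointwise verification, which I would carry out by splitting $[0,1]$ into the three natural regions coming from Theorem~\ref{thm:step-approx}: $[0,\alpha-\delta]$, $[\alpha-\delta,\alpha+\delta]$, and $[\alpha+\delta,1]$. On the outer two regions we have $|p(x) - s_\alpha(x)| \le \varepsilon$, while on the middle transition region we only know $0 \le p(x) \le 1$. For $q(x)$: on $[0,\alpha-\delta]$ use $p(x) \ge 0$ together with the fact that $(1-x)/(1-\alpha-\delta) \ge 1$ in this range, so the linear lower bound itself is already $\ge -\varepsilon$; on the transition region the linear lower bound is $\le -\varepsilon < 0$ and $p(x) \ge 0$ suffices; on $[\alpha+\delta,1]$ one checks that the linear lower bound never exceeds $1-\varepsilon$ while $p(x) \ge 1-\varepsilon$. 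The verification for $r(x)$ is symmetric: $p(x) \le \varepsilon$ on $[0,\alpha-\delta]$ handles the first region, $p(x) \le 1$ combined with $x/(\alpha-\delta) \ge 1$ handles the other two.

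Once pointwise nonnegativity on $[0,1]$ is established, Corollary~\ref{cor:Lukacs} applied with $a=0$, $b=1$ directly produces the required \sos decompositions. In particular, $q$ and $r$ admit representations of the form $s(x)^2 + x(1-x)t(x)^2$ (even degree) or $x\, s(x)^2 + (1-x)\, t(x)^2$ (odd degree), and each term in such a decomposition is certifiably nonnegative from the axioms $\{x \ge 0, x \le 1\}$.

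The main (minor) obstacle is getting the slack parameters $\delta$ and $\varepsilon$ to line up correctly at the boundaries between regions, especially near $x = \alpha + \delta$ for $q$ and near $x = \alpha - \delta$ for $r$, where the linear bounds are tightest. Any carelessness in the choice of denominator ($1-\alpha-\delta$ vs.\ $1-\alpha$, and $\alpha-\delta$ vs.\ $\alpha$) breaks the pointwise inequality on the transition region, which is precisely why Theorem~\ref{thm:step-approx}'s $\delta$-gap enters the statement. Beyond this bookkeeping, no new ideas are needed.
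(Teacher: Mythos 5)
Your proposal is correct and takes essentially the same approach as the paper: both establish the two target inequalities by pointwise case analysis on subintervals of $[0,1]$ (exploiting the $\varepsilon$-closeness to the step function on $[0,\alpha-\delta]\cup[\alpha+\delta,1]$ and the crude $0\le p\le 1$ bound on the transition region), and then invoke Corollary~\ref{cor:Lukacs} to convert pointwise nonnegativity on $[0,1]$ into an \sos certificate from the axioms $\{0\le x\le 1\}$. The only cosmetic difference is that the paper merges your first two regions for $q$ (both handled by $p(x)\ge 0$ and $\frac{1-x}{1-\alpha-\delta}\ge 1$) and your last two regions for $r$ (both handled by $p(x)\le 1$ and $\frac{x}{\alpha-\delta}\ge 1$), so it works with two cases per inequality rather than three.
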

\begin{proof}
We will perform case analysis on $x$, throughout using Corollary~\ref{cor:Lukacs} to obtain our \sos inequalities.
We prove the first inequality first.
For $x \in [0,\alpha+\delta)$,
\[
p(x) \succeq 0 \succeq 1 - \frac{1-x}{1-\alpha - \delta},
\]
where we have used that $p(x) \succeq 0$ and $\frac{1-x}{1-\alpha - \delta}\succeq 1$.
Now for $x \in [\alpha +\delta, 1]$,
\[
p(x) \succeq 1 - \eps \succeq 1 - \eps - \frac{1-x}{1-\alpha-\delta},
\]
where we have used that $x \in [0,1]$ so that we are subtracting a positive quantity.
Combining these claims concludes the proof of the first claim.

To see the second claim, notice that for $x \in [0,\alpha-\delta]$, $p(x) \le \eps$, and for $x \in (\alpha - \delta, 1]$, $p(x) \le 1 \le \frac{x}{\alpha - \delta}$.
This concludes the proof.
\end{proof}

\begin{observation}\label{obs:squaring}
Let $p := p_{\alpha}^{\varepsilon,\delta}$ be the degree-$D = O(\frac{1}{\delta}\log^2\frac{1}{\varepsilon})$ polynomial guaranteed by Theorem \ref{thm:step-approx}.
Then $p^2$ is a polynomial of degree $2D$ which enjoys the same guarantees as the polynomial $p_{\alpha}^{2\varepsilon,\delta}$.
\end{observation}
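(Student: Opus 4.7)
The plan is to verify that $p^2$ satisfies each of the four conditions listed in Theorem~\ref{thm:step-approx} with the parameter $\varepsilon$ replaced by $2\varepsilon$ (and $\delta$ unchanged), using only the pointwise and \sos guarantees already available for $p$. The degree bound $2D$ is immediate since squaring doubles the degree. I would treat the conditions in order.

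For condition (1), I split on the two intervals. On $[0,\alpha-\delta]$ we have $0 \le p(x) \le \varepsilon$, so $0 \le p(x)^2 \le \varepsilon^2 \le \varepsilon \le 2\varepsilon$, matching $s_\alpha(x)=0$ within $2\varepsilon$. On $[\alpha+\delta,1]$ we have $1-\varepsilon \le p(x) \le 1$, and hence $(1-\varepsilon)^2 \le p(x)^2 \le 1$; since $(1-\varepsilon)^2 \ge 1-2\varepsilon$, $p(x)^2$ approximates $s_\alpha(x)=1$ within $2\varepsilon$. For condition (2), $0\le p(x)\le 1$ on $[0,1]$ immediately gives $0 \le p(x)^2 \le 1$. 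For condition (3), on $(\alpha-\delta,\alpha+\delta)$ the polynomial $p$ is monotonically increasing and non-negative (by condition (2)), so $p^2$ is the composition of a monotone non-negative function with the squaring map on $[0,1]$ and hence monotone increasing as well.

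The only slightly less trivial item is the \sos-certifiability in condition (4). The lower bound $p(x)^2 \ge 0$ holds as a sum of squares (of degree $2D$) with no axioms needed. For the upper bound, I would write the algebraic identity
\[
1 - p(x)^2 \;=\; (1-p(x))(1+p(x)),
\]
and note that Theorem~\ref{thm:step-approx} already gives $\{0 \le x \le 1\}\vdash_D \{0 \le p(x) \le 1\}$, so in particular $\{0 \le x \le 1\}\vdash_D\{1-p(x) \ge 0\}$ and $\{0 \le x \le 1\}\vdash_D\{1+p(x)\ge 0\}$. Multiplying these two nonnegativity certificates (which is a valid \sos operation that at most doubles the degree) yields $\{0\le x \le 1\}\vdash_{2D}\{1 - p(x)^2 \ge 0\}$, as required.

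I do not expect any genuine obstacle here; the observation is a simple consequence of the pointwise bounds on $p$ together with the factorization $1-p^2 = (1-p)(1+p)$ that turns the $\le 1$ side of the certificate into a product of existing \sos certificates. The only place to be slightly careful is to check that $(1-\varepsilon)^2 \ge 1-2\varepsilon$, which is of course immediate from expanding the square.
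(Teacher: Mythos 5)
Your verification of the pointwise conditions (items 1--3 of Theorem~\ref{thm:step-approx}) matches the paper's one-line proof, which records exactly the same bounds: $p^2 \in [0,\varepsilon^2]$ on $[0,\alpha-\delta]$, $p^2 \in [(1-\varepsilon)^2,1]$ on $[\alpha+\delta,1]$, and monotonicity on $(\alpha-\delta,\alpha+\delta)$.

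Where your proposal goes beyond the paper's terse argument is in explicitly addressing the \sos-certifiability, and here there is a real gap. Multiplying two nonnegativity certificates in the paper's sense of a degree-$d$ \sos proof (the quadratic module: $f = s + \sum_j r_j g_j$ with $s, r_j$ sums of squares) does \emph{not} in general produce a valid certificate for the product. Writing $1-p = s_1 + r_1 x + r_1'(1-x)$ and $1+p = s_2 + r_2 x + r_2'(1-x)$ and expanding, you pick up cross terms like $(r_1 r_2' + r_1' r_2)\cdot x(1-x)$, which involve the product of two \emph{distinct} axioms and are not of the form ``sum of squares times a single axiom.'' Your conclusion is nonetheless correct: the clean route is to apply Corollary~\ref{cor:Lukacs} directly to the degree-$2D$ univariate polynomial $1-p(x)^2$, which is nonnegative on $[0,1]$ -- this is what the remark after Theorem~\ref{thm:step-approx} (``the \sos-certifiability follows from Luk\'{a}cs' Theorem'') has in mind. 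Alternatively, you can repair your argument by noting that $x(1-x) = (1-x)^2\cdot x + x^2\cdot(1-x)$ does lie in the quadratic module (at degree $3$), so the offending cross terms can be absorbed with small degree overhead; but the blanket assertion that multiplying \sos certificates is ``a valid \sos operation'' should not stand as written.
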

\begin{proof}
The polynomial $p^2$ is bounded in $[0,\eps^2]$ on $[0,\alpha -\delta]$, inherits the monotonic increasing property on $(\alpha-\delta,\alpha+\delta)$, and is bounded by $[(1-\eps)^2,1]$ on $[\alpha + \delta,1]$.
\end{proof}

\bibliographystyle{amsalpha}
\bibliography{references}

\appendix

\section{Sum-of-squares Background}\label{sec:prelims} \label{app:sos}

Given a polynomial optimization program $P = \{\max_{x} p(x) \ s.t. ~ q_i(x) = 0, \forall i \in [m] \}$, the degree-$D$ sum-of-squares semidefinite programming relaxation of $P$ is a semidefinite program of size $n^{O(D)}$ that returns a {\em pseudoexpectation operator} $\pE: x^{\le D} \to \R$. This operator can be uniquely extended to give a pseudo-expectation operator on the set of all polynomials of degree at most $D$ by linearity (defined precisely below).
This operator satisfies four properties:
\begin{itemize}
\item Scaling: $\pE[1] = 1$.
\item Linearity: $\pE[a \cdot f(x) + b \cdot g(x)] = a\cdot \pE[f(x)] + b \cdot \pE[g(x)]$, for all $a,b \in \R$ and all degree $\leq D$ polynomials $f,g$. 
\item Non-negativity of low-degree squares: $\pE[s(x)^2] \ge 0$ for all polynomials $s(x)$ with $\deg(s) \le \tfrac{D}{2}$.
\item Program constraints: $\pE[f(x) \cdot q_i(x)] = 0$ for all $i \in [m]$ and polynomials $f(x)$ such that $\deg(fq_i) \le D$.
\end{itemize}
Additionally, we will have $\pE[p(x)] \ge value(P)$.
We refer to these as {\em pseudomoments} of a {\em pseudodistribution}.

\subsection{Reweighing and conditioning} We will sometimes {\em reweigh} or {\em condition} our degree-$D$ pseudodistribution by a sum-of-squares polynomial $s(x)$ of degree $d < D$; this simply means that we define a new pseudoexpectation operator $\pE'$  of degree $D-d$ by taking, for every monomial $x^{\alpha}$ of degree at most $D - d$, $\pE'[x^{\alpha}] = \frac{\pE[x^{\alpha} \cdot s(x)]}{\pE[s(x)]}$.
One can show that reweighing preserves the four properties of the pseudodistribution up to degree $D - d$.
When $s(x)$ is a $0/1$ function, this is also called ``conditioning'', and we may denote $\pE'$ by $\pE[\cdot ~|~ s(x)]$.
See \cite{BarakRS11,BKS17} for further discussion.

\subsection{Independent samples}
Throughout the paper, we make use of ``shift partition'' variables $\{Z_{u,s}\}_{u \in V, s \in \Sigma}$ which we define as
\[
Z_{u,s} = \sum_{a \in \Sigma} X_{u,a} X'_{u,a+s}
\]
for $X,X'$ ``independent copies'' of $X$.
Formally, given a pseudoexpectation operator $\pE: X^{\leq D} \rightarrow \R$, we define a pseudoexpectation operator $\pE_{X,X'}$ on monomials of degree $\leq D$ in variables $X,X'$: for any monomial $X^{\alpha}(X')^{\beta}$ in $X,X'$,  $\pE_{X,X'}[X^{\alpha}(X')^{\beta}] := \pE_{X}[X^{\alpha}] \cdot \pE_X[X^{\beta}]$.
Similar constructs have been used in the literature, see e.g. \cite{BarakKS14}.
We denote the resulting ``product'' pseudodistribution by $\pE_{X,X'}$ and call $X,X'$ as independent samples.
We will use the following facts about polynomials in independent samples, several of which regard the $Z_{u,a}$ specifically.

\begin{fact}\label{fact:indep}
If $\pE_{X}$ is a valid pseudodistribution of degree $D$ in variables $X$, then $\pE_{X,X'} $ is a valid pseudodistribution of degree $D$. Furthermore, if there are additional SOS inequalities that are true for $\T_X$, they also hold for $\T_{X,X'}$.
\end{fact}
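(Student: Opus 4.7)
The plan is to verify directly that $\pE_{X,X'}$ satisfies the four axioms of a degree-$D$ pseudoexpectation operator: scaling, linearity, non-negativity of squares, and program constraints. Linearity extends from the monomial definition by taking linear combinations, and the bound $\pE_{X,X'}[1] = \pE_X[1] \cdot \pE_X[1] = 1$ is immediate.

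For the program constraints, given any axiom $q_i(X) = 0$, I would show $\pE_{X,X'}[f(X,X') \cdot q_i(X)] = 0$ for every $f$ of appropriate degree. The idea is to expand $f$ in powers of the $X'$ variables as $f(X,X') = \sum_\alpha (X')^\alpha \, c_\alpha(X)$, so the product structure yields
$$\pE_{X,X'}[f \cdot q_i(X)] = \sum_\alpha \pE_X[X^\alpha] \cdot \pE_X[c_\alpha(X) q_i(X)] = 0,$$
since each factor $\pE_X[c_\alpha q_i]$ vanishes by the corresponding axiom for $\pE_X$. The case of $q_i(X')$ is symmetric.

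The main step, and the only substantive obstacle, is non-negativity of squares: I need $\pE_{X,X'}[p(X,X')^2] \geq 0$ for every polynomial $p$ of joint degree at most $D/2$. Again expanding $p(X,X') = \sum_\alpha (X')^\alpha \, c_\alpha(X)$ where each $c_\alpha(X)$ has degree at most $D/2$, squaring and applying the product definition gives
$$\pE_{X,X'}[p^2] \;=\; \sum_{\alpha,\beta} \pE_X\bigl[c_\alpha(X) c_\beta(X)\bigr] \cdot \pE_X\bigl[X^{\alpha+\beta}\bigr] \;=\; \langle M, N \rangle,$$
the Frobenius inner product of two matrices indexed by the $X'$-monomials $\alpha$ of degree at most $D/2$, where $M_{\alpha,\beta} := \pE_X[c_\alpha c_\beta]$ and $N_{\alpha,\beta} := \pE_X[X^{\alpha+\beta}]$. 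Then $M$ is PSD because $v^\top M v = \pE_X\bigl[(\sum_\alpha v_\alpha c_\alpha(X))^2\bigr] \geq 0$ by non-negativity of squares for $\pE_X$ applied to a polynomial of degree $\leq D/2$, and $N$ is the standard degree-$D/2$ moment matrix of $\pE_X$, which is PSD for the same reason. Since the Frobenius inner product of two PSD matrices is non-negative, $\pE_{X,X'}[p^2] \geq 0$.

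For the ``furthermore'' clause, any SOS inequality true for $\pE_X$ comes with an SOS proof $f(X) = s(X) + \sum_j c_j(X) q_j(X)$ where $s$ is a sum of squares of degree at most $D$; applying $\pE_{X,X'}$ term by term to this identity in the $X$ variables alone and using the two properties verified above yields $\pE_{X,X'}[f(X)] \geq \pE_{X,X'}[s(X)] \geq 0$. Since $\pE_{X,X'}$ treats $X$ and $X'$ symmetrically, the analogous bound holds for $f(X')$, completing the proof.
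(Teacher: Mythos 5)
Your proof is correct. For scaling, linearity, and the program constraints the argument is the same as the paper's: expand in the $X'$-monomial basis and use linearity plus the corresponding constraint for $\pE_X$. The interesting step is the non-negativity of squares, and here you take a genuinely different route. You expand $p(X,X') = \sum_\alpha (X')^\alpha c_\alpha(X)$, observe that $\pE_{X,X'}[p^2] = \langle M, N\rangle_F$ for the Gram matrix $M_{\alpha,\beta} = \pE_X[c_\alpha c_\beta]$ and the moment matrix $N_{\alpha,\beta} = \pE_X[X^{\alpha+\beta}]$, and invoke the fact that the Frobenius inner product of two PSD matrices is non-negative. The paper instead argues globally: the full degree-$D$ pseudomoment matrix of $\pE_{X,X'}$ is, by the product definition, a principal submatrix of the Kronecker square of the degree-$D/2$ moment matrix of $\pE_X$, and PSDness is preserved by Kronecker squares and principal submatrices. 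The two arguments are the same linear algebra seen from different angles (the identity $\langle M,N\rangle_F = \mathrm{vec}(I)^\top (M\otimes N)\mathrm{vec}(I)$ ties them together), but the paper's route proves PSDness of the entire moment matrix in one shot, while yours verifies $\pE_{X,X'}[p^2]\ge 0$ polynomial by polynomial and avoids reasoning about the moment matrix of $\pE_{X,X'}$ explicitly. Both are valid; yours is arguably a bit more elementary since it uses only the PSD-Frobenius lemma rather than Kronecker-product facts. Your treatment of the ``furthermore'' clause, which the paper states without proof, is also fine: since monomials in $X$ alone have $\pE_{X,X'}[X^\alpha] = \pE_X[X^\alpha]$, an SoS proof certificate for $\pE_X$ transfers term by term, and symmetry of the construction handles the $X'$ copy.
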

\begin{proof}
By definition, $\pE_{X,X'}$ satisfies scaling and linearity.

We next check that $\pE_{X,X'}$ satisfies the non-negativity of squares.
This fact follows from the fact that the degree-$D$ pseudomoment matrix of $\pE_{X,X'}$ is a principal minor of the Kronecker square of the pseudomoment matrix of $X$, that is, of $(\pE X^{\le D})^{\otimes 2}$.
Since $\pE_{X}$ is a valid pseudoexpectation matrix, $\pE_{X}[X^{\le D}]$ is a PSD matrix, and therefore its Kronecker square and any principal minor thereof. 
Finally, in the standard manner any degree-$D$ square polynomial $s$ in variables $X,X'$ can be written as a quadratic form of $s$'s coefficient vector with the submatrix of the Kronecker square.
Thus $\pE_{X,X'}$ satisfies the degree-$S$ \sos inequalities.

Finally, to see that $\pE_{X,X'}[q_i(X) \cdot f(X,X')] = 0$ for any $q_i$ for which we have the constraint $q_i(x) = 0$ and any $f$ of degree at most $D - \deg(q_i)$, we write $f$ in the monomial basis, $f(X,X') = \sum_{\alpha, \beta} \hat f_{\alpha,\beta} \cdot X^{\alpha}(X')^\beta$, and then we have by linearity
\[
\pE_{X,X'}[q_i(X) f(X,X')] = \sum_{\alpha,\beta} \hat f_{\alpha,\beta} \cdot \pE[q_i(X) \cdot X^{\alpha}] \cdot \pE[(X')^{\beta}] = 0,
\]
since $\pE_X[ q_i(X) \cdot X^{\alpha}] = 0$.
This concludes the proof.
\end{proof}

Now, we prove some properties specific to the $Z$ variables.
\begin{fact}\label{fact:z-vars}
Define the shift variable $Z_{u,s} = \sum_{a \in \Sigma} X_{u,a}X'_{u,a+s}$ to be the indicator that $X_u - X_u' = s$, for $X,X'$ degree-$8$ solutions to the \sos relaxation of the UG integer program (\ref{eq:ip}).
Define as well for each edge $(u,v)$ the variables $Y_{(u,v)} = \sum_{a} X_{u,a} X_{v, \pi_{uv}}(a)$ to be the indicator that the edge $(u,v)$ is satisfied.

Then the $Z$ variables satisfy:
\begin{enumerate}
\item Booleanity: $Z_{u,a}^2 = Z_{u,a}$. 
\item Partition constraints: $Z_{u,a}Z_{u,b} = 0$ for $a \neq b$, $\sum_{s} Z_{u,s} = 1$.
\item Crossing edges violate an assignment: $Z_{u,a} Z_{v,b} Y_{(u,v)} Y'_{(u,v)} = 0$ for every edge $(u,v) \in E$ and $a \neq b$. 
\end{enumerate}
\end{fact}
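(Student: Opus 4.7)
The plan is to establish each of the three identities by direct polynomial calculation, using only the Booleanity and partition axioms on $X_{u,\cdot}$ and $X'_{u,\cdot}$ (which the product pseudoexpectation $\pE_{X,X'}$ inherits from $\pE_X$ by Fact~\ref{fact:indep}) together with the affine structure $\pi_{uv}(x) = x - a_{uv}$. Each identity will drop out as a low-degree polynomial equality modulo the axioms, so no \sos machinery beyond substitution is required.

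For part (1), I would expand
\[
Z_{u,a}^2 \;=\; \sum_{b,c \in \Sigma} X_{u,b} X_{u,c} X'_{u,b+a} X'_{u,c+a},
\]
apply $X_{u,b} X_{u,c} = 0$ for $b \neq c$ to kill the off-diagonal, and then use $X_{u,b}^2 = X_{u,b}$ together with the analogous Booleanity on $X'$ to collapse the remaining sum to $\sum_b X_{u,b} X'_{u,b+a} = Z_{u,a}$. Part (2) is analogous: expanding $Z_{u,a}Z_{u,b}$ and first killing $X_{u,c}X_{u,d}$ for $c \neq d$ reduces the expression to $\sum_c X_{u,c} X'_{u,c+a} X'_{u,c+b}$; since $a \neq b$ implies $c+a \neq c+b$, the partition constraint on $X'_{u,\cdot}$ then annihilates every term. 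For the normalization $\sum_s Z_{u,s} = 1$, swapping the order of summation yields $\sum_b X_{u,b} \bigl( \sum_s X'_{u,b+s} \bigr) = \sum_b X_{u,b} = 1$.

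The only slightly more involved step is (3). My plan is to multiply out in stages and use partition axioms to collapse each pair of sums in turn. Writing $Y_{(u,v)} = \sum_c X_{u,c} X_{v,c-a_{uv}}$ and $Y'_{(u,v)} = \sum_d X'_{u,d} X'_{v,d-a_{uv}}$, multiplication by $Z_{u,a} = \sum_e X_{u,e} X'_{u,e+a}$ and the partition axioms on $X_{u,\cdot}$ and $X'_{u,\cdot}$ force $c=e$ and $d=e+a$, yielding
\[
Z_{u,a}\, Y_{(u,v)} Y'_{(u,v)} \;=\; \sum_{e} X_{u,e}\, X_{v,e-a_{uv}}\, X'_{u,e+a}\, X'_{v,e+a-a_{uv}}.
\]
Multiplying further by $Z_{v,b} = \sum_f X_{v,f} X'_{v,f+b}$ and applying partition on $X_{v,\cdot}$ pins $f = e - a_{uv}$; the surviving monomials then contain the product $X'_{v,e+a-a_{uv}} X'_{v,e-a_{uv}+b}$, and partition on $X'_{v,\cdot}$ forces $a = b$. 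Hence the whole expression vanishes whenever $a \neq b$.

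I do not anticipate any genuine obstacle: the whole argument is syntactic cancellation using partition and Booleanity, and the resulting \sos derivations have degree at most $8$ in the original $X, X'$ variables, matching the degree assumption on $\pE_{X,X'}$ in the statement. The only bookkeeping care is to order the substitutions in part (3) so that each collapse is a single application of a partition axiom rather than a nested derivation, keeping the proof a genuine low-degree \sos equality.
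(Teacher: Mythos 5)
Your proof is correct and matches the paper's approach: parts (1) and (2) are the same direct expand-and-cancel computation the paper leaves to the reader, and for part (3) you perform the same index-pinning argument via partition constraints as the paper (collapsing the four sums in stages rather than all at once at the end, but this is a cosmetic difference).
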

\begin{proof}
The first two items are easily verified via direct computation, using properties of the $X_{u,a}$s.
We prove that the final property holds.
Since our UG instance is affine, we have that for each $i,j \in E$, $\pi_{ij}(a) = a + h_{ij}$ for some $h_{ij} \in \Sigma$.
 Therefore,
 \begin{align}
 Z_{i,s} Z_{j,t}Y_{(i,j)}Y'_{(i,j)}
 &= \sum_{a,b,c,d \in \Sigma} X_{i,a}X'_{i,a+s} \cdot X_{j,b}X'_{j,b+t} \cdot X_{i,c}X_{j,c + h_{ij}} \cdot X'_{i,d} X'_{j,d+h_{ij}}\\
 &= 0,
 \end{align}
 where we derive the final equality from the disjointness constraints (i.e. that $X_{i,a}X_{i,b} = 0$ whenever $a \neq b$), as for the above term to be nonzero we require $a = c$, $d = a + s$, $b = d - t + h_{ij} = a + s - t + h_{ij}$, and also $b = c + h_{ij}$, which implies $a + s - t = c$, a contradiction since $t \neq s$.
This establishes the final property.
\end{proof}

\subsection{Symmetries}
Here, we will prove the symmetry properties that shift-symmetric pseudodistributions satisfy.
\restatelemma{lem:sym}
\begin{proof}
Recall that since $\mu$ is a shift-symmetric pseudodistribution, we have that,
\begin{equation}\label{eq:props}
\pE_{\mu}[X_{u_1,a_1} \cdots X_{u_m,a_m}]
= \frac{1}{|\Sigma|}\sum_{t}\pE_\mu[X_{u_1,a_1 - t}\cdots X_{u_m,a_m -t}] = \pE_{\mu}[X_{u_1,a_1+s} \cdots X_{u_m,a_m+s}]
\end{equation}
for all $\{(u_1,a_1), \ldots, (u_m,a_m)\} \in ([n] \times \Sigma)^{\le D}$ and $s \in \Sigma$. 

The two items now follow because under $\mu$ the marginal probabilities $\pPr_{\mu}[X_v = s]$ are uniform for all $s \in \Sigma$. We have that for all $s \in \Sigma$ and all $u,v \in V$,
\begin{align*}
\pPr_{\mu}[X_v = s \mid X_u = 0]
&= \frac{\pPr_{\mu}[X_v = s, X_u = 0]}{\pPr_{\mu}[X_u = 0]}\\
&= |\Sigma| \cdot \pPr[X_v = s, X_u = 0] 
= \sum_{t \in \Sigma} \pPr[X_v = s+t, X_u = t],
\end{align*}
where in the third equality we have used the shift-invariance of $\mu$, equation~\ref{eq:props}.

Further for any polynomial which satisfies $f(X) = f(X+t)$ for all $t\in \Sigma$,
\begin{align*}
\pE_{\mu}[f(X) \mid X_v - X_u = s]
&= \frac{\pE_{\mu}[\sum_{t \in \Sigma} f(X) \cdot \Ind[X_v = s + t, X_u = t]]}{\pE_{\mu}[\sum_{t \in \Sigma}\Ind[X_v = s+t, X_u = t]]}\\
&= \frac{\pE_{\mu}[\sum_{t \in \Sigma} f(X + t) \cdot \Ind[X_v = s+t, X_u = t]]}{\pE_{\mu}[\sum_{t \in \Sigma}\Ind[X_v = s+t, X_u = t]]}\\
&= \frac{|\Sigma|\cdot\pE_{\mu}[f(X) \cdot \Ind[X_v = s, X_u = 0]]}{|\Sigma|\cdot\pE_{\mu}[\Ind[X_v = s, X_u = 0]]}
= \pE_{\mu}[f(X) \mid X_v = s, X_u = 0],
\end{align*}
where to obtain the second equality we have used the shift-symmetry of $f$, $f(X) = f(X+(t-s))$, and in the penultimate equality we have used the shift-invariance of $\mu$, equation~\ref{eq:props}.
The conclusion follows.
\end{proof}

\subsection{Pseudoprobabilities}
The following definitions will help to ease notation in our proofs.

\begin{definition}[Pseudoprobability of an event]\label{def:pseudoprob}
Let $\mu$ be a pseudodistribution of degree $D$.
If $\cE(X,X')$ is an event such that $\Ind[\cE(X,X')]$ can be expressed as a degree-$D$ function of $X$ and $X'$, then we define the {\em pseudoprobability of $\cE(X,X')$} to be
\[
\pPr[\cE(X,X')] = \pE[\Ind(\cE(X,X')].
\]
Similarly, if $\cF(X)$ is an event and $\deg(\Ind[\cF(X)]) + \deg(\Ind[\cE(X,X')]) \le D$, then we define the {\em pseudoprobability of $\cE(X,X')$ conditioned on $\cF(X)$} to be
\[
\pPr[\cE(X,X') \mid \cF(X)] = \pE[\Ind(\cE(X,X')) \mid \Ind(\cF(X))] = \frac{\pE[\Ind(\cE(X,X')) \cdot \Ind(\cF(X,X'))]}{\pE[\Ind(\cF(X))]}.
\]
\end{definition}

\subsection{Useful lemmas}
We will state two SOS-versions of Cauchy-Schwarz that we will be useful in the Fourier analysis.
\begin{lemma}[Cauchy Schwarz]\label{CS1-prelim}
For for all $\epsilon \in \R_+$, 
$$
\vdash_2 YZ \leq \frac{\epsilon}{2}Y^2 + \frac{1}{2\epsilon}Z^2.
$$
\end{lemma}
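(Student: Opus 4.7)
The plan is to exhibit an explicit sum-of-squares decomposition of the difference $\frac{\epsilon}{2}Y^2 + \frac{1}{2\epsilon}Z^2 - YZ$. The natural candidate comes from completing the square: rescaling $Y$ by $\sqrt{\epsilon}$ and $Z$ by $1/\sqrt{\epsilon}$ so that the cross term matches, I would write
\[
\frac{\epsilon}{2}Y^2 + \frac{1}{2\epsilon}Z^2 - YZ \;=\; \frac{1}{2}\left(\sqrt{\epsilon}\,Y - \frac{1}{\sqrt{\epsilon}}\,Z\right)^{\!2}.
\]
Since $\epsilon \in \R_+$ the quantities $\sqrt{\epsilon}$ and $1/\sqrt{\epsilon}$ are real constants, so the right-hand side is literally a square of a degree-$1$ polynomial in $Y,Z$, hence a valid sum-of-squares expression of total degree $2$.

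By the definition of $\vdash_2$ (a degree-$2$ SoS proof is exactly a nonnegative combination of squares of linear forms), the identity above certifies $\frac{\epsilon}{2}Y^2 + \frac{1}{2\epsilon}Z^2 - YZ \succeq 0$, which rearranges to the stated inequality $YZ \le \frac{\epsilon}{2}Y^2 + \frac{1}{2\epsilon}Z^2$. There is no genuine obstacle here — the only thing to check is the arithmetic of the completed square, which works out to the identity above upon expansion: $\frac{1}{2}(\sqrt{\epsilon}Y - \frac{1}{\sqrt{\epsilon}}Z)^2 = \frac{\epsilon}{2}Y^2 - YZ + \frac{1}{2\epsilon}Z^2$. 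So the proof is a single line, and the lemma will likely be stated with a one-sentence justification pointing to this square.
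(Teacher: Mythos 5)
Your proof is correct, and the arithmetic of the completed square checks out: $\tfrac{1}{2}\bigl(\sqrt{\epsilon}\,Y - \tfrac{1}{\sqrt{\epsilon}}\,Z\bigr)^2 = \tfrac{\epsilon}{2}Y^2 - YZ + \tfrac{1}{2\epsilon}Z^2$ is a single square of a linear form, which is exactly a degree-$2$ SoS certificate. The paper itself does not spell out a proof but instead cites \cite{BarakKS14}; the argument given there is the same completing-the-square identity, so your approach matches the intended one.
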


\begin{lemma}[Cauchy Schwarz]\label{CS2-prelim}
For a degree-$D$ pseudoexpectation operator, where $D =  2\max(\deg(Y),\deg(Z))$,
$$\pE[YZ]^2 \leq \pE[Y^2]\pE[Z^2].$$
\end{lemma}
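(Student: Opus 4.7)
The plan is to mimic the standard proof of the Cauchy--Schwarz inequality, but using that the pseudoexpectation operator is nonnegative on sums of squares rather than that an honest probability measure is nonnegative. Concretely, for any real parameter $\lambda \in \R$ the polynomial $(Y - \lambda Z)^2$ is a square of a polynomial of degree at most $\max(\deg(Y),\deg(Z))$, so its total degree is at most $2\max(\deg(Y),\deg(Z)) = D$. Since the pseudoexpectation operator has degree at least $D$ and is nonnegative on squares up to half its degree, we get
\[
0 \;\le\; \pE\bigl[(Y - \lambda Z)^2\bigr] \;=\; \pE[Y^2] \;-\; 2\lambda\,\pE[YZ] \;+\; \lambda^2\,\pE[Z^2],
\]
where the equality uses only linearity of $\pE$.

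The right-hand side is a quadratic in the real variable $\lambda$ that is nonnegative for every choice of $\lambda$. In the case where $\pE[Z^2] > 0$, we may substitute the optimal value $\lambda^* = \pE[YZ]/\pE[Z^2]$, which yields
\[
0 \;\le\; \pE[Y^2] \;-\; \frac{\pE[YZ]^2}{\pE[Z^2]},
\]
and rearranging gives the desired bound $\pE[YZ]^2 \le \pE[Y^2]\pE[Z^2]$. Equivalently, one can observe that a nonnegative quadratic in $\lambda$ with positive leading coefficient has nonpositive discriminant, which gives $4\pE[YZ]^2 - 4\pE[Y^2]\pE[Z^2] \le 0$.

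The only edge case to dispatch is $\pE[Z^2] = 0$, where we cannot divide. In this case the nonnegative quadratic $\pE[Y^2] - 2\lambda\pE[YZ]$ must be bounded below as $\lambda$ ranges over all of $\R$, which forces the linear coefficient $\pE[YZ]$ to vanish; otherwise we could drive the expression to $-\infty$. Thus both sides of the desired inequality equal zero. There is no substantive obstacle here: the full strength of the pseudoexpectation axioms that we need is linearity together with nonnegativity on a single explicit square of degree at most $D$, both of which are built into the definition of a degree-$D$ pseudoexpectation operator.
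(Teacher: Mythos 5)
Your proof is correct and is the standard SoS Cauchy--Schwarz argument: expand $\pE[(Y-\lambda Z)^2]\ge 0$ by linearity, use that $(Y-\lambda Z)^2$ is a square of a polynomial of degree at most $D/2$ so the pseudoexpectation axiom applies, and then optimize over $\lambda$ (or equivalently invoke the discriminant condition), with the degenerate case $\pE[Z^2]=0$ handled separately. The paper does not supply its own proof of this lemma, deferring instead to \cite{BarakKS14}; your argument is exactly the one given there, so there is nothing further to reconcile.
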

Proofs for both lemmas appear in \cite{BarakKS14}.

We also need the following version of H\"{o}lder's inequality which is proven in e.g.~\cite{ODonnellZ13}.
\begin{fact}[H\"{o}lder's Inequality]\label{fact:sos-hol}
For all real $\nu > 0$ we have that,
$$\vdash_4 Y^3 Z \leq \frac{3\nu}{4} Y^4 + \frac{1}{4\nu^3}Z^4.$$ 
\end{fact}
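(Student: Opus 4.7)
The plan is to view this as the SoS incarnation of Young's inequality $ab \le \tfrac{a^p}{p} + \tfrac{b^q}{q}$ with conjugate exponents $p = 4/3$, $q = 4$, applied to $a = Y^3$ and $b = Z$. Since $\vdash_4$ is a purely algebraic statement, the whole task boils down to exhibiting an explicit sum-of-squares decomposition of the degree-$4$ polynomial $\tfrac{3\nu}{4} Y^4 + \tfrac{1}{4\nu^3} Z^4 - Y^3 Z$.

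First, I would reduce to the case $\nu = 1$ by the scaling substitution $Y \mapsto \nu^{1/4} Y$ and $Z \mapsto \nu^{-3/4} Z$. Under this substitution, $Y^3 Z$ is invariant while the right-hand side of the inequality at $\nu = 1$ becomes exactly $\tfrac{3\nu}{4} Y^4 + \tfrac{1}{4\nu^3} Z^4$. Because $\nu > 0$, the substitution has real coefficients and maps squares to squares, so any degree-$4$ SoS decomposition at $\nu = 1$ transports immediately to a degree-$4$ SoS decomposition for arbitrary $\nu > 0$.

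Second, I would produce an explicit decomposition at $\nu = 1$. Observe that $P(Y,Z) := \tfrac{3}{4}Y^4 + \tfrac{1}{4}Z^4 - Y^3 Z$ vanishes at $Y = Z$ (indeed to second order), which suggests the ansatz $P(Y,Z) = \tfrac{1}{4}(Y-Z)^2 \cdot q(Y,Z)$ for some quadratic $q$. Matching coefficients gives $q(Y,Z) = 3Y^2 + 2YZ + Z^2 = 2Y^2 + (Y+Z)^2$, which is a sum of two squares. Multiplying out and grouping yields the clean degree-$4$ SoS identity
\[
\tfrac{3}{4} Y^4 + \tfrac{1}{4} Z^4 - Y^3 Z \;=\; \tfrac{1}{4}(Y^2 - Z^2)^2 + \tfrac{1}{2}\, Y^2 (Y-Z)^2,
\]
which is immediately verified by expansion. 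Rescaling by the substitution above gives the $\nu$-dependent identity (with all coefficients real and non-negative), certifying $\vdash_4 Y^3 Z \le \tfrac{3\nu}{4} Y^4 + \tfrac{1}{4\nu^3} Z^4$.

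The only nontrivial step is guessing the factor $(Y-Z)^2$ and then finding the quadratic cofactor that itself is SoS; once the double root at $Y = Z$ is noticed this reduces to a small linear system in the coefficients of $q$, so there is no real obstacle beyond routine algebra.
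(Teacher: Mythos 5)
Your proof is correct, and it is more self-contained than what the paper does: the paper does not actually prove Fact~\ref{fact:sos-hol} at all, it simply cites \cite{ODonnellZ13} and moves on. Your argument gives a complete, directly verifiable degree-$4$ SoS certificate. The scaling step is sound (for $\nu>0$ the substitution $Y\mapsto\nu^{1/4}Y$, $Z\mapsto\nu^{-3/4}Z$ has real coefficients and preserves both squares and degree, while fixing the monomial $Y^3Z$), and the identity
\[
\tfrac{3}{4}Y^4+\tfrac{1}{4}Z^4-Y^3Z=\tfrac{1}{4}(Y^2-Z^2)^2+\tfrac{1}{2}Y^2(Y-Z)^2
\]
checks out by direct expansion, being exactly the factorization $\tfrac14(Y-Z)^2\bigl(2Y^2+(Y+Z)^2\bigr)$. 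So you have not merely reproduced the cited argument but supplied the explicit witness that the citation suppresses; the trade-off is that the reference's treatment would extend to general $p$--$q$ Hölder (with more algebraic bookkeeping), whereas your computation is specialized to the $3$--$1$ split used here, which is all the paper needs.
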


\begin{claim}\label{claim:spectral-sos}
Let $A$ be the transition matrix for a random walk on an undirected (weighted) graph $G$ and $\pi$ be the stationary measure on $G$,
where $\pi$ samples every vertex proportional to its weighted degree. Then $A$ has real eigenvalues, and moreover if $\Pi$ is the projector to the space of $A$'s right eigenvalues of value at most $\lambda$, then as a degree-$2$ \sos inequality we have
\[
\langle f, A \Pi f \rangle_{\pi} \preceq \lambda \langle f, \Pi f \rangle_{\pi}
\]
and 
\[
\langle f, A (\Id - \Pi) f \rangle_{\pi} \preceq \langle f, (\Id - \Pi) f\rangle_{\pi}.
\]
\end{claim}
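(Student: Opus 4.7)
The plan is to reduce both inequalities to the standard spectral decomposition of $A$ with respect to the $\pi$-weighted inner product, and then to observe that the resulting quadratic forms in $f$ are manifestly nonnegative linear combinations of squares of linear functionals of $f$, which is by definition a degree-$2$ SoS expression.

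First I would verify that $A$ is self-adjoint with respect to $\langle\cdot,\cdot\rangle_\pi$, i.e.\ that $\pi(u)A(u,v) = \pi(v)A(v,u)$ for every pair $u,v$. For an undirected weighted graph with symmetric edge weights $w(u,v)$ one has $\pi(u) \propto \deg(u)$ and $A(u,v) = w(u,v)/\deg(u)$, so both sides equal $w(u,v)$ up to the same normalizer; equivalently, $D^{1/2} A D^{-1/2}$ is symmetric, where $D = \mathrm{diag}(\pi)$. Consequently $A$ has real eigenvalues $\mu_1,\ldots,\mu_n \in [-1,1]$ and an orthonormal eigenbasis $v_1,\ldots,v_n$ for $\langle\cdot,\cdot\rangle_\pi$. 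Because $\Pi$ projects onto an $A$-invariant subspace, $\Pi$ and $A$ commute and $\Pi$ is itself self-adjoint in $\langle\cdot,\cdot\rangle_\pi$, so I can write $\Pi f = \sum_{i \in S} \langle v_i,f\rangle_\pi \, v_i$ with $S = \{i : \mu_i \leq \lambda\}$.

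For the first inequality I would just expand the difference:
\[
\lambda\langle f,\Pi f\rangle_\pi - \langle f, A\Pi f\rangle_\pi
\;=\; \sum_{i \in S} (\lambda - \mu_i)\,\langle v_i, f\rangle_\pi^{2}.
\]
Since each coefficient $\lambda-\mu_i$ is nonnegative by the definition of $S$ and since $\langle v_i,f\rangle_\pi = \E_{u\sim\pi}[v_i(u) f(u)]$ is a linear function in the entries of $f$, the right-hand side is a nonnegative linear combination of squares of linear forms in $f$, hence a degree-$2$ SoS polynomial.

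The second inequality follows by the same argument applied to the complementary set $T = \{i : \mu_i > \lambda\}$, giving
\[
\langle f, (\Id - \Pi) f\rangle_\pi - \langle f, A(\Id - \Pi) f\rangle_\pi
\;=\; \sum_{i \in T} (1 - \mu_i)\,\langle v_i, f\rangle_\pi^{2},
\]
where every coefficient $1-\mu_i$ is nonnegative because the stochastic matrix $A$ has spectral radius at most $1$. No real obstacle is expected: the only mildly technical point is the detailed-balance verification of self-adjointness, after which the proof reduces to recording the classical spectral bounds in their explicit sum-of-squares form using the fixed coefficient functionals $\langle v_i,f\rangle_\pi$.
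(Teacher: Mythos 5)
Your proof is correct and matches the paper's argument: both use the $\pi$-self-adjointness of $A$ to get a real orthonormal eigenbasis, expand $f$ (equivalently $\Pi f$) in that basis, and observe that the difference of the two sides is $\sum_i (\lambda-\mu_i)\langle v_i,f\rangle_\pi^2$ (resp. $\sum_i (1-\mu_i)\langle v_i,f\rangle_\pi^2$), a nonnegative combination of squares of linear forms in $f$. The only cosmetic difference is that you spell out the detailed-balance check for self-adjointness, which the paper just asserts.
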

\begin{proof}
We use that $A$ is self-adjoint in the inner product space $(\R^n,\langle \cdot \rangle_\pi)$, and therefore it has real eigenvalues and its right eigenspace has orthonormal eigenvectors $v_1,\ldots,v_n$.
We may write $f$ according to its orthogonal decomposition, $f = \sum_{i = 1}^n c_i \cdot v_i$ for $c_i$ linear functions of $f$, and if there are $k$ eigenvalues of value at most $\lambda$ then $\Pi f = \sum_{i = 1}^k c_i \cdot v_i$.
We thus have
\begin{align*}
\langle f, A\Pi f \rangle_\pi
&= \left\langle \sum_{i = 1}^n c_i v_i, \sum_{j=1}^k c_j \cdot A v_j\right\rangle_\pi\\
&= \left\langle \sum_{i = 1}^n c_i v_i, \sum_{j=1}^k c_j \lambda_j \cdot v_j\right\rangle_\pi
= \sum_{j = 1}^k \lambda_j c_j^2
\le \lambda \sum_{j=1}^k c_j^2
= \lambda \langle f, \Pi f \rangle_\pi,
\end{align*}
where the inequality is a degree-2 sum of squares because $\lambda_j \le \lambda$, and the $c_j$ are degree-1 functions of $f$.
A near-identical proof gives the second statement when we observe that $A$'s maximum eigenvalue is $\le 1$.
\end{proof}

\section{Reduction from small-set expansion to hypercontractivity}
\label{sec:expansion-red}

Here, we prove Lemma~\ref{lem:sse} for completeness.

\restatelemma{lem:sse}

\begin{proof}
Since $L = \Id - A$ for $A$ the transition matrix of $G$, we have
\[
\langle f, L f\rangle_{\pi} = \|f\|_{\pi,2}^2 - \langle f, A f\rangle_\pi
\]
Where every right eigenvector $u$ of $L$ with eigenvalue $\lambda_u$ is also an eigenvector of $A$ with eigenvalue $1-\lambda_u$.
We can write $f = f_{\le \lambda} + f_{>\lambda}$, with $f_{\le \lambda} = \Pi_{\lambda} f$.
By linearity,
\[
\langle f, A f\rangle_{\pi}  = \langle f, A f_{\le \lambda}\rangle_{\pi} + \langle f, A f_{> \lambda}\rangle_{\pi},
\]
We can derive an upper bound on the second term,
\begin{align*}
\langle f, A f_{> \lambda}\rangle_{\pi}
&\le  (1-\lambda)\|f\|^2_{\pi,2},
\end{align*}
where the difference between the right- and left-hand side of the inequality is a degree-$2$ sum of squares because $A$'s eigenvalues off the support of $\Pi_{\lambda}$ are bounded by $(1-\lambda)$ (see Claim~\ref{claim:spectral-sos}).

For the first term, we can derive a different upper bound,
\begin{align*}
\langle f, A f_{\le \lambda}\rangle_{\pi}
&\le \langle f, \Pi_{\lambda} f\rangle_{\pi} \\
&= \langle (f^{\circ 3}), \Pi_{\lambda }f\rangle_{\pi} + \langle (f^{\circ 3} - f), \Pi_{\lambda} f\rangle_{\pi},
\end{align*}
where the first inequality follows from the fact that $A$'s eigenvalues are bounded by $1$ (which gives the first line as an SOS inequality, again see Claim~\ref{claim:spectral-sos}), and in the second line we have used $f^{\circ 3}$ to denote the function $f^{\circ 3}:V(G) \to \R$ given by $f^{\circ 3}(v) = f(v)^3$.
Given the Booleanity axioms we have that
$\{f(v)^2 = f(v)\}_{v \in V(G)} \vdash_3 f^{\circ 3} - f = 0$,
so therefore we have from our axioms $\cA$ that
\[
\cA \vdash_4 \langle f, A f_{\le \lambda}\rangle_{\pi}  \le \langle (f^{\circ 3}), \Pi_{\lambda} f\rangle_{\pi}.
\]
Now, using the shorthand $f_v := f(v)$,
\begin{align*}
\langle f^{\circ 3}, \Pi_{\lambda} f \rangle_{\pi}
&\le \frac{3\eta}{4} \|f\|_{\pi,4}^4 + \frac{1}{4\eta^3} \|\Pi_{\lambda} f\|_{\pi, 4}^4\\
&\le \frac{3\eta}{4} \|f\|_{\pi,4}^4 + \frac{1}{4\eta^3} C \|f\|_{\pi,2}^4,
\end{align*}
where the first inequality is an SOS inequality for any $\eta > 0$ (see Fact~\ref{fact:sos-hol}), and the final inequality is guaranteed to be an SOS inequality from our 2-4 hypercontractivity axiom. We can further simplify the inequality above to get that,

\[\langle f^{\circ 3}, \Pi_{\lambda} f \rangle_{\pi} \leq  \frac{3\eta}{4} \E_\pi[f] + \frac{1}{4\eta^3} C \E_{\pi}[f]^2,\]

since, $\frac{3\eta}{4}(\E[f]- \|f\|_{\pi,4}^4) + \frac{C}{4\eta^3}((\E[f]^2 - \|f\|_{\pi,2}^4)$ is a degree-4 sum-of-squares under the axioms $\{f_v \in [0,1]\}_{v \in V(G)}$, as $f_v - f_v^4 \ge 0$ and $f_v f_u - f_v^2 f_u^2 \ge 0$ are SOS inequalities for $f_v \in [0,1]$.

Putting both the upper bounds together, we have that
\begin{align*}
\langle f, L f \rangle_{\pi}
&= \|f\|_{\pi,2}^2 - (1-\lambda)\|f\|_{\pi,2}^2 - \frac{3\eta}{4}\E_\pi[f] - \frac{C}{4\eta^3}\E_{\pi}[f]^2 + \langle (f^{\circ 3} - f), \Pi_{\lambda}f \rangle_{\pi} + S'(f)
\end{align*}
for $S'(f)$ a degree-4 sum of squares in the span of the hypercontractivity and Booleanity axioms. We also have that from the Booleanity axioms, $\|f\|_{\pi,2}^2 = \E_\pi[f]$, so rearranging terms we get that,
\begin{align*}
\langle f, L f \rangle_{\pi}
&= \left(\lambda - \eta \right)\cdot \E_\pi[f] +  \frac{1}{4}\left(\eta \E_\pi[f] - \frac{C}{\eta^3} \E_\pi[f]^2\right) + (\lambda(\|f\|_{\pi,2}^2-\E[f]) + \langle f^{\circ 3} - f, \Pi_{\lambda} f \rangle_\pi) + S'(f) \\
&= \frac{\lambda}{2}\cdot \E_\pi[f] +  \frac{2C}{\lambda^3}\left(\frac{\lambda^4}{16C} \E_\pi[f] - \E_\pi[f]^2\right) + (2(\|f\|_{\pi,2}^2-\E[f]) + \langle f^{\circ 3} - f, \Pi_{\lambda} f \rangle_\pi) + S(f)
\end{align*}
where we have set $\eta = \frac{\lambda}{2}$ and $S(f) = S'(f) + (2 - \lambda)(\E[f] - \|f\|_{\pi,2}^2)$, which is a sum-of-squares because $\lambda \leq 2$ (all eigenvalues of the Laplacian are bounded above by $2$) and $f_v - f_v^2 \geq 0$ is an SOS inequality under the axiom $\{f_v \in [0,1]\}_{v \in V(G)}$. Taking $B(f) = 2(\|f\|_{\pi,2}^2-\E[f]) + \langle f^{\circ 3} - f, \Pi_{\lambda} f \rangle_\pi$ gives us the conclusion.
\end{proof}

\section{Structure Theorem for the Johnson graph}\label{sec:fourier}
In this section, we prove a structure theorem for the non-expanding sets of the Johnson graph. Spectral analysis on the Johnson graph turns out to be complicated, so we move to a closely related Cayley graph, whose eigenstructure is simple to calculate. We will call this the Johnson-approximating graph $C_{n,\ell,\alpha}$. We will prove the following structure theorem about $C_{n,\ell,\alpha}$:

\begin{theorem}\label{thm:structure}
For all $\alpha \in (0,1)$, all integers $\ell \geq 1/\alpha$ and all integers $n \geq \ell$, the following holds: Let $C_{n,\ell,\alpha}$ be the Johnson-approximating graph and $\pi$ be the uniform distribution over $V(C)$. For every positive integer $r \leq \ell/2$ and every permutation-invariant function $F$ that is not correlated with any $r$-restricted subcube, $F$ has high expansion:
\begin{align*}
&\{F(X) \in [0,1]\}_{X \in V(C)} \cup \A_{inv} \,\, \vdash_2\,\, \\ 
&\langle F, L F\rangle_{\pi} \ge (1 - (1 - \alpha)^{r+1})\left[\E_\pi[F] - 8^r \binom{\ell}{r}\left(\sum_{j = 0}^r \E_{Y \in [n]^{j}}[\delta_{Y}(F)^2]\right) + B(F)\right],
\end{align*}
where $B(F)$ represents the Booleanity constraints and equals $\E_\pi[F^{\circ 2} - F]$. 
\end{theorem}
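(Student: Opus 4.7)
My plan is to exploit the specific choice of the Johnson-approximating Cayley graph $C_{n,\ell,\alpha}$, whose group structure should give a clean eigenvalue decomposition into ``levels'' $V_0, V_1, \ldots, V_\ell$ of functions on $V(C)$. First I would verify by a direct Fourier computation on the underlying group that the Laplacian $L$ acts on $V_j$ by the eigenvalue $1 - (1-\alpha)^j$, so that on the orthogonal complement of $V_{\le r} := V_0 \oplus \cdots \oplus V_r$ every eigenvalue is at least $1 - (1-\alpha)^{r+1}$. Letting $\Pi_{\le r}$ be the projector onto $V_{\le r}$ and $F^{\le r} := \Pi_{\le r} F$, Claim~\ref{claim:spectral-sos} then yields the degree-$2$ SoS inequality
\[
\langle F, LF\rangle_\pi \;\succeq\; (1-(1-\alpha)^{r+1})\bigl(\|F\|_{\pi,2}^2 - \|F^{\le r}\|_{\pi,2}^2\bigr).
\]
Using the Booleanity axioms $\{F(X)^2 = F(X)\}$ to rewrite $\|F\|_{\pi,2}^2 = \E_\pi[F] + B(F)$ (with $B(F) = \E_\pi[F^{\circ 2} - F]$) already gives the structure of the target bound, modulo establishing the SoS inequality
\[
\|F^{\le r}\|_{\pi,2}^2 \;\preceq\; 8^r\binom{\ell}{r}\sum_{j=0}^{r}\E_{Y\in[n]^j}\bigl[\delta_Y(F)^2\bigr].
\]

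The second, more substantive step is this ``low levels are controlled by junta averages'' inequality, which is where the invariance axioms $\A_{\mathrm{inv}}$ come in. The key structural fact is that for a permutation-invariant $F$, the subspace $V_{\le r}$ is precisely the span of the up-set indicators $\chi_Y(X) := \Ind[Y \subseteq X]$ for $|Y| \le r$ (equivalently, symmetric juntas of arity at most $r$). I would write $F^{\le r} = \sum_{|Y|\le r} c_Y \chi_Y$ explicitly via Möbius-type inversion on the subset lattice, with each coefficient $c_Y$ a bounded linear combination of $\{\delta_{Y'}(F)\}_{|Y'|\le r}$. Then applying Cauchy--Schwarz (Lemma~\ref{CS1-prelim}) to this linear combination and to the coupling between the $\chi_Y$'s yields
\[
\|F^{\le r}\|_{\pi,2}^2 \;\preceq\; \sum_{j\le r} \binom{\ell}{j} M_j \cdot \E_{Y\in[n]^j}[\delta_Y(F)^2]
\]
as a degree-$2$ SoS inequality, where $M_j$ absorbs the inversion constants; the bounds $\binom{\ell}{j} \le \binom{\ell}{r}$ for $j \le r \le \ell/2$ and $\sum_j M_j \le 8^r$ (with slack) yield the claimed prefactor. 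The passage from unordered $Y \in \binom{[n]}{j}$ (as in the restriction functions) to ordered $Y \in [n]^j$ (as in the statement) is absorbed into this slack.

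The main obstacle will be the bookkeeping in the second step: making every inversion identity and every Cauchy--Schwarz step a bona fide degree-$2$ SoS inequality in the formal variables $F(X)$, while carefully tracking the combinatorial constants so the prefactor $8^r \binom{\ell}{r}$ comes out. The assumption $r \le \ell/2$ is crucial here since it ensures that the change-of-basis matrix between the level decomposition and the up-set indicator basis is well-conditioned; outside this regime the inversion blows up. Subsidiary technicalities -- handling the Cayley-graph-versus-Johnson-graph passage (the theorem for $C_{n,\ell,\alpha}$ will then be transferred to $J_{n,\ell,\alpha}$ in a follow-up step via the fact that for permutation-invariant functions the two graphs have nearly identical spectra up to a $1 \pm O_\ell(1/n)$ distortion), and absorbing the $o_n(1)$ and $B(F)$ terms -- should be essentially bookkeeping once the level-decomposition and junta-control inequalities are in hand.
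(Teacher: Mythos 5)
Your proposal follows essentially the same route as the paper: decompose $F$ into Fourier levels on the Cayley graph, discard the low levels and use the spectral gap to obtain $\langle F, LF\rangle_\pi \ge (1-(1-\alpha)^{r+1})\sum_{i>r}\eta_i$, then control the low-level weight $\sum_{i\le r}\eta_i$ via M\"obius inversion in terms of the restriction densities $\delta_Y(F)$ plus Cauchy--Schwarz. Two small inaccuracies you would discover in carrying this out: the level-$j$ eigenvalue of the adjacency operator is $\binom{\ell-j}{(1-\alpha)\ell-j}/\binom{\ell}{(1-\alpha)\ell}$, not exactly $(1-\alpha)^j$ (only the one-sided bound $\lambda_j \le (1-\alpha)^j$ is needed or used), and rather than expanding $F^{\le r}$ in an up-set indicator basis --- whose Gram matrix on the ordered-tuple vertex set $[n]^\ell$ is awkward --- the paper works with the compressed level functions $f_{i,F}:[n]^i\to\R$, uses the permutation-invariance identity $\eta_i = \binom{\ell}{i}\,\E_{X\in[n]^i}[f_{i,F}(X)^2]$ (Lemma~\ref{lem:weight}), and M\"obius-inverts $f_{i,F}$ as a signed sum of the $\delta_{X|_B}(F)$ (Lemma~\ref{lem:restriction}), which is the same idea you describe but cleaner to execute; the hypothesis $r \le \ell/2$ is then used only to get $\binom{\ell}{i}\le\binom{\ell}{r}$ in the final constant-tracking.
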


The proof of the theorem above, follows pretty much on the lines of the proof given in~\cite{KMMS}. Since the spectral analysis is much easier on this graph and in the end, we want to prove a weak characterization of non-expanding sets, our proof ends up being simpler. Given this structure theorem, it is straightforward to derive a structure theorem for the Johnson graph and we do so at the end of this section.

\paragraph{Notation:}We will now give some notation that we need for this section.
We use $[n]$ to denote the set $\{0,\ldots,n-1\}$, and also the group $(\Z/n\Z)$, the natural numbers modulo $n$. 
Generally, when we take a set $S$ and raise it to a positive integer power $\ell$, we mean the set of all ordered multisets of elements of $S$ of size $\ell$. We use $\Chi_t$ for $t \in [n]$ to denote the characters of the group $\Z/n\Z$ (or the eigenvectors of the $n$-cycle), where $\Chi_t: [n] \rightarrow \mathbb{C}$ is the function $\Chi_t(x) = e^{\frac{2\pi i tx}{n}}$.
We will use $\lambda_G(v)$ to denote the eigenvalue of $v$ which is an eigenvector of the adjacency matrix of graph $G$. For a string $S \in \Sigma^m$, for some alphabet $\Sigma$, and a set $I \subseteq [m]$, we denote its restriction to the set of coordinates in $I$, by $S|_I$.

\subsection{Preliminaries about the Spectrum}

\begin{definition}
Let $\alpha$ be a number in $(0,1)$ and $\ell$ be a positive integer. Let $n$ be a positive integer such that $n > \ell$. We then define the graph $C_{n,\ell,\alpha}$ as follows:
\begin{enumerate}
\item The vertex set of $C_{n,\ell,\alpha}$ is the set, $[n]^{\ell}$. We will drop the subscript $(n,\ell,\alpha)$ in $C_{n,\ell,\alpha}$ when these parameters are clear from context.
\item The edges are described by showing how to sample a uniformly random neighbor of an arbitrary vertex $X \in [n]^\ell$. Fix a vertex $X = (x_1,\ldots,x_\ell), x_i \in [n]$. Choose $(y_1,\ldots,y_\ell)$ uniformly at random from $[n]^{\ell}$ and $b = (b_1,\ldots,b_{\ell}) \sim \{0,1\}^{\ell}$ such that the Hamming weight of $b$ equals $\alpha \ell$. Let the neighbor of $X$ be $Z = (x_1 + b_1 \cdot y_1,\ldots,x_l + b_\ell \cdot y_\ell)$.
\end{enumerate}
\end{definition}

It is easy to verify that the graph defined above is a weighted Cayley graph with vertex set being the elements of the group $[n]^\ell = (\Z/n\Z)^{\ell}$. The natural group operation associated with this set is component-wise addition modulo $n$, which we will denote by $x+y$ for any two elements $x,y$ in $[n]^\ell$. We will now analyze the spectral properties of the graph. We will overload the notation $C$ to also refer to the normalized adjacency matrix of the graph $C$. Note firstly that the eigenvectors of $C$ are the characters of the group $[n]^\ell$ which we will denote by $\raisebox{2pt}{$\chi$}_{T}$, where $T = (T_1,\ldots,T_\ell) \in [n]^{\ell}$. We have that for all $x \in [n]^\ell$, $\Chi_T(x) = \Chi_{T_1}(x_1) \cdot \ldots \cdot \Chi_{T_{\ell}}(x_\ell)$, where $\Chi_{t}$ denotes the characters of $\Z/n\Z$ or equivalently the eigenvectors of the $n$-cycle.
We will now define a notion of degree for an eigenvector.

\begin{definition}[Degree of $\Chi_T$]
For all $T \in [n]^\ell$, where $T = (T_1,\ldots,T_\ell)$, define the degree of $T$ as:
$$|T| := |\{i \mid T_i \neq 0\}|,$$
The degree of $\Chi_T$ is defined as $|T|$.
\end{definition}

We will now calculate the eigenvalues of $C$. We will show that the eigenvalue corresponding to $\Chi_T$ only depends on $|T|$. Moreover when $|T| \ll \ell$, the eigenvalue of $\Chi_T$ grows exponentially small with $|T|$.

\begin{lemma}\label{lem:eigenval}
Let $\lambda_C(\Chi_T)$ denote the eigenvalue of $C$ corresponding to the eigenvector $\Chi_T$ for $T \in [n]^\ell$. We have that, 
$$
\lambda_{C}(\raisebox{2pt}{$\chi$}_T) = 
\begin{cases}
\frac{\binom{\ell - |T|}{(1 - \alpha)\ell - |T|}}{\binom{\ell}{(1 - \alpha)\ell}}, ~~~~|T| \leq (1 - \alpha)\ell\\
0, ~~~~~~~~~~~~~~~~~~~~~~~\text{otherwise.} 
\end{cases}.
$$
\end{lemma}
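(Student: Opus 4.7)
}

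The plan is a direct Fourier-analytic computation exploiting the fact that $C_{n,\ell,\alpha}$ is a (weighted) Cayley graph on the abelian group $G = (\Z/n\Z)^\ell$, so its eigenvectors are exactly the characters $\chi_T$ and the eigenvalues are given by the Fourier transform of the step distribution. Concretely, if we write a random neighbor of $X$ as $Z = X + b \circ Y$, where $\circ$ denotes coordinatewise multiplication, $Y \sim [n]^\ell$ uniform, and $b \in \{0,1\}^\ell$ uniform subject to $|b| = \alpha \ell$, then
\[
\lambda_C(\chi_T) \;=\; \E_{b,\,Y}\bigl[\chi_T(b \circ Y)\bigr].
\]
The strategy is to evaluate this expectation by conditioning on $b$ and then taking the expectation over $Y$.

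First, I would use the product structure $\chi_T(z) = \prod_{i=1}^\ell \chi_{T_i}(z_i)$ of characters of $(\Z/n\Z)^\ell$ to write
\[
\chi_T(b \circ Y) \;=\; \prod_{i\,:\, b_i = 0} \chi_{T_i}(0) \cdot \prod_{i\,:\, b_i = 1} \chi_{T_i}(y_i) \;=\; \prod_{i\,:\, b_i = 1} \chi_{T_i}(y_i).
\]
Taking the expectation over the independent uniform $y_i \sim \Z/n\Z$ and using the basic orthogonality fact $\E_{y \sim \Z/n\Z}[\chi_t(y)] = \Ind[t = 0]$, I obtain that for fixed $b$,
\[
\E_Y\bigl[\chi_T(b\circ Y)\bigr] \;=\; \prod_{i\,:\, b_i = 1} \Ind[T_i = 0] \;=\; \Ind\bigl[\mathrm{supp}(b) \cap \mathrm{supp}(T) = \emptyset\bigr],
\]
where $\mathrm{supp}(T) = \{i : T_i \neq 0\}$ has size $|T|$.

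It remains to average this indicator over $b$. The number of Hamming-weight-$\alpha\ell$ vectors $b \in \{0,1\}^\ell$ whose support is disjoint from $\mathrm{supp}(T)$ is $\binom{\ell - |T|}{\alpha \ell}$ whenever $\alpha\ell \le \ell - |T|$, i.e., $|T| \le (1-\alpha)\ell$, and $0$ otherwise. Dividing by the total number $\binom{\ell}{\alpha\ell}$ of weight-$\alpha\ell$ vectors yields
\[
\lambda_C(\chi_T) \;=\; \frac{\binom{\ell - |T|}{\alpha\ell}}{\binom{\ell}{\alpha\ell}} \;=\; \frac{\binom{\ell - |T|}{(1-\alpha)\ell - |T|}}{\binom{\ell}{(1-\alpha)\ell}}
\]
in the feasible range, and $0$ otherwise, where the final rewriting uses $\binom{m}{k} = \binom{m}{m-k}$. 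This matches the claimed formula.

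There is no serious obstacle here: the computation is routine once one recognizes that the random step is of the form $b \circ Y$ and exploits character orthogonality coordinate-by-coordinate. The only care needed is in the edge case $|T| > (1-\alpha)\ell$, where the support constraint on $b$ is infeasible and the eigenvalue is $0$.
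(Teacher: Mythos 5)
Your proof is correct and follows essentially the same route as the paper's: both exploit the Cayley-graph structure to write $\lambda_C(\chi_T) = \E_{b,Y}[\chi_T(b\circ Y)]$, apply coordinate-wise character orthogonality to reduce this to $\Pr_b[\mathrm{supp}(b)\cap\mathrm{supp}(T) = \emptyset]$, and then count weight-$\alpha\ell$ vectors $b$ avoiding $\mathrm{supp}(T)$. The only cosmetic difference is that you condition on $b$ first and phrase the inner expectation as an indicator of support disjointness, whereas the paper phrases it as the event $(b_1T_1,\ldots,b_\ell T_\ell)=0^\ell$; your final rewriting via $\binom{m}{k}=\binom{m}{m-k}$ correctly matches the stated form.
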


\begin{proof}
Let $T = (T_1, \ldots, T_\ell)$. For all $X \in [n]^\ell$, we have that,

\begin{align*}
C \cdot \Chi_T(X) &= \E_{y,b}[\Chi_T(x_1+b_1 y, \ldots, x_l + b_l y_l)] \\
&= \Chi_T(X) \E_{y,b}[\Chi_{T_1,\ldots,T_l}(b_1y_1,\ldots,b_ly_l)] \\
&= \Chi_T(X) \E_{y,b}[\Chi_{(b_1 T_1,\ldots,b_l T_l)}(y)].
\end{align*}

For $y,S \in [n]$ and $S \neq 0$, we know that the eigenvector $\Chi_S$ is orthogonal to the eigenvector $\Chi_0$, equivalently that $\E_y[\Chi_S(y)] = 0$, whereas if $S = 0$ then $\E_y[\Chi_S(y)] = 1$. So we get that,

\begin{align*}
\lambda_{C}(\Chi_T) &= \E_{y,b}[\Chi_{(b_1 T_1,\ldots,b_\ell T_\ell)}(y)] \\
&= \Pr_b[(b_1T_1,\ldots,b_\ell T_\ell) = 0^{\ell}] \\
&= \begin{cases}
\frac{\binom{l - |T|}{(1 - \alpha)l - |T|}}{\binom{l}{(1 - \alpha)l}}, ~~~~|T| \leq (1 - \alpha)\ell\\
0, ~~~~~~~~~~~~~~~~~~~~~~~\text{otherwise.} 
\end{cases}.
\end{align*}
\end{proof}

\subsection{Analyzing non-expanding sets of the Johnson-approximating graph}

Since our main aim in Section~\ref{sec:johnson} is to deal with sets in the Johnson graph $J_{n,\ell,\alpha}$ we will only consider ``permutation-invariant'' sets on $C_{n,\ell,
\alpha}$. Notice that the vertices of the Johnson graph are subsets of $[n]$ of size $\ell$, whereas the vertices of the Johnson-approximating graph are ordered $\ell$-tuples of $[n]$. Therefore, given a set $S$ in the Johnson graph, it has a natural mapping to the set $S'$ which is a subset of the vertices of the Johnson-approximating graph $C$, 
$S' := \{(x_{\pi(1)},\ldots,x_{\pi(l)}) \mid \pi: [l] \rightarrow [l], \{x_1,\ldots,x_l\} \in S\}$. This leads to the following definition:

\begin{definition}[Permutation-invariance]\label{def:perm-inv}
We say that a set $S \subseteq C_{n,\ell,\alpha}$ is permutation-invariant if for all permutations $\pi \in \mathcal{S}_\ell$, the symmetric group on $\ell$ elements, and all $X = (x_1,\ldots,x_l) \in S$, we have that $X_{\pi} = (x_{\pi(1)},\ldots,x_{\pi(l)})$ belongs to $S$. Similarly a function $F: V(C) \rightarrow \R$ is permutation invariant if for all inputs $X = (x_1,\ldots,x_{\ell})$, we have that $F(x_1,\ldots,x_l) = F(x_{\pi(1)},\ldots,x_{\pi(l)})$, for all permutations $\pi$. Further let $\A_{inv}$ denote the set of axioms that $F$ is permutation-invariant, that is, 
\[\A_{inv} := \{F(x_1,\ldots,x_\ell) = F(x_{\pi(1)},\ldots,x_{\pi(l)})\}_{\pi \in \mathcal{S}_\ell, X \in [n]^\ell}.\]
\end{definition}

Since the set of vertices in $C$ that correspond to some set of vertices in $J$ are permutation invariant it will be enough to focus are attention on these special sets and from now on whenever we refer to a set in $V(C)$, the reader can assume that it is permutation-invariant.

To analyze non-expanding sets of $C$, we will consider permutation-invariant functions $F: V(C) \rightarrow [0,1]$. Typically one would consider $0/1$-valued functions $F$, where $F$ is the indicator function of a set $S$, i.e. $F(X) = 1$ when $X \in S$. But since we need to analyze ``approximate-sets'' (the indicator function is approximated by a polynomial that takes values close to $0/1$), $F(X)$ could take any value between $[0,1]$.

Recall that the Fourier decomposition of $F$ gives us that, $F(X) = \sum_T \hat{F}(T)\Chi_T(X)$. We will now define the following for a function $F$:

\begin{definition}
\begin{enumerate}
\item We will expand $F$ as 
\[F = F_0 + \ldots + F_\ell,\] 
where $F_i(X) = \sum_{T: |T| = i} \hat{F}(T) \Chi_T(X)$. We will call $F$ a level $i$ function, if its Fourier decomposition has degree $i$ characters only, i.e. $\hat{F}(T) = 0,$ for all $T$ such that $|T| \neq i$.
\item Let $f_{i,F}: [n]^i \rightarrow \R$ be a function defined as, 
$$f_{i,F}(x_1,\ldots,x_i) := \sum_{T_1,\ldots,T_i \in ([n] \setminus 0)^i}\hat{F}(T_1,\ldots,T_i,0,\ldots)\chi_{T_1,\ldots,T_i}(x_1,\ldots,x_i),$$ 
\end{enumerate}
\end{definition}

Let $X = (x_1,\ldots,x_{j}) \in [n]^{j}$ and $I$ be a subset of $\{1,\ldots,j\}$. Let $I = \{k_1,\ldots,k_{|I|}\}$ where $k_1 < k_2 < \ldots < k_{|I|}$. We will use $X|_I$ to denote the ordered tuple of elements $(x_{k_1},\ldots,x_{k_{|I|}})$. We will now state some simple properties of $F$ that are implied by permutation-invariance.

\begin{lemma}\label{lem:props}
For all functions $F:[n]^\ell \rightarrow \R$ that are permutation-invariant, we have that:
\begin{enumerate}
\item $\hat{F}(T_1,\ldots, T_l) = \hat{F}(T_{\pi(1)},\ldots,T_{\pi(l)})$, for all $(T_1,\ldots,T_l) \in [n]^\ell$ and all permutations $\pi:[l] \rightarrow [l]$.

\item The functions $F_i$ and $f_{i,F}$ are also permutation-invariant.

\item $F_i(X) = \sum\limits_{\substack{I \subseteq [l] \\ |I| = i}} f_{i,F}(X|_{I})$.

\end{enumerate}
\end{lemma}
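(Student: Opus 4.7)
The plan is to derive all three claims from a single observation: Fourier coefficients of a permutation-invariant function are themselves permutation-invariant under reordering of the index tuple.

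For part (1), I would start from the definition $\hat F(T_1,\ldots,T_\ell) = \E_{X \in [n]^\ell}[F(X)\, \overline{\Chi_T(X)}]$. Apply the change of variables $X \mapsto X_\pi := (x_{\pi(1)},\ldots,x_{\pi(\ell)})$, which preserves uniform measure on $[n]^\ell$. Using permutation-invariance of $F$, the $F$-factor is unchanged, while $\Chi_{(T_1,\ldots,T_\ell)}(X_\pi) = \prod_k \Chi_{T_k}(x_{\pi(k)}) = \Chi_{(T_{\pi^{-1}(1)},\ldots,T_{\pi^{-1}(\ell)})}(X)$ after a reindexing. Matching coefficients yields $\hat F(T_1,\ldots,T_\ell) = \hat F(T_{\pi(1)},\ldots,T_{\pi(\ell)})$ (applying the identity with $\pi^{-1}$ in place of $\pi$).

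For part (2), permutation-invariance of $F_i$ follows by writing $F_i(X_\sigma) = \sum_{|T|=i} \hat F(T)\Chi_T(X_\sigma) = \sum_{|T|=i}\hat F(T)\Chi_{T_{\sigma^{-1}}}(X)$, reindexing $T' = T_{\sigma^{-1}}$ (which preserves $|T'| = i$), and using part (1) to replace $\hat F(T'_\sigma)$ by $\hat F(T')$. For $f_{i,F}$, which is a function on $[n]^i$, the analogous computation: for $\sigma \in S_i$ we have $f_{i,F}(x_{\sigma(1)},\ldots,x_{\sigma(i)}) = \sum_{T_1,\ldots,T_i \neq 0} \hat F(T_1,\ldots,T_i,0,\ldots,0)\Chi_{T_{\sigma^{-1}(1)},\ldots,T_{\sigma^{-1}(i)}}(x_1,\ldots,x_i)$, and relabeling $T'_j := T_{\sigma^{-1}(j)}$ together with part (1) (noting that placing zeros at positions $i{+}1,\ldots,\ell$ is preserved under permuting the first $i$ coordinates, by padding $\sigma$ to an element of $S_\ell$ fixing $\{i{+}1,\ldots,\ell\}$) gives the claim.

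For part (3), I would decompose the sum $F_i(X) = \sum_{|T|=i}\hat F(T)\Chi_T(X)$ by grouping the degree-$i$ multi-indices $T$ according to the support $I := \{k : T_k \neq 0\} \in \binom{[\ell]}{i}$. Since $\Chi_0 \equiv 1$, for each such $T$ we have $\Chi_T(X) = \prod_{k \in I}\Chi_{T_k}(x_k) = \Chi_{(T_{k_1},\ldots,T_{k_i})}(X|_I)$ where $I = \{k_1 < \cdots < k_i\}$. By part (1), the value of $\hat F(T)$ depends only on the multiset of nonzero entries of $T$; in particular, it equals $\hat F(T_{k_1},\ldots,T_{k_i},0,\ldots,0)$. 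Summing over the nonzero values at positions in $I$ therefore reproduces $f_{i,F}(X|_I)$ by definition, and summing over $I \in \binom{[\ell]}{i}$ yields the identity.

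The whole proof is essentially bookkeeping with Fourier coefficients; the only subtlety, and the one I would be most careful about, is part (3), where one must keep track of the correspondence between the combinatorial decomposition by support $I$ and the ``padded'' definition of $f_{i,F}$ (where nonzero entries are placed in the first $i$ coordinates). This is exactly where part (1) is used in an essential way.
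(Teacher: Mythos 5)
Your proof is correct, and fills in cleanly what the paper leaves as an unproved lemma (the authors explicitly skip it as a ``straightforward manipulation of the definitions''). Each step checks out: the change-of-variables argument for part (1), the reindexing of degree-$i$ multi-indices plus padding $\sigma$ to a permutation of $[\ell]$ fixing $\{i{+}1,\ldots,\ell\}$ for part (2), and the support decomposition with the observation that $\hat F(T)$ depends only on the multiset of nonzero entries for part (3) -- all as the authors intended.
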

We skip the proof of this lemma because it follows by a straightforward manipulation of the definitions.

\begin{definition}[$r$-restricted subcubes of $C$]
Given an ordered tuple, $A = (a_1,\ldots,a_r)$ for $a_i \in [n]$ and $r \leq l-1$, we let $C|_{A}$ denote the subset of vertices of $C$ whose first $r$ coordinates are restricted to be $(a_1,\ldots,a_r)$. We call such a subset an $r$-restricted subcube of $C$.
\end{definition}

\begin{definition}[Restrictions]
Given a function $F:[n]^\ell \rightarrow \R$ and an ordered tuple, $A = (a_1,\ldots,a_r)$ for $a_i \in [n]$ and $1 \leq r \leq l-1$, we define the restricted function $F|_A: [n]^{\ell - r} \rightarrow \R$ as, 
$$F|_A(x_1,\ldots,x_{l-r}) = F(a_1,\ldots,a_r, x_1,\ldots,x_{l-r}).$$
Further, let $\delta_A(F)$ denote the mass of the function restricted to $A$, that is,
$$\delta_A(F) := \delta(F|_A) = \E\limits_{X \in [n]^{\ell - r}}[F|_A(X)].$$
For convenience, when $A = \phi$ ($r = 0$), define $F|_A(X) := F(X)$, and $\delta_A(F) := \delta(F) = \E_{X \in [n]^\ell}[F(X)]$.
\end{definition}

The following simple facts hold for restrictions of functions:

\begin{lemma}\label{lem:restriction}
Let $F$ be a permutation-invariant function on $V(C)$. Then we have the following:
\begin{enumerate}
\item For all $a \in [n]$ and for all $i$ such that $0 \leq i \leq \ell-1$, and all $X \in [n]^i$, we have that,
\[f_{i+1,F}(a, X) = f_{i,F|_{\{a\}}}(X) - f_{i,F}(X).\]

\item For all integers $i$ such that $0 \leq i \leq \ell$ and for all $X \in [n]^i$, we get an inclusion-exclusion formula for $f_i(X)$ in terms of restrictions of $F$: 
\[f_{i,F}(X) = \sum_{B \subseteq \{1,\ldots,i\}} (-1)^{i - |B|} \delta_{X|_B}(F),\]
where $X|_B$ is the ordered tuple of elements of $X$ restricted to the indices in $B$.
\end{enumerate}
\end{lemma}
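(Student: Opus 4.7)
\medskip

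\noindent\textbf{Proof proposal.} The plan is to prove part (1) by a direct Fourier calculation and then derive part (2) by induction on $i$ using part (1) together with the observation that $\delta_A(F)$ depends only on the multiset of values in $A$.

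For part (1), I will unfold the definition of $f_{i,F|_{\{a\}}}(X)$. Fixing the first coordinate of $F$ to $a$ and expanding $F$ in its Fourier basis gives
\[
F|_{\{a\}}(x_1,\dots,x_{\ell-1}) = \sum_{(T_1,\dots,T_\ell)\in [n]^\ell} \hat F(T_1,\dots,T_\ell)\,\Chi_{T_1}(a)\,\Chi_{T_2,\dots,T_\ell}(x_1,\dots,x_{\ell-1}),
\]
so $\widehat{F|_{\{a\}}}(S_1,\dots,S_{\ell-1}) = \sum_{t\in[n]}\hat F(t,S_1,\dots,S_{\ell-1})\Chi_t(a)$. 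Substituting this into the definition of $f_{i,F|_{\{a\}}}$ and splitting the inner sum over $t \in [n]$ into the terms with $t=0$ and $t\neq 0$, the $t=0$ contribution is exactly $f_{i,F}(X)$ (using the permutation invariance property $\hat F(0,S_1,\dots,S_i,0,\dots)=\hat F(S_1,\dots,S_i,0,\dots)$ from Lemma~\ref{lem:props}(1) to reorder the zero entries), while the $t\neq 0$ contribution is exactly $f_{i+1,F}(a,X)$. Rearranging yields part (1).

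For part (2), I will induct on $i$. The base case $i=0$ is immediate since $f_{0,F}=\hat F(0,\dots,0)=\E[F]=\delta_\emptyset(F)$ and the right-hand side collapses to the single term $B=\emptyset$. For the inductive step, let $Y=(y_1,\dots,y_{i+1})$, set $a=y_1$ and $X=(y_2,\dots,y_{i+1})$, and apply part (1) to get $f_{i+1,F}(Y)=f_{i,F|_{\{y_1\}}}(X)-f_{i,F}(X)$. Applying the inductive hypothesis to each of these two length-$i$ objects expresses them as alternating sums of $\delta$-terms. The key identity needed is that $\delta_{X|_{B'}}(F|_{\{y_1\}})=\delta_{(y_1,X|_{B'})}(F)$, which follows from permutation invariance of $F$ since $\delta_A(F)$ only depends on the multiset of coordinates in $A$. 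Then the subsets $B'\subseteq\{1,\dots,i\}$ appearing in the expansion of $f_{i,F|_{\{y_1\}}}(X)$ correspond bijectively, via $B=\{1\}\cup(B'+1)$, to subsets of $\{1,\dots,i+1\}$ containing $1$; and the subsets appearing in $f_{i,F}(X)$ correspond (via $B = B'+1$) to subsets of $\{1,\dots,i+1\}$ not containing $1$. Accounting for the sign flip from the subtraction, these two sums merge into $\sum_{B\subseteq\{1,\dots,i+1\}}(-1)^{i+1-|B|}\delta_{Y|_B}(F)$, as desired.

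The calculations are entirely mechanical; the only place subtlety enters is in part (1), where one must use permutation invariance to identify the zero-mode term with $f_{i,F}(X)$ (otherwise the Fourier coefficients $\hat F(0,S_1,\dots,S_i,0,\dots)$ and $\hat F(S_1,\dots,S_i,0,\dots,0)$ might differ), and in part (2), where the same invariance is what lets us convert $\delta_A(F|_{\{a\}})$ into $\delta_{(a,A)}(F)$. Since both parts are bookkeeping identities among Fourier expansions, I do not anticipate any genuine obstacle beyond careful index management.
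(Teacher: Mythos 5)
Your proof is correct and follows essentially the same two-step approach as the paper: part (1) via the Fourier expansion of $F|_{\{a\}}$ and a split on whether the first Fourier index is zero (the paper runs the identical calculation starting from $f_{i+1,F}(a,X)$ rather than from $f_{i,F|_{\{a\}}}(X)$, but these are the same manipulation read in opposite directions), and part (2) by induction on $i$ using part (1) together with the identity $\delta_{X|_{B'}}(F|_{\{a\}})=\delta_{(a,X|_{B'})}(F)$. One small quibble: that last identity holds directly from the definition of the restriction operator $F\mapsto F|_A$ and does not actually require permutation invariance; where permutation invariance is genuinely used is in part (1) (to equate $\hat F(0,S_1,\ldots,S_i,0,\ldots)$ with $\hat F(S_1,\ldots,S_i,0,\ldots)$) and in ensuring that $F|_{\{a\}}$ is again permutation-invariant so the inductive hypothesis can be invoked.
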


\begin{proof}[Proof of (1)]
Using the definition, we can expand out $f_{i+1,F}$ to get that,

\[f_{i,F}(a, X) = \sum_{(T_1,\ldots,T_{i+1}) \in ([n] \setminus 0)^{i+1}}\hat{F}(T_1,\ldots,T_{i+1},0,\ldots, 0)\Chi_{T_1,\ldots,T_{i+1}}(a, X).\]

We can split this sum into two parts, one where $T_{1}$ can take \emph{any} value (even $0$) and the 
second where $T_{1} = 0$. We get that,

\begin{align}
f_{i,F}(a,X) &= \sum_{ \substack{T_1 \in [n] \\ T \in ([n] \setminus 0)^i}}\hat{F}(T_1,T, 0,\ldots,0,\ldots, 0)\Chi_{T_1,T}(a,X) \\
&- \sum_{ \substack{T_1 = 0 \\ T \in ([n] \setminus 0)^i}}\hat{F}(0, T ,0,\ldots, 0)\Chi_{T}(X).
\end{align}
We will show that the first term equals $f_{i, F|_{\{a\}}}(X)$ and the second term equals $f_{i,F}(X)$. This implies the conclusion needed.

For the first term we have that, 

\begin{align}
&\sum_{ \substack{T_1 \in [n] \\ T \in ([n] \setminus 0)^i}}\hat{F}(T_1, T,0,\ldots,0)\Chi_{T_1,T}(a, X) \nonumber \\
= &\sum_{ \substack{T_1 \in [n] \\ T \in ([n] \setminus 0)^i}} \E_{\substack{Y_1 \in [n], \\ Y \in [n]^{\ell - 1}}}\left[F(Y_1,Y)\Chi_{T_1}(Y_1)\Chi_{(T,0,\ldots,0)}(Y)\right]\Chi_{T_1}(a)\Chi_{T}(X) \nonumber \\
= &\sum_{T \in ([n] \setminus 0)^i} \E_{\substack{Y_1 \in [n], \\ Y \in [n]^{\ell - 1}}}\left[F(Y_1,Y)\Chi_{(T,0,\ldots,0)}(Y)\sum_{T_1 \in [n]}\Chi_{T_1}(a+Y_1)\right]\Chi_{T}(X). \label{eq:last}
\end{align}

We now have that $\sum_{T_1 \in [n]}\Chi_{T_1}(a+Y_1) = 0$ if $T_1 \neq a$ and equals $n$ otherwise. Using this fact we get that equation~\ref{eq:last} equals,

\begin{align*}
&\sum_{T \in ([n] \setminus 0)^i} \frac{1}{n} \cdot \E_{Y \in [n]^{\ell - 1}}\left[F(a,Y)\Chi_{(T,0,\ldots,0)}(Y)\cdot n\right]\Chi_{T}(X) \\
= &\sum_{T \in ([n] \setminus 0)^i} \E_{Y \in [n]^{\ell - 1}}\widehat{F|_{\{a\}}}(T,0,\ldots,0)\Chi_{T}(X) \\
= &f_{i, F|_{\{a\}}}
\end{align*}

For the second term we have that,

\begin{align*}
&\sum_{ \substack{T_1 = 0 \\ (T_2,\ldots,T_{i+1}) \in ([n] \setminus 0)^i}}\widehat{F}(0, T_2,\ldots,T_{i+1},0,\ldots, 0)\Chi_{T_2,\ldots,T_{i+1}}(X)\\
= &\sum_{(T_2,\ldots,T_{i+1}) \in ([n] \setminus 0)^i}\hat{F}(T_2, \ldots,T_{i+1},0,\ldots, 0)\Chi_{T_2,\ldots,T_{i+1}}(X),
\end{align*}
since by Lemma~\ref{lem:props} (1) we have that $\hat{F}(0, T_2,\ldots,T_{i+1}, 0, \ldots) = \hat{F}(0, T_2,\ldots,T_{i+1}, 0, \ldots)$. Since the last equality is the definition of $f_{i,F}(X)$, the conclusion follows.

\end{proof}

\begin{proof}[Proof of (2)]
We will prove this claim by induction on $i$. For the base case of $i = 0$, by definition, we have that,

\[f_{0,F}(\phi) = \hat{F}(0,\ldots,0) = \E_{X \in [n]^\ell}[F(X)] = \delta(F) = \delta_{\phi}(F) = \sum_{B \subseteq \phi} \delta_{\phi|_B}(F).\]

Now let us assume that for all permutation-invariant functions $G$ the claim holds for $i-1$, i.e. for all $X \in [n^{i-1}]$, we have that $f_{i-1,G}(X) = \sum_{B \in \{1,\ldots,i-1\}} (-1)^{i-1-|B|} \delta_{X|_{B}}(G)$. Now we will prove the  claim for $f_{i,F}$, thus completing the induction.

Let $X = (x_1, X')$, where $X \in [n]^i, x_1 \in [n]$ and $X' \in [n]^{i-1}$. Then by property (1) of the same lemma, we have that,
\[f_{i,F}(X) = f_{i,F|_{\{x_1\}}}(X') - f_{i,F}(X').\]

Expanding the RHS using the induction hypothesis on the functions $F|_{\{x_1\}}$ and $F$, we get that,

\begin{align*}
f_{i,F}(X) &= \sum_{B' \in \{1,\ldots,i-1\}} (-1)^{i-1-|B'|} \delta_{X'|_{B'}}(F|_{\{x_1\}}) - \sum_{B' \in \{1,\ldots,i-1\}} (-1)^{i-1-|B'|} \delta_{X'|_{B'}}(F) \\
&= \sum_{B' \in \{1,\ldots,i-1\}} (-1)^{i-(1+|B'|)} \delta_{(x_1,X'|_{B'})}(F) + \sum_{B' \in \{1,\ldots,i-1\}} (-1)^{i-|B'|} \delta_{X'|_{B'}}(F) \\
&= \sum_{B \in \{1,\ldots,i\}: 1 \in B} (-1)^{i-|B|} \delta_{X|_B}(F) + \sum_{B \in \{1,\ldots,i\}: 1 \notin B} (-1)^{i-|B|} \delta_{X|_{B}}(F) \\
&= \sum_{B \in \{1,\ldots, i\}} (-1)^{i-|B|} \delta_{X|_B}(F).
\end{align*}

This completes the inductive step and the proof of the lemma.
\end{proof}

We will first upper bound the Fourier weights on the lower levels. To do so we will use the following relation between the Fourier weight $\eta_i$ and $f_i$'s.

\begin{lemma}\label{lem:weight}
Let $F$ be a permutation-invariant function on the vertices of $C$. Then, we have that,
$$\E_{X \in [n]^{i}}[f_{i,F}(X)^2] =  \frac{\eta_{i}}{\binom{l}{i}},$$
where $\eta_i = \E_{Y \in V(C)}[F_i(Y)^2]$.
\end{lemma}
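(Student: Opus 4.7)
The plan is to prove the identity by two applications of Parseval's theorem together with the permutation invariance of $F$ to count multiplicities.

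First, I would apply Parseval on $[n]^i$ to the function $f_{i,F}$. Since $f_{i,F}(x_1,\ldots,x_i) = \sum_{(T_1,\ldots,T_i) \in ([n]\setminus 0)^i} \hat{F}(T_1,\ldots,T_i,0,\ldots,0)\,\chi_{T_1,\ldots,T_i}(x)$ is expressed directly in the character basis of $[n]^i$, orthogonality of characters gives
\[
\E_{X \in [n]^i}[f_{i,F}(X)^2] \;=\; \sum_{(s_1,\ldots,s_i) \in ([n]\setminus 0)^i} \bigl|\hat F(s_1,\ldots,s_i,0,\ldots,0)\bigr|^2.
\]
Next I would apply Parseval on $[n]^\ell$ to $F_i$: since $F_i(Y) = \sum_{T : |T|=i} \hat F(T)\chi_T(Y)$, we obtain $\eta_i = \sum_{T:|T|=i} |\hat F(T)|^2$.

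To reconcile the two sums I would use Lemma~\ref{lem:props}(1), i.e.\ the fact that permutation invariance of $F$ transfers to its Fourier coefficients: $\hat F(T_1,\ldots,T_\ell) = \hat F(T_{\pi(1)},\ldots,T_{\pi(\ell)})$ for every permutation $\pi$. For each $T \in [n]^\ell$ with $|T|=i$, let $I = \{k_1 < \cdots < k_i\}$ be its support and let $(s_1,\ldots,s_i) = (T_{k_1},\ldots,T_{k_i})$ be the ordered sequence of nonzero entries. Choosing a permutation that maps $I$ bijectively (and order-preservingly) to $\{1,\ldots,i\}$, invariance gives $\hat F(T) = \hat F(s_1,\ldots,s_i,0,\ldots,0)$.

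Finally I would count: the assignment $T \mapsto (I, (s_1,\ldots,s_i))$ is a bijection between $\{T \in [n]^\ell : |T|=i\}$ and $\binom{[\ell]}{i} \times ([n]\setminus 0)^i$, and the squared Fourier coefficient depends only on the second coordinate. Therefore
\[
\eta_i \;=\; \sum_{T:|T|=i} |\hat F(T)|^2 \;=\; \binom{\ell}{i} \sum_{(s_1,\ldots,s_i) \in ([n]\setminus 0)^i} \bigl|\hat F(s_1,\ldots,s_i,0,\ldots,0)\bigr|^2 \;=\; \binom{\ell}{i}\,\E_{X \in [n]^i}[f_{i,F}(X)^2],
\]
which rearranges to the claimed identity. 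There is no real obstacle here; the only subtlety is being careful that the ``ordered tuple of nonzero entries'' extraction is a well-defined bijection, which is immediate once one fixes the convention of listing nonzero entries in the order of their positions in $T$.
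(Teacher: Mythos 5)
Your proof is correct, but it takes a different route from the paper's. You work entirely in the Fourier domain: Parseval on $[n]^i$ gives $\E_X[f_{i,F}(X)^2] = \sum_{s \in ([n]\setminus 0)^i}|\hat F(s,0,\ldots,0)|^2$, Parseval on $[n]^\ell$ gives $\eta_i = \sum_{|T|=i}|\hat F(T)|^2$, and then Lemma~\ref{lem:props}(1) plus the support/value bijection produces the multiplicity $\binom{\ell}{i}$. The paper instead works in physical space via Lemma~\ref{lem:props}(3): it writes $F_i(A) = \sum_{|I|=i} f_{i,F}(A|_I)$, squares and expands, observes that the cross terms $\E_A[f_{i,F}(A|_I)f_{i,F}(A|_{I'})]$ vanish for $I\neq I'$ (because the two factors have disjoint Fourier supports), and notes by permutation invariance that each diagonal term contributes the same value, giving the $\binom{\ell}{i}$ factor. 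The two arguments are dual views of the same orthogonality structure; yours has the advantage of making the orthogonality explicit where the paper asserts the cross terms vanish without comment, while the paper's leans directly on the decomposition already proven in Lemma~\ref{lem:props}(3). One small point you gloss over: Parseval in the complex character basis gives $\E[|f_{i,F}|^2]$, so to match $\E[f_{i,F}^2]$ you implicitly use that $f_{i,F}$ is real-valued (which follows from $F$ being real and the conjugate symmetry $\hat F(-T)=\overline{\hat F(T)}$); it would be worth a sentence.
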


\begin{proof}
Recall that if $I \subseteq[l]$, we will use $f_i(I)$ to denote $f_i(A|_I)$. Since $F_i(A) = \sum_{I \subseteq [l]}F_i(A|_I)$, we have that, 
\begin{align*}
    \E_{A \sim [n]^{\ell}}[F_i(A)^2] &= \E_A[(\sum_I f_i(I))^2] \\
    &= \sum_I \E_A[ f_i(I)^2] + \sum_{I \neq I'} \E_{A}[f_i(I)f_i(I')] \\
    &= \sum_I \E_A[ f_i(I)^2] + 0 \\
    &= \sum_I \E_{(a_1,\ldots,a_{\ell})}[ f_i(a_{j_1},\ldots,a_{j_i})^2], &\text{where } j_1,\ldots,j_i \in I \\
    &= \E_{(x_1,\ldots,x_i)}[f_i(X)^2] \cdot \binom{\ell}{i},
\end{align*}
rearranging which, immediately implies the lemma. 
\end{proof}

Recall that $\A_{inv}$ denotes the set of axioms that $F$ is permutation-invariant. We will now bound the $i^{th}$-level Fourier weight of a permutation-invariant function $F$.
\begin{lemma}[Upper Bound on Level-$i$ Weight]\label{lem:ub-weight}
Let $F$ be a permutation-invariant function on $V(C)$. Then for all $i$ such that $0 \leq i \leq \ell$, we can bound the Fourier weight of $F$ on the $i^{th}$ level using its restrictions:
\[\A_{inv} \,\, \vdash_2 \,\, \eta_{i} \leq 2^i\binom{\ell }{i} \cdot \left(\sum_{j = 0}^i \binom{i}{j} \E_{Y \in [n]^{j}}\left[\delta_{Y}(F)^2\right]\right),\]
where $\eta_{i} = \ip{F_i,F_i}_\pi$, for $\pi$ equal to the uniform distribution over $V(C)$.
\end{lemma}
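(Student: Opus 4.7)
The plan is to combine the identity in Lemma~\ref{lem:weight}, which reduces $\eta_i$ to a scaled second moment of $f_{i,F}$, with the inclusion--exclusion formula in Lemma~\ref{lem:restriction}(2), which writes $f_{i,F}(X)$ as a signed sum over the $2^i$ subsets of $\{1,\ldots,i\}$ of restricted densities $\delta_{X|_B}(F)$. After expanding this sum and bounding it by a standard sum-of-squares Cauchy--Schwarz, the terms will naturally group by $|B| = j$ to match the right-hand side.

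Concretely, I would proceed in the following steps. First, by Lemma~\ref{lem:weight} it suffices to show the \sos inequality $\E_{X \in [n]^i}[f_{i,F}(X)^2] \le 2^i \sum_{j=0}^i \binom{i}{j}\, \E_{Y \in [n]^j}[\delta_Y(F)^2]$. Next, substitute the expression from Lemma~\ref{lem:restriction}(2): $f_{i,F}(X) = \sum_{B \subseteq \{1,\dots,i\}} (-1)^{i-|B|}\, \delta_{X|_B}(F)$, which is a linear combination of $N = 2^i$ polynomials in $F$. Apply the standard degree-2 \sos Cauchy--Schwarz identity
\[
\Bigl(\sum_{B} a_B\Bigr)^2 \;\le\; N \sum_B a_B^2 \qquad \text{since}\qquad N\sum_B a_B^2 - \Bigl(\sum_B a_B\Bigr)^2 = \sum_{B \ne B'} \tfrac{1}{2}(a_B - a_{B'})^2,
\]
with $a_B = (-1)^{i-|B|}\delta_{X|_B}(F)$, yielding $f_{i,F}(X)^2 \preceq 2^i \sum_B \delta_{X|_B}(F)^2$ as a degree-2 \sos inequality in $F$.

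Finally, take the expectation over $X \sim [n]^i$ and split the sum over $B$ by cardinality: for each $B$ with $|B| = j$, the tuple $X|_B$ is a uniformly random element of $[n]^j$, so $\E_X[\delta_{X|_B}(F)^2] = \E_{Y \in [n]^j}[\delta_{Y}(F)^2]$. Here is where the axiom $\A_{inv}$ enters: the quantity $\delta_Y(F)$ is defined by fixing the \emph{first} $j$ coordinates to $Y$, whereas $X|_B$ may sit in other positions; permutation invariance of $F$ ensures $\delta_Y(F)$ is the same regardless of which $j$ coordinates we fix, so the two expectations agree as polynomial identities modulo $\A_{inv}$. Summing over the $\binom{i}{j}$ subsets of size $j$, multiplying by $\binom{\ell}{i}$ from Lemma~\ref{lem:weight}, and combining with the Cauchy--Schwarz bound gives the claim.

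The main obstacle, such as it is, is keeping careful track of the degree of the \sos proof and verifying that the equality $\E_X[\delta_{X|_B}(F)^2] = \E_{Y \in [n]^j}[\delta_Y(F)^2]$ really is a polynomial identity that follows from $\A_{inv}$ (rather than from some distributional symmetry that would not pass through a pseudoexpectation). Since $\delta_{X|_B}(F)$ is linear in $F$ and $\A_{inv}$ identifies $F$ evaluated at any permutation of coordinates, the equality holds at the level of coefficients of $F \otimes F$, so the whole argument goes through as a degree-2 \sos proof.
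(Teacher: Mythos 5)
Your proof is correct and takes essentially the same route as the paper: reduce to a bound on $\E_X[f_{i,F}(X)^2]$ via Lemma~\ref{lem:weight}, expand $f_{i,F}$ via the inclusion--exclusion identity of Lemma~\ref{lem:restriction}(2), apply a degree-2 \sos Cauchy--Schwarz to pay the factor $2^i$, and regroup the $2^i$ terms by $|B|=j$. The only (minor) divergence is in your attribution of where $\A_{inv}$ enters: under the paper's definition $\delta_{X|_B}(F)$ already restricts the first $|B|$ coordinates, so the equality $\E_X[\delta_{X|_B}(F)^2] = \E_{Y\in[n]^{|B|}}[\delta_Y(F)^2]$ is just a marginal identity; the permutation-invariance axiom is instead consumed inside Lemma~\ref{lem:restriction} itself -- but this is a bookkeeping difference, not a gap.
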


\begin{proof}
Firstly, using Lemma~\ref{lem:weight} we get that,
\begin{equation}\label{eq:weight}
\eta_{i} = \binom{\ell}{i} \cdot \E_{X \in [n]^{i}}[f_{i,F}(X)^2].
\end{equation}

Using the expansion of $f_{i,F}$ from Lemma~\ref{lem:restriction} (2), we get that,

\[f_{i,F}(X)^2 = \left(\sum_{B \subseteq \{1,\ldots,i\}} (-1)^{i - |B|} \delta_{X|_B}(F)\right)^2 \leq 2^i \cdot \sum_{B \subseteq \{1,\ldots,i\}} \delta_{X|_B}(F)^2, \]
where in the last step we have used the Cauchy-Schwarz inequality. Noting that this is a degree 2 SOS inequality and substituting this expression into (\ref{eq:weight}) we get that,

\begin{equation}\label{eq:weight1}
\A_{inv} \,\, \vdash_2 \,\, \eta_{i} \leq 2^i\binom{\ell}{i} \cdot \E_{X \in [n]^i}\left[\sum_{B \subseteq \{1,\ldots,i\}} \delta_{X|_B}(F)^2\right].
\end{equation}

We can now simplify the RHS further. We have that,
\begin{align*}
&\E_{X \in [n]^i}\left[\sum_{B \subseteq \{1,\ldots,i\}} \delta_{X|_B}(F)^2\right] \\
= &\sum_{B \subseteq \{1,\ldots,i\}} \E_{X \in [n]^i}\left[\delta_{X|_B}(F)^2\right] \\
= & \sum_{B \subseteq \{1,\ldots,i\}} \E_{Y \in [n]^{|B|}}\left[\delta_{Y}(F)^2\right] \\   
= & \sum_{j = 0}^i \binom{i}{j} \E_{Y \in [n]^{j}}\left[\delta_{Y}(F)^2\right].  
\end{align*}

Plugging in the last equation into equation (\ref{eq:weight1}), we get the conclusion.
\end{proof}

We will now prove the main structure theorem for $C$. This theorem can be interpreted as saying that if $F$ is the indicator function of a permutation-invariant set $S$ such that $S$ is not correlated with any of the $r$-restricted subcubes of $C$ then $S$ has high expansion in $C$ ($\ip{F,LF}$ is large). In our theorem, correlation with a subcube $C|_A$ for $A \in [n]^r$, is measured by the squared-mass of $F|_{A}$, which is equal to $\delta_A(F)^2$ (Note that $\delta_A(F)$ is equal to $\frac{|S \cap (C|_A)|}{|(C|_A)|}$).

\begin{reminder}{Theorem~\ref{thm:structure}:}
For all $\alpha \in (0,1)$, all integers $\ell \geq 1/\alpha$ and all integers $n \geq \ell$, the following holds: Let $C_{n,\ell,\alpha}$ be the Johnson-approximating graph and $\pi$ be the uniform distribution over $V(C)$. For every positive integer $r \leq \ell/2$ and every permutation-invariant function $F$ that is not correlated with any $r$-restricted subcube, $F$ has high expansion:
\begin{align*}
&\{F(X) \in [0,1]\}_{X \in V(C)} \cup \A_{inv} \,\, \vdash_2\,\, \\ 
&\langle F, L F\rangle_{\pi} \ge (1 - (1 - \alpha)^{r+1})\left[\E_\pi[F] - 8^r \binom{\ell}{r}\left(\sum_{j = 0}^r \E_{Y \in [n]^{j}}[\delta_{Y}(F)^2]\right) + B(F)\right],
\end{align*}
where $B(F)$ represents the Booleanity constraints and equals $\E_\pi[F^{\circ 2} - F]$. 
\end{reminder}

\begin{proof}
We know that $\ip{F,LF}_\pi = \E_\pi[F^{\circ 2}] - \ip{F,AF}_\pi$, where $\pi$ is the uniform distribution over $V(C)$. We will now upper bound $\ip{F,AF}$. Let $\lambda_i$ denote the eigenvalue of the level $i$ eigenvectors of $C$. From Lemma~\ref{lem:eigenval}, we have that $\lambda_i = \frac{\binom{\ell - |T|}{(1 - \alpha)\ell - |T|}}{\binom{\ell}{(1 - \alpha)\ell}}$ for $i \leq (1-\alpha)\ell$ and $0$ otherwise. One can check that ${\lambda_i \leq (1-\alpha)\lambda_{i-1}}$ for all $i$ between $1$ and $\ell$. Since $\lambda_0 = 1$, we get that $\lambda_i \leq (1-\alpha)^i$. We will use this upper bound because it is easier to work with in calculations.

Let $\eta_i = \ip{F_i, F_i}_\pi$ be the Fourier weight on level $i$. Expanding out $\ip{F,AF}_\pi$ we get that,

\begin{align*}
\ip{F,AF}_\pi &= \sum_{i = 0}^r \lambda_i \eta_i + \sum_{i = r+1}^\ell \lambda_i \eta_i \\
&\leq  \sum_{i = 0}^r (1-\alpha)^i \eta_i + (1-\alpha)^{r+1} \sum_{i = r+1}^\ell \eta_i \\
&\leq  \sum_{i = 0}^r \eta_i +  (1-\alpha)^{r+1} \left(\E_\pi[F^{\circ 2}] - \sum_{i = 0}^r \eta_i \right),\\
\end{align*}
where in the last step we have used the fact that, $\lambda_i \leq 1$, for all $i \leq r$, for the first summand and $\sum_{i = 0}^\ell \eta_i = \E_{\pi}[F^{\circ 2}]$ for the second. Further note that each of these inequalities is a degree 2 SoS inequality, since $\eta_i$ is a sum-of-squares for all $i \in [\ell]$. Plugging in the above inequality into the expression for the Laplacian and rearranging it we get that,

\begin{align*}
\ip{F,LF}_\pi &\geq (1 - (1 - \alpha)^{r+1})\left[\E_\pi[F^{\circ 2}] - \sum_{i = 0}^r \eta_i \right] \\
&= (1 - (1 - \alpha)^{r+1})\left[\E_\pi[F] + \E_\pi[F^{\circ 2} - F] - \sum_{i = 0}^r \eta_i  \right] \\
&\geq (1 - (1 - \alpha)^{r+1})\left[\E_\pi[F] + B(F) - \sum_{i = 0}^r 2^i\binom{\ell}{i} \cdot \sum_{j = 0}^i \binom{i}{j} \E_{Y \in [n]^{j}}\left[\delta_{Y}(F)^2\right]\right], 
\end{align*}
where in the last step we have applied the upper bound on $\eta_i$ proved in Lemma~\ref{lem:ub-weight} and substituted $B(F) = \E_\pi[F^{\circ 2} - F]$. All the inequalities are therefore degree 2 SoS inequalities.

We can now apply a simplification to the expression inside the summand to get that,
\begin{align*}
\ip{F,LF}_\pi &\geq (1 - (1 - \alpha)^{r+1})\left[\E_\pi[F] + B(F) - \sum_{j = 0}^r \E_{Y \in [n]^{j}}\left[\delta_{Y}(F)^2\right] \cdot \left(\sum_{i = j}^r 2^i\binom{i}{j}\binom{\ell}{i}\right)\right]\\ 
&\geq (1 - (1 - \alpha)^{r+1})\left[\E_\pi[F] + B(F) - 8^r \binom{\ell}{r}\sum_{j = 0}^r \E_{Y \in [n]^{j}}\left[\delta_{Y}(F)^2 \right] \right] , 
\end{align*}
where in the last inequality we have used the fact that $\sum_{i = j}^r 2^i\binom{i}{j}\binom{\ell}{i} \leq r 2^r \binom{r}{j}\binom{\ell}{r} \leq 8^r \binom{\ell}{r}$.

\end{proof}

We will use the structure theorem for the Johnson-approximating graph given above, to derive a structure theorem for the Johnson graph.

\restatetheorem{thm:structure-johnson}

\begin{proof}
We will use the structure theorem for the Johnson-approximating graph $C_{n,\ell,\alpha}$, to obtain a structure theorem for the Johnson graph $J_{n,\ell,\alpha}$. We will drop the subscript and use $C, J$ henceforth. 

Let $F$ be a function on the vertices of $J$ (given by $\ell$-sized subsets of $[n]$) such that $F(X) \in [0,1]$ for all $X$. Define a function $G: V(C) = [n]^\ell \rightarrow [0,1]$ in the following way:

\begin{align*}
G(x_1,\ldots,x_\ell) = \begin{cases}
F(\{x_1,\ldots,x_\ell\}), ~~~~ &\text{if the elements } x_j's \text{ are all distinct.} \\
0,  &\text{otherwise}.
\end{cases}
\end{align*}

One can check that $G$ satisfies the permutation-invariance axioms from Definition~\ref{def:perm-inv}. We also have that $G(X) \in [0,1]$ for all $X \in [n]^\ell$, when $F$ satisfies the same. So we can apply the structure theorem for the Johnson-approximating graph to $G$ to get that,

\begin{align}\label{eq:struct-approx}
&\{F(X) \in [0,1]\}_{X \in V(J)} \,\, \vdash_2\,\, \\ 
&\langle G, L_C G\rangle_{\pi_C} \ge (1 - (1 - \alpha)^{r+1})\left[\E_{\pi_C}[G] - 8^r \binom{\ell}{r}\left( \sum_{j = 0}^r \E_{Y \in [n]^{j}}\left[\delta_{Y}(G)^2\right]\right) + B_{\pi_C}(G) \right],
\end{align}
where $L_C$ is the Laplacian of $C$, $\pi_C$ is the uniform distribution over $V(C)$ and $B_{\pi_C}(G) = \E_{\pi_C}[G^{\circ 2} - G]$. We will now use the close relation between $F$ and $G$ to bound every term in the above expression to get a similar expression for $F$. We will show the following:

\begin{enumerate}
\item  $\langle G, L_C G\rangle_{\pi_C} \leq \langle F, L F\rangle_{\pi} + \left(\frac{2\ell^2 + \ell}{2n}\right)\E_{\pi}[F]$.
\item $\E_{\pi_C}[G] \geq \left(1 - O_\ell(\frac{1}{n})\right)\E_\pi[F]$.
\item $\E_{Y \in [n]^{j}}\left[\delta_{Y}(G)^2\right] \leq \E_{Y \in \binom{[n]}{j}}\left[\delta_{Y}(F)^2\right]$.
\item $B_{\pi_C}(G) \geq B(F)$, for $B(F) = \E_\pi[F^{\circ 2} - F]$.
\end{enumerate}

Plugging these bounds into equation (\ref{eq:struct-approx}) we get that,

\begin{align*}
&\{F(X) \in [0,1]\}_{X \in V(J)} \,\, \vdash_2\,\, \\ 
&\langle F, L F\rangle_{\pi} \ge (1 - (1 - \alpha)^{r+1})\left[\left(1 - O_\ell\left(\frac{1}{n}\right)\right)\E_{\pi}[F] - 8^r \binom{\ell}{r}\left(\sum_{j = 0}^r \E_{Y \in \binom{[n]}{j}}[\delta_{Y}(F)^2]\right) + B(F) \right] \\
&\qquad\qquad- O_\ell\left(\frac{1}{n}\right)\E_\pi[F].
\end{align*}

Absorbing the last term, $O_\ell\left(\frac{1}{n}\right)\E_\pi[F]$, into the first term inside the brackets, we get the conclusion.

Now let us go into the proofs of points 1 to 4. One can check that all the inequalities below are degree 2 SoS inequalities given the axioms $F(X) \in [0,1]$ for all $X$.

\paragraph{Proof of (1):} Let $E_C$ be the probability distribution over the edges of $C$. By the expansion of the Laplacian we know that,

\[\ip{G,L_C G}_{\pi_C} = \frac{1}{2}\E_{(X, Z) \sim E_C}[(G(X) - G(Z))^2].\]

For edges $(X,Z)$ for which both $X$ and $Z$ have repeated coordinates, we have that $(G(X) - G(Z))^2 = 0$. Let $A(X,Z)$ be the event that none of the endpoints of the edge $(X,Z)$ has repeating coordinates. We have that $\Pr_{(X,Z) \sim E_C}[\neg A(X,Z)] \leq \Pr_{Y \sim \pi_C}[\neg A(Y)] \leq \ell^2/n$, where $A(Y)$ is the event that $Y$ has no repeating coordinates. Furthermore, let $B(X,Z)$ be the event that $X$ and $Z$ differ in exactly $\alpha\ell$ elements. Again one can check that, $\Pr[\neg B(X,Z)] \leq \Pr_{Y \sim \pi_C}[Y_i \neq 0, \forall i \in [\ell]] \leq \frac{\ell}{n}$.

When both the events $A(X,Z)$ and $B(X,Z)$ occur, the distribution $E_C$ is the same as sampling an edge $(C,D) \sim E$ (the uniform distribution over $E(J)$) and randomly permuting the sets $C$ and $D$ to get an ordered tuple $(X,Z)$.

For brevity of notation, we will drop the term $(X,Z)$ in $A(X,Z)$ and $B(X,Z)$. Using the above inequalities we get that,

\begin{align*}
2\ip{G,L_C G}_{\pi_C} \leq &\Pr_{E_C}[A \cap B]\E_{(E_C | A \cap B)}[(G(X) - G(Z))^2] \\
&+ \Pr_{E_C}[\neg A]\E_{(E_C | \neg A)}[(G(X) - G(Z))^2] \\
&+ \Pr_{E_C}[\neg B]\E_{(E_C | B)}[(G(X) - G(Z))^2] \\
\leq &\E_{E}[(F(X) - F(Z))^2] + \left(\frac{2\ell^2}{n}\right)\E_{X \sim \pi}[F(X)^2] + \left(\frac{\ell}{n}\right)\E_{X \sim \pi}[F(X)^2]\\
= & 2\ip{F,LF}_\pi +   \left(\frac{2\ell^2 + \ell}{n}\right)\E_{X \sim \pi}[F(X)].
\end{align*}

\paragraph{Proof of (2):}
Let $A(X)$ be the event that $X \sim \pi_C$ has no repeating coordinates. We have that, 

\[\Pr_{X \sim \pi_C}[A(X)] = \frac{\binom{n}{ \ell}\ell!}{n^\ell} \geq \left(1 - \frac{\ell^2}{n} \right).\] 
When $X$ has repeating coordinates $G(X) = 0$ and otherwise $G(X) = F(X)$ (when we apply $F$ on $X$ we think of $X$ as a $\ell$-sized subset of $[n]$ and therefore a vertex of $J$). Also note that, the distribution $\pi_C$ conditioned on the event that $X$ has no repeating coordinates is uniform over all such $X$'s and is therefore the same as drawing a random set $Y \sim \pi$ and choosing a random ordering of the elements. So we have that,

\begin{align*}
\E_{X \in [n]^\ell}[G(X)] &= \Pr_{X \sim \pi_C}[A(X)] \cdot \E_{X \sim (\pi_C | A)}[G(X)] +  \Pr_{X \sim \pi_C}[\neg A(X)] \cdot \E_{X \sim (\pi_C | \neg A(X))}[G(X)] \\ &\geq \left(1 - \frac{\ell^2}{n} \right) \E_{X \sim \pi}[F(X)].
\end{align*}

\paragraph{Proof of (3):} 
We have that $\delta_Y(G) = 0$ if $Y$ has repeating coordinates, so let us first assume that $Y$ does not have repeating coordinates.  
Let $X \sim [n]^{\ell - j}$ and let $A(X)$ be the event that $(Y,X)$ has no repeating coordinates. Then by definition of restrictions, we get that,
\begin{align*}
\delta_Y(G) &= \Pr_{X}[A(X)] \E_{X \sim ([n]^{\ell -j} | A(X))}[G(Y,X)] + \Pr_{X}[\neg A(X)] \E_{X \sim ([n]^{\ell -j} | \neg A(X))}[G(Y,X)] \\
&\leq 1 \cdot \E_{X \sim \binom{[n] \setminus Y}{\ell - j}}[F(Y,X)] \\
&= \delta_Y(F).
\end{align*}

So we also get that, $\delta_Y(G)^2 \leq \delta_Y(F)^2$.

Now we will calculate an upper bound on $\E_{Y \in [n]^j}[\delta_Y(G)^2]$. Let $Y \sim [n]^j$ and let $A(Y)$ be the event that $Y \sim [n]^j$ has no repeating coordinates. We then have that,

\begin{align*}
\E_{Y \in [n]^j}[\delta_Y(G)^2] &= \Pr_{Y \sim [n]^j}[A(Y)] \cdot \E_{Y \sim ([n]^j | A(Y))}[\delta_Y(G)^2] + \Pr_{Y \sim [n]^j}[\neg A(Y)] \cdot \E_{Y \sim ([n]^j | \neg A(Y))}[\delta_Y(G)^2] \\   
&\leq \E_{Y \sim \binom{[n]}{j}}[\delta_Y(F)^2].
\end{align*}

\paragraph{Proof of (4):}  As in the proof of (3), let $A(X)$ be the event that $X \sim \pi_C$ has no repeating coordinates. We have that,

\begin{align*}
\E_{X \in [n]^\ell}[G - G^{\circ 2}] &= \Pr_{X \sim \pi_C}[A(X)] \cdot \E_{X \sim (\pi_C | A(X))}[G - G^{\circ 2}] +  \Pr_{X \sim \pi_C}[\neg A(X)] \cdot \E_{X \sim (\pi_C | \neg A(X))}[G - G^{\circ 2}] \\ 
&\leq  \E_{X \sim \pi}[F - F^{\circ 2}].
\end{align*}

\end{proof}

\end{document}